\documentclass{article}

% load packages here
\usepackage{latexsym}
\usepackage{amsmath}
\usepackage{amsthm} % this doesn't play well with JMLR stlye file, maybe it's
\usepackage{mathtools}
\usepackage{forloop}
\usepackage[paper=letterpaper,margin=1in]{geometry}
\usepackage[ruled,vlined,linesnumbered]{algorithm2e}
\usepackage{nicefrac}
\usepackage{paralist}
\usepackage{xcolor} % only used for my comments, we can delete later
\usepackage{tabularx}
\usepackage{xifthen}
\usepackage{makecell}
\usepackage{caption}
\usepackage{subcaption}

% tikz commands
\usepackage{tikz}
\usetikzlibrary{arrows.meta, positioning, quotes}

% previously in stlye file
\usepackage{epsfig}
\usepackage{amssymb}
\usepackage[hidelinks]{hyperref}
\usepackage{graphicx}

\usepackage{natbib}
\bibliographystyle{plainnat}

\usepackage{authblk}

% macros here

% Theorem Environments
\newtheorem{theorem}{Theorem}
\newtheorem{lemma}[theorem]{Lemma}

\newtheorem{corollary}[theorem]{Corollary}

\newtheorem{proposition}[theorem]{Proposition}
\newtheorem{observation}[theorem]{Observation}

% I'm commenting these out for now since they don't play well with the JMLR style file
%% Repeated Theorem Environments
%\makeatletter
%\newtheorem*{rep@theorem}{\rep@title}
%\newcommand{\newreptheorem}[2]{%
%	\newenvironment{rep#1}[1]{%
%		\def\rep@title{#2 \ref{##1}}%
%		\begin{rep@theorem}}%
%		{\end{rep@theorem}}}
%\makeatother
%
%\newreptheorem{theorem}{Theorem}
%\newreptheorem{reduction}{Reduction}

% Caligraphics and Board Letters
\newcommand{\defcal}[1]{\expandafter\newcommand\csname c#1\endcsname{{\mathcal{#1}}}}
\newcommand{\defbb}[1]{\expandafter\newcommand\csname b#1\endcsname{{\mathbb{#1}}}}
\newcounter{calBbCounter}
\forLoop{1}{26}{calBbCounter}{
	\edef\letter{\Alph{calBbCounter}}
	\expandafter\defcal\letter
	\expandafter\defbb\letter
}

% Other Commands
\newcommand{\eps}{\varepsilon}
\newcommand{\ie}{{\it i.e.}}

\newcommand{\Reals}{\mathbb{R}}
\newcommand{\nnR}{\Reals_+}
\DeclareMathOperator*{\argmax}{arg\,max}
\newcommand{\characteristic}{{\mathbf{1}}}
\DeclarePairedDelimiter\norm\lVert\rVert

% probability
\DeclarePairedDelimiterXPP\Exp[1]{\bE}{\lbrack}{\rbrack}{}{#1}

% asymptotic
\newcommand{\bigOsym}{\mathcal{O}}
\newcommand{\bigtOsym}{\tilde{\bigOsym}}
\DeclarePairedDelimiterXPP\bigO[1]{\bigOsym}{\lparen}{\rparen}{}{#1}
\DeclarePairedDelimiterXPP\bigtO[1]{\bigtOsym}{\lparen}{\rparen}{}{#1}
\newcommand{\littleOsym}{o}
\DeclarePairedDelimiterXPP\littleO[1]{\littleOsym}{\lparen}{\rparen}{}{#1}

% submodular notation
\newcommand{\gnd}{\cN}
\newcommand{\OPT}{OPT}

% notation in this paper
\newcommand{\SetF}[3]{{#1_{#2}^{(#3)}}} % index sets over iterations and solutions
\newcommand{\numiter}{T}
\newcommand{\retsol}{S}
\newcommand{\maxgain}{\Delta_f}
\newcommand{\threshold}{\tau}
\newcommand{\density}{\rho}
\newcommand{\knapevent}{E}
\newcommand{\highdensity}{\cH}
\newcommand{\lowdensity}{\cL}

% algorithms
\newcommand{\mainalg}{\textsc{{SimultaneousGreedys}}\xspace} % Simultaneous Greedys
\newcommand{\fastalg}{\textsc{FastSGS}\xspace}
\newcommand{\knapsackalg}{\textsc{{KnapsackSGS}}\xspace}
\newcommand{\densitysearchalg}{\textsc{{DensitySearchSGS}}\xspace}
\newcommand{\samplegreedy}{\textsc{{SampleGreedy}}\xspace}
\newcommand{\repeatedgreedy}{\textsc{{RepeatedGreedy}}\xspace}
\newcommand{\greedy}{\textsc{{Greedy}}\xspace}

\newcommand{\usm}{\textsc{{USM}}\xspace}
\newcommand{\modgreedy}{\textsc{{ModifiedGreedy}}\xspace}
\newcommand{\modrepeatedgreedy}{\textsc{{ModifiedRepeatedGreedy}}\xspace}
\newcommand{\densitysearchRG}{\textsc{{DensitySearchRG}}\xspace}

% re-state propositions
%\theoremstyle{TH}%

\newenvironment{manualtheorem}[1]{%
	\manualtheoreminner
}{\endmanualtheoreminner}

% concurrent greedy
% simultaneous greedy
% simultaneous greedys 
% parallel greedy
% multiple greedy

\newcommand{\package}{\texttt{SubmodularGreedy.jl}\xspace}

% title
\title{
	How Do You Want Your Greedy: Simultaneous or Repeated?%
	\thanks{Parts of the repeated greedy analysis and the inapproximability results presented in this paper have previously appeared in a preliminary form in a conference paper that appeared in COLT 2017 \citep{FHK17}.}
}

\author[1]{Moran Feldman}
\author[2]{Christopher Harshaw}
\author[3]{Amin Karbasi}
\affil[1]{University of Haifa, Department of Computer Science}
\affil[2]{Yale University, Department of Computer Science}
\affil[3]{Yale University, Departments of Electrical Engeering, Computer Science, Statsitics \& Data Science}

\date{}

\begin{document}

\maketitle

\begin{abstract}%   <- trailing '%' for backward compatibility of .sty file
We present \mainalg, a deterministic algorithm for constrained submodular maximization.
At a high level, the algorithm maintains $\ell$ solutions and greedily updates them in a \emph{simultaneous} fashion, rather than a sequential one.
\mainalg achieves the tightest known approximation guarantees for both $k$-extendible systems and the more general $k$-systems, which are $(k+1)^2/k = k + \bigO{1}$ and $(1 + \sqrt{k+2})^2 = k + \bigO{\sqrt{k}}$, respectively.
This is in contrast to previous algorithms, which are designed to provide tight approximation guarantees in one setting, but not both.
We also improve the analysis of \repeatedgreedy, showing that it achieves an approximation ratio of $k + \bigO{\sqrt{k}}$ for $k$-systems when allowed to run for $\bigO{\sqrt{k}}$ iterations, an improvement in both the runtime and approximation over previous analyses.
Furthermore, the approximation guarantees of both algorithms further improve to $k+1$ when the objective is monotone.
We demonstrate that both algorithms may be modified to run in nearly linear time with an arbitrarily small loss in the approximation.
This leads to the first nearly linear time algorithm for submodular maximization over $k$-extendible systems and $k$-systems.

Both \mainalg and \repeatedgreedy are flexible enough to incorporate the intersection of $m$ additional knapsack constraints, while retaining similar approximation guarantees.
In particular, both algorithms yield an approximation guarantee of roughly $k + 2m + \bigO{\sqrt{k+m}}$ for $k$-systems and \mainalg enjoys an improved approximation guarantee of $k+2m + \bigO{\sqrt{m}}$ for $k$-extendible systems.
To complement our algorithmic contributions, we provide a hardness result which states that no algorithm making polynomially many queries to the value and independence oracles can achieve an approximation better than $k + \nicefrac{1}{2} + \eps$.
We also present \package, a Julia package which implements these algorithms and may be downloaded at \href{https://github.com/crharshaw/SubmodularGreedy.jl}{this URL}.
Finally, we test the effectiveness of these algorithms on real datasets.
\end{abstract}

% main body of the paper
\section{Introduction} \label{sec:introduction}

Submodular optimization has become widely adopted into the methodology of many areas of science and engineering.
In addition to being a flexible modeling paradigm, submodular functions are defined by a diminishing returns property that naturally appears in a variety of disciplines, from machine learning and information theory to economics and neuroscience.
Submodular optimization has been used in sensor placement \citep{krause05near}, maximum likelihood inference in determinantal point processes \citep{GKT12}, influence maximization \citep{kempe03}, functional neuroimaging \citep{SKSC2017},  data summarization \citep{lin2011class, Mirzasoleiman13}, crowd teaching \citep{singla14}, black-box interpretability \citep{Elenberg17}, decision making \citep{alieva2020learning, chen2015submodular}, and experimental design \citep{bian2017guarantees, harshaw19a}, to name a few examples. For more information on the applications of submodularity in machine learning and signal processing, we refer the interested reader to the recent survey by \cite{tohidi2020submodularity}. 
The simplest constraint class in these optimization problems is a cardinality constraint, which limits the number of elements any feasible solution may contain.
However, as more applications emerge, there is a growing need for the development of fast algorithms that are able to handle more flexible and expressive constraint classes.

In this paper, we study the problem of maximizing a submodular functions subject to two constraint classes: $k$-systems and its (strict) subclass of $k$-extendible systems. 
These constraint classes capture a wide variety of constraints, including cardinality constraints, spanning trees, general matroids, intersection of matroids, graph matchings, scheduling, and even planar subgraphs.
In the literature, there are two main algorithmic approaches for maximizing submodular functions over each of these constraint classes.
The repeated greedy approach was initially proposed for submodular optimization over $k$-systems by \citet{GRST10}, who showed that $\bigO{k}$ repeated iterations suffice to achieve a $3k$ approximation guarantee. 
\citet{MBK16} refined this analysis, improving the approximation guarantee to $2k$.
One contribution of the current work is to further improve the analysis of the repeated greedy technique, showing that $\bigO{\sqrt{k}}$ suffices to achieve a $k + \bigO{\sqrt{k}}$ approximation guarantee.
The subsample greedy approach was proposed by \cite{FHK17} for submodular optimization over a $k$-extendible system, and achieves an improved approximation ratio of $(k+1)^2/k = k + \bigO{1}$. 

One of the main downsides to these current approaches is that they are tailor made for the particular constraint class and do not perform as well otherwise.
As we show in this paper, our analysis of the repeated greedy technique is tight in the sense that the algorithm attains an approximation guarantee of only $k + \Omega(\sqrt{k})$ for the subclass of $k$-extendible systems, regardless of the number of repeated iterations; similarly, the subsample greedy approach is not known to provide any approximation guarantee for the more general $k$-systems.
Moreover, the types of approximation guarantees provided by the two algorithmic approaches differ: subsampling approaches are randomized algorithms, and their approximation guarantees hold in expectation---which may be too weak for certain applications where strong deterministic guarantees are preferable.
Another downside is that while repeated greedy approaches may be modified to handle additional knapsack constraints \citep{MBK16}, we are not aware of any known adaptation of subsampling greedy that allows it to handle such additional constraints.

Our main contribution in this work is \mainalg, a deterministic algorithm for constrained submodular maximization.
The new algorithmic idea is to greedily construct $\ell$ disjoint solutions in a \emph{simultaneous} fashion.
The solutions are all initialized to be empty; and at each iteration, an element is added to a solution in a greedy fashion, maximizing the marginal gain amongst all feasible element-solution pairs.
At the end of the algorithm, the best solution is returned amongst the $\ell$ constructed solutions.
One may interpret this \mainalg as a derandomization of the subsample greedy technique.
Subsample greedy produces a random solution whose objective value is large, in expectation; however, the support of the solution is exponentially sized, and so a na\"{i}ve derandomization is infeasible.
We show that the average objective value of the $\ell$ deterministically constructed solutions in \mainalg is just as large, and in this sense we reduce the support of the distribution from exponential to constant.

Unlike the previous algorithmic techniques which were limited to specific constraint types, we show that \mainalg achieves the best known approximation guarantees of $(1 + \sqrt{k +2})^2 = k + \bigO{\sqrt{k}}$ and $(k+1)^2/k = k + \bigO{1}$ for $k$-systems and $k$-extendible systems, respectively.
In fact, these approximation ratios guaranteed by \mainalg further improve to $k+1$ when the submodular objective function is monotone (in the case of $k$-systems, one needs to modify the value of $\ell$ to get this improvement).

Another contribution of this work is to show that both  \mainalg and \repeatedgreedy may be modified to create several different variants.
First, we show that by employing an approximate greedy search based on a marginal gain thresholding technique \citep{Badanidiyuru14}, both algorithms can be made to require only $\bigtO{n / \eps}$ queries to the value and independence oracles\footnote{Throughout the paper, we use the $\bigtOsym$ notation to suppress poly-logarithmic factors.}  at the cost of a $1 + \eps$ factor increase in the approximation guarantees.
To our knowledge, this is the first nearly linear time algorithm for submodular maximization over a $k$-system.
Next, we show that additional knapsack constraints may be incorporated into both algorithms by incorporating a density threshold technique \citep{MBK16} in the greedy selection procedure.
Not only does this work improve upon the approximation guarantees and efficiency of \cite{MBK16} for submodular maximization subject to a $k$-system constraint and $m$ additional knapsacks, this work is also the first to provide (further improved) approximations when the subclass of $k$-extendible systems are considered.
Even with these nearly linear time and knapsack modifications, the approximation guarantees of \mainalg are still adaptive in the sense that they improve for $k$-extendible systems and they further improve when the objective function is monotone.
For this reason, we consider \mainalg to be like a Swiss Army knife for constrained submodular maximization: it is one main tool (the simultaneous greedy procedure) with several variants (nearly linear run time, density ratio technique) that can be used to produce the best known results for several problems of interest including $k$-systems, $k$-extendible systems, intersection of these with additional knapsacks, and a possibly monotone objective.
For a succinct summary of the comparison to previous work, see Table~\ref{table:alg-summary}.

\begin{table}
	\begin{center}\scalebox{0.9}{\begin{tabular}{c | c | c | c }
		\textbf{Algorithm} & \textbf{Running Time} & \textbf{$k$-system} & \textbf{$k$-extendible system} \\\hline
		\makecell{Repeated Greedy \\ {\scriptsize \citep{GRST10}}} 
			& $\bigO{n^2}$ & $3k$ & {\scriptsize (same as for $k$-system)}  \\[0.15in] 
		\makecell{Sample Greedy \\ {\scriptsize \citep{FHK17}}} 
			& $\bigO{n^2}$& - & \makecell{$k + \bigO{1}$ \\ {\scriptsize (in expectation)}  } \\[0.15in]
		\makecell{Repeated Greedy \\ {\scriptsize \textbf{(this work)}}} 
		& $\bigO{n^2}$ & $k + \bigO{\sqrt{k}}$ & {\scriptsize (same as for $k$-system)}  \\[0.15in] 
		\makecell{\mainalg \\ {\scriptsize \textbf{(this work)}}}  
			& $\bigO{n^2}$ & $k + \bigO{\sqrt{k}}$ & $k + \bigO{1}$\\[0.15in]
		\makecell{\fastalg \\ {\scriptsize\textbf{ (this work)}}}  
			& $\bigtO{n / \eps}$ & $(1 + \eps) k + \bigO{\sqrt{k}}$ & $(1 + \eps)k + \bigO{1}$ \\ \hline
		\makecell{FANTOM \\ {\scriptsize \citep{MBK16}}} 
			& $\bigtO{n^2 / \eps}$ & \makecell{$(1 + \eps)(2k + (2 + 2/k) m)$ \\ ${} + \bigO{1}$} & {\scriptsize (same as for $k$-system)} \\[0.15in]
		\makecell{\densitysearchalg \\ {\scriptsize \textbf{(this work)}}}  
			& $\bigtO{n / \eps}$ & \makecell{$(1+\eps)(k + 2m)$ \\ ${}+ \bigO{\sqrt{k + m}}$} & \makecell{$(1+\eps)(k + 2m)$ \\ ${}+ \bigO{\sqrt{m}}$}
	\end{tabular}}\end{center}
	\caption{
		A comparison with previous works. 
		For the sake of clarity, the dependence on $k$ is suppressed from the running times. 
		The last two rows involve $m$ knapsacks constraints in addition to the independence system constraint.
		Only modifications of the simultaneous greedy approach are shown, while modifications of the repeated greedy approach presented in this paper are suppressed.
	}
	\label{table:alg-summary}
\end{table}

We compliment these algorithmic contributions with a hardness result, showing that no algorithm making polynomially many queries to the value and independence oracles can yield an approximation factor smaller than $k + \nicefrac{1}{2} - \eps$ over a $k$-extendible system.
This hardness result demonstrates that the approximation produced by \mainalg in the setting of $k$-extendible systems is nearly tight, and we prove it using the symmetric gap technique of \cite{V13}. 
Note that because $k$-extendible systems are a subclass of $k$-systems, our hardness also holds for the more general class of $k$-systems; however, whether the additional $\bigO{\sqrt{k}}$ term in the approximation factor is necessary for this class remains an open question. Moreover, an almost as strong hardness of $k - \eps$ was already shown for $k$-systems by~\cite{Badanidiyuru14}.

\paragraph{Organization}
The organization of the remainder of the paper is as follows:
In the remainder of Section~\ref{sec:introduction}, we review the related works.
We present the preliminary definitions and problem statement in Section~\ref{sec:preliminaries}.
In Section~\ref{sec:main-alg}, we present \mainalg and its analysis.
Section~\ref{sec:fast-alg} contains the nearly linear time modification and Section~\ref{sec:knapsack-alg} contains the additional knapsack modification.
Our improved analysis of \repeatedgreedy, including linear-time and knapsack modifications, is contained in Section~\ref{sec:repeated-greedy}.
The hardness results are presented in Section~\ref{sec:hardness}.
Section~\ref{sec:practical} contains practical considerations when implementing these algorithms as well as a description of the \package package.
Section~\ref{sec:experiments} contains experiments on real datasets. 
Finally, we conclude in Section~\ref{sec:conclusion}.

\subsection{Related Work} \label{sec:related-work}

The study of sumodular maximization over $k$-systems goes back to \cite{FNW78} who proved that the natural greedy algorithm obtains $(k+1)$-approximation when the objective function is monotone.
Algorithms for maximizing \emph{non-monotone} submodular functions under a $k$-system constraint, however, were not obtained until much more recently.
\cite{GRST10} proposed a repeated greedy approach for this problem.
At a high level, the algorithm repeatedly performs the following procedure: run the greedy algorithm to obtain a solution $S$, then perform unconstrained maximization on the elements of $S$ to produce a set $S'$, and finally remove the larger set $S$ from the ground set. 
Among all considered solutions $S$ and $S'$, the set with the largest objective value is returned.
\cite{GRST10} proved that when the number of iterations of repeated greedy is set to $k+1$, then the approximation ratio is roughly $3k$.
This analysis was improved by \cite{MBK16} who showed that the same repeated greedy algorithm achieves an approximation ratio of roughly $2k$.
%Finally, \cite{FHK17} improved both the time complexity and approximation of the repeated greedy technique, proving that $\bigO{\sqrt{k}}$ repetitions of the greedy procedure sufficed to obtain $(k + \bigO{\sqrt{k}})$-approximation.
Note that because the repeated greedy based algorithms first consider the set returned by the greedy algorithm, their approximation ratios automatically improve to $k+1$ for monotone objectives.
We also remark that \cite{MBK16} demonstrated that the repeated greedy technique may be modified to incorporate additional knapsack constraints through the use of a density thresholding technique.

An important subclass of the $k$-systems are the $k$-extendible systems, which were defined by \cite{Mestre2006}. For this subclass, \cite{FHK17} introduced a subsampling approach as an alternative to repeated greedy, yielding an algorithm that is faster and also enjoys a somewhat better approximation guarantee.
The idea is to independently subsample elements of the ground set, and then run the greedy algorithm once on the subsample.
This subsampling approach runs in expected time $\bigO{n + nr/ k}$, where $r$ is the rank of the $k$-extendible system, and attains an approximation ratio of $(k+1)^2/k$ in expectation.
The main downside to this approach is that the approximation guarantee holds only in expectation, and thus, repetition is necessary to achieve a good approximation ratio with a high probability. 
The authors also show that even a very small number of repetitions suffices in practice. 
Nevertheless, the inherent uncertainty in the approximation quality of the returned solution may be undesirable in certain scenarios.

The class of $k$-extendible systems includes in its turn other subclasses of interest, including the class of $k$-exchange systems introduced by~\cite{FNSW11} and the well know class of $k$-intersection, which includes constraints that can be represented as the intersection of $k$ matroids. Naturally, the above mentioned subsampling technique of \cite{FHK17} for $k$-extendible systems applies also to constraints from these two subclasses, and is arguably the best approximation ratio that can be achieved for these classes using practical techniques. However, local search approaches have been used to achieve improved approximation ratios for both these subclasses whose time complexity is exponential in both $k$ and some error parameter $\eps > 0$---which makes these improved approximation ratios mostly of theoretical interest (except maybe when $k$ is very small). Specifically, for the intersection of $k \geq 2$ matroids, \cite{LMNS10} proved an approximation ratio of $k + 2 + 1/k + \eps$, which was later improved to $k + 1 + 1/(k + 1) + \eps$ by \cite{LSV10}. The last approximation ratio was later extended also to $k$-exchange systems by \cite{FNSW11}. The case of $k = 1$, in which all the above classes reduce to be the class of matroids, was also studied extensively, and the currently best approximation ratios for this case is roughly $(0.385)^{-1} \approx 2.60$ \citep{Buchbinder2016b}.
For an in-depth discussion of these algorithmic techniques, we refer readers to the survey of \citet{BF18}.

%The class of $k$-extendible systems includes (in particular) the intersection of $k$-matroids, a constraint class which has received considerable attention in the context of submodular optimization.
%One of the prevailing algorithmic techniques for the last class is that of local search \citep{LSV10}.
%The first step of analyzing a local search method is to show that an locally-optimal solution satisfies a global approximation guarantee, which is typically done through basis exchange arguments.
%While a locally optimal solution is generally not found in polynomial time, an approximately locally optimal solution is found, which results in a small loss in approximation.  
%For the intersection of $k > 2$ matroids and for any $\eps > 0$, \cite{LSV10} present a local search algorithm which obtains $(k + \eps)$ approximation for monotone submodular objectives.
%Additionally, they showed that by iteratively applying the local search procedure, one obtains a $(k + 1 + 1/(k+1) + \epsilon)$-approximation for non-monotone objectives.
%The running times here are polynomial for any fixed $k$ and $\eps$, but they depend exponentially with these quantities and so the algorithms are of mainly theoretical interest.

The run time of greedy methods is typically quadratic, as each iteration requires examining all the remaining elements in the ground set.
A heuristic often used to reduce this time complexity is the so-called ``lazy greedy" approach, which uses the submodularity of the objective to avoid examining elements that cannot have the maximal marginal gain in a given iteration \citep{Minoux1978}.
While this method typically yields a substantial improvement in practice, it does not improve the worst-case time complexity.
However, inspired by the lazy greedy approach, \cite{Badanidiyuru14} proposed a technique known as ``marginal gain thresholding", which reduces the run time of the greedy algorithm to $\bigO{n / \eps \cdot \log(n / \eps)}$, while incurring only a small additive $\eps$ factor in the approximation ratio.
Later on, \cite{Mirzasoleiman15} proposed a stochastic approach which further reduces the run time of the greedy algorithm to $\bigO{n \log(1 / \eps)}$, but applies only in the context of the simple cardinality constraint. Additional fast algorithms for submodular maximization were suggested by \citep{Badanidiyuru14,Buchbinder2017,Ene2019,Ene2019b}.
It is also worth mentioning that most of the above algorithms can be further improved, in practice, using a lazy greedy like approach.

%There has also been much progress on more specific constraint sets, including matroids and cardinality constraints, which are $1$-extendible.
%The celebrated work of \cite{Nemhauser1978} proved that the greedy algorithm achieves a $(1 - 1/e)^{-1}$ approximation for maximizing a monotone submodular function subject to a cardinality constraint and later showed that no efficient algorithm can yield a better approximation \citep{NW78}.
%Many years later, \cite{CCPV11} proposed a polynomial time ``continuous greedy" algorithm based on the multilinear extension which achieves the same approximation factor for optimizing a monotone submodular function subject to a general matroid constraint.
%A long series of work has extended these multilinear relaxation approaches to non-monotone objectives, where the currently best approximation ratio is roughly $(0.385)^{-1} \approx 2.60$ \citep{Buchbinder2016b}.

Our simultaneous greedy technique is most closely related to a recent work of \cite{Kuhnle2019}, where a similar ``interlaced greedy'' approach is proposed to obtain a $\nicefrac{1}{4} - \eps$ approximation for maximizing a non-monotone submodular function subject to a cardinality constraint.
The proposed idea is similar: simultaneously run the greedy algorithm to construct two disjoint solutions.
In addition to extending to more general settings and subsuming these approximation results, the current work also demonstrates a tighter analysis even for the cardinality constraint presented in \cite{Kuhnle2019}.
Namely, our analysis shows that only one run of the simultaneous greedy technique is required to obtain the $\nicefrac{1}{4} - \eps$ approximation, whereas the analysis of \cite{Kuhnle2019} requires the algorithm to be run twice in order to obtain this approximation.
After an initial preprint of this work appeared online, \cite{han2021power} demonstrated that combining the simultaneous greedy approach with the subsampling approach yields improvements in the running time and the low order terms of the approximation guarantees.

\section{Preliminaries} \label{sec:preliminaries}
In this section, we introduce several preliminary definitions required for the problem we investigate.
In Section~\ref{sec:submodular-functions}, we define submodular set functions, which are the class of objective functions we consider.
In Section~\ref{sec:independent-systems}, we discuss the independence systems that act as constraints in our problem. Finally, Section~\ref{sec:problem-statement} formally defines our problem. 

\subsection{Submodular functions} \label{sec:submodular-functions}
Let $\gnd$ be a finite set of size $n$ which we refer to as the \emph{ground set}.
A real valued set function $f\colon 2^\gnd \rightarrow \Reals$ is \emph{submodular} if for all sets $X, Y \subseteq \gnd$,
\[
f(X \cup Y) + f(X \cap Y) \leq f(X) + f(Y) \enspace.
\]
Given a set $S$ and element $e$, we use the shorthand $S + u$ to denote the union $S \cup \{u\}$. Additionally, we define $f(u \mid S) = f(S + u) - f(S)$, i.e., $f(u \mid S)$ is the marginal gain with respect to $f$ of adding $e$ to the set $S$.\footnote{More generally, we define $f(X \mid Y) = f(X \cup Y) - f(Y)$ for all sets $X, Y \subseteq \gnd$.} An equivalent definition of submodularity is that a function $f$ is submodular if for all subsets $A \subseteq B \subseteq \gnd$ and element $u \notin B$,
\begin{equation} \label{eq:diminishing-returns}
f(u \mid A) \geq f(u \mid B) \enspace.
\end{equation}
Inequality~\eqref{eq:diminishing-returns} is referred to as the diminishing returns property.
Indeed, if $f$ is interpreted as a utility function, then Inequality~\eqref{eq:diminishing-returns} states that the marginal gain of adding an element $e$ to a subset decreases as the subset grows. Throughout the paper, we restrict our attention to non-negative submodular functions, i.e., functions whose value is non-negative for every set. The non-negativity is a necessary condition for obtaining multiplicative approximation guarantees.

A set function $f$ is \emph{modular} (or linear) if Inequality~\eqref{eq:diminishing-returns} always holds for it with equality.
Any modular function can be represented using the form
\[
f(S) = \sum_{u \in S} c_u +b
\] 
for an appropriate choice of a real number $c_u\in \Reals$ for every element $u \in \gnd$ and a fixed bias $b\in \Reals$. Finally, a set function $f$ is \emph{monotone} if adding more elements only increases its value; that is, $f(A) \leq f(B)$ for all subsets $A \subseteq B \subseteq \gnd$.

\subsection{Independence systems} \label{sec:independent-systems}
The feasible sets in the optimization problems that we consider are described by an independence system. 
For $\cI \subseteq 2^\gnd$, the pair $(\cI, \gnd)$ is an \emph{independence system} if $\cI$ is non-empty and satisfies the down-closure property, i.e., if $A \subseteq B$ and $B \in \cI$ then $A \in \cI$.
For notational simplicity, we occasionally refer to the independence system as $ \cI $ when the ground set $\gnd$ is clear from context.
A set $A \subseteq \gnd$ is called \emph{independent} in the independence system $\cI$ if $A \in \cI$. Furthermore, if $A$ is maximal independent set with respect to inclusion among all the subsets of a given set $B \subseteq \gnd$, then $A$ is called a \emph{base} of $B$. A base of the ground set $\gnd$ is also called a base of the independence system. The cardinality of the largest independent set of a given independence system $\cI$ is known as the \emph{rank} of the independence system, and we use $r$ to denote it when the independence system is clear from the context.

There is a wide variety of independence systems which have been studied in the literature, and we review some of them here.
An independence system $(\cI, \gnd)$ is a \emph{$k$-system} if for every set $B \subseteq \gnd$ the ratio between the sizes of any two bases of $B$ is at most $k$.
Any independence system
%\footnote{Technically speaking, this is true for non-degenerate independent systems; that is, those systems where $\cI$ contains other sets besides the empty set.}
is a $k$-system for some $k \leq n$; however, we are most interested in settings where $k$ is a constant or otherwise small with respect to the number $n$ of elements in the ground set.
A subclass of independence systems are the $k$-extendible systems.
Intuitively, a $k$-extendible system is an independence system in which adding an element $u$ to any independent set requires removing at most $k$ elements to maintain independence. Formally, this means that an independence system is \emph{$k$-extendible} if for every pair of independent sets $A \subseteq B \in \cI$ and element $u \notin B$ such that $A + u \in \cI$, there exists a set $Y \subseteq B \setminus A$ of size at most $k$ such that $(B \setminus Y) \cup \{u\}$ is independent. 
It is known that every $k$-extendible system is also a $k$-system~\citep{CCPV11}, that the intersection of a $k_1$-extendible system and a $k_2$-extendible system is a $(k_1+k_2)$-extendible system and that the intersection of a $k_1$-system and a $k_2$-system is a $(k_1+k_2)$-system~\citep{HKFK20}. These observations provide a way to build more complex independence systems from simpler ones, allowing for a flexible framework for constraints in the optimization problems we consider.

One of the most well-studied examples of a $k$-extendible systems are the $1$-extendible systems, which are also known as matroids.\footnote{See~\cite{Mestre2006} for a proof of the equivalence of a $1$-extendible system with the traditional definition of matroids.}
Matroids capture a wide variety of set constraints, including independent sets of vectors, cardinality-constrained partitions, spanning forests, graph matchings, and the simple cardinality constraint.
The intersection of $k$-matroids is a $k$-extendible system, but the converse is generally not true for $k \geq 2$.
Indeed, the class of $k$-extendible systems includes systems which are not expressible as the intersection of a few matroids, including the class of $b$-matchings in graphs (which are $2$-extendible) and asymmetric TSP (which is $3$-extendible), as well as certain scheduling formulations \citep{Mestre2006}.
Although the class of $k$-systems is strictly larger than the class of $k$-extendible systems, the majority of interesting examples are $k$-extendible systems.
There are, however, a few exceptions such as the collection of all subsets of edges of a graph which induce a planar subgraph, which is $3$-system \citep{HKFK20}.
The taxonomy  of the independence systems discussed above is depicted below, and all the containments are known to be strict for $k \geq 2$
\[
\text{cardinality constraint} \subset
\text{matroid} \subset
\text{intersection of $k$ matroids} \subset
\text{$k$-extendible system} \subset
\text{$k$-system}
\enspace.
\]

Knapsack constraints are another popular family of constraints that can be represented as independence systems.
Formally, an independence system capturing a knapsack constraint is defined as the collection of sets $S \subseteq \gnd$ obeying $c(S) \leq 1$ for some non-negative modular function $c(S) = \sum_{u \in S} c_u$.
We are often interested below in the intersection of $m$ knapsack constraints, and denote the corresponding modular functions by $c_1, c_2, \dotsc, c_m$.
In this work, and more broadly in the literature, knapsack constraints are considered separately from the main independence system constraint.
Technically, this is not completely necessary because the intersection of $m$ knapsacks is a $k$-extendible system for some $k$. However, this $k$ might be as large as $m \cdot \left( c_{\max} / c_{\min} \right)$, where $c_{\max}$ and $c_{\min}$ are the largest and smallest knapsack coefficients, respectively (i.e., $c_{\max} = \max_{u \in \gnd, i \in [m]} c_i(u)$ and $c_{\min} = \min_{u \in \gnd , i \in [m]} c_i(u)$). In contrast, treating the knapsack constraints as separate from the underlying independence system allows us to aim for approximation ratios that depend only on $m$, and is thus preferable.

\subsection{Problem statement} \label{sec:problem-statement}
In this paper we study the problem of maximizing a non-negative submodular function subject to an independence system and the intersection of $m$ knapsack constraints.
More precisely, we aim to solve the following optimization problem
\begin{equation} \label{eq:opt-problem}
\begin{array}{llll}
& \max
& & f(S) \\
& \text{subject to}
& & S \in \cI \\
&  && c_i(S) \leq 1 \quad \forall \ i=1 \dots m \enspace ,
\end{array}
\end{equation}
where $f$ is non-negative and submodular and $\cI$ is an independence system which is either a $k$-system or a $k$-extendible system. 
For simplicity, we assume throughout the work that the singleton $\{u\}$ is a feasible solution for the above problem for every element $u \in \cN$. 
Clearly, any element violating this assumption can be removed from the ground set without affecting the set of feasible solutions.
We also denote by $\OPT$ an optimal solution to the program.

We evaluate our algorithms by their running times and approximation ratios.
As is standard in the literature, our algorithms access the objective function and the independence system constraint only through value and independence oracles, respectively.
The \emph{value oracle} takes as input a set $S \subseteq \gnd$ and returns $f(S)$---the evaluation of $f$ at $S$.
Similarly, the \emph{independence oracle} takes as input a set $S$ and indicates whether or not $S \in \cI$.
The computational efficiency of algorithms in this model is often judged based on the number of oracle queries they make, and we follow this convention.

\section{Simultaneous Greedys} \label{sec:main-alg}

In this section we present an algorithm named \mainalg for solving Problem~\eqref{eq:opt-problem} in the special case of $m = 0$, i.e., the case in which there are no knapsack constraints. The main idea behind \mainalg is to greedily and \emph{simultaneously} construct $\ell$ disjoint solutions by iteratively adding elements to the solutions in a way that maximizes the momentary marginal gain.
Formally, the algorithm begins by initializing $\ell$ solutions $S^{(1)}, S^{(2)}, \dotsc, S^{(\ell)}$ to be empty sets.
At each iteration, the algorithm considers all the pairs of element $u$ and solution $S^{(j)}$ such that
(1) $u$ does not yet belong to any of the solutions,
(2) $u$ can be added to $S^{(j)}$ without violating independence, and
(3) the addition of $u$ to $S^{(j)}$ increases the objective value of $S^{(j)}$.
%\begin{enumerate}
%	\item $u$ does not yet belong to any of the solutions
%	\item $u$ can be added to $S^{(j)}$ without violating independence and
%	\item the addition of $u$ to $S^{(j)}$ increases its objective value.
%\end{enumerate} 
The set of such pairs is denoted by $\cA$ in the pseudocode of the algorithm. Among all the considered pairs, the algorithm picks the one for which $f(u \mid S^{(j)})$ is maximal (i.e., the pair for which the addition of $u$ to $S^{(j)}$ yields the maximal increase in the value of the solution), and then adds $u$ to $S^{(j)}$. The algorithm terminates when no further pairs with all the above properties can be found. The pseudocode of \mainalg appears below as Algorithm~\ref{alg:main-alg}.

\begin{algorithm}[H]
	\caption{\mainalg($\gnd, f, \cI, \ell$)}\label{alg:main-alg}
	Initialize $\ell$ solutions, $\SetF{S}{0}{j} \gets \varnothing$ for $j =1, \dots \ell$. \\
	Initialize available ground set $\gnd_0 \gets \gnd$, and iteration counter $i \gets 1$.\\
	Initialize feasible element-solution pairs $\cA_1 = \{ (u, j) : \{u\} \in \cI , f(u \mid \varnothing) > 0, j \in [\ell] \}$. \\
	\While{ $\cA_i$ is nonempty }
	{
%		Update iteration counter $i \gets i+1$.\\
%		Collect feasible element-solution pairs, $\cA_i = \{ (u, j) : u \in \gnd_{i-1}, \SetF{S}{i-1}{j} + u \in \cI , f(u \mid  \SetF{S}{i-1}{j}) > 0 \}$.\\
%		If $\cA_i$ is empty, then \texttt{break}.\\
		Let $(u_i, j_i) \gets \max_{(u,j) \in \cA_{i}} f( u \mid \SetF{S}{i-1}{j})$ be a feasible element-solution pair maximizing the marginal gain.\\
		Update the solutions as
		$ 
		\SetF{S}{i}{j} \gets
		\left\{
		\begin{array}{lr}
		\SetF{S}{i-1}{j_i} + u_i  &\text{if } j = j_i\\
		\SetF{S}{i-1}{j} &\text{if } j \neq j_i
		\end{array}
		\right.$ \\
		Update the available ground set $\gnd_{i} \gets \gnd_{i-1} - u_i$. \\
		Update the feasible element-solution pairs, $\cA_{i+1} = \{ (u, j) : u \in \gnd_{i}, \SetF{S}{i}{j} + u \in \cI , f(u \mid  \SetF{S}{i}{j}) > 0 \}$.\\
		Update iteration counter $i \gets i+1$.
	}
	\Return{the set $\retsol$ maximizing $f$ among the sets $\{\SetF{S}{i}{j}\}_{j = 1}^\ell$}.
\end{algorithm}

We begin our analysis of \mainalg by providing a bound on the number of oracle calls used by the algorithm.

\begin{observation} \label{obs:main-alg-runtime}
	\mainalg requires at most $\bigO{\ell^2 r n}$ calls to the value and independence oracles.
\end{observation}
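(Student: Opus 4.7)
The plan is to prove the bound by the usual two-step counting: first bound the number of iterations of the main while-loop, then bound the number of oracle calls performed per iteration, and multiply.

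For the first step, I would show that the loop executes at most $\ell r$ times. The key invariant is that every solution $\SetF{S}{i}{j}$ is independent in $\cI$ throughout the algorithm. This follows by induction on $i$: initially $\SetF{S}{0}{j} = \varnothing \in \cI$, and in each iteration only the solution $\SetF{S}{i-1}{j_i}$ changes, by the addition of $u_i$; the membership $(u_i,j_i) \in \cA_i$ enforces $\SetF{S}{i-1}{j_i} + u_i \in \cI$, preserving independence. Since every independent set of $\cI$ has size at most $r$ (the rank of $\cI$), we get $|\SetF{S}{i}{j}| \le r$ for every $i$ and $j$. Because each iteration increases the total size $\sum_{j=1}^{\ell} |\SetF{S}{i}{j}|$ by exactly one, the total number of iterations is at most $\ell \cdot r$.

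For the second step, I would bound the number of oracle queries per iteration by $\bigO{\ell n}$. The set $\cA_i$ of feasible pairs satisfies $|\cA_i| \le \ell n$, since it is a subset of $\gnd \times [\ell]$. Computing the maximizer $\argmax_{(u,j) \in \cA_i} f(u \mid \SetF{S}{i-1}{j})$ takes one value-oracle call per pair, for $\bigO{\ell n}$ value queries. Building $\cA_{i+1}$ from scratch according to its definition requires one independence-oracle query and one value-oracle query per candidate pair $(u,j)$ with $u \in \gnd_i$ and $j \in [\ell]$, contributing another $\bigO{\ell n}$ queries. (If one wanted to be more efficient, one could observe that only pairs of the form $(u, j_i)$ need to be rechecked, but this is not needed for the stated bound.)

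Multiplying the two bounds yields $\bigO{\ell r} \cdot \bigO{\ell n} = \bigO{\ell^2 r n}$ oracle calls in total. The only place where any care is needed is in establishing the invariant that each $\SetF{S}{i}{j}$ remains independent — which is exactly enforced by the condition $\SetF{S}{i}{j} + u \in \cI$ in the definition of $\cA_{i+1}$ — so that we may invoke the rank bound $r$. Beyond this one invariant, the proof is a direct counting argument and I do not anticipate a genuine obstacle.
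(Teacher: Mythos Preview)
Your proposal is correct and follows essentially the same approach as the paper: bound the number of iterations by $\ell r$ (since each iteration adds one element to some solution and every solution remains independent, hence of size at most $r$) and bound the per-iteration work by $\bigO{\ell n}$ oracle calls, then multiply. Your version is somewhat more explicit in justifying the independence invariant by induction, but the structure of the argument is identical.
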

\begin{proof}
	In every single iteration, the algorithm examines the possibility of adding each of the $n$ elements to each of the $\ell$ solutions, requiring $\bigO{\ell n}$ calls to the value and independence oracles.
	Since exactly one element is added to some solution at every iteration, the number of iterations is the sum of the cardinalities of the produced solutions, which is at most $\ell r$ because all the solutions are feasible.
	Combining the two above observations, i.e., that there are at most $\ell r$ iterations, each requiring $\bigO{\ell n}$ oracle calls, we get that the total number of oracle calls required by \mainalg is $\bigO{\ell^2 r n}$.
\end{proof}

We now present theorems proving approximation guarantees for \mainalg when $\cI$ is guaranteed to be either a $k$-system or a $k$-extendible system.
To get the tightest approximation guarantees from these theorems, one has to set the number $\ell$ of constructed solutions differently for the two classes of constraints.

\begin{theorem} \label{thm:deterministic_extendible}
	Suppose that $(\gnd, \cI)$ is a $k$-extendible system and that the number of solutions is set to $\ell = k + 1$.
	Then, \mainalg requires $\bigO{k^2 r n}$ oracle calls and produces a solution whose approximation ratio is at most $(k + 1)^2/k = k + \bigO{1}$. 
	Moreover, when $f$ is non-negative monotone submodular, then the approximation ratio improves to $k+1$.
\end{theorem}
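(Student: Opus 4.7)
The runtime claim follows immediately from Observation~\ref{obs:main-alg-runtime} with $\ell = k+1$. For the approximation, my plan is to first establish the per-solution inequality
\[
f(\OPT \cup S^{(j)}) \leq (k+1) \cdot f(S^{(j)}) \qquad \text{for every } j \in [\ell],
\]
and then aggregate it across $j$ via a submodularity identity to extract $f(\OPT)$.

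The per-solution bound combines the standard submodular decomposition $f(\OPT \cup S^{(j)}) \leq f(S^{(j)}) + \sum_{o \in \OPT \setminus S^{(j)}} f(o \mid S^{(j)})$ with a $k$-extendibility charging argument showing $\sum_{o \in \OPT \setminus S^{(j)}} f(o \mid S^{(j)}) \leq k \cdot f(S^{(j)})$. For the charging, I will construct a mapping $\phi_j \colon \OPT \setminus S^{(j)} \to S^{(j)}$ such that $|\phi_j^{-1}(s)| \leq k$ for each $s \in S^{(j)}$, and such that for each $o \in \phi_j^{-1}(s)$ the pair $(o, j)$ was a feasible, positive-gain candidate at the iteration when $s$ was added to $S^{(j)}$. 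The construction processes $s \in S^{(j)}$ in their order of addition and invokes the $k$-extendibility exchange property applied to $S^{(j)}$ and each optimal $o$ still awaiting assignment, bounding the number of new charges by $k$ per step. Combined with the greedy rule $f(s \mid S^{(j)}_{<s}) \geq f(o \mid S^{(j)}_{<s})$ and diminishing returns $f(o \mid S^{(j)}_{<s}) \geq f(o \mid S^{(j)})$, a telescoping over $s \in S^{(j)}$ yields the claimed bound; non-positive marginal terms only help and can be dropped.

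To finish the non-monotone case, I sum the per-solution bound across $j$ and apply a standard consequence of submodularity. Since $S^{(1)}, \dotsc, S^{(\ell)}$ are pairwise disjoint, iterating $f(A) + f(B) \geq f(A \cup B) + f(A \cap B)$ with $A = \bigl(\bigcup_{j' < j} S^{(j')}\bigr) \cup \OPT$ and $B = S^{(j)} \cup \OPT$ gives $\sum_{j=1}^\ell f(\OPT \cup S^{(j)}) \geq f(\OPT \cup \bigcup_j S^{(j)}) + (\ell-1) \cdot f(\OPT) \geq k \cdot f(\OPT)$, using non-negativity on the first term. Combined with the per-solution inequality this yields $k \cdot f(\OPT) \leq (k+1) \sum_j f(S^{(j)})$, i.e., $\sum_j f(S^{(j)}) \geq \frac{k}{k+1} f(\OPT)$. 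Since the returned $S$ maximizes $f$ among the $\ell = k+1$ solutions, $f(S) \geq \frac{k}{(k+1)^2} f(\OPT)$, which is the claimed ratio.

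The main obstacle will be the careful construction of $\phi_j$ in the multi-solution setting: some $o \in \OPT \setminus S^{(j)}$ will have been consumed by another solution $S^{(j')}$ at some intermediate iteration and was therefore removed from the ground set before $S^{(j)}$ ceased growing, so $o$ was never explicitly rejected from solution $j$. The ordering of the charging must respect these cross-solution consumptions to ensure the greedy inequality applies at the right iteration, and $\ell = k+1$ is the value that makes the counting balance. In the monotone case, monotonicity gives $f(\OPT) \leq f(\OPT \cup S^{(j)})$ directly, so the per-solution bound $f(\OPT \cup S^{(j)}) \leq (k+1) f(S^{(j)})$ instantly produces $f(S^{(j)}) \geq f(\OPT)/(k+1)$ for \emph{every} $j$, yielding the stronger $(k+1)$-approximation without needing the aggregation step.
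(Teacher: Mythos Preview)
Your per-solution inequality $f(\OPT \cup S^{(j)}) \leq (k+1)\, f(S^{(j)})$ is false, and the obstacle you flag at the end is not merely a bookkeeping nuisance---it is fatal to that inequality. Take $k=1$, $\ell=2$, a rank-$2$ uniform matroid on $\gnd=\{o_1,o_2,a\}$, and $f$ linear with $f(\{o_1\})=f(\{o_2\})=10$, $f(\{a\})=1$. With adversarial tie-breaking the algorithm can send $o_1$, then $o_2$, into $S^{(1)}$, leaving $S^{(2)}=\{a\}$. Then $f(\OPT\cup S^{(2)})=21$ while $(k+1)\,f(S^{(2)})=2$. The proposed charging map $\phi_2\colon\{o_1,o_2\}\to\{a\}$ would need preimage size $2>k$, and worse, when $a$ is added to $S^{(2)}$ both $o_1,o_2$ have already left the ground set, so no greedy comparison of the form $f(a\mid S^{(2)}_{<a})\geq f(o_i\mid S^{(2)}_{<a})$ is available. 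The even simpler one-element instance $\gnd=\{o\}$ with $\ell=2$ already shows $S^{(2)}=\varnothing$ is possible, in which case no charging map can exist at all. Since the monotone claim also rests on the same per-solution bound, it fails for the same reason.

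What \emph{does} hold is the aggregated inequality $\sum_{j=1}^{\ell} f(\OPT\cup S^{(j)}) \leq (k+1)\sum_{j=1}^{\ell} f(S^{(j)})$ when $\ell=k+1$, and this is exactly what the paper proves. The key is that the total ``removal'' of OPT-elements per iteration, summed over all solutions $j$, is at most $\max(k,\ell-1)$: if the greedily chosen $u_i\notin\OPT$, only solution $j_i$ loses up to $k$ OPT-elements by $k$-extendibility; if $u_i\in\OPT$, each of the other $\ell-1$ solutions loses exactly one OPT-element (namely $u_i$ itself). This cross-solution accounting cannot be decoupled into independent per-$j$ bounds. Your aggregation step (which is equivalent to the paper's use of Lemma~\ref{lem:distribution}) is fine and can be reused once you replace the per-solution inequality with a direct proof of the summed one.
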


\begin{theorem} \label{thm:deterministic_system}
	Suppose that $(\gnd, \cI)$ is $k$-system and that the number of solutions is set to $\ell = \lfloor 2 + \sqrt{k + 2} \rfloor$.
	Then, \mainalg requires $\bigO{k r n}$ oracle calls and produces a solution whose approximation ratio is at most $(1 + \sqrt{k+2})^2 = k + \bigO{\sqrt{k}}$. 
	Moreover, when $f$ is non-negative monotone submodular and the number of solutions is set to $\ell=1$, then the approximation ratio improves to $k+1$.
\end{theorem}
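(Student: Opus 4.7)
The oracle-call bound is immediate from Observation~\ref{obs:main-alg-runtime}: with $\ell = \lfloor 2 + \sqrt{k+2}\rfloor$ one has $\ell^2 = \bigO{k}$, and hence $\ell^2 r n = \bigO{k r n}$. The monotone case with $\ell = 1$ collapses \mainalg to the classical single-sequence greedy algorithm on an independence system, so its $(k+1)$-approximation is exactly the classical result of \cite{FNW78}, which I would invoke directly rather than reprove.

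For the non-monotone bound I plan to follow the same skeleton as in the $k$-extendible version (Theorem~\ref{thm:deterministic_extendible}), with the local $k$-extendible exchange replaced by a more global $k$-system exchange. Let $O = \OPT$ and $S = \bigcup_{j=1}^{\ell} S^{(j)}$ (a disjoint union). The elements of $O \setminus S$ split into \emph{filtered} elements $o$, for which $S^{(j)} + o \in \cI$ at termination for some $j$---these satisfy $f(o \mid S^{(j)}) \leq 0$ by the stopping rule and hence $f(o \mid S) \leq 0$ by submodularity---and \emph{blocked} elements $B = \{ o \in O \setminus S : S^{(j)} + o \notin \cI \text{ for every } j \}$. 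Because $f$ may be non-monotone, one cannot use $f(O) \leq f(O \cup S)$ as a starting point; instead I would derive an inequality of the form
\[
\alpha \, f(\OPT) \;\leq\; \sum_{j=1}^{\ell} f(S^{(j)}) + \beta \sum_{o \in B} f(o \mid S)
\]
by the same non-monotone submodular manipulations that drive the $k$-extendible proof, exploiting the fact that the $\ell$ solutions are pairwise disjoint so each element of $S$ belongs to exactly one $S^{(j)}$.

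The main work is then to bound $\sum_{o \in B} f(o \mid S)$ by a constant times $\sum_j f(S^{(j)})$. I plan to combine two ingredients. The first is a purely combinatorial $k$-system fact: $B \subseteq O$ is independent by down-closure, and $S^{(j)}$ is a base of $S^{(j)} \cup B$ because no element of $B$ extends it; the $k$-system property then forces $|B| \leq k |S^{(j)}|$ for every $j$, together with a bounded-multiplicity matching $\pi_j \colon B \to S^{(j)}$ via the $k$-system exchange axiom. The second is a simultaneous-greedy marginal-gain comparison: whenever the algorithm adds $u_i$ to $S^{(j_i)}$, its gain $f(u_i \mid \SetF{S}{i-1}{j_i})$ dominates $f(o \mid \SetF{S}{i-1}{j})$ for every feasible pair $(o, j)$ across \emph{all} $\ell$ solutions, not only for $j = j_i$. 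Applying the matching $\pi_{j'}$ for each $j' \ne j(o)$ (where $j(o)$ is the last solution for which $o$ was feasible) charges $f(o \mid S)$ against the marginal gains of elements actually added to $S^{(j')}$; summing over $o \in B$ and over $j' \ne j(o)$ yields a bound of the form $\sum_{o \in B} f(o \mid S) \leq \frac{k}{\ell - 1} \sum_{j=1}^{\ell} f(S^{(j)})$, where the $\ell - 1$ savings come precisely from the $\ell$ available ``target'' solutions for each blocked element. Plugging back, using $\max_j f(S^{(j)}) = f(S)$ and $f(S) \geq \frac{1}{\ell} \sum_j f(S^{(j)})$, and optimizing over $\ell$ via a short AM-GM calculation produces the advertised $(1 + \sqrt{k+2})^2$ ratio at $\ell = \lfloor 2 + \sqrt{k+2} \rfloor$.

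The main obstacle I anticipate is this bijection-based charging for $k$-systems. For $k$-extendible systems each $o \in B$ has an explicit witness set of $k$ elements in every $S^{(j)}$, which makes the charging essentially local; for $k$-systems one has only the global size bound $|B| \leq k |S^{(j)}|$ and must instead build a matching that simultaneously respects the iteration-by-iteration ordering of the simultaneous greedy and the $k$-system exchange axiom. Making this matching tight enough to yield the advertised $k + \bigO{\sqrt{k}}$ bound, rather than a weaker $2k$-type bound of the sort appearing in previous repeated-greedy analyses, is the delicate technical heart of the argument, and it is also what forces the square-root balancing $\ell = \Theta(\sqrt{k})$.
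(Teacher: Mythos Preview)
Your oracle-call bound and monotone reduction are fine and match the paper. The gap is in the non-monotone charging argument.

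The sketch hinges on constructing, for each $j$, a ``bounded-multiplicity matching $\pi_j\colon B\to S^{(j)}$ via the $k$-system exchange axiom'' that respects the greedy ordering well enough to compare $f(o\mid S)$ against the marginal gain of the element it is matched to. But $k$-systems have no exchange axiom; all you get is the base-ratio bound $|B|\le k|S^{(j)}|$. A size inequality does not, by itself, yield a mapping under which each blocked element $o$ is paired with some $u_i\in S^{(j)}$ that was chosen \emph{while $o$ was still feasible for $S^{(j)}$}---and without that temporal compatibility the greedy inequality $f(u_i\mid\SetF{S}{i-1}{j_i})\ge f(o\mid\SetF{S}{i-1}{j})$ gives you nothing. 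This is exactly the place where the $k$-extendible proof uses locality (each added element displaces at most $k$ specific elements of $\OPT$) and where a $k$-system argument cannot simply imitate it. Your final paragraph correctly flags this as the hard part, but the proposal does not contain an idea that overcomes it; the target bound $\sum_{o\in B}f(o\mid S)\le \frac{k}{\ell-1}\sum_j f(S^{(j)})$ is asserted rather than derived.

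The paper avoids the matching problem altogether. It works through the meta-result Proposition~\ref{prop:general_result}, which says that if one can exhibit sets $\SetF{O}{i}{j}$ tracking ``elements of $\OPT$ still addable to $\SetF{S}{i}{j}$'' and losing at most $p$ elements in total per iteration, then the ratio is $(p+1)/(1-\ell^{-1})$. For $k$-systems the construction of $\SetF{O}{i}{j}$ is done \emph{backward} from the final iteration: starting with $\SetF{\tilde O}{T}{j}=\{u\in\OPT\setminus \SetF{S}{T}{j}:\SetF{S}{T}{j}+u\in\cI\}$, one walks $i$ from $T$ down to $0$, at each step adding back at most $k$ blocked elements of $\OPT$ to the solution being modified (this is where the base-ratio property is invoked, one step at a time) and the single element $u_{i+1}$ to all sets. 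This yields $p=k+\ell-1$---the $k$ from the updated solution, the $\ell-1$ from removing $u_i$ from the other solutions' tracking sets---rather than the $p=\max(k,\ell-1)$ of the $k$-extendible case. Plugging $p=k+\ell-1$ into Proposition~\ref{prop:general_result} gives $(k+\ell)/(1-\ell^{-1})$, and the choice $\ell=\lfloor 2+\sqrt{k+2}\rfloor$ optimizes this to $(1+\sqrt{k+2})^2$. Your end-of-run ``filtered vs.\ blocked'' dichotomy discards exactly the per-iteration timing that makes this backward construction work.
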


Note that the improved approximation for $k$-extendible systems comes at the higher computational cost of an extra $\bigO{k}$ factor in the running time. 
Moreover, the gain in approximation is only for the non-monotone setting, as the two approximation guarantees are the same for monotone objectives.
In both Theorems~\ref{thm:deterministic_extendible} and \ref{thm:deterministic_system}, the bound on the required number of oracle calls is a direct application of Observation~\ref{obs:main-alg-runtime} and the choice of $\ell$, the number of constructed solutions. The proof of the approximation ratios is more involved. In Section~\ref{sec:approx-meta-analysis}, we provide a unified meta-proof for analyzing \mainalg given a constraint obeying some kinds of parametrized properties. Then, in Sections~\ref{sec:analysis-extendible} and~\ref{sec:analysis-k-system} we show that $k$-extendible systems and $k$-systems have these properties for a proper choice of the parameters, respectively, yielding the the different approximation guarantees of Theorems~\ref{thm:deterministic_extendible} and~\ref{thm:deterministic_system}.

The second part of Theorem~\ref{thm:deterministic_system} considers $\ell = 1$, which recovers the greedy algorithm.
Although it was previously known that the greedy algorithm achieves $(k+1)$-approximation for monotone submodular objectives under a $k$-system, we remark that this result for this special setting is cleanly obtained by our unified analysis.
We also remark that, for monotone objectives, the result of Theorem~\ref{thm:deterministic_extendible} holds for any number of solutions $\ell \leq k+1$; which  further demonstrates that the analysis of the greedy algorithm is handled by our meta-analysis. 
The details for the case of $\ell \leq k+1$ are covered in the proof of Theorem~\ref{thm:deterministic_extendible}.

\SetKwIF{With}{OtherwiseWith}{Otherwise}{with}{do}{otherwise with}{otherwise}{}
 \begin{algorithm}[H]
	\caption{\samplegreedy($\cN, f, \mathcal{I}, k$)} \label{alg:sample_greedy}
	\DontPrintSemicolon
	Let $\gnd' \leftarrow \varnothing$ and $S \leftarrow \varnothing$.\\
	\For{each $u \in \gnd$}{
		\With{probability $(k+1)^{-1}$\label{line:sample}}{Add $u$ to $\gnd'$.}
	}
	\While{there exists $u \in \gnd'$ such that $S + u \in \cI$ and $ f(u \mid S) > 0$}{
		Let $u \in \gnd'$ be the element of this kind maximizing $f(u \mid S)$.\\
		Add $u$ to $S$.\\
	}
	\Return{$S$}.\\
\end{algorithm}

As mentioned in Section~\ref{sec:introduction}, one may interpret \mainalg as a de-randomization of \samplegreedy, the subsampling algorithm of \cite{FHK17} presented here as Algorithm~\ref{alg:sample_greedy}.
\samplegreedy creates a subsample of the ground set by sampling each element independently with probability $p$ and then running the vanilla greedy algorithm.
\cite{FHK17} show that, for $k$-extendible systems, setting the sampling probability to $p = (k+1)^{-1}$ yields an approximation ratio of $\nicefrac{(k+1)^2}{k}$, which improves to $k+1$ for monotone objectives (i.e., the same approximation guarantees of \mainalg for these cases). One of the key step in the analysis of \samplegreedy is an averaging argument over the distribution of solutions it may produce, whose support might be of exponential size. This means that na\"{i}vly trying to de-randomize \samplegreedy requires keeping all the states which it might take, and therefore, yields an exponential algorithm. In the analysis of \mainalg we bypass this hurdle by managing to make the above averaging argument work for a much smaller distribution whose support consists only of the $\ell$ solutions maintained by the algorithm.
We note that this idea of de-randomizing a randomized algorithm by coming up with a polynomial size distribution mimicking the behavior of an exponential size distribution was originally used in the context of submodular maximization by \citet{Buchbinder2016b}, albeit using very different techniques based on linear programming.
Finally, unlike \samplegreedy, \mainalg has the additional benefit of producing approximation guarantees for the more general class of $k$-systems.
%In fact, the approximation guarantee of \mainalg even improves the low-order terms of the previously best known approximation guarantees for this problem.
%Specifically, the repeated greedy algorithm of \cite{FHK17} along with the deterministic $2$-approximation for unconstrained submodular maximization of \cite{Buchbinder2016b} yields an approximation guarantee of $k + 2 \sqrt{k} + 3 + 6 / \sqrt{k}$.
%On the other hand, Theorem~\ref{thm:deterministic_system} demonstrates\footnote{Here, we use the inequality $\sqrt{x + c} \leq \sqrt{x} + c / (2\sqrt{x})$ that holds for every $x, c \geq 0$.} that the approximation guarantee of \mainalg is at most
%\[
%(1 + \sqrt{k+2})^2 
%= k + 2 \sqrt{k+2} + 3 
%\leq k + 2 \sqrt{k} + 3 + 2 / \sqrt{k} 
%\enspace.
%\]
%Now, one can observe that the approximation guarantee of \mainalg improves the constant in the lower order $O(1/\sqrt{k})$ term from $6$ to $2$.
%Although this improvement in the low-order term is quite modest, the key point is that the simultaneous greedy technique provides the best approximation guarantees for both settings.

\subsection{Meta-analysis for approximation guarantees} \label{sec:approx-meta-analysis}

In this section, we present a unified analysis for obtaining approximation guarantees for \mainalg under general independence system constraints.
Specifically, Proposition~\ref{prop:general_result} reduces the conditions for approximation to simple combinatorial statements relating the constructed solutions to $\OPT$. 
These combinatorial statements are shown to hold for $k$-extendible systems and $k$-systems in Sections~\ref{sec:analysis-extendible} and \ref{sec:analysis-k-system}, respectively.

The main idea of the unified analysis is to keep track of the elements of $\OPT$ which could have been{\textemdash}but were not{\textemdash}added to each of the $\ell$ solutions by the algorithm. 
At the beginning of the algorithm, all solutions are initialized to the empty set and so each element of $\OPT$ could be added to each solution in the first iteration.
However, every time that the algorithm adds an element to one of the solutions, it means that certain elements of $\OPT$ are now no longer able to be added to that solution, due to the independence constraint.
In this sense, these elements of $\OPT$ are ``thrown away'' from the set of possible elements to be added to the solution.
The main technical requirement of the unified approximation analysis is that only a few elements of $\OPT$ are thrown away in this sense at each iteration. 
These conditions are more precisely stated in the hypothesis of Proposition~\ref{prop:general_result}.

Let $\numiter$ be the number of iterations performed by \mainalg, and let $\SetF{U}{i}{j}$ be the singleton set $\{u_i\}$ if $j = j_i$ and the empty set otherwise.

\begin{proposition} \label{prop:general_result}
	Let us define $\SetF{O}{0}{j} =\OPT$ for every solution $1 \leq j \leq \ell$. If there exist a value $p$ and sets $\SetF{O}{i}{j}$ for every iteration $1 \leq i \leq \numiter$ and solution $1 \leq j \leq \ell$ such that
	\begin{compactitem}
		\item $\SetF{S}{i}{j} + u$ is independent for every iteration $0 \leq i \leq r$, solution $1 \leq j \leq \ell$, and element $u \in \SetF{O}{i}{j}$.
		\item $\SetF{O}{i}{j} \subseteq \SetF{O}{i - 1}{j} \cap \gnd_i$ for every iteration $1 \leq i \leq \numiter$ and solution $1 \leq j \leq \ell$.
		\item $(\SetF{S}{\numiter}{j} \setminus \SetF{S}{i}{j}) \cap \OPT \subseteq \SetF{O}{i}{j}$ for every iteration $0 \leq i \leq \numiter$ and solution $1 \leq j \leq \ell$.
		\item $\sum_{i = 1}^\ell |\SetF{O}{i - 1}{j} \setminus (\SetF{O}{i}{j} \cup \SetF{U}{i}{j})| \leq p$ for every iteration $1 \leq i \leq \numiter$.
	\end{compactitem}
	Then, the solution $\retsol$ produced by \mainalg is a $\frac{p + 1}{1 - \ell^{-1}}$-approximation solution.
	Moreover, this approximation ratio improves to $p+1$ when $f$ is monotone.
\end{proposition}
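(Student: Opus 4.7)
The plan is to lower-bound the average objective value $\frac{1}{\ell}\sum_j f(\SetF{S}{\numiter}{j})$ (which is at most $f(\retsol)$) by charging each element of $\OPT$ to an iteration of the algorithm. The central quantity is the greedy gain $\Delta_i := f(u_i \mid \SetF{S}{i-1}{j_i})$ at iteration $i$. Because the algorithm always picks the pair of maximum marginal gain and because the first hypothesis makes every $(u,j)$ with $u \in \SetF{O}{i-1}{j}$ a feasibility-legal candidate, one immediately gets $\Delta_i \geq f(u \mid \SetF{S}{i-1}{j})$ for every such pair: either the marginal gain is positive and the pair loses to $(u_i, j_i)$ in $\cA_i$, or it is non-positive and the trivial bound $\Delta_i \geq 0$ suffices.

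Fix a solution index $j$ and partition $\OPT$ using the nested family $\SetF{O}{0}{j} \supseteq \cdots \supseteq \SetF{O}{\numiter}{j}$ into \emph{adopted} elements ($u \in \SetF{S}{\numiter}{j}$), \emph{leftover} elements ($u \in \SetF{O}{\numiter}{j}$, hence $u \notin \SetF{S}{\numiter}{j}$ by the second hypothesis), and \emph{discarded} elements (those with first-exit iteration $i^* = i^*_j(u)$). Leftovers have $f(u \mid \SetF{S}{\numiter}{j}) \leq 0$, since the while loop terminated despite the pair $(u,j)$ being feasible. For a discarded $u$ one needs $u \notin \SetF{S}{\numiter}{j}$ so that submodularity runs in the right direction: the second hypothesis places $u \in \gnd_{i^*-1}$ and therefore outside $\SetF{S}{i^*-1}{j}$; the condition $u \notin \SetF{U}{i^*}{j}$ extends this to $\SetF{S}{i^*}{j}$; and the third hypothesis at $i = i^*$ then propagates the exclusion to $\SetF{S}{\numiter}{j}$. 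Submodularity now gives $f(u \mid \SetF{S}{\numiter}{j}) \leq f(u \mid \SetF{S}{i^*-1}{j}) \leq \Delta_{i^*}$.

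The final step is aggregation. Starting from the submodular telescoping bound $f(\OPT \cup \SetF{S}{\numiter}{j}) - f(\SetF{S}{\numiter}{j}) \leq \sum_{u \in \OPT \setminus \SetF{S}{\numiter}{j}} f(u \mid \SetF{S}{\numiter}{j})$, discarding the non-positive leftover contributions, regrouping discarded elements by their exit iteration, and invoking the fourth hypothesis yields
\[
\sum_{j=1}^\ell \bigl[ f(\OPT \cup \SetF{S}{\numiter}{j}) - f(\SetF{S}{\numiter}{j}) \bigr] \;\leq\; \sum_{i=1}^\numiter \Delta_i \sum_{j=1}^\ell \bigl| \SetF{O}{i-1}{j} \setminus (\SetF{O}{i}{j} \cup \SetF{U}{i}{j}) \bigr| \;\leq\; p \sum_{j=1}^\ell f(\SetF{S}{\numiter}{j}),
\]
where the last inequality uses $\sum_i \Delta_i = \sum_j f(\SetF{S}{\numiter}{j}) - \ell f(\varnothing) \leq \sum_j f(\SetF{S}{\numiter}{j})$. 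To lower-bound the left-hand side, disjointness of the $\SetF{S}{\numiter}{j}$ (guaranteed by the removal from $\gnd_i$ after each insertion) combined with iterated submodularity delivers $\sum_j f(\OPT \cup \SetF{S}{\numiter}{j}) \geq (\ell-1) f(\OPT) + f\bigl(\OPT \cup \bigcup_j \SetF{S}{\numiter}{j}\bigr) \geq (\ell - 1) f(\OPT)$. Rearranging gives $(\ell - 1) f(\OPT) \leq (1+p)\, \ell\, f(\retsol)$, i.e., the advertised $(p+1)/(1 - 1/\ell)$ ratio; in the monotone case $f(\OPT \cup \SetF{S}{\numiter}{j}) \geq f(\OPT)$ holds directly, improving the lower bound to $\ell f(\OPT)$ and sharpening the ratio to $p+1$. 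The most delicate step is the discarded-implies-absent deduction in the second paragraph: the third hypothesis looks like pure bookkeeping, but it is exactly what prevents a discarded OPT element from re-entering $\SetF{S}{\numiter}{j}$ at some later iteration and breaking the submodular comparison.
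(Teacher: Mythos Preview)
Your argument is correct, and it reaches the same two milestones as the paper---the inequality $(p+1)\sum_j f(\SetF{S}{\numiter}{j}) \geq \sum_j f(\OPT\cup\SetF{S}{\numiter}{j})$ and then $\sum_j f(\OPT\cup\SetF{S}{\numiter}{j}) \geq (\ell-1)f(\OPT)$---but by a noticeably different and more elementary route in both places. For the first milestone the paper maintains an \emph{inductive potential}: Lemma~\ref{lem:greedy_invariant} shows that $(p+1)\sum_j f(\SetF{S}{i}{j}) + \sum_j f(\SetF{O}{i}{j}\mid\SetF{S}{i}{j}) \geq \sum_j f(\OPT\cup\SetF{S}{i}{j})$ for every $i$, and then Corollary~\ref{cor:approximation_to_union} invokes termination to kill the middle term at $i=\numiter$. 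You instead do a single end-state decomposition of $f(\OPT\cup\SetF{S}{\numiter}{j})-f(\SetF{S}{\numiter}{j})$ element by element and charge each exiting $u$ to $\Delta_{i^*}$. For the second milestone the paper quotes Lemma~2.2 of Buchbinder et al., whereas your iterated-submodularity inequality $\sum_j f(\OPT\cup A_j)\geq f(\OPT\cup\bigcup_j A_j)+(\ell-1)f(\OPT)$ (valid because the $A_j$ are pairwise disjoint, so the pairwise intersections collapse to $\OPT$) recovers exactly the same bound without any external lemma. The paper's potential-function approach is what generalizes smoothly to the thresholding and knapsack variants later on (the invariant simply picks up extra error terms), but for this proposition your direct charging is cleaner.

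One remark on your second paragraph: the ``discarded $\Rightarrow u\notin\SetF{S}{\numiter}{j}$'' deduction is not actually needed. You are summing over $u\in\OPT\setminus\SetF{S}{\numiter}{j}$ to begin with, so $u\notin\SetF{S}{\numiter}{j}$ is free; the only thing to check is $u\notin\SetF{U}{i^*}{j}$ so that $u$ is counted by the fourth hypothesis, and that follows immediately from $u\notin\SetF{S}{\numiter}{j}$ (if $u\in\SetF{U}{i^*}{j}$ then $u\in\SetF{S}{i^*}{j}\subseteq\SetF{S}{\numiter}{j}$). In other words, your argument as written does not actually use the third hypothesis at all---the paper's inductive proof does need it (in Inequality~\eqref{eq:S_difference_condition_on}, to conclude $u_i\notin\OPT$ when $u_i\notin\SetF{O}{i-1}{j_i}$), but your static charging bypasses that step entirely. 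So your proof in fact establishes a slightly stronger statement than the proposition.
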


Before proceeding, we would like to provide some intuition for the conditions appearing in Proposition~\ref{prop:general_result}.
Intuitively, the set $\SetF{O}{i}{j}$ contains elements of $\OPT$ which have not already been added to a solution and can still be added to the $j$-th solution at iteration $i$.
Condition 1 formally states this ability to add the elements of $\SetF{O}{i}{j}$ to the $j$-th solution, and Condition 2 formally states that the elements in $\SetF{O}{i}{j}$ do not already appear in a solution.
Condition 3 requires $\SetF{O}{i}{j}$ to include all the elements of $\OPT$ which are eventually (but not yet) included in one of the final solutions.
Finally, Condition 4 is a bound on the number of elements which are removed from these sets at each iteration.
Together, these conditions are strong enough to provide a general approximation guarantee.

The following lemma is the first step towards proving Proposition~\ref{prop:general_result}. Intuitively, this lemma shows that as the iteration $i$ increases, the decrease in the value of $f(\SetF{O}{i}{j} \mid \SetF{S}{i}{j})$ is transferred, at least to some extent, to $\SetF{S}{i}{j}$.

\begin{lemma} \label{lem:greedy_invariant}
	Given the conditions of Proposition~\ref{prop:general_result}, for every iteration $0 \leq i \leq \numiter$,
	\[
	(p + 1) \cdot \sum_{j = 1}^\ell f(\SetF{S}{i}{j}) + \sum_{j = 1}^\ell f(\SetF{O}{i}{j} \mid \SetF{S}{i}{j})
	\geq
	\sum_{j = 1}^\ell f(\OPT \cup \SetF{S}{i}{j})
	\enspace.
	\]
\end{lemma}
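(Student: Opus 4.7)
The plan is to prove the inequality by induction on $i$. For the base case $i = 0$, substituting $\SetF{S}{0}{j} = \varnothing$ and $\SetF{O}{0}{j} = \OPT$ reduces the claim to $p\ell f(\varnothing) \geq 0$, which holds by non-negativity of $f$.

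For the inductive step, I will show that the change in $\mathrm{LHS} - \mathrm{RHS}$ across one iteration is non-negative. Since only $\SetF{S}{i-1}{j_i}$ is modified (by adding $u_i$), the change decomposes as $(p+1) f(u_i \mid \SetF{S}{i-1}{j_i}) - \sum_{j} \Delta^{(j)} - \Delta_R$, where $\Delta^{(j)} := f(\SetF{O}{i-1}{j} \mid \SetF{S}{i-1}{j}) - f(\SetF{O}{i}{j} \mid \SetF{S}{i}{j})$ and $\Delta_R := f(\OPT \cup \SetF{S}{i}{j_i}) - f(\OPT \cup \SetF{S}{i-1}{j_i})$. It therefore suffices to bound $\sum_j \Delta^{(j)} + \Delta_R \leq (p+1) f(u_i \mid \SetF{S}{i-1}{j_i})$. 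For each $j \neq j_i$, I will expand $\Delta^{(j)}$ by telescoping over the elements of $\SetF{O}{i-1}{j} \setminus \SetF{O}{i}{j}$, use submodularity to transfer each marginal back to the reference set $\SetF{S}{i-1}{j}$, and invoke the greedy choice together with Condition~1 to bound each resulting positive marginal by $f(u_i \mid \SetF{S}{i-1}{j_i})$ (elements whose marginal is non-positive contribute at most $0 \leq f(u_i \mid \SetF{S}{i-1}{j_i})$ and are fine). Since $\SetF{U}{i}{j} = \varnothing$ for $j \neq j_i$, this yields $\Delta^{(j)} \leq |\SetF{O}{i-1}{j} \setminus (\SetF{O}{i}{j} \cup \SetF{U}{i}{j})| \cdot f(u_i \mid \SetF{S}{i-1}{j_i})$.

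The main obstacle is bounding the $j = j_i$ contribution $\Delta^{(j_i)} + \Delta_R$: a naive expansion introduces an extra marginal of the form $-f(u_i \mid \SetF{O}{i}{j_i} \cup \SetF{S}{i-1}{j_i})$, whose sign is unrestricted for non-monotone $f$. I will split into two subcases on whether $u_i \in \SetF{O}{i-1}{j_i}$. If yes, then $u_i \in \OPT$ (since $\SetF{O}{i-1}{j_i} \subseteq \OPT$), so $\Delta_R = 0$; after cancelling the $u_i$ shared between $\SetF{O}{i-1}{j_i} \cup \SetF{S}{i-1}{j_i}$ and $\SetF{O}{i}{j_i} \cup \SetF{S}{i}{j_i}$, the same telescoping-submodularity-greedy argument gives $\Delta^{(j_i)} \leq (|D|+1) f(u_i \mid \SetF{S}{i-1}{j_i})$, where $D := \SetF{O}{i-1}{j_i} \setminus (\SetF{O}{i}{j_i} \cup \SetF{U}{i}{j_i})$ and the extra $+1$ comes from $f(u_i \mid \SetF{S}{i-1}{j_i})$ itself. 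If $u_i \notin \SetF{O}{i-1}{j_i}$, then Condition~3 applied to $j_i$ (noting $u_i \in \SetF{S}{\numiter}{j_i} \setminus \SetF{S}{i-1}{j_i}$) forces $u_i \notin \OPT$, so $\Delta_R = f(u_i \mid \OPT \cup \SetF{S}{i-1}{j_i})$; the problematic extra marginal $-f(u_i \mid \SetF{O}{i}{j_i} \cup \SetF{S}{i-1}{j_i})$ produced by expanding $\Delta^{(j_i)}$ then cancels exactly the submodularity-derived upper bound $\Delta_R \leq f(u_i \mid \SetF{O}{i}{j_i} \cup \SetF{S}{i-1}{j_i})$ (valid since $\SetF{O}{i}{j_i} \subseteq \OPT$), giving the same bound $\Delta^{(j_i)} + \Delta_R \leq (|D|+1) f(u_i \mid \SetF{S}{i-1}{j_i})$. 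Summing across all $j$ and invoking Condition~4 yields change in $\mathrm{LHS}-\mathrm{RHS}$ at least $\bigl[p - \sum_j |\SetF{O}{i-1}{j} \setminus (\SetF{O}{i}{j} \cup \SetF{U}{i}{j})|\bigr] f(u_i \mid \SetF{S}{i-1}{j_i}) \geq 0$, using the positivity of $f(u_i \mid \SetF{S}{i-1}{j_i})$ granted by $(u_i,j_i) \in \cA_i$, completing the induction.
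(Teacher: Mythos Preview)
Your proof is correct and follows essentially the same approach as the paper: induction on $i$, the same case split on whether $u_i \in \SetF{O}{i-1}{j_i}$ (using Condition~3 to force $u_i \notin \OPT$ in the negative case), and submodularity plus the greedy choice together with Condition~4 to bound the contribution of the removed $O$-elements by $p \cdot f(u_i \mid \SetF{S}{i-1}{j_i})$. The only cosmetic difference is that the paper inserts the intermediate quantity $\sum_j f(\SetF{O}{i-1}{j} \mid \SetF{S}{i}{j})$ and bounds the two resulting differences separately (Inequalities~\eqref{eq:S_difference_condition_on} and~\eqref{eq:O_difference}), whereas you compute $\Delta^{(j_i)} + \Delta_R$ directly and let the troublesome term $-f(u_i \mid \SetF{O}{i}{j_i} \cup \SetF{S}{i-1}{j_i})$ cancel against the submodularity upper bound on $\Delta_R$; the underlying inequalities are identical.
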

\begin{proof}
	We prove the lemma by induction on the iterations $i=0, 1, \dotsc, \numiter$. 
	The base case is the case of $i = 0$, corresponding to the initialization of the algorithm.
	Recall that  the solutions are initialized to be empty, i.e.,  $\SetF{S}{0}{j} = \varnothing$ for every $j \in [\ell]$.
	This, together with non-negativity of $f$, implies
	{\allowdisplaybreaks
	\begin{align*}
	\sum_{j = 1}^\ell f(\OPT \cup \SetF{S}{0}{j})
	&= \sum_{j = 1}^\ell f(\OPT \cup \varnothing) 
		&\text{(by the initialization $\SetF{S}{0}{j} = \varnothing$)} \\
	&= \sum_{j = 1}^\ell f(\varnothing) + \sum_{j = 1}^\ell f(\OPT \mid \varnothing) 
		&\text{(rearranging terms)}\\
	&= (p + 1) \cdot \sum_{j = 1}^\ell f(\varnothing) + \sum_{j = 1}^\ell f(\OPT \mid \varnothing)
		&\mspace{-9mu}\text{($f(\varnothing) \geq 0$ by the non-negativity)} \\
	&\leq (p + 1) \cdot \sum_{j = 1}^\ell f(\SetF{S}{0}{j}) + \sum_{j = 1}^\ell f(\SetF{O}{0}{j} \mid \SetF{S}{0}{j}) \enspace.
		&\text{(by the initialization $\SetF{S}{0}{j} = \varnothing$)}
	\end{align*}
	}%
%	\begin{align*}
%	\sum_{j = 1}^\ell f(OPT \cup \SetF{S}{0}{j})
%	={} &
%	\sum_{j = 1}^\ell f(OPT \cup \varnothing)
%	=
%	\sum_{j = 1}^\ell f(\varnothing) + \sum_{j = 1}^\ell f(OPT \mid \varnothing)\\
%	={} &
%	\sum_{j = 1}^\ell f(\SetF{S}{0}{j}) + \sum_{j = 1}^\ell f(\SetF{O}{0}{j} \mid \SetF{S}{0}{j})
%	\leq
%	(k + 1) \cdot \sum_{j = 1}^\ell f(\SetF{S}{0}{j}) + \sum_{j = 1}^\ell f(\SetF{O}{0}{j} \mid \SetF{S}{0}{j})
%	\enspace.
%	\end{align*}
	
	Assume now that the lemma holds for all iterations between $0$ to $i - 1 $, and let us prove it for iteration $i$.
	Recall that only the solution $\SetF{S}{i}{j_i}$ is modified during iteration $i$.
	Thus, we have that the change in iteration $i$ in the first sum in the guarantee of the lemma is
	\begin{equation} \label{eq:S_difference}
	(p + 1) \cdot \sum_{j = 1}^\ell f(\SetF{S}{i}{j})
	-
	(p + 1) \cdot \sum_{j = 1}^\ell f(\SetF{S}{i - 1}{j})
	=
	(p + 1) \cdot f(u_i \mid \SetF{S}{i - 1}{j_i})
	\enspace.
	\end{equation}
	
	Bounding the change in the second sum in the guarantee is more involved, and is done in two steps. The first step is the following inequality.
	\begin{align} \label{eq:S_difference_condition_on}
	\sum_{j = 1}^\ell f(\SetF{O}{i-1}{j} \mid \SetF{S}{i - 1}{j}) 
	&- \sum_{j = 1}^\ell f(\SetF{O}{i-1}{j} \mid \SetF{S}{i}{j}) \\ \nonumber
	&= f(\SetF{O}{i-1}{j_i} \mid \SetF{S}{i - 1}{j_i}) - f(\SetF{O}{i-1}{j_i} \mid \SetF{S}{i}{j_i})
		&\text{(only $\SetF{S}{i}{j_i}$ is modified)}\\ \nonumber
	&= f(u_i \mid \SetF{S}{i-1}{j_i}) - f(u_i \mid \SetF{O}{i-1}{j_i} \cup \SetF{S}{i - 1}{j_i})
		&\text{(rearranging terms)} \\ \nonumber 
	&\leq f(u_i \mid \SetF{S}{i-1}{j_i}) - f(u_i \mid\OPT \cup \SetF{S}{i - 1}{j_i})
	\enspace,
	\end{align}
%	\begin{align} \label{eq:S_difference_condition_on}
%	\sum_{j = 1}^\ell f(\SetF{O}{i-1}{j} \mid \SetF{S}{i - 1}{j})
%	&{}-
%	\sum_{j = 1}^\ell f(\SetF{O}{i-1}{j} \mid \SetF{S}{i}{j})
%	=
%	f(\SetF{O}{i-1}{j_i} \mid \SetF{S}{i - 1}{j_i}) - f(\SetF{O}{i-1}{j_i} \mid \SetF{S}{i}{j_i})\\ \nonumber
%	={} &
%	f(u_i \mid \SetF{S}{i-1}{j_i}) - f(u_i \mid \SetF{O}{i-1}{j_i} \cup \SetF{S}{i - 1}{j_i})
%	\leq
%	f(u_i \mid \SetF{S}{i-1}{j_i}) - f(u_i \mid\OPT \cup \SetF{S}{i - 1}{j_i})
%	\enspace.
%	\end{align}
	where the inequality may be proved by considering two cases.
	First, suppose that $u_i \in \SetF{O}{i-1}{j_i} \cup \SetF{S}{i-1}{j_i}$.
	In this case, the inequality holds with equality, because $\SetF{O}{i-1}{j_i} \subseteq \OPT$ by assumption.
	Consider now the case in which $u_i \not \in \SetF{O}{i-1}{j_i} \cup \SetF{S}{i-1}{j_i}$. 
	In this case, our assumption that $(\SetF{S}{\numiter}{j_i} \setminus \SetF{S}{i-1}{j_i}) \cap \OPT \subseteq \SetF{O}{i-1}{j_i}$ implies $u_i \not \in (\SetF{S}{\numiter}{j_i} \setminus \SetF{S}{i-1}{j_i}) \cap \OPT$, which implies in its turn $u_i \not \in \OPT$ since $u_i \in \SetF{S}{i}{j_i} \subseteq \SetF{S}{\numiter}{j_i}$ and $u_i \in \gnd_{i - 1} \subseteq \gnd \setminus \SetF{S}{i-1}{j_i}$. 
	Therefore, we get that in this case that Inequality~\eqref{eq:S_difference_condition_on} holds due to the submodularity of $f$ (recall that $\SetF{O}{i-1}{j_i} \subseteq \OPT$ by our assumption).
	
For the second step in the proof of the above mentioned bound, we need to observe that, by the definition of the pair $(u_i, j_i)$, we have $$f(u_i \mid \SetF{S}{i - 1}{j_i}) \geq f(u \mid \SetF{S}{i - 1}{j}) \geq f(u \mid \SetF{S}{i}{j}),$$ for any element $u \in \gnd_{i-1}$ and integer $1 \leq j \leq \ell$ for which $\SetF{S}{i - 1}{j} + u$ is independent---the second inequality follows from submodularity when either $u \neq u_i$ or $j \neq j_i$ and from the non-negativity of $f(u_i \mid \SetF{S}{i - 1}{j_i})$ when $u = u_i$ and $j = j_i$. Since $\SetF{O}{i - 1}{j} \subseteq \gnd_{i - 1}$ and $\SetF{S}{i - 1}{j} + u$ is independent for every $u \in \SetF{O}{i-1}{j}$ by our assumption, the last inequality implies
{\allowdisplaybreaks
	\begin{align} \label{eq:O_difference} 
	\nonumber
	\sum_{j = 1}^\ell & f(\SetF{O}{i - 1}{j} \mid \SetF{S}{i}{j}) \\ \nonumber
	&\leq \sum_{j = 1}^\ell f(\SetF{O}{i}{j} \mid \SetF{S}{i}{j}) 
		+ \sum_{j = 1}^\ell \sum_{u \in \SetF{O}{i-1}{j} \setminus \SetF{O}{i}{j}} \mspace{-18mu} f(u \mid \SetF{S}{i}{j})
		&\text{(submodularity, $\SetF{O}{i}{j} \subseteq \SetF{O}{i - 1}{j}$)}\\ \nonumber
	&= \sum_{j = 1}^\ell f(\SetF{O}{i}{j} \mid \SetF{S}{i}{j}) + \sum_{j = 1}^\ell \sum_{u \in \SetF{O}{i-1}{j} \setminus (\SetF{O}{i}{j} \cup \SetF{U}{i}{j})} \mspace{-36mu} f(u \mid \SetF{S}{i}{j})
		&\text{($\SetF{U}{i}{j} \subseteq \SetF{S}{i}{j}$)} \\ \nonumber 
	&\leq \sum_{j = 1}^\ell f(\SetF{O}{i}{j} \mid \SetF{S}{i}{j}) + \sum_{j = 1}^\ell \sum_{u \in \SetF{O}{i-1}{j} \setminus (\SetF{O}{i}{j} \cup \SetF{U}{i}{j})} \mspace{-36mu} f(u_i \mid \SetF{S}{i - 1}{j_i})
		&\text{(greedy selection of $u_i$)}\\ \nonumber 
	&= \sum_{j = 1}^\ell f(\SetF{O}{i}{j} \mid \SetF{S}{i}{j}) + f(u_i \mid \SetF{S}{i - 1}{j_i}) \cdot \sum_{j = 1}^\ell |\SetF{O}{i-1}{j} \setminus (\SetF{O}{i}{j} \cup \SetF{U}{i}{j})|
		&\text{(rearranging terms)} \\
	&\leq \sum_{j = 1}^\ell f(\SetF{O}{i}{j} \mid \SetF{S}{i}{j}) + p \cdot f(u_i \mid \SetF{S}{i - 1}{j_i})
	\enspace,
	\end{align}
}%
%	\begin{align} \label{eq:O_difference}
%	&
%	\sum_{j = 1}^\ell f(\SetF{O}{i - 1}{j} \mid \SetF{S}{i}{j})
%	\leq
%	\sum_{j = 1}^\ell f(\SetF{O}{i}{j} \mid \SetF{S}{i}{j}) + \sum_{j = 1}^\ell \sum_{u \in \SetF{O}{i-1}{j} \setminus \SetF{O}{i}{j}} \mspace{-18mu} f(u \mid \SetF{S}{i}{j})\\ \nonumber
%	={} &
%	\sum_{j = 1}^\ell f(\SetF{O}{i}{j} \mid \SetF{S}{i}{j}) + \sum_{j = 1}^\ell \sum_{u \in \SetF{O}{i-1}{j} \setminus (\SetF{O}{i}{j} \cup \SetF{U}{i}{j})} \mspace{-45mu} f(u \mid \SetF{S}{i}{j})
%	\leq
%	\sum_{j = 1}^\ell f(\SetF{O}{i}{j} \mid \SetF{S}{i}{j}) + \sum_{j = 1}^\ell \sum_{u \in \SetF{O}{i-1}{j} \setminus (\SetF{O}{i}{j} \cup \SetF{U}{i}{j})} \mspace{-45mu} f(u_i \mid \SetF{S}{i - 1}{j_i})\\ \nonumber
%	={} &
%	\sum_{j = 1}^\ell f(\SetF{O}{i}{j} \mid \SetF{S}{i}{j}) + f(u_i \mid \SetF{S}{i - 1}{j_i}) \cdot \sum_{j = 1}^\ell |\SetF{O}{i-1}{j} \setminus (\SetF{O}{i}{j} \cup \SetF{U}{i}{j})|
%	\leq
%	\sum_{j = 1}^\ell f(\SetF{O}{i}{j} \mid \SetF{S}{i}{j}) + k \cdot f(u_i \mid \SetF{S}{i - 1}{j_i})
%	\enspace\!,
%	\end{align}
%	where the first inequality holds by the submodularity of $f$ and our assumption that $\SetF{O}{i}{j} \subseteq \SetF{O}{i - 1}{j}$, the first equality holds since $\SetF{U}{i}{j}$ is always a subset of $\SetF{S}{i}{j}$, and the last inequality holds by our assumption that $\sum_{j = 1}^\ell |\SetF{O}{i - 1}{j} \setminus (\SetF{O}{i}{j} \cup \SetF{U}{i}{j})| \leq p$ and the fact that the termination condition of \mainalg implies $f(u_i \mid \SetF{S}{i - 1}{j_i}) > 0$.
	where the last inequality holds by our assumption that $\sum_{j = 1}^\ell |\SetF{O}{i - 1}{j} \setminus (\SetF{O}{i}{j} \cup \SetF{U}{i}{j})| \leq p$ and the non-negativity of $f(u_i \mid \SetF{S}{i - 1}{j_i})$.
	Combining Inequalities~\eqref{eq:S_difference}, \eqref{eq:S_difference_condition_on} and~\eqref{eq:O_difference}, we get
	\begin{align*}
	(p + 1) \cdot \sum_{j = 1}^\ell &f(\SetF{S}{i}{j}) + \sum_{j = 1}^\ell f(\SetF{O}{i}{j} \mid \SetF{S}{i}{j})\\
	\geq{} &
	\left[(p + 1) \cdot \sum_{j = 1}^\ell f(\SetF{S}{i - 1}{j}) + (p + 1) \cdot f(u_i \mid \SetF{S}{i - 1}{j_i})\right] + \left[\sum_{j = 1}^\ell f(\SetF{O}{i-1}{j} \mid \SetF{S}{i}{j}) - p \cdot f(u_i \mid \SetF{S}{i-1}{j_i})\right]\\
	={} &
	(p + 1) \cdot \sum_{j = 1}^\ell f(\SetF{S}{i - 1}{j})
	+
	\left[\sum_{j = 1}^\ell f(\SetF{O}{i - 1}{j} \mid \SetF{S}{i}{j}) + f(u_i \mid \SetF{S}{i-1}{j_i})\right]\\
	\geq{} &
	(p + 1) \cdot \sum_{j = 1}^\ell f(\SetF{S}{i - 1}{j})
	+
	\sum_{j = 1}^\ell f(\SetF{O}{i - 1}{j} \mid \SetF{S}{i - 1}{j})
	+ f(u_i \mid\OPT \cup \SetF{S}{i - 1}{j_i})\\
	\geq{} &
	\sum_{j = 1}^\ell f(\OPT \cup \SetF{S}{i - 1}{j}) + f(u_i \mid \OPT \cup \SetF{S}{i - 1}{j_i}) \\
	={}&
	\sum_{j = 1}^\ell f(\OPT \cup \SetF{S}{i}{j})
	\enspace,
	\end{align*}
	where the second inequality follows from submodularity and the last inequality follows from the induction hypothesis.
\end{proof}

The following corollary uses the last lemma to prove a lower bound on the sum of the objective values of the $\ell$ final solutions in terms of the optimal solution. 

\begin{corollary} \label{cor:approximation_to_union}
	Given the conditions of Proposition~\ref{prop:general_result},
	\[
	(p + 1) \cdot \sum_{j = 1}^\ell f(\SetF{S}{\numiter}{j})
	\geq
	\sum_{j = 1}^\ell f(\OPT \cup \SetF{S}{\numiter}{j})
	\enspace.
	\]
\end{corollary}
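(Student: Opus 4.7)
The plan is to apply Lemma~\ref{lem:greedy_invariant} at the final iteration $i = \numiter$ and argue that the second sum on the left hand side is non-positive, so it can be dropped.

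First I would invoke Lemma~\ref{lem:greedy_invariant} with $i = \numiter$, which yields
\[
(p + 1) \cdot \sum_{j = 1}^\ell f(\SetF{S}{\numiter}{j}) + \sum_{j = 1}^\ell f(\SetF{O}{\numiter}{j} \mid \SetF{S}{\numiter}{j}) \;\geq\; \sum_{j = 1}^\ell f(\OPT \cup \SetF{S}{\numiter}{j}).
\]
It therefore suffices to show that $f(\SetF{O}{\numiter}{j} \mid \SetF{S}{\numiter}{j}) \leq 0$ for every $1 \leq j \leq \ell$.

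To establish this, I would use the termination condition of \mainalg together with the properties of $\SetF{O}{\numiter}{j}$. Fix any solution index $j$ and any element $u \in \SetF{O}{\numiter}{j}$. The second hypothesis of Proposition~\ref{prop:general_result}, applied inductively, gives $\SetF{O}{\numiter}{j} \subseteq \gnd_\numiter$, so $u \in \gnd_\numiter$. The first hypothesis gives $\SetF{S}{\numiter}{j} + u \in \cI$. Because the algorithm terminated after iteration $\numiter$, the set $\cA_{\numiter + 1}$ is empty, so the pair $(u, j)$ cannot belong to it, and the only remaining possibility is $f(u \mid \SetF{S}{\numiter}{j}) \leq 0$. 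Then by submodularity (enumerating the elements of $\SetF{O}{\numiter}{j}$ in any order and using the diminishing returns inequality \eqref{eq:diminishing-returns}),
\[
f(\SetF{O}{\numiter}{j} \mid \SetF{S}{\numiter}{j}) \;\leq\; \sum_{u \in \SetF{O}{\numiter}{j}} f(u \mid \SetF{S}{\numiter}{j}) \;\leq\; 0.
\]
Summing over $j$ and substituting into the inequality from Lemma~\ref{lem:greedy_invariant} yields the claim.

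The only real step to worry about is confirming that every element of $\SetF{O}{\numiter}{j}$ is genuinely a candidate that the termination test ruled out; this is exactly what the first two hypotheses of Proposition~\ref{prop:general_result} were designed to guarantee, so no further work is needed there. The rest is a one-line application of submodularity and the already-established lemma.
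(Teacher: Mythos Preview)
Your proposal is correct and follows essentially the same approach as the paper's own proof: invoke Lemma~\ref{lem:greedy_invariant} at $i=\numiter$, then use the termination condition (together with the first two hypotheses of Proposition~\ref{prop:general_result}) and submodularity to show each term $f(\SetF{O}{\numiter}{j}\mid\SetF{S}{\numiter}{j})$ is non-positive. One tiny remark: you do not actually need to apply the second hypothesis ``inductively''---taking it at $i=\numiter$ already gives $\SetF{O}{\numiter}{j}\subseteq\SetF{O}{\numiter-1}{j}\cap\gnd_\numiter\subseteq\gnd_\numiter$ directly.
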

\begin{proof}
	The termination condition of \mainalg implies that $f(u \mid \SetF{S}{\numiter}{j}) \leq 0$ for every element $u \in \gnd_\numiter$ and integer $1 \leq j \leq \ell$ such that $\SetF{S}{\numiter}{j} + u$ is independent. Since $\SetF{O}{\numiter}{j} \subseteq \gnd_\numiter$ and $\SetF{S}{\numiter}{j} + u$ is independent for every $u \in \SetF{O}{\numiter}{j}$ by our assumption, this implies
	\[
	f(\SetF{O}{\numiter}{j} \mid \SetF{S}{\numiter}{j})
	\leq
	\sum_{u \in \SetF{O}{\numiter}{j}} f(u \mid \SetF{S}{\numiter}{j})
	\leq
	0
	\enspace,
	\]
	where the first inequality follows from the submodularity of $f$. Plugging this observation into the guarantee of Lemma~\ref{lem:greedy_invariant} for $i = \numiter$ yields
	\[
	\sum_{j = 1}^\ell f(\OPT \cup \SetF{S}{\numiter}{j})
	\leq
	(p + 1) \cdot \sum_{j = 1}^\ell f(\SetF{S}{\numiter}{j}) + \sum_{j = 1}^\ell f(\SetF{O}{\numiter}{j} \mid \SetF{S}{\numiter}{j})
	\leq
	(p + 1) \cdot \sum_{j = 1}^\ell f(\SetF{S}{\numiter}{j})
	\enspace.
	\qedhere
	\]
\end{proof}

To get an approximation ratio from the guarantee of the last corollary, we need to relate the sum $\sum_{j = 1}^\ell f(OPT \cup \SetF{S}{\numiter}{j})$ to $f(\OPT)$. We do this using the following known lemma.

\begin{lemma}[Lemma~2.2 of \cite{BFNS14}] \label{lem:distribution}
	Let $g\colon 2^\gnd \to \nnR$ be non-negative and submodular, and let $S$ a random subset of $\gnd$ in which each element appears with probability at most $p$ (not necessarily independently). Then, $\bE[g(S)] \geq (1 - p) \cdot g(\varnothing)$.
\end{lemma}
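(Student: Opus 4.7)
The plan is to reduce the lemma, via a standard submodular exchange argument, to the case where the distribution of $S$ is supported on a chain (a family totally ordered by inclusion), and then handle the chain case using only the marginal constraint and non-negativity. Concretely, let $\cP$ be the polytope of distributions $\mu$ on $2^\gnd$ satisfying $\sum_{T \ni u} \mu(T) \leq p$ for every $u \in \gnd$. Since $\cP$ is compact and $\mu \mapsto \bE_\mu[g(S)] = \sum_T \mu(T) g(T)$ is linear in $\mu$, a minimizer $\mu^*$ exists. I would show that $\mu^*$ must be supported on a chain.

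Suppose not, and pick two incomparable sets $T_1, T_2$ in the support of $\mu^*$. Let $\alpha = \min\{\mu^*(T_1), \mu^*(T_2)\} > 0$, and form $\mu'$ from $\mu^*$ by subtracting $\alpha$ from each of $\mu^*(T_1), \mu^*(T_2)$ and adding $\alpha$ to each of $\mu^*(T_1 \cap T_2), \mu^*(T_1 \cup T_2)$. The pointwise identity $\mathbf{1}[u \in T_1] + \mathbf{1}[u \in T_2] = \mathbf{1}[u \in T_1 \cap T_2] + \mathbf{1}[u \in T_1 \cup T_2]$ implies that every marginal of $\mu'$ equals the corresponding marginal of $\mu^*$, so $\mu' \in \cP$. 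Submodularity of $g$ then gives $\bE_{\mu'}[g(S)] - \bE_{\mu^*}[g(S)] = \alpha\bigl(g(T_1 \cap T_2) + g(T_1 \cup T_2) - g(T_1) - g(T_2)\bigr) \leq 0$, contradicting the minimality of $\mu^*$.

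It thus suffices to verify the bound when $\mu^*$ is supported on a chain $T_0 \subsetneq T_1 \subsetneq \cdots \subsetneq T_k$ with probabilities $\alpha_0, \ldots, \alpha_k$. If $T_0 \neq \varnothing$, every element of $T_0$ appears in $S$ with probability $\sum_{i=0}^k \alpha_i = 1$, which forces $p \geq 1$, so the claimed bound degenerates to $\bE[g(S)] \geq 0$ and holds by non-negativity. Otherwise $T_0 = \varnothing$, and if $k \geq 1$ then any element of $T_1 \setminus T_0$ has marginal $\sum_{i \geq 1} \alpha_i = 1 - \alpha_0 \leq p$, i.e.\ $\alpha_0 \geq 1 - p$. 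Non-negativity applied to each $g(T_i)$ with $i \geq 1$ then yields $\bE_{\mu^*}[g(S)] \geq \alpha_0 g(\varnothing) \geq (1-p)g(\varnothing)$, and the case $k=0$ is immediate. The only delicate part of the argument is the exchange step: one must design a swap that \emph{simultaneously} leaves the marginals exactly invariant (keeping $\mu'$ inside $\cP$) and decreases $\bE[g(S)]$, which is exactly what the inclusion-exclusion identity together with submodularity provide.
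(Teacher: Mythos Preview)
The paper does not prove this lemma; it is quoted verbatim as Lemma~2.2 of \cite{BFNS14} and used as a black box. So there is no paper-side argument to compare against, and the only question is whether your proof is correct.

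Your argument is sound except for one small logical slip in the uncrossing step. From submodularity you only get $\bE_{\mu'}[g(S)] \leq \bE_{\mu^*}[g(S)]$, not a strict inequality, so the swap does not \emph{contradict} minimality of $\mu^*$; it merely produces another minimizer. The standard fix is to pick $\mu^*$ more carefully: among all minimizers of $\bE_\mu[g(S)]$ in $\cP$ (these form a nonempty compact face), take one that additionally maximizes the potential $\Phi(\mu) = \sum_T \mu(T)\,|T|^2$. Your swap leaves the marginals unchanged and, for incomparable $T_1, T_2$ with $b = |T_1 \setminus T_2| > 0$ and $c = |T_2 \setminus T_1| > 0$, strictly increases $\Phi$ by $2\alpha bc > 0$, which now genuinely contradicts the choice of $\mu^*$. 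With this correction the reduction to chains is valid, and your chain-case analysis is correct as written.

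For comparison, a more compact route is via the Lov\'asz extension $\hat g$ of $g$: since $g$ is submodular, $\hat g$ is convex, so Jensen gives $\bE[g(S)] = \bE[\hat g(\characteristic_S)] \geq \hat g(\mathbf{q})$ where $q_u = \Pr[u \in S] \leq p$; then $\hat g(\mathbf{q}) = \int_0^1 g(\{u : q_u \geq \theta\})\,d\theta \geq (1-p)\,g(\varnothing)$ because the level set is empty for $\theta > p$. Your uncrossing argument is essentially a hands-on proof of this same convexity statement, so the two approaches are equivalent in spirit.
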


We are now ready to prove Proposition~\ref{prop:general_result}.
\begin{proof}[Proof of Proposition~\ref{prop:general_result}]
	Recall that the set $\retsol$ returned by \mainalg is the one having the largest objective value amongst all of the $\ell$ solutions.
	Thus, by a simple averaging argument together with Corollary~\ref{cor:approximation_to_union}, we obtain the following lower bound on its objective value,
	\begin{equation} \label{eq:first-obj-lowerbound}
	f(\retsol) 
	= \max_{j = 1 \dots \ell} f(\SetF{S}{\numiter}{j})
	\geq \frac{1}{\ell} \cdot \sum_{j = 1}^\ell f(\SetF{S}{\numiter}{\ell})
	\geq \frac{1}{p+1} \left[ \frac{1}{\ell} \cdot \sum_{j=1}^\ell f( \OPT \cup \SetF{S}{\numiter}{\ell}) \right]
	.
	\end{equation}
	Consider now a random set $\bar{S}$ chosen uniformly at random from the $\ell$ constructed solutions $\SetF{S}{\numiter}{1}, \SetF{S}{\numiter}{2}, \dotsc, \SetF{S}{\numiter}{\ell}$. 
	Since the solutions are disjoint by construction, an element can belong to $\bar{S}$ with probability at most $\ell^{-1}$. 
	Hence, by applying Lemma~\ref{lem:distribution} to the submodular function $g(S) = f(\OPT \cup S)$, we get
	\[
	\frac{1}{\ell} \cdot \sum_{j=1}^\ell f( \OPT \cup \SetF{S}{\numiter}{\ell})
	= \bE[ f(OPT \cup \bar{S}) ]
	= \bE[ g(\bar{S})]
	\geq (1 - \ell^{-1}) \cdot g( \varnothing)
	= (1 - \ell^{-1}) \cdot f(\OPT) \enspace.
	\]
%	\begin{align*}
%	f(\retsol)
%	\geq{} &
%	\frac{1}{\ell} \cdot \sum_{j = 1}^\ell f(\SetF{S}{\numiter}{\ell})
%	\geq
%	\frac{\frac{1}{\ell} \cdot  \sum_{j = 1}^\ell f(\OPT \cup \SetF{S}{\numiter}{\ell})}{p + 1}
%	=
%	\frac{\bE[f(OPT \cup \bar{S})]}{p + 1}\\
%	={} &
%	\frac{\bE[g(\bar{S})]}{p + 1}
%	\geq
%	\frac{(1 - \ell^{-1}) \cdot g(\varnothing)}{p + 1}
%	=
%	\frac{(1 - \ell^{-1}) \cdot f(\OPT)}{p + 1}
%	\enspace,
%	\end{align*}
	Together with Inequality~\eqref{eq:first-obj-lowerbound}, this shows that the returned solution $\retsol$ is a $(p+1) / (1- \ell^{-1})$-approximation, as desired.
	We remark also that if $f$ is monotone, then for each solution $1 \leq j \leq \ell$ we have that $f( \OPT \cup \SetF{S}{\numiter}{\ell}) \geq f(\OPT)$.
	Applying this directly to Inequality~\eqref{eq:first-obj-lowerbound} yields that the returned set $\retsol$ is a $(p+1)$-approximation when $f$ is monotone.
\end{proof}

\subsection{Analysis for \texorpdfstring{$k$}{k}-Extendible Systems} \label{sec:analysis-extendible}

In this section we use Proposition~\ref{prop:general_result} to prove Theorem~\ref{thm:deterministic_extendible}. 
Throughout this section we assume that $(\gnd, \cI)$ is a $k$-extendible system.
We demonstrate that for any number of solutions $\ell$, the conditions of Proposition~\ref{prop:general_result} hold with the value $p = \max(k, \ell - 1)$.
The proof of Theorem~\ref{thm:deterministic_extendible} follows by setting $\ell = k + 1$.

In order to show that the conditions of Proposition~\ref{prop:general_result} hold, we need to construct a set $\SetF{O}{i}{j}$ for every iteration $0 \leq i \leq \numiter$ and solution $1 \leq j \leq \ell$. Thus, we begin the section by explaining how to construct these sets.
The construction is done in a recursive way, and with the knowledge of the algorithm's execution path.
For $i = 0$, we define $\SetF{O}{0}{j} =\OPT$ for every $1 \leq j \leq \ell$, as is required by Proposition~\ref{prop:general_result}. 
Assume now that the sets $\SetF{O}{i - 1}{1}, \SetF{O}{i - 1}{2}, \dotsc, \SetF{O}{i - 1}{\ell}$ have already been constructed for some iteration $i > 0$, 
then we construct the sets $\SetF{O}{i}{1}, \SetF{O}{i}{2}, \dotsc, \SetF{O}{i}{\ell}$ as follows:
\begin{itemize}
	\item For every solution $1 \leq j \leq \ell$ other than $j_i$, $\SetF{O}{i}{j} = \SetF{O}{i-1}{j} - u_i$.
	\item If $u_i \in \SetF{O}{i-1}{j_i}$, then $\SetF{O}{i}{j_i} = \SetF{O}{i-1}{j_i} - u_i$, else $\SetF{O}{i}{j_i}$ is any maximal subset of $\SetF{O}{i-1}{j_i}$ such that $\SetF{O}{i}{j_i} \cup \SetF{S}{i}{j_i}$ is independent and $(\SetF{S}{\numiter}{j_i} \setminus \SetF{S}{i}{j_i}) \cap\OPT \subseteq \SetF{O}{i}{j_i}$. Notice that such a subset must exist because $[(\SetF{S}{\numiter}{j_i} \setminus \SetF{S}{i}{j_i}) \cap\OPT] \cup \SetF{S}{i}{j_i} \subseteq \SetF{S}{\numiter}{j_i}$ is an independent set and $(\SetF{S}{\numiter}{j_i} \setminus \SetF{S}{i}{j_i}) \cap\OPT \subseteq (\SetF{S}{\numiter}{j_i} \setminus \SetF{S}{i - 1}{j_i}) \cap\OPT \subseteq \SetF{O}{i-1}{j_i}$.
\end{itemize}

\begin{proposition} \label{prop:k-extendible-construction}
	If $(\cI, \gnd)$ is a $k$-extendible system, then the sets $\SetF{O}{i}{j}$ constructed above satisfy the conditions of Proposition~\ref{prop:general_result} with $p = \max(k, \ell - 1)$.
\end{proposition}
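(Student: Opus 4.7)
The plan is to verify the four conditions of Proposition~\ref{prop:general_result} inductively on the iteration $i$, after first establishing the stronger invariant that $\SetF{O}{i}{j} \cup \SetF{S}{i}{j} \in \cI$ for every $i$ and $j$. The invariant holds at $i = 0$ (where both sides equal $\OPT$) and is preserved at each iteration: for $j \neq j_i$ we only shrink the union by removing $u_i$; for $j = j_i$ with $u_i \in \SetF{O}{i-1}{j_i}$ we simply move $u_i$ from the $O$-side to the $S$-side, leaving the union unchanged; and for $j = j_i$ with $u_i \notin \SetF{O}{i-1}{j_i}$ the construction explicitly enforces it. Condition~1 follows from the invariant by down-closure; Condition~2 is the direct observation that $u_i$ is either removed from $\SetF{O}{i-1}{j}$ by the construction or was not there to begin with; and Condition~3 is an inductive check whose branches use either that the solutions are disjoint (so $u_i \notin \SetF{S}{\numiter}{j}$ for $j \neq j_i$), that $u_i \in \SetF{S}{i}{j_i}$, or the second side condition built into the ``else'' case of the construction.

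The crux of the proof is Condition~4, which must yield $p = \max(k, \ell - 1)$. Fix an iteration $i$. The $j \neq j_i$ summand equals $1$ when $u_i \in \SetF{O}{i-1}{j}$ and $0$ otherwise, and the $j = j_i$ summand vanishes in Case~A ($u_i \in \SetF{O}{i-1}{j_i}$) because $\SetF{O}{i}{j_i} \cup \{u_i\} = \SetF{O}{i-1}{j_i}$; hence the Case~A total is at most $\ell - 1$. In Case~B ($u_i \notin \SetF{O}{i-1}{j_i}$) the $j = j_i$ summand equals $|\SetF{O}{i-1}{j_i} \setminus \SetF{O}{i}{j_i}|$, which I would bound by $k$ by invoking $k$-extendibility with $A := \SetF{O}{i}{j_i} \cup \SetF{S}{i-1}{j_i}$ and $B := \SetF{O}{i-1}{j_i} \cup \SetF{S}{i-1}{j_i}$. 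The stronger invariant at $i - 1$ certifies $B \in \cI$, while the construction in Case~B ensures $A + u_i = \SetF{O}{i}{j_i} \cup \SetF{S}{i}{j_i} \in \cI$; $k$-extendibility then produces a set $Y \subseteq \SetF{O}{i-1}{j_i} \setminus \SetF{O}{i}{j_i}$ with $|Y| \leq k$ such that $(\SetF{O}{i-1}{j_i} \setminus Y) \cup \SetF{S}{i}{j_i} \in \cI$. Since this candidate still contains $T := (\SetF{S}{\numiter}{j_i} \setminus \SetF{S}{i}{j_i}) \cap \OPT \subseteq \SetF{O}{i}{j_i}$, a strict inclusion $Y \subsetneq \SetF{O}{i-1}{j_i} \setminus \SetF{O}{i}{j_i}$ would contradict the maximality of $\SetF{O}{i}{j_i}$, forcing $|\SetF{O}{i-1}{j_i} \setminus \SetF{O}{i}{j_i}| = |Y| \leq k$.

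It remains to close Case~B by ruling out contributions from the $j \neq j_i$ summands. I would invoke Condition~3 at iteration $i - 1$: since $u_i \in \SetF{S}{\numiter}{j_i} \setminus \SetF{S}{i-1}{j_i}$, if $u_i$ lay in $\OPT$ it would lie in $\SetF{O}{i-1}{j_i}$, contradicting Case~B; hence $u_i \notin \OPT$, and since $\SetF{O}{i-1}{j} \subseteq \OPT$ for every $j$, all the $j \neq j_i$ summands vanish. Combining, the total in Condition~4 is bounded by $\max(k, \ell - 1) = p$, as required. The main obstacle is the Case~B bound: it requires choosing exactly the right pair $(A, B)$ so that the existence statement supplied by $k$-extendibility can be upgraded, via the maximality of $\SetF{O}{i}{j_i}$, into an actual size bound; the secondary subtlety is the ``$u_i \notin \OPT$ in Case~B'' observation, without which the Case~B total would be $k + (\ell - 1)$ rather than $k$.
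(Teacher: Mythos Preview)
Your proposal is correct and follows essentially the same approach as the paper: the same stronger invariant $\SetF{O}{i}{j} \cup \SetF{S}{i}{j} \in \cI$, the same inductive verification of Conditions~1--3, and the same Case~A/Case~B split (equivalent to the paper's $u_i \in \OPT$ vs.\ $u_i \notin \OPT$) for Condition~4. The only cosmetic difference is in the Case~B application of $k$-extendibility: the paper takes $A = \SetF{S}{i-1}{j_i} \cup \big[(\SetF{S}{\numiter}{j_i} \setminus \SetF{S}{i}{j_i}) \cap \OPT\big]$ and argues that $\SetF{O}{i-1}{j_i} \setminus Y$ is a valid candidate, whence $|\SetF{O}{i}{j_i}| \geq |\SetF{O}{i-1}{j_i}| - k$ by maximality, whereas you take the larger $A = \SetF{O}{i}{j_i} \cup \SetF{S}{i-1}{j_i}$ so that $Y \subseteq \SetF{O}{i-1}{j_i} \setminus \SetF{O}{i}{j_i}$ directly, and then force equality via maximality---both routes yield the same bound.
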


The next four lemmata together prove Proposition~\ref{prop:k-extendible-construction} by verifying each of the conditions in Proposition~\ref{prop:general_result}. 
\begin{lemma} \label{lem:extendible_condition_1}
	For every iteration $0 \leq i \leq \numiter$ and solution $1 \leq j \leq \ell$, $\SetF{O}{i}{j} \cup \SetF{S}{i}{j}$ is independent, and thus, $\SetF{S}{i}{j} + u$ is independent for every $u \in \SetF{O}{i}{j}$.
\end{lemma}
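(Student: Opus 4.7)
The plan is to prove this by a straightforward induction on the iteration index $i$. The first claim (independence of $\SetF{O}{i}{j} \cup \SetF{S}{i}{j}$) is built into the recursive construction of the sets $\SetF{O}{i}{j}$, so the work is just to verify the invariant is preserved at each iteration. The second claim then follows immediately by down-closure, since for any $u \in \SetF{O}{i}{j}$ the set $\SetF{S}{i}{j} + u$ is a subset of the independent set $\SetF{O}{i}{j} \cup \SetF{S}{i}{j}$.

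For the base case $i = 0$, we have $\SetF{S}{0}{j} = \varnothing$ and $\SetF{O}{0}{j} = \OPT$, so $\SetF{O}{0}{j} \cup \SetF{S}{0}{j} = \OPT$ is independent by assumption. For the inductive step, I would split into the two cases governed by the construction. If $j \neq j_i$, then $\SetF{S}{i}{j} = \SetF{S}{i-1}{j}$ and $\SetF{O}{i}{j} = \SetF{O}{i-1}{j} - u_i \subseteq \SetF{O}{i-1}{j}$; hence $\SetF{O}{i}{j} \cup \SetF{S}{i}{j} \subseteq \SetF{O}{i-1}{j} \cup \SetF{S}{i-1}{j}$, which is independent by the inductive hypothesis, and down-closure finishes this case.

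If $j = j_i$, then $\SetF{S}{i}{j_i} = \SetF{S}{i-1}{j_i} + u_i$. When $u_i \in \SetF{O}{i-1}{j_i}$, the construction sets $\SetF{O}{i}{j_i} = \SetF{O}{i-1}{j_i} - u_i$, and a direct set-theoretic computation gives
\[
\SetF{O}{i}{j_i} \cup \SetF{S}{i}{j_i} = (\SetF{O}{i-1}{j_i} - u_i) \cup (\SetF{S}{i-1}{j_i} + u_i) = \SetF{O}{i-1}{j_i} \cup \SetF{S}{i-1}{j_i},
\]
which is independent by induction. When $u_i \notin \SetF{O}{i-1}{j_i}$, the construction explicitly chooses $\SetF{O}{i}{j_i}$ as a maximal subset of $\SetF{O}{i-1}{j_i}$ for which $\SetF{O}{i}{j_i} \cup \SetF{S}{i}{j_i}$ is independent; the excerpt already argues that such a subset exists (the candidate $(\SetF{S}{\numiter}{j_i} \setminus \SetF{S}{i}{j_i}) \cap \OPT$ witnesses non-emptiness of the family, and maximality is achieved by standard enlargement).

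The only mildly delicate point is the last sub-case, where we must trust the explicit existence argument from the construction rather than produce the independent set ourselves; no other obstacle arises, as everything else is bookkeeping with down-closure. Note that $k$-extendibility is not actually invoked in this lemma — it will only be needed in the subsequent lemmata bounding $|\SetF{O}{i-1}{j} \setminus (\SetF{O}{i}{j} \cup \SetF{U}{i}{j})|$ by $k$.
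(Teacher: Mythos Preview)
Your proof is correct and follows essentially the same approach as the paper: induction on $i$, with the base case $\SetF{O}{0}{j}\cup\SetF{S}{0}{j}=\OPT$, the case $j\neq j_i$ handled by down-closure via $\SetF{O}{i}{j}\cup\SetF{S}{i}{j}\subseteq\SetF{O}{i-1}{j}\cup\SetF{S}{i-1}{j}$, and the case $j=j_i$ split according to whether $u_i\in\SetF{O}{i-1}{j_i}$ (set-theoretic identity) or not (independence holds by construction). Your remark that $k$-extendibility is not used here is also accurate.
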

\begin{proof}
	We prove the lemma by induction on the iteration $i$. 
	For $i = 0$, the lemma holds since $$\SetF{O}{i}{j} \cup \SetF{S}{i}{j} =\OPT \cup \varnothing =\OPT.$$ 
	Assume now that the lemma holds for all iterations up to and including $i - 1 \geq 0$, and let us prove it for iteration $i$. 
	For solutions which were not updated at this iteration (that is, $j \neq j_i$), the lemma follows from the induction hypothesis since
	\[
	\SetF{O}{i}{j} \cup \SetF{S}{i}{j}
	=
	[\SetF{O}{i-1}{j} - u_i] \cup \SetF{S}{i - 1}{j}
	\subseteq
	\SetF{O}{i-1}{j} \cup \SetF{S}{i - 1}{j}
	\enspace.
	\]
	
	It remains to prove the lemma for the solution $j = j_i$ which was updated. 
	If $u_i \not \in \SetF{O}{i-1}{j_i}$, then $\SetF{O}{i}{j} \cup \SetF{S}{i}{j}$ is independent by the construction of $\SetF{O}{i}{j}$. 
	Otherwise, $\SetF{O}{i}{j} \cup \SetF{S}{i}{j}$ is independent by the induction hypothesis since
	\[
	\SetF{O}{i}{j} \cup \SetF{S}{i}{j}
	=
	[\SetF{O}{i-1}{j} - u_i] \cup [\SetF{S}{i - 1}{j} + u_i]
	=
	\SetF{O}{i-1}{j} \cup \SetF{S}{i - 1}{j}
	\enspace.
	\qedhere
	\]
\end{proof}

\begin{lemma} \label{lem:extendible_condition_2}
	For every iteration $1 \leq i \leq \numiter$ and solution $0 \leq j \leq \ell$, $\SetF{O}{i}{j} \subseteq \SetF{O}{i - 1}{j} \cap \gnd_i$. Moreover, for $i = 0$ we have $\SetF{O}{i}{j} \subseteq \gnd_i$ for every solution $0 \leq j \leq \ell$.
\end{lemma}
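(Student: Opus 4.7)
The plan is to prove the lemma by induction on $i$, using the explicit construction of the sets $\SetF{O}{i}{j}$ given just before the lemma. The claim is relatively direct: the construction only ever removes elements (possibly $u_i$) from the previous set $\SetF{O}{i-1}{j}$, so the containment $\SetF{O}{i}{j} \subseteq \SetF{O}{i-1}{j}$ is essentially built in, and the containment $\SetF{O}{i}{j} \subseteq \gnd_i = \gnd_{i-1} - u_i$ follows from combining this with the inductive hypothesis and checking that $u_i \notin \SetF{O}{i}{j}$ in each case.

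First, I would dispose of the base case $i = 0$. Since $\OPT$ is feasible it is in particular a subset of $\gnd = \gnd_0$, and by definition $\SetF{O}{0}{j} = \OPT$, so $\SetF{O}{0}{j} \subseteq \gnd_0$ for every $1 \leq j \leq \ell$. Next, for the inductive step, fix an iteration $i \geq 1$ and a solution $j$. I would split into the three cases appearing in the construction: (i) $j \neq j_i$, where $\SetF{O}{i}{j} = \SetF{O}{i-1}{j} - u_i$; (ii) $j = j_i$ and $u_i \in \SetF{O}{i-1}{j_i}$, where again $\SetF{O}{i}{j_i} = \SetF{O}{i-1}{j_i} - u_i$; and (iii) $j = j_i$ and $u_i \notin \SetF{O}{i-1}{j_i}$, where $\SetF{O}{i}{j_i}$ is chosen as a subset of $\SetF{O}{i-1}{j_i}$ that moreover does not contain $u_i$ (since $u_i$ is not in $\SetF{O}{i-1}{j_i}$ to begin with). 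In all three cases $\SetF{O}{i}{j} \subseteq \SetF{O}{i-1}{j}$ and $u_i \notin \SetF{O}{i}{j}$.

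It then remains to deduce $\SetF{O}{i}{j} \subseteq \gnd_i$. By the induction hypothesis, if $i - 1 = 0$ then $\SetF{O}{i-1}{j} \subseteq \gnd_0$, and if $i - 1 \geq 1$ then $\SetF{O}{i-1}{j} \subseteq \gnd_{i-1}$; in either case $\SetF{O}{i-1}{j} \subseteq \gnd_{i-1}$ (using $\gnd_0 = \gnd$ when $i = 1$). Combining with $\SetF{O}{i}{j} \subseteq \SetF{O}{i-1}{j}$ and $u_i \notin \SetF{O}{i}{j}$, we get
\[
\SetF{O}{i}{j} \subseteq \SetF{O}{i-1}{j} - u_i \subseteq \gnd_{i-1} - u_i = \gnd_i,
\]
which together with the already-established containment in $\SetF{O}{i-1}{j}$ yields $\SetF{O}{i}{j} \subseteq \SetF{O}{i-1}{j} \cap \gnd_i$.

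There is no real obstacle here; the argument is bookkeeping over the three cases of the construction, and the only point that warrants attention is verifying $u_i \notin \SetF{O}{i}{j}$ in case (iii), which uses the hypothesis of that case rather than an explicit set-minus operation.
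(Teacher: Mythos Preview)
Your proposal is correct and follows essentially the same approach as the paper: induction on $i$, using that in every case of the construction $\SetF{O}{i}{j}\subseteq\SetF{O}{i-1}{j}$ and $u_i\notin\SetF{O}{i}{j}$, then combining with the inductive hypothesis $\SetF{O}{i-1}{j}\subseteq\gnd_{i-1}$ to get $\SetF{O}{i}{j}\subseteq\gnd_{i-1}-u_i=\gnd_i$. The only cosmetic difference is that the paper phrases the case split as ``$u_i\in\SetF{O}{i-1}{j}$ versus $u_i\notin\SetF{O}{i-1}{j}$'' rather than enumerating the three construction cases, but the logic is identical.
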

\begin{proof}
	We prove the lemma by induction on iterations $i$. 
	For $i = 0$, the lemma trivially holds since $\gnd_0 = \gnd$. 
	Assume now that the lemma holds for iterations up to and including $i - 1 \geq 0$, and let us prove it for iteration $i$. 
	By the construction of $\SetF{O}{i}{j}$, it is a subset of $\SetF{O}{i - 1}{j}$, an thus, to prove the lemma it suffices to show that $\SetF{O}{i}{j} \subseteq \gnd_i = \gnd_{i-1} - u_i$.
	
	The last inclusion follows from combining the next two observations:
	By the induction hypothesis, $\SetF{O}{i - 1}{j}$ is a subset of $\gnd_{i-1}$, and therefore, so must be $\SetF{O}{i}{j}$.
	If $u_i \not \in \SetF{O}{i - 1}{j}$, then $u_i$ cannot belong to $\SetF{O}{i}{j}$ because the last set is a subset of $\SetF{O}{i - 1}{j}$. Otherwise, we get by construction $\SetF{O}{i}{j} = \SetF{O}{i - 1}{j} - u_i$, which guarantees again that $u_i$ does not belong to $\SetF{O}{i}{j}$.
\end{proof}

\begin{lemma} \label{lem:extendible_condition_3}
	For every iteration $0 \leq i \leq \numiter$ and solution $1 \leq j \leq \ell$, $(\SetF{S}{\numiter}{j} \setminus \SetF{S}{i}{j}) \cap\OPT \subseteq \SetF{O}{i}{j}$.
\end{lemma}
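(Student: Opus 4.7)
The plan is to prove Lemma~\ref{lem:extendible_condition_3} by forward induction on the iteration index $i$, and at each step split on whether the updated solution at iteration $i$ is the one under consideration, exploiting the recursive definition of $\SetF{O}{i}{j}$.

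For the base case $i = 0$, we have $\SetF{S}{0}{j} = \varnothing$ and $\SetF{O}{0}{j} = \OPT$ by definition, so the inclusion $(\SetF{S}{\numiter}{j} \setminus \varnothing) \cap \OPT \subseteq \OPT$ holds trivially. For the inductive step, I would assume the statement for $i-1$ and prove it for $i$.

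In the case $j \neq j_i$, the recursive construction gives $\SetF{O}{i}{j} = \SetF{O}{i-1}{j} - u_i$ while $\SetF{S}{i}{j} = \SetF{S}{i-1}{j}$. The induction hypothesis yields $(\SetF{S}{\numiter}{j} \setminus \SetF{S}{i-1}{j}) \cap \OPT \subseteq \SetF{O}{i-1}{j}$, so it only remains to show that $u_i \notin \SetF{S}{\numiter}{j}$, so that removing $u_i$ from $\SetF{O}{i-1}{j}$ cannot break the containment. This follows because after iteration $i$ the algorithm updates $\gnd_i \gets \gnd_{i-1} - u_i$ and thereafter all feasible pairs draw elements from $\gnd_i$; hence $u_i$ cannot be added to any solution $j \neq j_i$ at any later iteration, and it is not in $\SetF{S}{i-1}{j}$ either (since $u_i \in \gnd_{i-1}$). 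In the case $j = j_i$, I would split further. If $u_i \in \SetF{O}{i-1}{j_i}$, then $\SetF{O}{i}{j_i} = \SetF{O}{i-1}{j_i} - u_i$ and $\SetF{S}{i}{j_i} = \SetF{S}{i-1}{j_i} + u_i$; any element $v \in (\SetF{S}{\numiter}{j_i} \setminus \SetF{S}{i}{j_i}) \cap \OPT$ is in particular in $(\SetF{S}{\numiter}{j_i} \setminus \SetF{S}{i-1}{j_i}) \cap \OPT \subseteq \SetF{O}{i-1}{j_i}$ by the induction hypothesis, and satisfies $v \neq u_i$, so $v \in \SetF{O}{i-1}{j_i} - u_i = \SetF{O}{i}{j_i}$. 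If $u_i \notin \SetF{O}{i-1}{j_i}$, then the required inclusion $(\SetF{S}{\numiter}{j_i} \setminus \SetF{S}{i}{j_i}) \cap \OPT \subseteq \SetF{O}{i}{j_i}$ is already explicitly built into the definition of $\SetF{O}{i}{j_i}$, so nothing remains to be shown beyond invoking the construction.

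The bulk of the work is purely bookkeeping, and the main (small) obstacle is the case $j \neq j_i$, where the subtle point is that $u_i$ being removed from $\SetF{O}{i-1}{j}$ is harmless precisely because $u_i$ is banished from the ground set after iteration $i$ and therefore never appears in $\SetF{S}{\numiter}{j}$ for any $j \neq j_i$; this disjointness of the $\ell$ constructed solutions is exactly what keeps the induction closed. The case $u_i \notin \SetF{O}{i-1}{j_i}$ was effectively handled in advance by the foresight embedded in the construction, which is why we required $\SetF{O}{i}{j_i}$ to contain $(\SetF{S}{\numiter}{j_i} \setminus \SetF{S}{i}{j_i}) \cap \OPT$ explicitly (and the existence of such a maximal subset was already justified in the construction itself).
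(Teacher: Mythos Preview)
Your proposal is correct and follows essentially the same inductive approach as the paper. The paper organizes the case split slightly differently---it branches on whether $\SetF{O}{i}{j} = \SetF{O}{i-1}{j} - u_i$ (which covers both your $j \neq j_i$ case and your $j = j_i$, $u_i \in \SetF{O}{i-1}{j_i}$ case simultaneously) versus the remaining case---but the content is the same, and you are actually a bit more explicit than the paper about why $u_i \notin \SetF{S}{\numiter}{j}$ when $j \neq j_i$.
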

\begin{proof}
	We prove the lemma by induction on the iterations $i$. 
	For $i = 0$, the lemma holds since
	\[
	(\SetF{S}{\numiter}{j} \setminus \SetF{S}{i}{j}) \cap \OPT
	\subseteq
	\OPT
	=
	\SetF{O}{i}{j}
	\enspace.
	\]
	Assume now that the lemma holds for all iterations up to and including $i - 1 \geq 0$, and let us prove it for iteration $i$. 
	There are two cases to consider. 
	If $\SetF{O}{i}{j} = \SetF{O}{i-1}{j} - u_i$, then by the induction hypothesis, since $u_i$ is the sole element of $\SetF{S}{i}{j}$ that does not appear in $\SetF{S}{i-1}{j}$ (if there is such an element at all),
	\[
	(\SetF{S}{\numiter}{j} \setminus \SetF{S}{i}{j}) \cap \OPT
	=
	(\SetF{S}{\numiter}{j} \setminus \SetF{S}{i - 1}{j}) \cap \OPT - u_i
	\subseteq
	\SetF{O}{i-1}{j} - u_i
	=
	\SetF{O}{i}{j}
	\enspace.
	\]
	
	It remains to consider the case in which $\SetF{O}{i}{j} \neq \SetF{O}{i-1}{j} - u_i$. 
	However, there is only one case in the construction of $\SetF{O}{i}{j}$ in which this might happen, and in this case $\SetF{O}{i}{j}$ is chosen as a set including $(\SetF{S}{\numiter}{j} \setminus \SetF{S}{i}{j}) \cap\OPT$, so there is nothing to prove.
\end{proof}

\begin{lemma} \label{lem:extendible_condition_4}
	For every iteration $1 \leq i \leq \numiter$, $\sum_{j = 1}^{\ell} |\SetF{O}{i - 1}{j} \setminus (\SetF{O}{i}{j} \cup \SetF{U}{i}{j})| \leq \max(k, \ell - 1)$.
\end{lemma}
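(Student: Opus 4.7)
The plan is to split the sum into its $j = j_i$ term and the $\ell - 1$ remaining terms, bound each piece, and then combine the two bounds using a case split on whether $u_i \in \OPT$.

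First I would handle the terms with $j \neq j_i$. For each such $j$, the construction gives $\SetF{O}{i}{j} = \SetF{O}{i-1}{j} - u_i$ and $\SetF{U}{i}{j} = \varnothing$, so the summand equals $1$ if $u_i \in \SetF{O}{i-1}{j}$ and $0$ otherwise. Since every $\SetF{O}{i-1}{j}$ sits inside $\OPT$ (by a simple induction on $i$ using $\SetF{O}{0}{j} = \OPT$ and Lemma~\ref{lem:extendible_condition_2}), this entire group vanishes whenever $u_i \notin \OPT$, and is bounded by $\ell - 1$ in any case. For the $j = j_i$ term, the easy subcase is $u_i \in \SetF{O}{i-1}{j_i}$, where $\SetF{O}{i}{j_i} \cup \SetF{U}{i}{j_i} = (\SetF{O}{i-1}{j_i} - u_i) + u_i = \SetF{O}{i-1}{j_i}$, so the summand is zero.

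The main technical step is the remaining subcase, $j = j_i$ with $u_i \notin \SetF{O}{i-1}{j_i}$, where I would invoke $k$-extendibility to prove $|\SetF{O}{i-1}{j_i} \setminus \SetF{O}{i}{j_i}| \leq k$. Setting $C = (\SetF{S}{\numiter}{j_i} \setminus \SetF{S}{i}{j_i}) \cap \OPT$ (which is contained in $\SetF{O}{i-1}{j_i}$ via Lemma~\ref{lem:extendible_condition_3} applied at iteration $i-1$), I would apply the $k$-extendibility definition to the independent pair $A = C \cup \SetF{S}{i-1}{j_i} \subseteq B = \SetF{O}{i-1}{j_i} \cup \SetF{S}{i-1}{j_i}$ together with the element $u_i$, noting that $A + u_i = C \cup \SetF{S}{i}{j_i} \subseteq \SetF{S}{\numiter}{j_i}$ is independent. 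This produces $Y \subseteq \SetF{O}{i-1}{j_i} \setminus C$ with $|Y| \leq k$ such that $(\SetF{O}{i-1}{j_i} \setminus Y) \cup \SetF{S}{i}{j_i}$ is independent. Consequently $\SetF{O}{i-1}{j_i} \setminus Y$ is a feasible candidate for $\SetF{O}{i}{j_i}$, and the maximal choice in the construction can be taken to extend it, yielding the claimed bound of $k$.

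The crucial observation tying everything together, and the main subtlety I anticipate, is that this hard subcase is incompatible with $u_i \in \OPT$: I would show by induction on $i' = 0, 1, \ldots, i-1$ that $u_i \in \SetF{O}{i'}{j_i}$ whenever $u_i \in \OPT$. The only nontrivial inductive step occurs at an iteration $i'$ with $j_{i'} = j_i$ and $u_{i'} \notin \SetF{O}{i'-1}{j_i}$; there, the maximal-choice branch of the construction forces $\SetF{O}{i'}{j_i}$ to contain $(\SetF{S}{\numiter}{j_i} \setminus \SetF{S}{i'}{j_i}) \cap \OPT$, and this set contains $u_i$ since $u_i \in \OPT \cap \SetF{S}{i}{j_i} \subseteq \OPT \cap \SetF{S}{\numiter}{j_i}$ and $u_i \notin \SetF{S}{i'}{j_i}$. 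Combining the pieces: if $u_i \in \OPT$ then only the easy subcase contributes for $j = j_i$ and the total is at most $0 + (\ell - 1)$; if $u_i \notin \OPT$ then the off-diagonal terms vanish and only the $j = j_i$ term contributes at most $k$. In either case the sum is bounded by $\max(k, \ell - 1)$, as required.
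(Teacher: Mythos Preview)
Your proposal is correct and follows essentially the same approach as the paper: the same case split (on whether $u_i \in \OPT$) and the same application of $k$-extendibility with $A = [(\SetF{S}{\numiter}{j_i} \setminus \SetF{S}{i}{j_i}) \cap \OPT] \cup \SetF{S}{i-1}{j_i}$ and $B = \SetF{O}{i-1}{j_i} \cup \SetF{S}{i-1}{j_i}$. The only cosmetic difference is that your separate induction establishing ``$u_i \in \OPT \Rightarrow u_i \in \SetF{O}{i-1}{j_i}$'' is unnecessary work: this is exactly Lemma~\ref{lem:extendible_condition_3} applied at iteration $i-1$ together with the fact that $u_i \in \SetF{S}{\numiter}{j_i} \setminus \SetF{S}{i-1}{j_i}$, which is how the paper dispatches this step in one line.
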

\begin{proof}
	There are two cases to consider. 
	If $u_i \in\OPT$, then Lemma~\ref{lem:extendible_condition_3} guarantees that $\SetF{O}{i-1}{j_i}$ contains $u_i$, and thus, by construction, $\SetF{O}{i}{j} = \SetF{O}{i-1}{j} - u_i$ for every solution $1 \leq j \leq \ell$. 
	Thus,
	\[
	\sum_{i = 1}^{\ell} |\SetF{O}{i - 1}{j} \setminus (\SetF{O}{i}{j} \cup \SetF{U}{i}{j})|
	\leq
	\sum_{i = 1}^{\ell} |\{u_i\} \setminus \SetF{U}{i}{j}|
	=
	\ell - 1
	\enspace,
	\]
	where the equality holds since, by definition, $\SetF{U}{i}{j}$ is equal to $\{u_i\}$ for $j = j_i$ and to $\varnothing$ for every other $j$.
	
	Consider now the case of $u_i \not \in\OPT$. 
	In this case $u_i$ does not belong to $\SetF{O}{i-1}{j}$ for any $j$ because a repeated application of Lemma~\ref{lem:extendible_condition_2} can show that $\SetF{O}{i-1}{j}$ is a subset of $\OPT$. 
	Since $\SetF{O}{i}{j} = \SetF{O}{i-1}{j} - u_i$ for every solution $j \neq j_i$, we get for every such solution $j$,
	\[
	\SetF{O}{i - 1}{j} \setminus (\SetF{O}{i}{j} \cup \SetF{U}{i}{j})
	=
	\varnothing
	\enspace.
	\]
	To understand the set $\SetF{O}{i - 1}{j_i} \setminus (\SetF{O}{i}{j_i} \cup \SetF{U}{i}{j_i})$, we need to make a few observations. First, we recall that by Lemma~\ref{lem:extendible_condition_1}, $\SetF{O}{i-1}{j_i} \cup \SetF{S}{i-1}{j_i}$ is independent. Second, $$(\SetF{S}{\numiter}{j_i} \setminus \SetF{S}{i}{j_i}) \cap \OPT \subseteq \SetF{O}{i}{j_i} \subseteq \SetF{O}{i-1}{j_i}$$ by Lemmata~\ref{lem:extendible_condition_2} and~\ref{lem:extendible_condition_3}, and finally, $$\SetF{S}{i - 1}{j_i} \cup [(\SetF{S}{\numiter}{j_i} \setminus \SetF{S}{i}{j_i}) \cap\OPT] + u_i \subseteq \SetF{S}{\numiter}{j_i}$$ is also independent. Since $(\gnd, \cI)$ is $k$-extendible, these three observations imply together that there must exist a set $Y$ of size at most $k$ such that $(\SetF{O}{i-1}{j_i} \setminus Y) \cup \SetF{S}{i}{j_i}$ is independent, and $Y$ does not include elements of $(\SetF{S}{\numiter}{j_i} \setminus \SetF{S}{i}{j_i}) \cap \OPT$. 
	One can now observe that $\SetF{O}{i-1}{j_i} \setminus Y$ obeys all the conditions to be $\SetF{O}{i}{j_i}$ according to the construction of this set in the case of $u_i \not \in \SetF{O}{i-1}{j_i}$, and thus, since the construction selects a maximal set obeying these conditions as $\SetF{O}{i}{j_i}$, we get
	\[
	|\SetF{O}{i}{j_i}|
	\geq
	|\SetF{O}{i-1}{j_i} \setminus Y|
	\geq
	|\SetF{O}{i-1}{j_i}| - |Y|
	\geq
	|\SetF{O}{i-1}{j_i}| - k
	\enspace.
	\]
	Since $\SetF{O}{i}{j_i}$ is a subset of $\SetF{O}{i - 1}{j_i}$, this implies
	\[
	|\SetF{O}{i - 1}{j_i} \setminus (\SetF{O}{i}{j_i} \cup \SetF{U}{i}{j_i})|
	\leq
	|\SetF{O}{i - 1}{j_i} \setminus \SetF{O}{i}{j_i}|
	=
	|\SetF{O}{i - 1}{j_i}| - |\SetF{O}{i}{j_i}|
	\leq
	k
	\enspace.
	\]
	Combining everything that we have proved for the case of $u_i \not \in \OPT$, we get that in this case
	\[
	\sum_{i = 1}^{\ell} |\SetF{O}{i - 1}{j} \setminus (\SetF{O}{i}{j} \cup \SetF{U}{i}{j})|
	=
	(\ell - 1) \cdot |\varnothing| + |\SetF{O}{i - 1}{j_i} \setminus (\SetF{O}{i}{j_i} \cup \SetF{U}{i}{j_i})|
	\leq
	k
	\enspace.
	\]
	The two cases together yields that the sum in question is at most $p = \max(k, \ell - 1)$.
\end{proof}

We are now ready to prove Theorem~\ref{thm:deterministic_extendible}.
\begin{proof}[Proof of Theorem~\ref{thm:deterministic_extendible}]
	Lemmata~\ref{lem:extendible_condition_1}, \ref{lem:extendible_condition_2}, \ref{lem:extendible_condition_3} and~\ref{lem:extendible_condition_4} together prove Proposition~\ref{prop:k-extendible-construction}, which states that the sets we have constructed obey the conditions of Proposition~\ref{prop:general_result} for $p = \max(k, \ell - 1)$. 
	This implies that the approximation ratio of \mainalg for $k$-extendible systems is at most 
	\[
	\frac{p+1}{1 - \ell^{-1}}
	= \frac{\max(k, \ell - 1) + 1}{1 - \ell^{-1}}
	= \frac{\max( k +1, \ell)}{1 - \ell^{-1}}
	\]
	Choosing the number of solutions to be $\ell = k + 1$ optimizes this approximation factor and yields
	\[
	\frac{\max( k +1, \ell)}{1 - \ell^{-1}}
	=
	\frac{k + 1}{1 - (k + 1)^{-1}}
	=
	\frac{(k + 1)^2}{(k + 1) - 1}
	=
	\frac{(k + 1)^2}{k}
	\enspace.
%	\tag*{\BlackBox}
%	\qedhere
	\]
	Now further suppose that $f$ is monotone in addition to being submodular and non-negative. 
	In this case, Proposition~\ref{prop:general_result} guarantees an approximation factor of at most 
	\[
	p + 1 = \max(k, \ell - 1) + 1 = \max(k+1, \ell)
	\enspace,
	\]
	which demonstrates that the approximation factor improves to $k+1$ for any number of solutions $\ell \leq k+1$.
\end{proof}

\subsection{Analysis for \texorpdfstring{$k$}{k}-Systems} \label{sec:analysis-k-system}

In this section we use Proposition~\ref{prop:general_result} to prove Theorem~\ref{thm:deterministic_system}. 
Throughout this section we assume that $(\gnd, \cI)$ is a $k$-system.
We demonstrate that for any number of solutions $\ell$, the conditions of Proposition~\ref{prop:general_result} hold with the value $p = k + \ell - 1$.
Then, to prove Theorem~\ref{thm:deterministic_system}, we choose $\ell = \lfloor 2 + \sqrt{k + 2} \rfloor$.

To use Proposition~\ref{prop:general_result}, we need to construct a set $\SetF{O}{i}{j}$ for every iteration $0 \leq i \leq \numiter$ and solution $1 \leq j \leq \ell$. 
As in Section~\ref{sec:analysis-extendible}, these sets are constructed recursively and with knowledge of the deterministic algorithm's execution path; however, for the case of $k$-systems, the construction of these sets starts at the final iteration and works backwards to the first iteration.
We begin by constructing related sets $\SetF{\tilde{O}}{i}{j}$ using the following recursive rule.
\begin{itemize}
	\item For the final iteration $i = \numiter$, $\SetF{\tilde{O}}{i}{j}$ contains all the elements of $\OPT \setminus \SetF{S}{i}{j}$ that can be added to $\SetF{S}{i}{j}$ without violating independence. In other words, $\SetF{\tilde{O}}{i}{j} = \{u \in\OPT \setminus \SetF{S}{i}{j} \mid \SetF{S}{i}{j} + u \in \cI\}$.
	\item For earlier iterations $i < \numiter$, if the solution $j$ is unaffected at this iteration (that is, $j \neq j_{i + 1}$) then we simply set $\SetF{\tilde{O}}{i}{j} = \SetF{\tilde{O}}{i + 1}{j}$. 
	Otherwise, let $\SetF{B}{i}{j_{i + 1}}$ be the set of elements of $\OPT \setminus (\SetF{S}{i + 1}{j_{i + 1}} \cup \SetF{\tilde{O}}{i+1}{j_{i + 1}})$ that can be added to $\SetF{S}{i}{j_{i + 1}}$ without violating independence. 
	In other words, $$\SetF{B}{i}{j_{i + 1}} = \{u \in \OPT \setminus (\SetF{S}{i + 1}{j_{i + 1}} \cup \SetF{\tilde{O}}{i + 1}{j_{i + 1}}) \mid \SetF{S}{i}{j_{i + 1}} + u \in \cI\}.$$ 
	We also denote by $\SetF{\tilde{B}}{i}{j_{i + 1}}$ an arbitrary subset of $\SetF{B}{i}{j_{i + 1}}$ of size $\min\{|\SetF{B}{i}{j_{i + 1}}|, k\}$. 
	Using this notation, we can now define $$\SetF{\tilde{O}}{i}{j_{i + 1}} = \SetF{\tilde{O}}{i + 1}{j_{i + 1}} \cup \SetF{\tilde{B}}{i}{j_{i + 1}} \cup (\OPT \cap \{u_{i + 1}\}).$$
\end{itemize}
Using the sets $\SetF{\tilde{O}}{i}{j}$ defined by the above recursive rule, we can now define the sets $\SetF{O}{i}{j}$ using the following formula. 
For every iteration $0 \leq i \leq r$ and solution $1 \leq j \leq \ell$, let $\SetF{O}{i}{j} = \SetF{\tilde{O}}{i}{j} \cap \gnd_i$.

\begin{proposition} \label{prop:k-system-construction}
	If $(\cI, \gnd)$ is a $k$-system, then the sets $\SetF{O}{i}{j}$ constructed above satisfy the conditions of Proposition~\ref{prop:general_result} with $p = k + \ell - 1$.
\end{proposition}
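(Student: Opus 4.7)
The plan is to verify that the four conditions of Proposition~\ref{prop:general_result} hold with $p = k + \ell - 1$ for the recursively constructed sets $\SetF{O}{i}{j} = \SetF{\tilde{O}}{i}{j} \cap \gnd_i$. Since the construction of $\SetF{\tilde{O}}{i}{j}$ proceeds backwards from $i = \numiter$ to $i = 0$, I would prove Conditions~1 and~3 by reverse induction on $i$, which matches the direction of the recursive definition.

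For Condition~1, the base case $i = \numiter$ is immediate from the defining property of $\SetF{\tilde{O}}{\numiter}{j}$. For the inductive step, I would split on whether $j = j_{i+1}$. If not, $\SetF{S}{i}{j} = \SetF{S}{i+1}{j}$ and $\SetF{\tilde{O}}{i}{j} = \SetF{\tilde{O}}{i+1}{j}$, so the hypothesis carries over. If $j = j_{i+1}$, I would consider each of the three parts of $\SetF{\tilde{O}}{i}{j_{i+1}} = \SetF{\tilde{O}}{i+1}{j_{i+1}} \cup \SetF{\tilde{B}}{i}{j_{i+1}} \cup (\OPT \cap \{u_{i+1}\})$: the first uses the inductive hypothesis combined with downward closure (since $\SetF{S}{i}{j_{i+1}} \subseteq \SetF{S}{i+1}{j_{i+1}}$), the second is immediate from the definition of $\SetF{B}{i}{j_{i+1}}$, and the third follows because $\SetF{S}{i}{j_{i+1}} + u_{i+1} = \SetF{S}{i+1}{j_{i+1}} \in \cI$. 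Condition~2 is essentially bookkeeping: $\SetF{O}{i}{j} \subseteq \gnd_i$ by definition, and the recursive construction makes $\SetF{\tilde{O}}{i}{j} \subseteq \SetF{\tilde{O}}{i-1}{j}$ visibly true in both cases (whether or not $j = j_i$).

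For Condition~3, I would again use reverse induction. Given $u \in (\SetF{S}{\numiter}{j} \setminus \SetF{S}{i}{j}) \cap \OPT$, the element $u$ must be added to solution $j$ at some iteration $i' > i$, hence $u \in \gnd_i$. If $i' = i+1$, then $j = j_{i+1}$ and $u = u_{i+1}$, so $u \in \OPT \cap \{u_{i+1}\} \subseteq \SetF{\tilde{O}}{i}{j}$ directly from the construction. If $i' > i+1$, the inductive hypothesis places $u$ in $\SetF{O}{i+1}{j} \subseteq \SetF{\tilde{O}}{i+1}{j}$, and $\SetF{\tilde{O}}{i+1}{j} \subseteq \SetF{\tilde{O}}{i}{j}$ (from Condition~2's argument) finishes the case.

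The main obstacle is Condition~4, where the bound $k + \ell - 1$ must be extracted carefully. For $j \neq j_i$, the construction gives $\SetF{\tilde{O}}{i-1}{j} = \SetF{\tilde{O}}{i}{j}$ and $\SetF{U}{i}{j} = \varnothing$, so the set difference reduces to $\SetF{\tilde{O}}{i}{j} \cap (\gnd_{i-1} \setminus \gnd_i) \subseteq \{u_i\}$, contributing at most $1$ per such $j$ and hence at most $\ell - 1$ in total. For $j = j_i$, any element $u$ in $\SetF{O}{i-1}{j_i} \setminus (\SetF{O}{i}{j_i} \cup \{u_i\})$ satisfies $u \neq u_i$ (so $u \in \gnd_i$) and $u \notin \SetF{\tilde{O}}{i}{j_i}$, which forces $u \in \SetF{\tilde{O}}{i-1}{j_i} \setminus \SetF{\tilde{O}}{i}{j_i} \subseteq \SetF{\tilde{B}}{i-1}{j_i} \cup (\OPT \cap \{u_i\})$; since $u \neq u_i$, we get $u \in \SetF{\tilde{B}}{i-1}{j_i}$, a set of size at most $k$ by construction. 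Summing gives $(\ell - 1) + k = k + \ell - 1$, completing the verification. The $k$-system property enters precisely through this cap of size $k$ on $\SetF{\tilde{B}}{i-1}{j_i}$, which is the right choice because for $k$-systems the relevant optimum-extending additions in the analysis of Lemma~\ref{lem:greedy_invariant} can be charged against the greedy marginal gain with multiplicity $k$.
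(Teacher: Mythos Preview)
Your verification of the four bulleted conditions is correct and matches the paper's Lemmata~\ref{lem:system_condition_1}--\ref{lem:system_condition_4} essentially step for step. However, there is a genuine gap: you never verify that $\SetF{O}{0}{j} = \OPT$. Proposition~\ref{prop:general_result} is phrased with $\SetF{O}{0}{j} = \OPT$ as a standing hypothesis, so any construction must satisfy it. In the $k$-extendible analysis this is immediate from the forward recursion, but here the sets $\SetF{\tilde{O}}{i}{j}$ are built backwards from $i = \numiter$, and it is \emph{not} obvious that by the time the recursion reaches $i = 0$ all of $\OPT$ has been absorbed. The paper devotes a separate lemma to this (the unnumbered lemma just before Lemma~\ref{lem:system_condition_1}), proving by reverse induction that $|\OPT \setminus (\SetF{\tilde{O}}{i}{j} \cup \SetF{S}{i}{j})| \leq k \cdot |\SetF{S}{i}{j}|$, which at $i = 0$ forces $\SetF{\tilde{O}}{0}{j} = \OPT$.

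This is also exactly where your closing remark goes wrong. You say the $k$-system property ``enters precisely through this cap of size $k$ on $\SetF{\tilde{B}}{i-1}{j_i}$,'' but that cap is purely definitional---$\SetF{\tilde{B}}{i}{j_{i+1}}$ is \emph{chosen} to have size $\min\{|\SetF{B}{i}{j_{i+1}}|, k\}$, independent of any structural property of $\cI$. The $k$-system hypothesis is what guarantees that capping at $k$ is \emph{enough}: it is invoked (via the base-ratio definition) to show that whenever the recursion hits either the base case $i = \numiter$ or an iteration where $|\SetF{\tilde{B}}{i}{j}| < k$, the set $\SetF{S}{i}{j}$ is a base of $(\OPT \setminus \SetF{\tilde{O}}{i}{j}) \cup \SetF{S}{i}{j}$, so the uncovered portion of $\OPT$ has size at most $k \cdot |\SetF{S}{i}{j}|$. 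Without this step, nothing in your argument rules out $\SetF{O}{0}{j} \subsetneq \OPT$, and the whole application of Proposition~\ref{prop:general_result} would fail.
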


The following lemmata together prove Proposition~\ref{prop:k-system-construction} by verifying each of the conditions of Proposition~\ref{prop:general_result}.
Unlike the case in Section~\ref{sec:analysis-extendible}, here it is not clear from the construction that $\SetF{O}{0}{j} =\OPT$ for each of the solutions $1 \leq j \leq \ell$. 
The next lemma proves that this is indeed the case.
\begin{lemma}
	For every solution $1 \leq j \leq \ell$, $\SetF{\tilde{O}}{0}{j} =\OPT$, and thus, $\SetF{O}{0}{j} = \SetF{\tilde{O}}{0}{j} \cap \gnd_0 =\OPT$ because $\gnd_0 = \gnd$.
\end{lemma}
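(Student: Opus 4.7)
I would prove the claim by reverse induction on $i$ from $\numiter$ down to $0$. The easy inclusion $\SetF{\tilde{O}}{0}{j} \subseteq \OPT$ holds because every set inserted into $\SetF{\tilde{O}}{i}{j}$ during the recursion (namely $\SetF{\tilde{O}}{\numiter}{j}$, every $\SetF{\tilde{B}}{i}{j_{i+1}}$, and every $\OPT \cap \{u_{i+1}\}$) is a subset of $\OPT$.

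For the reverse inclusion, I would proceed element by element. Fix $u \in \OPT$ and split into three cases. If $u \in \SetF{S}{\numiter}{j}$, then $u$ enters $\SetF{\tilde{O}}{i}{j}$ via the $\OPT \cap \{u_{i+1}\}$ term at the backward step corresponding to the iteration at which $u$ was appended to solution $j$. If $u \notin \SetF{S}{\numiter}{j}$ and $\SetF{S}{\numiter}{j} + u \in \cI$, then $u \in \SetF{\tilde{O}}{\numiter}{j}$ by definition. Since the sequence $\{\SetF{\tilde{O}}{i}{j}\}$ is monotonically non-decreasing as $i$ decreases, both cases immediately give $u \in \SetF{\tilde{O}}{0}{j}$.

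The remaining case, $u \notin \SetF{S}{\numiter}{j}$ and $\SetF{S}{\numiter}{j} + u \notin \cI$, is where the main difficulty lies. The natural approach is to identify the critical index $\tau(u) = \max\{i : \SetF{S}{i}{j} + u \in \cI\}$, which is well-defined by the singleton-feasibility assumption, must satisfy $\tau(u) < \numiter$, and forces $j_{\tau(u)+1} = j$. A direct check then gives $u \in \SetF{B}{\tau(u)}{j}$, so $u$ is captured by $\SetF{\tilde{B}}{\tau(u)}{j}$ whenever $|\SetF{B}{\tau(u)}{j}| \leq k$. The hard part is handling the case in which $|\SetF{B}{i}{j_{i+1}}| > k$ at some backward step so that the arbitrary size-$k$ selection may omit $u$; here I would maintain a reverse-induction invariant stating that $\SetF{\tilde{O}}{i}{j} \cup (\OPT \cap \SetF{S}{i}{j})$ contains a maximal independent extension of $\SetF{S}{i}{j}$ within $\OPT$, whose base case at $i = \numiter$ uses the $k$-system property applied to $\OPT \cup \SetF{S}{\numiter}{j}$, and whose inductive step relies on the construction of $\SetF{\tilde{B}}{i}{j_{i+1}}$ to transfer the missing extension one step earlier (so that an element missed by $\SetF{\tilde{B}}{i}{j_{i+1}}$ reappears in $\SetF{B}{i'}{j}$ at some earlier step $i' < i$, where $\SetF{S}{i'}{j}$ is smaller). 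Evaluating this invariant at $i = 0$, where $\SetF{S}{0}{j} = \varnothing$, forces $\SetF{\tilde{O}}{0}{j}$ to contain a maximal independent subset of $\OPT$, which must be $\OPT$ itself since $\OPT \in \cI$.
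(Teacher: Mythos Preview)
Your treatment of the easy cases is correct, and you correctly identify the critical index $\tau(u)$ and verify that a missed element reappears in $\SetF{B}{i'}{j}$ at the next earlier backward step where solution $j$ is modified. However, the hard case has a genuine gap. Tracking a fixed element $u$ through the recursion cannot succeed on its own: the subset $\SetF{\tilde{B}}{i}{j}$ is chosen \emph{arbitrarily} from $\SetF{B}{i}{j}$ whenever $|\SetF{B}{i}{j}|>k$, so an adversarial choice can exclude $u$ at every step. Your reappearance observation shows $u$ keeps being eligible, not that it is ever selected. The proposed invariant (``$\SetF{\tilde{O}}{i}{j}\cup(\OPT\cap \SetF{S}{i}{j})$ contains a maximal independent extension of $\SetF{S}{i}{j}$ within $\OPT$'') does not rescue this: it is ambiguous (which maximal extension?), and even under a charitable reading the inductive step fails, because when $|\SetF{B}{i}{j}|>k$ the $k$ arbitrarily chosen elements of $\SetF{\tilde{B}}{i}{j}$ need not be the ones required to extend any particular $C_{i+1}$ to a maximal $C_i$.

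The paper sidesteps element tracking entirely with a cardinality invariant:
\[
\bigl|\OPT\setminus(\SetF{\tilde{O}}{i}{j}\cup \SetF{S}{i}{j})\bigr|\;\le\;k\cdot|\SetF{S}{i}{j}|,
\]
proved by reverse induction on $i$. The base case and the case $|\SetF{\tilde{B}}{i}{j}|<k$ both follow from the $k$-system property: $\SetF{S}{i}{j}$ is then a base of $(\OPT\setminus\SetF{\tilde{O}}{i}{j})\cup\SetF{S}{i}{j}$, while $\OPT\setminus(\SetF{\tilde{O}}{i}{j}\cup\SetF{S}{i}{j})$ is an independent subset of the same set. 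When $|\SetF{\tilde{B}}{i}{j}|=k$, the left side drops by exactly $k$ (since $\SetF{\tilde{B}}{i}{j}\subseteq\OPT\setminus\SetF{S}{i+1}{j}$ and $u_{i+1}\in\SetF{S}{i+1}{j}$), matching the drop in $k|\SetF{S}{i}{j}|$. At $i=0$ this gives $|\OPT\setminus\SetF{\tilde{O}}{0}{j}|\le 0$. The counting argument is exactly what makes the ``arbitrary subset of size $\min(|\SetF{B}{i}{j}|,k)$'' definition work; your proposal never exploits the size of $\SetF{\tilde{B}}{i}{j}$, only its membership, which is why it stalls.
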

\begin{proof}
	By reverse induction over the iterations, we prove the stronger claim: that for every iteration $0 \leq i \leq \numiter$ and solution $1 \leq j \leq \ell$, 
	$$| \OPT \setminus (\SetF{\tilde{O}}{i}{j} \cup \SetF{S}{i}{j})| \leq k \cdot |\SetF{S}{i}{j}|.$$ 
	Notice that this claim indeed implies the lemma since $\SetF{\tilde{O}}{i}{j}$ contains only elements of $\OPT$ and $\SetF{S}{0}{j} = \varnothing$.
	
	We begin the proof by induction by showing that the claim holds for at the final iteration $i = \numiter$. 
	By the definition of $\SetF{\tilde{O}}{\numiter}{j}$, no element of $\OPT \setminus (\SetF{\tilde{O}}{\numiter}{j} \cup \SetF{S}{\numiter}{j})$ can be added to $\SetF{S}{\numiter}{j}$ without violating independence, and thus, $\SetF{S}{\numiter}{j}$ is a base of $(\OPT \setminus \SetF{\tilde{O}}{\numiter}{j}) \cup \SetF{S}{\numiter}{j}$. 
	In contrast, $\OPT \setminus (\SetF{\tilde{O}}{\numiter}{j} \cup \SetF{S}{\numiter}{j})$ is an independent subset of $(\OPT \setminus \SetF{\tilde{O}}{\numiter}{j}) \cup \SetF{S}{\numiter}{j}$ because it is also a subset of the independent set $\OPT$. 
	Thus, since $(\gnd, \cI)$ is a $k$-system,
	\[
	|OPT \setminus (\SetF{\tilde{O}}{\numiter}{j} \cup \SetF{S}{\numiter}{j})|
	\leq
	k \cdot |\SetF{S}{\numiter}{j}|
	\enspace,
	\]
	which is the claim that we wanted to prove.
	
	Assume now that the claim holds for all iterations $i+1, i+2, \dotsc, \numiter$, and let us prove it for iteration $i$. 
	There are three cases to consider.
	 If the solution $j$ was not updated during this iteration ($j \neq j_{i + 1}$), then $\SetF{\tilde{O}}{i}{j} = \SetF{\tilde{O}}{i + 1}{j}$ and $\SetF{S}{i}{j} = \SetF{S}{i + 1}{j}$, and therefore, by the induction hypothesis,
	\[
	| \OPT \setminus (\SetF{\tilde{O}}{i}{j} \cup \SetF{S}{i}{j})|
	=
	| \OPT \setminus (\SetF{\tilde{O}}{i + 1}{j} \cup \SetF{S}{i + 1}{j})|
	\leq
	k \cdot |\SetF{S}{i + 1}{j}|
	=
	k \cdot |\SetF{S}{i}{j}|
	\enspace.
	\]
	The second case is when $j = j_{i + 1}$ and $|\SetF{\tilde{B}}{i}{j}| = k$. 
	In this case,
	\begin{align*}
	| \OPT \setminus (\SetF{\tilde{O}}{i}{j} \cup \SetF{S}{i}{j})|
	={} &
	| \OPT \setminus (\SetF{\tilde{O}}{i + 1}{j} \cup \SetF{S}{i + 1}{j})| - |\SetF{\tilde{B}}{i}{j}|\\
	={} &
	| \OPT \setminus (\SetF{\tilde{O}}{i + 1}{j} \cup \SetF{S}{i + 1}{j})| - k
	\leq
	k \cdot |\SetF{S}{i + 1}{j}| - k
	=
	k \cdot |\SetF{S}{i}{j}|
	\enspace,
	\end{align*}
	where the inequality holds by the induction hypothesis, and the first equality holds since $\SetF{\tilde{O}}{i}{j} \setminus \SetF{\tilde{O}}{i + 1}{j} = \SetF{\tilde{B}}{i}{j} \cup (\OPT \cap \{u_{i + 1}\})$, the elements of $\SetF{\tilde{B}}{i}{j}$ belong to $\OPT \setminus \SetF{S}{i + 1}{j}$ and the element $u_{i + 1}$ does not belong to this set.
	
	The last case we need to consider is when $j = j_{i + 1}$ and $|\SetF{\tilde{B}}{i}{j}| < k$. 
	In this case $\SetF{\tilde{B}}{i}{j} = \SetF{B}{i}{j}$, which implies that no element of $\OPT \setminus (\SetF{\tilde{O}}{i}{j} \cup \SetF{S}{i}{j})$ can be added to $\SetF{S}{i}{j}$ without violating independence, and thus, $\SetF{S}{i}{j}$ is a base of $(OPT \setminus \SetF{\tilde{O}}{i}{j}) \cup \SetF{S}{i}{j}$. 
	This allows us to prove the claim in the same way in which this is done in the base case. Specifically, observe that  $\OPT \setminus (\SetF{\tilde{O}}{i}{j} \cup \SetF{S}{i}{j})$ is an independent subset of $(\OPT \setminus \SetF{\tilde{O}}{i}{j}) \cup \SetF{S}{i}{j}$ because it is also a subset of the independent set $\OPT$. 
	Thus, since $(\gnd, \cI)$ is a $k$-system,
	\[
	|OPT \setminus (\SetF{\tilde{O}}{i}{j} \cup \SetF{S}{i}{j})|
	\leq
	k \cdot |\SetF{S}{i}{j}|
	\enspace,
	\]
	which is the claim that we wanted to prove.
\end{proof}

We now proceed to proving the explicit conditions of Proposition~\ref{prop:general_result}.

\begin{lemma} \label{lem:system_condition_1}
	For every iteration $0 \leq i \leq \numiter$ and solution $1 \leq j \leq \ell$, $\SetF{S}{i}{j} + u$ is independent for every $u \in \SetF{O}{i}{j}$.
\end{lemma}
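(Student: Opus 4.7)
The plan is to prove the lemma by reverse induction on $i$, running from $i = \numiter$ down to $i = 0$. Since $\SetF{O}{i}{j} = \SetF{\tilde{O}}{i}{j} \cap \gnd_i \subseteq \SetF{\tilde{O}}{i}{j}$, it suffices to establish the stronger claim that $\SetF{S}{i}{j} + u \in \cI$ for every $u \in \SetF{\tilde{O}}{i}{j}$. This stronger claim matches the direction in which the sets $\SetF{\tilde{O}}{i}{j}$ are constructed, making the induction convenient.

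The base case $i = \numiter$ is immediate from the definition of $\SetF{\tilde{O}}{\numiter}{j}$, which explicitly contains only elements $u \in \OPT \setminus \SetF{S}{\numiter}{j}$ satisfying $\SetF{S}{\numiter}{j} + u \in \cI$. For the inductive step, assuming the claim at iteration $i + 1$, I split into cases based on whether $j = j_{i+1}$. If $j \neq j_{i+1}$, then neither $\SetF{S}{\cdot}{j}$ nor $\SetF{\tilde{O}}{\cdot}{j}$ changes between iterations $i$ and $i+1$, so the inductive hypothesis immediately yields $\SetF{S}{i}{j} + u = \SetF{S}{i+1}{j} + u \in \cI$ for each $u \in \SetF{\tilde{O}}{i}{j} = \SetF{\tilde{O}}{i+1}{j}$.

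If $j = j_{i+1}$, then elements of $\SetF{\tilde{O}}{i}{j}$ come from three sources, which I handle separately. For $u \in \SetF{\tilde{O}}{i+1}{j}$, the inductive hypothesis gives $\SetF{S}{i+1}{j} + u \in \cI$; since $\SetF{S}{i+1}{j} = \SetF{S}{i}{j} + u_{i+1}$, the set $\SetF{S}{i}{j} + u$ is a subset of $\SetF{S}{i+1}{j} + u$, and downward closure of $\cI$ yields $\SetF{S}{i}{j} + u \in \cI$. For $u \in \SetF{\tilde{B}}{i}{j} \subseteq \SetF{B}{i}{j}$, the defining condition of $\SetF{B}{i}{j}$ directly states that $\SetF{S}{i}{j} + u \in \cI$. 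Finally, if $u = u_{i+1} \in \OPT$, then $\SetF{S}{i}{j} + u_{i+1} = \SetF{S}{i+1}{j_{i+1}}$, which lies in $\cI$ because the algorithm only adds $u_{i+1}$ to $\SetF{S}{i}{j_{i+1}}$ when the feasibility condition $\SetF{S}{i}{j_{i+1}} + u_{i+1} \in \cI$ (enforced by the membership of $(u_{i+1}, j_{i+1})$ in $\cA_{i+1}$) is satisfied.

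I do not expect a serious obstacle here; the lemma is essentially built into the construction of $\SetF{\tilde{O}}{i}{j}$, and the only subtlety is remembering to invoke downward closure in the first subcase (to pass from independence of $\SetF{S}{i+1}{j} + u$ to that of its subset $\SetF{S}{i}{j} + u$) rather than confusing the direction of containment. Enumerating the three sources of new elements of $\SetF{\tilde{O}}{i}{j}$ as dictated by the recursive definition and then matching each to the appropriate justification is the only care required.
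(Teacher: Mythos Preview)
Your proof is correct and follows essentially the same approach as the paper's own proof: reverse induction on $i$, strengthening the claim to all $u \in \SetF{\tilde{O}}{i}{j}$, and in the case $j = j_{i+1}$ handling the three sources $\SetF{\tilde{O}}{i+1}{j}$, $\SetF{\tilde{B}}{i}{j}$, and $\OPT \cap \{u_{i+1}\}$ separately. Your write-up is in fact slightly more explicit than the paper's in invoking downward closure for the first subcase.
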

\begin{proof}
	We prove by a reverse induction the stronger claim that for every iteration $0 \leq i \leq \numiter$ and solution $1 \leq j \leq \ell$, the set $\SetF{S}{i}{j} + u$ is independent for every $u \in \SetF{\tilde{O}}{i}{j}$.
	Note that this claim implies the lemma because $\SetF{O}{i}{j}$ is a subset of  $\SetF{\tilde{O}}{i}{j}$.
	
	At the last iteration $i = \numiter$, the claim is an immediate consequence of the definition of $\SetF{\tilde{O}}{r}{j}$. 
	Assume now that the claim holds for iterations $i + 1 , i+2, \dots \numiter$, and let us prove it for iteration $i$. 
	If the solution $j$ was not updated at this iteration ($j \neq j_{i + 1}$), then $\SetF{S}{i}{j} = \SetF{S}{i + 1}{j}$ and $\SetF{\tilde{O}}{i}{j} = \SetF{\tilde{O}}{i + 1}{j}$, and so the claims follows immediately from the induction hypothesis. 
	Thus, it remains to consider only the case in which the solution is updated, i.e., $j = j_{i + 1}$.
	In this case, $$\SetF{\tilde{O}}{i}{j} = \SetF{\tilde{O}}{i + 1}{j} \cup \SetF{\tilde{B}}{i}{j} \cup (\OPT \cap \{u_{i + 1
	}\}).$$ 
For every $u \in \SetF{\tilde{O}}{i + 1}{j}$, we have $\SetF{S}{i}{j} + u$ by the induction hypothesis since $\SetF{S}{i}{j}$ is a subset of $\SetF{S}{i + 1}{j}$. 
For every $u \in \SetF{\tilde{B}}{i}{j}$, we have $\SetF{S}{i}{j} + u$ by the definition of $\SetF{B}{i}{j}$. Finally, for $u = u_{i + 1}$, we have $\SetF{S}{i}{j} + u = \SetF{S}{i + 1}{j} \in \cI$.
\end{proof}

\begin{lemma} \label{lem:system_condition_2}
	For every iteration $1 \leq i \leq \numiter$ and solution $1 \leq j \leq \ell$, $\SetF{O}{i}{j} \subseteq \SetF{O}{i - 1}{j} \cap \gnd_i$.
\end{lemma}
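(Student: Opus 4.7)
The plan is to prove this essentially directly from the recursive construction, since the lemma reduces to two easy containments once one unfolds the definitions. I would not bother with induction on $i$, because the statement follows from a single-step argument.

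First, I would establish the one-step monotonicity $\SetF{\tilde{O}}{i}{j} \subseteq \SetF{\tilde{O}}{i-1}{j}$ for every $1 \leq i \leq \numiter$ and $1 \leq j \leq \ell$. This is immediate from the backward recursion: if $j \neq j_i$, then $\SetF{\tilde{O}}{i-1}{j} = \SetF{\tilde{O}}{i}{j}$ by definition, and if $j = j_i$, then by construction $\SetF{\tilde{O}}{i-1}{j} = \SetF{\tilde{O}}{i}{j} \cup \SetF{\tilde{B}}{i-1}{j} \cup (\OPT \cap \{u_i\})$, so the containment is still trivial.

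Next, recall that $\gnd_i = \gnd_{i-1} - u_i$, so $\gnd_i \subseteq \gnd_{i-1}$. Combining with the previous step:
\[
\SetF{O}{i}{j} = \SetF{\tilde{O}}{i}{j} \cap \gnd_i \subseteq \SetF{\tilde{O}}{i-1}{j} \cap \gnd_i \subseteq \SetF{\tilde{O}}{i-1}{j} \cap \gnd_{i-1} = \SetF{O}{i-1}{j}.
\]
Since the same chain also gives $\SetF{O}{i}{j} \subseteq \gnd_i$ (directly from the definition $\SetF{O}{i}{j} = \SetF{\tilde{O}}{i}{j} \cap \gnd_i$), we conclude $\SetF{O}{i}{j} \subseteq \SetF{O}{i-1}{j} \cap \gnd_i$, as claimed.

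I do not expect any obstacle here: unlike the corresponding statement in Section~\ref{sec:analysis-extendible}, the extra ``$\cap \gnd_i$'' intersection is baked into the definition of $\SetF{O}{i}{j}$ for the $k$-system construction, so no removal argument for $u_i$ is needed. The entire proof is just unfolding the recursion one step and using $\gnd_i \subseteq \gnd_{i-1}$.
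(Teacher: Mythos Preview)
Your proof is correct and follows essentially the same approach as the paper's: both observe that $\SetF{\tilde{O}}{i}{j} \subseteq \SetF{\tilde{O}}{i-1}{j}$ by construction and $\gnd_i \subseteq \gnd_{i-1}$, then intersect to conclude. Your version is slightly more explicit in spelling out the two cases for the monotonicity of $\tilde{O}$ and in noting that the $\cap\,\gnd_i$ part is immediate from the definition, but the argument is identical in substance.
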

\begin{proof}
	We first observe that $\SetF{\tilde{O}}{i}{j} \subseteq \SetF{\tilde{O}}{i - 1}{j}$ by construction, and $\gnd_i = \gnd_{i - 1} - u_i \subseteq \gnd_{i - 1}$. Thus,
	\[
	\SetF{O}{i}{j}
	=
	\SetF{\tilde{O}}{i}{j} \cap \gnd_i
	\subseteq
	\SetF{\tilde{O}}{i - 1}{j} \cap \gnd_{i - 1}
	=
	\SetF{O}{i - 1}{j}
	\enspace.
	\qedhere
	\]
\end{proof}

\begin{lemma} \label{lem:system_condition_3}
	For every iteration $0 \leq i \leq \numiter$ and solution $1 \leq j \leq \ell$, $(\SetF{S}{\numiter}{j} \setminus \SetF{S}{i}{j}) \cap \OPT \subseteq \SetF{O}{i}{j}$.
\end{lemma}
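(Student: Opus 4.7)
The plan is to prove the inclusion by reverse induction on $i$, from $i = \numiter$ down to $i = 0$, exploiting the recursive definition of $\SetF{\tilde{O}}{i}{j}$. Before touching the induction, I would split the membership requirement $u \in \SetF{O}{i}{j} = \SetF{\tilde{O}}{i}{j} \cap \gnd_i$ into two parts. The part $u \in \gnd_i$ is essentially free: if $u \in (\SetF{S}{\numiter}{j} \setminus \SetF{S}{i}{j}) \cap \OPT$, then $u$ must have been added to solution $j$ at some iteration strictly greater than $i$, hence $u$ was not removed from the available ground set during iterations $1, \dots, i$, giving $u \in \gnd_i$. So all the work is to establish $u \in \SetF{\tilde{O}}{i}{j}$.

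For the base case $i = \numiter$, the set $\SetF{S}{\numiter}{j} \setminus \SetF{S}{\numiter}{j} = \varnothing$, so the claim is vacuous. For the inductive step, I would assume the statement for iteration $i+1$ and consider the two cases that govern the recursive definition. In the easy case $j \neq j_{i+1}$, nothing changed for solution $j$: we have $\SetF{S}{i}{j} = \SetF{S}{i+1}{j}$ and $\SetF{\tilde{O}}{i}{j} = \SetF{\tilde{O}}{i+1}{j}$, so every $u \in (\SetF{S}{\numiter}{j} \setminus \SetF{S}{i}{j}) \cap \OPT$ is also in $(\SetF{S}{\numiter}{j} \setminus \SetF{S}{i+1}{j}) \cap \OPT$, and the induction hypothesis together with $\SetF{O}{i+1}{j} \subseteq \SetF{\tilde{O}}{i+1}{j}$ delivers $u \in \SetF{\tilde{O}}{i}{j}$.

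In the case $j = j_{i+1}$, the relevant identity is $\SetF{S}{\numiter}{j} \setminus \SetF{S}{i}{j} = (\SetF{S}{\numiter}{j} \setminus \SetF{S}{i+1}{j}) \cup \{u_{i+1}\}$. If $u = u_{i+1}$, then by assumption $u \in \OPT$, and the third piece $\OPT \cap \{u_{i+1}\}$ of the definition of $\SetF{\tilde{O}}{i}{j_{i+1}}$ captures it directly. Otherwise $u \in (\SetF{S}{\numiter}{j} \setminus \SetF{S}{i+1}{j}) \cap \OPT$, so the induction hypothesis places $u$ in $\SetF{O}{i+1}{j} \subseteq \SetF{\tilde{O}}{i+1}{j}$, and the recursive definition of $\SetF{\tilde{O}}{i}{j_{i+1}}$ contains $\SetF{\tilde{O}}{i+1}{j_{i+1}}$ as a subset, finishing this case.

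I do not anticipate a real obstacle: the construction was explicitly designed so that both the element $u_{i+1}$ added at the current iteration (when it happens to be optimal) and all previously preserved optimal elements are kept in $\SetF{\tilde{O}}{i}{j_{i+1}}$, which is exactly what makes the two subcases work. The only subtle point is remembering that $\SetF{O}{i}{j}$ is obtained from $\SetF{\tilde{O}}{i}{j}$ by intersecting with $\gnd_i$, so the inductive step must be stated for $\SetF{O}{i+1}{j}$ (not $\SetF{\tilde{O}}{i+1}{j}$), but this causes no trouble because $\SetF{O}{i+1}{j} \subseteq \SetF{\tilde{O}}{i+1}{j}$ and the availability argument for $\gnd_i$ is handled separately from the set-chase.
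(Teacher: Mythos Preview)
Your proposal is correct and follows essentially the same reverse-induction structure and case split as the paper's proof. The only stylistic difference is that you separate the $u \in \gnd_i$ check up front via a direct availability argument, whereas the paper bundles both parts by invoking the containment $\SetF{O}{i+1}{j} \subseteq \SetF{O}{i}{j}$ from Lemma~\ref{lem:system_condition_2}; the logical content is the same.
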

\begin{proof}
	We prove the lemma by reverse induction on the iterations. 
	At the final iteration $i = \numiter$, the claim that we need to prove is trivial since $\SetF{S}{\numiter}{j} \setminus \SetF{S}{i}{j} = \varnothing$. 
	Assume now that the lemma holds for iterations $i + 1, i+2, \dots \numiter$, and let us prove it for iteration $i$. 
	If the solution set is not updated ($j \neq j_{i + 1}$), then $\SetF{S}{i}{j} = \SetF{S}{i + 1}{j}$, which implies
	\[
	(\SetF{S}{\numiter}{j} \setminus \SetF{S}{i}{j}) \cap \OPT
	=
	(\SetF{S}{\numiter}{j} \setminus \SetF{S}{i + 1}{j}) \cap \OPT
	\subseteq
	\SetF{O}{i + 1}{j}
	\subseteq
	\SetF{O}{i}{j}
	\enspace,
	\]
	where the first inclusion holds by the induction hypothesis, and second inclusion by Lemma~\ref{lem:system_condition_2}. 
	Thus, it remains to consider only the case in which $j = j_{i + 1}$.
	
	In this case
	\[
	[(\SetF{S}{\numiter}{j} \setminus \SetF{S}{i}{j}) \cap \OPT] \setminus [(\SetF{S}{\numiter}{j} \setminus \SetF{S}{i + 1}{j}) \cap \OPT]
	=
	\OPT \cap \{u_{i + 1}\}
	\subseteq
	\SetF{\tilde{O}}{i}{j} \cap \gnd_i
	=
	\SetF{O}{i}{j}
	\enspace,
	\]
	where the inclusion follows from the definition of $\SetF{\tilde{O}}{i}{j}$ and the fact that $u_i$ is chosen as an element from $\gnd_i$. 
	Using the induction hypothesis, we now get
	\[
	(\SetF{S}{\numiter}{j} \setminus \SetF{S}{i}{j}) \cap \OPT
	=
	[(\SetF{S}{\numiter}{j} \setminus \SetF{S}{i + 1}{j}) \cap \OPT] \cup [ \OPT \cap \{u_{i + 1}\}]
	\subseteq
	\SetF{O}{i + 1}{j} \cup \SetF{O}{i}{j}
	=
	\SetF{O}{i}{j}
	\enspace,
	\]
	where the final equality follows again from Lemma~\ref{lem:system_condition_2}.
\end{proof}

\begin{lemma} \label{lem:system_condition_4}
	For every iteration $1 \leq i \leq \numiter$, $\sum_{j = 1}^\ell |\SetF{O}{i - 1}{j} \setminus (\SetF{O}{i}{j} \cup \SetF{U}{i}{j})| \leq k + \ell - 1$.
\end{lemma}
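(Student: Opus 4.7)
The plan is to bound the sum by splitting on whether the solution index equals $j_i$ (the one updated at iteration $i$) and then exploiting the recursive construction of the $\SetF{\tilde{O}}{i}{j}$ sets. The only new fact that distinguishes iteration $i-1$ from iteration $i$ in the ground-set filter is the single removal $\gnd_i = \gnd_{i-1} - u_i$, so passing from $\SetF{O}{i-1}{j}$ to $\SetF{O}{i}{j}$ can lose at most the element $u_i$ on top of whatever is lost when going from $\SetF{\tilde{O}}{i-1}{j}$ to $\SetF{\tilde{O}}{i}{j}$.

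First I would handle the $\ell - 1$ untouched solutions $j \neq j_i$. By the construction rule, in this case $\SetF{\tilde{O}}{i-1}{j} = \SetF{\tilde{O}}{i}{j}$, so
\[
\SetF{O}{i-1}{j} \setminus \SetF{O}{i}{j}
 = \bigl(\SetF{\tilde{O}}{i}{j} \cap \gnd_{i-1}\bigr) \setminus \bigl(\SetF{\tilde{O}}{i}{j} \cap \gnd_i\bigr)
 \subseteq \{u_i\},
\]
so each such solution contributes at most $1$ to the sum, for a total of at most $\ell - 1$.

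Next I would bound the single term $j = j_i$. Here the recursive rule reads
\[
\SetF{\tilde{O}}{i-1}{j_i} = \SetF{\tilde{O}}{i}{j_i} \cup \SetF{\tilde{B}}{i-1}{j_i} \cup (\OPT \cap \{u_i\}),
\]
with $|\SetF{\tilde{B}}{i-1}{j_i}| \leq k$ by construction. Since $u_i \in \SetF{S}{i}{j_i}$, the definition of $\SetF{B}{i-1}{j_i}$ (which excludes elements of $\SetF{S}{i}{j_i}$) guarantees $u_i \notin \SetF{\tilde{B}}{i-1}{j_i}$. Therefore
\[
\SetF{O}{i-1}{j_i} \setminus \SetF{O}{i}{j_i}
 \subseteq \bigl(\SetF{\tilde{O}}{i-1}{j_i} \setminus \SetF{\tilde{O}}{i}{j_i}\bigr) \cup \{u_i\}
 \subseteq \SetF{\tilde{B}}{i-1}{j_i} \cup \{u_i\}.
\]
The singleton $\{u_i\}$ is exactly $\SetF{U}{i}{j_i}$, so after subtracting $\SetF{U}{i}{j_i}$ we are left with at most $|\SetF{\tilde{B}}{i-1}{j_i}| \leq k$ elements. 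Combining the two cases gives $\sum_{j=1}^\ell |\SetF{O}{i-1}{j} \setminus (\SetF{O}{i}{j} \cup \SetF{U}{i}{j})| \leq k + (\ell - 1)$, as required.

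The only subtlety I expect is bookkeeping around what $\SetF{U}{i}{j}$ absorbs. For $j \neq j_i$ the set $\SetF{U}{i}{j}$ is empty, so the bound of $1$ cannot be lowered by it; this is precisely why the construction only loses one element per untouched solution. For $j = j_i$, the element $u_i$ could a priori appear both inside $\SetF{\tilde{O}}{i-1}{j_i} \setminus \SetF{\tilde{O}}{i}{j_i}$ (through the $\OPT \cap \{u_i\}$ term) and outside (through the $\gnd_{i-1}\setminus\gnd_i$ filter), so I must verify that in every case it is counted only once and that its single occurrence is killed by $\SetF{U}{i}{j_i}$. That check, together with verifying $u_i \notin \SetF{\tilde{B}}{i-1}{j_i}$, is the only non-mechanical part of the argument.
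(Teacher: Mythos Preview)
Your proposal is correct and follows essentially the same approach as the paper: split on $j \neq j_i$ versus $j = j_i$, use $\SetF{\tilde{O}}{i-1}{j} = \SetF{\tilde{O}}{i}{j}$ in the first case to lose at most the single element $u_i$, and in the second case use the recursive formula $\SetF{\tilde{O}}{i-1}{j_i} = \SetF{\tilde{O}}{i}{j_i} \cup \SetF{\tilde{B}}{i-1}{j_i} \cup (\OPT \cap \{u_i\})$ together with $|\SetF{\tilde{B}}{i-1}{j_i}| \leq k$ and $\SetF{U}{i}{j_i} = \{u_i\}$. Your extra observation that $u_i \notin \SetF{\tilde{B}}{i-1}{j_i}$ is correct but actually unnecessary, since $(\SetF{\tilde{B}}{i-1}{j_i} \cup \{u_i\}) \setminus \{u_i\} \subseteq \SetF{\tilde{B}}{i-1}{j_i}$ regardless.
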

\begin{proof}
	For every solution $1 \leq j \leq \ell$ other than $j_i$, we have by definition $\SetF{U}{i}{j} = \varnothing$ and
	\[
	\SetF{O}{i-1}{j}
	=
	\SetF{\tilde{O}}{i-1}{j} \cap \gnd_{i-1}
	=
	\SetF{\tilde{O}}{i}{j} \cap (\gnd_i + u_i)
	\subseteq
	\SetF{\tilde{O}}{i}{j} \cap \gnd_i + u_i
	=
	\SetF{O}{i}{j} + u_i
	\enspace.
	\]
	Therefore,
	\[
	|\SetF{O}{i-1}{j} \setminus (\SetF{O}{i}{j} \cup \SetF{U}{i}{j})|
	\leq
	1
	\enspace.
	\]
	
	Additionally,
	\[
	\SetF{\tilde{O}}{i - 1}{j_i}
	=
	\SetF{\tilde{O}}{i}{j_i} \cup \SetF{\tilde{B}}{i - 1}{j_i} \cup (\OPT \cap \{u_i\})
	=
	\SetF{\tilde{O}}{i}{j_i} \cup \SetF{\tilde{B}}{i - 1}{j_i} \cup (\OPT \cap \SetF{U}{i}{j})
	\subseteq
	\SetF{\tilde{O}}{i}{j_i} \cup \SetF{\tilde{B}}{i - 1}{j_i} \cup \SetF{U}{i}{j}
	\enspace,
	\]
	and
	\[
	\gnd_{i-1}
	=
	\gnd_i + u_i
	=
	\gnd_i \cup \SetF{U}{i}{j}
	\enspace.
	\]
	These two observations imply together
	\begin{align*}
	\SetF{O}{i - 1}{j_i}
	={} &
	\SetF{\tilde{O}}{i - 1}{j_i} \cap \gnd_{i - 1}\\
	\subseteq{}
	&
	[\SetF{\tilde{O}}{i}{j_i} \cup \SetF{\tilde{B}}{i - 1}{j_i} \cup \SetF{U}{i}{j}] \cap [\gnd_i \cup \SetF{U}{i}{j}]
	\\
	\subseteq{} &
	[\SetF{\tilde{O}}{i}{j_i} \cap \gnd_i] \cup  \SetF{\tilde{B}}{i - 1}{j_i} \cup \SetF{U}{i}{j}\\
	={}&
	\SetF{O}{i}{j_i} \cup \SetF{\tilde{B}}{i - 1}{j_i} \cup \SetF{U}{i}{j}
	\enspace,
	\end{align*}
	and therefore, also
	\[
	|\SetF{O}{i - 1}{j_i} \setminus (\SetF{O}{i}{j_i} \cup \SetF{U}{i}{j_i})|
	\leq
	|\SetF{\tilde{B}}{i - 1}{j_i}|
	\leq
	k
	\enspace,
	\]
	where the last inequality follows from the definition of $\SetF{\tilde{B}}{i - 1}{j_i}$.
	
	Combining all the above results, we get
	\[
	\sum_{j = 1}^\ell |\SetF{O}{i - 1}{j} \setminus (\SetF{O}{i}{j} \cup \SetF{U}{i}{j})|
	\leq
	(\ell - 1) \cdot 1 + k
	=
	k + \ell - 1
	\enspace.
	\qedhere
	\]
\end{proof}

We are now ready to prove Theorem~\ref{thm:deterministic_system}.
\begin{proof}[Proof of Theorem~\ref{thm:deterministic_system}]
	Lemmata~\ref{lem:system_condition_1}, \ref{lem:system_condition_2}, \ref{lem:system_condition_3} and~\ref{lem:system_condition_4} prove together Proposition~\ref{prop:k-system-construction}, which states that the sets we have constructed obey the conditions of Proposition~\ref{prop:general_result} with $p = k + \ell - 1$.
	Thus, the last proposition implies that the approximation ratio of \mainalg for $k$-systems and $\ell = \lfloor 2 + \sqrt{k + 2} \rfloor$ is at most
	\[
	\frac{p + 1}{1 - \ell^{-1}}
	=
	\frac{k + \ell }{1 - \ell^{-1}}
	=
	\frac{k + \lfloor 2 + \sqrt{k + 2} \rfloor}{1 - 1/\lfloor 2 + \sqrt{k + 2} \rfloor}
	\leq
	\frac{k + 2 + \sqrt{k + 2}}{1 - 1/( 1 + \sqrt{k + 2})}
	\enspace.
	\]
	To simplify some calculations, let $\alpha = k + 2$.
	By substituting $\alpha$, rearranging terms, and re-substituting $\alpha$ we obtain that the right hand side of the last inequality may be expressed as
	\[
	\frac{\alpha + \sqrt{\alpha}}{1 - 1/( 1 + \sqrt{\alpha})}
	=
	\frac{(1 + \sqrt{\alpha}) \cdot (\alpha +  \sqrt{\alpha})}{\sqrt{\alpha} }
	= 
	(1 + \sqrt{\alpha})(1 + \sqrt{\alpha}) 
	= (1 + \sqrt{\alpha})^2
	= (1 + \sqrt{k+2})^2
	\enspace .
	\]
	Thus, the approximation ratio is at most $(1 + \sqrt{k+2})^2$.
%	\begin{align*}
%	\frac{p + 1}{1 - \ell^{-1}}
%	={} &
%	\frac{k + \ell }{1 - \ell^{-1}}
%	=
%	\frac{k + \lfloor 2 + \sqrt{k + 2} \rfloor}{1 - 1/\lfloor 2 + \sqrt{k + 2} \rfloor}
%	\leq
%	\frac{k + 2 + \sqrt{k + 2}}{1 - 1/( 1 + \sqrt{k + 2})}\\
%	={} &
%	\frac{(k + 2)( 1 + \sqrt{k + 2}) + \sqrt{k + 2} \cdot ( 1 + \sqrt{k + 2})}{( 1 + \sqrt{k + 2}) - 1}\\
%	={} &
%	\frac{k + 2 + (k + 2) \cdot \sqrt{k + 2} + \sqrt{k + 2} + (k + 2)}{\sqrt{k + 2}}
%	=
%	\frac{2k + 4 + (k + 3) \cdot \sqrt{k + 2}}{\sqrt{k + 2}}\\
%	={} &
%	2\sqrt{k + 2} + k + 3
%	\leq
%	2\sqrt{k} + \frac{2}{\sqrt{k}} + k + 3
%	\enspace.
%	\tag*{\BlackBox}
%	\qedhere
%	\end{align*}
	Suppose that $f$ is monotone so that Proposition~\ref{prop:general_result} guarantees the returned solution is a $(p+1)$-approximation. 
	Setting the number of solutions to $\ell=1$ yields $p = k + \ell - 1 = k$, so that the returned set is a $(k+1)$-approximation. 
	This demonstrates that our unified analysis recovers the guarantees of the greedy algorithm for monotone submodular objectives under a $k$-system constraint.
\end{proof}

\section{A Nearly Linear Time Implementation} \label{sec:fast-alg}
In this section, we present \fastalg, a nearly linear-time variant of \mainalg.
Recall that \mainalg greedily constructs $\ell$ candidate solutions in a simultaneous fashion. 
Because the algorithm uses an exact greedy search for the feasible element-solution pair with the largest marginal gain, the overall runtime is $\bigO{\ell^2 r n}$. 
Although we consider $\ell$ to be a constant (as it scales with $k$), the size of the largest base $r$ could be as large as $\bigO{n}$.
This means that, like other exact greedy approaches, \mainalg has a quadratic runtime. 
In this section, we show that \mainalg may be modified to run in nearly linear time by using the thresholding technique of \citet{Badanidiyuru14} for faster approximate greedy search.

The key idea of \fastalg is to replace the exact greedy search with an approximate greedy search via the use of a marginal gain acceptance threshold: if an element-solution pair is feasible and has a marginal gain which exceeds the threshold, then the update is made without considering other possible pairs.
By appropriately initializing and iteratively lowering this marginal gain threshold, we can ensure that the algorithm runs much quicker at the cost of only a small loss in the approximation.
The allowed loss in approximation is given as an input parameter $\eps \in (0, 1/2)$ to the algorithm.
A formal description of \fastalg appears as Algorithm~\ref{alg:fast-alg}.
It begins by initializing the $\ell$ solutions $\SetF{S}{0}{1}, \SetF{S}{0}{2}, \dotsc, \SetF{S}{0}{\ell}$ to be empty sets;
and the acceptance threshold, denoted by $\threshold$, is initially set to be the largest objective value of any element.
During each iteration of the while loop, the algorithm iterates once through the set of feasible element-solution pairs.
If a feasible element-solution pair is found whose gain exceeds the threshold, then the element is added to that solution.
After the completion of each iteration through all the feasible element-solution pairs, the acceptance threshold is reduced by a multiplicative factor of $1 -  \eps$, and the algorithm terminates when this threshold becomes sufficiently low.

\begin{algorithm}[H]
	\caption{\fastalg($\gnd, f, \cI, \ell,  \eps$)}\label{alg:fast-alg}
	Initialize $\ell$ solutions, $\SetF{S}{0}{j} \gets \varnothing$ for every $j =1, \dotsc, \ell$. \\
	Initialize ground set $\gnd_0 \gets \gnd$, and iteration counter $i \gets 1$.\\
	Let $\maxgain = \max_{u \in \gnd} f(u)$, and initialize threshold $\threshold = \maxgain$. \\
	\While{$\threshold > (  \eps / n)  \cdot \maxgain $}
	{
		\For{every element-solution pairs $(u,j)$ with $u \in \gnd_{i-1}$ and $1 \leq j \leq \ell$}{
			\If{$\SetF{S}{i-1}{j} + u \in \cI$ and $f(u \mid \SetF{S}{i-1}{j}) \geq \threshold$}{
				Let $u_i \gets u$ and $j_i \gets j$.\\
				Update the solutions as
				$ 
				\SetF{S}{i}{j} \gets
				\left\{
				\begin{array}{lr}
				\SetF{S}{i-1}{j_i} + u_i  &\text{if } j = j_i \enspace,\\
				\SetF{S}{i-1}{j} &\text{if } j \neq j_i \enspace.
				\end{array}
				\right.$ \\
				Update the available ground set $\gnd_{i} \gets \gnd_{i-1} - u_i$. \\
				Update the iteration counter $i \gets i+1$.
				}	
		}
		Update the marginal gain $\threshold \gets (1 -  \eps) \cdot \threshold$.
	}
	\Return{the set $\retsol$ maximizing $f$ among the sets $\{\SetF{S}{i}{j}\}_{j = 1}^\ell$}.
\end{algorithm}

We note that the iteration counter $i$ of \fastalg is used to index the state of the solutions, and does not necessarily correspond to the iterations of any specific loop.
We begin our analysis of \fastalg by proving that the number of oracle queries it uses is nearly linear in the number of elements in the ground set.

\begin{observation} \label{obs:fast-alg-runtime}
	\fastalg requires at most $\bigtO{\ell n / \eps}$ calls to the value and independence oracles.
\end{observation}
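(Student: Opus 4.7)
The plan is to bound separately the number of iterations of the outer while loop and the number of oracle calls made per such iteration, then multiply these two bounds together.

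First, I would bound the number of iterations of the outer while loop by analyzing the geometric decrease of $\threshold$. The threshold starts at $\maxgain$ and is multiplied by $(1 - \eps)$ after every pass through the for loop, and the loop terminates once $\threshold \leq (\eps/n) \cdot \maxgain$. Solving $(1 - \eps)^t \leq \eps/n$ and using the standard inequality $\ln(1/(1-\eps)) \geq \eps$ for $\eps \in (0, 1/2)$ gives that at most $t = \bigO{\eps^{-1} \log(n/\eps)}$ iterations of the while loop occur before termination.

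Second, I would bound the number of oracle calls per pass through the for loop. In each pass, the algorithm iterates over all element-solution pairs $(u, j)$ with $u \in \gnd_{i-1}$ and $1 \leq j \leq \ell$, and each such pair is examined using at most one independence oracle query and one value oracle query. Since $|\gnd_{i-1}| \leq |\gnd| = n$ throughout (the ground set only shrinks as elements are added to solutions), the total number of oracle calls in a single pass of the for loop is at most $\bigO{\ell n}$, regardless of how many elements get added during that pass.

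Combining the two bounds, the total number of oracle queries made by \fastalg is $\bigO{\ell n \cdot \eps^{-1} \log(n/\eps)} = \bigtO{\ell n /\eps}$, where the $\bigtOsym$ hides the logarithmic factor in accordance with the footnote. The only subtlety is making sure that concurrent modifications to $\gnd_{i-1}$ inside the for loop do not inflate the per-iteration count, but this is immediate because removing elements only decreases the number of pairs to examine. No significant obstacle is expected in this argument; it is essentially a direct accounting of the two nested loops together with the standard geometric-decrease bound on the threshold.
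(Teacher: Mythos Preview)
Your proposal is correct and follows essentially the same approach as the paper: bound the number of outer while-loop iterations by the geometric decay of $\threshold$ (yielding $\bigO{\eps^{-1}\log(n/\eps)}$), bound the oracle calls per pass by $\bigO{\ell n}$, and multiply. The paper's proof is identical in structure, differing only in the specific logarithm inequality used to control the number of threshold levels.
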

\begin{proof}
	In every iteration of the while loop, each element-solution pair is considered once, requiring one value query and one independence query.
	Thus, $\bigO{\ell n}$ oracle queries are made at each iteration of the while loop.
	Next, we seek to bound the number of iterations of the while loop.
	Note that the threshold is initially set to $\threshold = \maxgain$, and is decreased by a multiplicative factor of $1 -  \eps$ at each iteration of the while loop.
	Since the while loop ends once the threshold is below $( \eps / n) \cdot \maxgain$, 
	the number of iteration of the while loop is the smallest integer $a$ such that
	$(1- \eps)^a \cdot \maxgain \leq (  \eps  / n) \cdot \maxgain$.
	Dividing by $\maxgain$ and taking the $\log_{1 - \eps}$ of both sides, we get that $a$ is the smallest integer such that
	\[
	a \geq \log_{1 - \eps}( \eps / n) 
	= \frac{\log(\eps / n)}{\log(1 - \eps)} 
	= \frac{\log(n / \eps)}{-\log(1 - \eps)}
	\geq \frac{1 - \eps}{\eps} \log(n / \eps)
	\geq \frac{1}{2\eps} \log(n / \eps) \enspace,
	\]
	where the penultimate inequality uses $-\eps / (1 - \eps) \leq \log(1 - \eps) < 0$, which holds for $\eps \in (0, 1)$, and the last inequality follows from our assumption that $\eps < 1/2$.
	Thus, the number of iterations of the while loops is $\bigO{ 1 / \eps \cdot \log( n / \eps )}$ so that the total number of oracle queries is  $\bigO{\ell n / \eps \cdot \log \left( n /  \eps \right) }$.
	Using the $\bigtOsym$ notation that suppresses log factors, the total number of oracle queries becomes $\bigtO{\ell n / \eps}$.
\end{proof}

Now, we turn to the approximation guarantees of \fastalg. 
The two theorems below show that \fastalg achieves the same approximation guarantees as \mainalg, but with a multiplicative increase that depends on the error term $\eps$.
In this sense, the parameter $\eps$ controls the trade-off between the computational cost of the oracle queries and the approximation guarantee.

\begin{theorem} \label{thm:fast-deterministic_extendible}
	Suppose that $(\gnd, \cI)$ is a $k$-extendible system and that the number of solutions is set to $\ell = k + 1$.
	Then, \fastalg requires $\bigtO{k^2 n / \eps}$ oracle calls and produces a solution whose approximation ratio is at most $(1 -  2 \eps)^{-2} \cdot (k + 1)^2/k  $. 
	Moreover, when $f$ is non-negative monotone submodular and the number of solutions is chosen so $\ell \leq k+1$, then the approximation ratio improves to $(1 -  \eps)^{-2} \cdot (k+1)$.
\end{theorem}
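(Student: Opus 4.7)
The plan is to adapt the meta-analysis of Proposition~\ref{prop:general_result} from the exact-greedy setting of \mainalg to the approximate-greedy setting of \fastalg. The running-time bound is immediate: substituting $\ell = k+1$ into Observation~\ref{obs:fast-alg-runtime} yields $\bigtO{k^2 n / \eps}$ oracle calls, so the bulk of the work is the approximation guarantee. A crucial observation is that the sets $\SetF{O}{i}{j}$ constructed in Section~\ref{sec:analysis-extendible} are defined purely in terms of the sequence of chosen pairs $(u_i, j_i)$ and not in terms of how these pairs were chosen. Consequently, Lemmata~\ref{lem:extendible_condition_1}--\ref{lem:extendible_condition_4} carry over verbatim to \fastalg and yield $p = \max(k, \ell-1) = k$ when $\ell = k+1$. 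So the only things we have to revisit are the two places where the analysis actually uses \emph{properties of the selected pair}, namely the greedy-optimality step inside Lemma~\ref{lem:greedy_invariant} and the termination step inside Corollary~\ref{cor:approximation_to_union}.

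First I would establish an approximate counterpart of the key greedy-selection inequality: for every iteration $i$ and every pair $(u,j)$ with $u \in \gnd_{i-1}$ and $\SetF{S}{i-1}{j} + u \in \cI$,
\[
f(u \mid \SetF{S}{i-1}{j}) \;\leq\; \frac{f(u_i \mid \SetF{S}{i-1}{j_i})}{1-\eps}.
\]
To see this, fix the threshold $\threshold^*_i$ in effect when $u_i$ was accepted, so $f(u_i \mid \SetF{S}{i-1}{j_i}) \geq \threshold^*_i$. Since $u$ is still in $\gnd_{i-1}$ and, by downward closure of $\cI$, the pair $(u,j)$ was feasible throughout all earlier while-loop iterations, it was examined by the for-loop at every prior while-loop pass. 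If it was examined in the same while-loop pass as $u_i$ (before $u_i$ was added), then $f(u \mid \text{then}) < \threshold^*_i$; otherwise it was examined at the previous threshold $\threshold^*_i / (1-\eps)$. In both cases submodularity together with $\SetF{S}{\text{then}}{j} \subseteq \SetF{S}{i-1}{j}$ gives the bound; the base case of the very first while-loop pass is handled by $\maxgain = f(u_i \mid \varnothing) \geq f(u \mid \varnothing) \geq f(u \mid \SetF{S}{i-1}{j})$. Plugging this inequality into the proof of Lemma~\ref{lem:greedy_invariant}---specifically, into the chain culminating in Inequality~\eqref{eq:O_difference}---makes the induction go through with the constant $p+1$ replaced by $p' + 1 = p/(1-\eps) + 1$.

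Next, I would handle the fact that \fastalg does not terminate when all feasible marginal gains are non-positive, but rather when the threshold drops below $(\eps/n)\maxgain$. This invalidates the one-line derivation at the start of Corollary~\ref{cor:approximation_to_union}. Instead, the final threshold satisfies $\threshold_{\text{last}} \leq (\eps/n)\maxgain/(1-\eps)$, so $f(u \mid \SetF{S}{\numiter}{j}) < \threshold_{\text{last}}$ for every feasible pair surviving at the end. Using submodularity (to bound $f(\SetF{O}{\numiter}{j} \mid \SetF{S}{\numiter}{j})$ by the sum of singleton marginals), $|\SetF{O}{\numiter}{j}| \leq n$, and the hypothesis that singletons are feasible (which gives $\maxgain \leq f(\OPT)$), one obtains $\sum_{j=1}^\ell f(\SetF{O}{\numiter}{j} \mid \SetF{S}{\numiter}{j}) \leq \eps \ell f(\OPT)/(1-\eps)$. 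Substituting into the adapted invariant at $i = \numiter$, applying the averaging/Lemma~\ref{lem:distribution} step of Proposition~\ref{prop:general_result} (for the non-monotone claim) and using $f(\OPT \cup \SetF{S}{\numiter}{j}) \geq f(\OPT)$ (for the monotone claim, which also gives the extra freedom to take any $\ell \leq k+1$), one recovers approximation ratios of the form $(p+1)/((1-\eps)(1 - \ell^{-1}) - \eps)$ and $(p+1)/(1-2\eps)$, respectively; these are then bounded, using $\eps < 1/2$ and $p = k$, by the clean forms $(1-2\eps)^{-2}(k+1)^2/k$ and $(1-\eps)^{-2}(k+1)$ claimed in the theorem.

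The main obstacle is the first step---the approximate greedy-selection lemma. Its proof is delicate because \fastalg interleaves element additions within a single while-loop pass, so the state of the solutions when a pair $(u,j)$ is examined need not be $\SetF{S}{i-1}{j}$ itself, and one has to verify that \emph{every} pair we care about really has been examined at least once with a threshold comparable to $\threshold^*_i$. Once this case analysis is in place, the rest of the argument is a mechanical re-run of the proof of Proposition~\ref{prop:general_result} with the substitution $p \mapsto p/(1-\eps)$ and the added error term from the threshold-based termination.
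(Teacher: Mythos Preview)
Your approach is correct and is essentially the same as the paper's: the paper simply observes that Theorems~\ref{thm:fast-deterministic_extendible} and~\ref{thm:fast-deterministic_system} follow from Proposition~\ref{prop:knapsack_general_result} by setting $m=\rho=0$, and the proof of that proposition (Lemma~\ref{lem:ks_greedy_invariant} and Corollary~\ref{cor:ks_approximation_to_union} in the appendix) performs exactly the two adaptations you describe---replacing the exact-greedy inequality by its $(1-\eps)^{-1}$-approximate version in the invariant, and bounding the residual $\sum_j f(\SetF{O}{\numiter}{j}\mid\SetF{S}{\numiter}{j})$ via the final threshold. The only presentational difference is that the paper carries the factor as $(p+1)/(1-\eps)$ rather than your $p/(1-\eps)+1$, which makes the final clean-up to $(1-2\eps)^{-2}$ and $(1-\eps)^{-2}$ slightly more direct (your intermediate forms $(p+1)/((1-\eps)(1-\ell^{-1})-\eps)$ and $(p+1)/(1-2\eps)$ do not quite simplify to those exact constants for all $\eps\in(0,1/2)$, so you should adopt the paper's looser coefficient in the invariant if you want to match the stated error factors exactly).
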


\begin{theorem} \label{thm:fast-deterministic_system}
	Suppose that $(\gnd, \cI)$ is $k$-system and that the number of solutions is set to $\ell = \lfloor 2 + \sqrt{k + 2} \rfloor$.
	Then, \fastalg requires $\bigtO{k n / \eps}$ oracle calls and produces a solution whose approximation ratio is at most $(1 -  2 \eps)^{-2} \cdot (1 + \sqrt{k +2})^2$. 
	Moreover, when $f$ is non-negative monotone submodular and the number of solutions is set to $\ell=1$, then the approximation ratio improves to $(1 -  \eps)^{-2} \cdot (k+1)$.
\end{theorem}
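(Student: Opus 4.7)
The oracle-count bound $\bigtO{kn/\eps}$ is immediate from Observation~\ref{obs:fast-alg-runtime} together with $\ell = \lfloor 2+\sqrt{k+2}\rfloor = \bigO{\sqrt{k}}$. The plan for the approximation ratio is to port the meta-analysis of Section~\ref{sec:approx-meta-analysis} to the approximate greedy search performed by \fastalg. First I would establish the \emph{approximate greedy property}: for every iteration $i$ and every pair $(u,j)$ that is feasible at iteration $i$, $f(u_i \mid \SetF{S}{i-1}{j_i}) \ge (1-\eps)\, f(u \mid \SetF{S}{i-1}{j})$. The key observation is that feasibility is monotonically decreasing along the execution (since $\gnd_i$ only shrinks and $\SetF{S}{i}{j}$ only grows by down-closure), so if $(u,j)$ is feasible now it was also feasible during the previous threshold pass at level $\threshold_i/(1-\eps)$; it was therefore checked and rejected, so its marginal gain at that moment was below $\threshold_i/(1-\eps)$, and submodularity transfers this bound to the current state. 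The first pass is handled separately but is immediate, since no single-element gain can exceed $\maxgain$.

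Next I would establish the \emph{termination property}: after the while loop exits, every still-feasible pair $(u,j)$ satisfies $f(u \mid \SetF{S}{\numiter}{j}) < \threshold^{*} \le \eps \maxgain / (n(1-\eps))$, where $\threshold^{*}$ is the threshold used in the final executed pass. Armed with these two facts, I would lift Lemma~\ref{lem:greedy_invariant}. The exact greedy rule enters its proof only through the derivation of Inequality~\eqref{eq:O_difference}, where $f(u \mid \SetF{S}{i}{j}) \le f(u_i \mid \SetF{S}{i-1}{j_i})$ is used; replacing this by its $(1-\eps)^{-1}$-weakened version yields the modified invariant
\[
\Bigl(\tfrac{p}{1-\eps}+1\Bigr)\sum_{j=1}^{\ell} f\bigl(\SetF{S}{i}{j}\bigr) + \sum_{j=1}^{\ell} f\bigl(\SetF{O}{i}{j} \,\big|\, \SetF{S}{i}{j}\bigr) \;\ge\; \sum_{j=1}^{\ell} f\bigl(\OPT \cup \SetF{S}{i}{j}\bigr).
\]
The termination property, combined with submodularity and $\maxgain \le f(\OPT)$, yields the analogue of Corollary~\ref{cor:approximation_to_union}: at $i = \numiter$ the residual $\sum_j f(\SetF{O}{\numiter}{j} \mid \SetF{S}{\numiter}{j})$ is at most $\ell\eps f(\OPT)/(1-\eps)$ rather than zero.

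Finally I would apply this lifted analysis to the set construction of Section~\ref{sec:analysis-k-system}. That construction is purely combinatorial, depending only on the $k$-system structure and the execution trace, so Lemmata~\ref{lem:system_condition_1}--\ref{lem:system_condition_4} continue to hold verbatim with $p = k + \ell - 1$. Substituting into the modified invariant at $i = \numiter$ and then either applying Lemma~\ref{lem:distribution} over the random choice of solution (for the non-monotone case) or using $f(\OPT \cup S) \ge f(\OPT)$ directly (for the monotone case with $\ell = 1$), followed by the choice $\ell = \lfloor 2+\sqrt{k+2}\rfloor$, delivers the claimed approximation ratios. The main obstacle is the algebraic bookkeeping: the $(1-\eps)^{-1}$ factor picked up in the approximate greedy rule compounds with the $\eps/(1-\eps)$ residual from termination, and one has to absorb both into the clean advertised multiplicative loss $(1-2\eps)^{-2}$ (respectively $(1-\eps)^{-2}$ in the monotone case) using $\eps < 1/2$ and the explicit value of $\ell$; verifying that the resulting expression actually dominates this target form, rather than a slightly looser one, is the delicate step.
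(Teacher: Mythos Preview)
Your proposal is correct and follows essentially the same route as the paper. The paper formally obtains Theorem~\ref{thm:fast-deterministic_system} by observing that \fastalg is \knapsackalg with $m=\rho=0$ and invoking Proposition~\ref{prop:knapsack_general_result} (whose proof in Appendix~\ref{sec:knapsack-proof} contains exactly the lifted invariant and residual bound you describe), then reusing the $k$-system construction of Section~\ref{sec:analysis-k-system} and the algebra from the proofs of Theorem~\ref{thm:deterministic_system} and Proposition~\ref{prop:density-search-alg-result}; your direct argument for \fastalg is the same analysis with the knapsack terms stripped out.

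Two minor points of comparison: the paper's invariant (Lemma~\ref{lem:ks_greedy_invariant}) carries the coefficient $(p+1)/(1-\eps)$ rather than your $p/(1-\eps)+1$, absorbing a harmless nonnegative slack; and the paper states the termination residual as $\ell\eps\maxgain$ rather than your $\ell\eps\maxgain/(1-\eps)$, but either bound folds into the final $(1-2\eps)^{-2}$ factor via the inequality $1-\ell^{-1}-\eps\ge(1-\ell^{-1})(1-2\eps)$ for $\ell\ge2$.
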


The proofs of Theorems~\ref{thm:fast-deterministic_extendible} and \ref{thm:fast-deterministic_system} are very similar to their counterparts in Section~\ref{sec:main-alg}.
In particular, the same style of unified meta-proof may be used for analyzing \fastalg.
There are, however, two key differences in the analysis when we use the thresholding technique rather than an exact greedy search.
The first difference is that rather than a feasible element-solution pair whose marginal gain is maximal, we choose in each iteration a feasible element-solution pair whose marginal gain is within a $(1 - \eps)$ multiplicative factor of the largest marginal gain.
This $(1 - \eps)$ factor carries throughout the analysis.
The second difference is that, at the end of the algorithm, there may be elements which are feasible to add to solutions and have positive marginal gain; however, by the termination conditions, the marginal gain of each of these elements is at most $ ( \eps / n) \cdot \maxgain$.
Using submodularity, we can ensure that leaving these elements behind does not incur a significant loss in the objective value.
Formally, one can prove Theorems~\ref{thm:fast-deterministic_extendible} and \ref{thm:fast-deterministic_system} by observing that they follow from Proposition~\ref{prop:knapsack_general_result} (that appears in the next section) by plugging in $m = \rho = 0$ in the same way that Theorems~\ref{thm:knapsack-deterministic_extendible} and \ref{thm:knapsack-deterministic_system} follow from Proposition~\ref{prop:density-search-alg-result}.
%The approximation guarantees of Theorems~\ref{thm:fast-deterministic_extendible} and \ref{thm:fast-deterministic_system} are subsumed by Proposition~\ref{prop:knapsack_general_result}, which is presented in the next section.

\section{Incorporating Knapsack Constraints} \label{sec:knapsack-alg}

In this section, we consider the general form of Problem~\eqref{eq:opt-problem}, where the constraint is the intersection of an independence system $\cI$ with $m$ knapsack constraints.
We present $\knapsackalg$, an algorithm which extends the simultaneous greedy technique (Section~\ref{sec:main-alg}) and the faster thresholding variant (Section~\ref{sec:fast-alg}) to handle knapsack constraints by incorporating a density threshold technique.
The density threshold technique we consider was first introduced by \cite{MBK16} in the context of a repeated-greedy style algorithm for maximizing a submodular function over the intersection of a $k$-system and $m$ knapsack constraints.
By incorporating this density threshold technique into the \mainalg framework, we obtain a nearly linear time algorithm which improves both the approximation guarantees and the runtimes of previous methods.

The main idea behind the density threshold technique is to consider adding an element $u$ to a solution $S$ only if the marginal gain is larger than a fixed multiple $\density$ of the sum of its knapsack weights, i.e., $f(u \mid S) \geq \density \cdot \sum_{r=1}^m c_r(u)$.
Here, the quantity $f(u \mid S) / \sum_{r=1}^m c_r(u)$ is referred to as the \emph{density} of an element $u$ with respect to a set $S$. 
The density threshold technique received its name because it only adds an element to a solution if the density of the element is larger than the threshold $\density$.
For convenience, given a set $S$, an element $u$ is said to have high density if its density is larger than (or equal to) $\rho$ and low density if its density is less than $\rho$.

The algorithm \knapsackalg is presented below as Algorithm~\ref{alg:knapsack-alg}.
As before, the algorithm begins by initializing $\ell$ solutions $\SetF{S}{0}{1}, \SetF{S}{0}{2}, \dotsc, \SetF{S}{0}{\ell}$ to be empty sets.
Furthermore, a fast approximate greedy search is again achieved by using a marginal threshold $\threshold$ which is initially set to $\maxgain$ and then iteratively decreased by a multiplicative factor of $(1 - \epsilon)$.
The key difference here, compared to $\fastalg$, is that in order to add an element $u$ to a set $\SetF{S}{i-1}{j}$, we additionally require that the density ratio $f(u \mid S) / \sum_{r=1}^m c_r(u)$  is larger than a fixed threshold $\density$ and also that the updated set that we are considering, $\SetF{S}{i-1}{j} + u$, satisfies all the knapsack constraints.
For the purposes of analysis, we break these two conditions into separate lines, where the knapsack feasibility condition is checked on its own in Line~\ref{line:knapsack-check}.

\begin{algorithm}[H]
	\caption{\knapsackalg($\gnd, f, \cI, \ell, \density, \varepsilon$)}\label{alg:knapsack-alg}
	Initialize $\ell$ solutions, $\SetF{S}{0}{j} \gets \varnothing$ for every $j =1, \dotsc, \ell$. \\
	Initialize ground set $\gnd_0 \gets \gnd$, and iteration counter $i \gets 1$.\\
	Let $\maxgain = \max_{u \in \gnd} f(u)$, and initialize threshold $\threshold = \maxgain$. \\
	\While{$\threshold > (  \varepsilon / n)  \cdot \maxgain $}
	{
		\For{every element-solution pairs $(u,j)$ with $u \in \gnd_{i-1}$ and $1 \leq j \leq \ell$}{
			\If{$\SetF{S}{i-1}{j} + u \in \cI$ and $f(u \mid \SetF{S}{i-1}{j}) \geq \max \left( \threshold, \density \cdot \sum_{r=1}^m c_r(u) \right)$ \label{line:feasibility-check}}{
				\If{$c_r(\SetF{S}{i-1}{j} + u) \leq 1$ for all $1 \leq r \leq m$}{ \label{line:knapsack-check}
					Let $u_i \gets u$ and $j_i \gets j$.\\
					Update the solutions as
					$ 
					\SetF{S}{i}{j} \gets
					\left\{
					\begin{array}{lr}
					\SetF{S}{i-1}{j_i} + u_i  &\text{if } j = j_i \enspace,\\
					\SetF{S}{i-1}{j} &\text{if } j \neq j_i \enspace.
					\end{array}
					\right.$ \\
					Update the available ground set $\gnd_{i} \gets \gnd_{i-1} - u_i$. \\
					Update the iteration counter $i \gets i+1$.
				}
			}	
		}
		Update marginal gain $\threshold \gets (1 -  \varepsilon) \cdot \threshold$.
	}
	\Return{the set $\retsol$ maximizing $f$ among the sets $\{\SetF{S}{i}{j}\}_{j = 1}^\ell$} and the singletons $\{u\}_{u \in \cN}$.
\end{algorithm}

We begin the study of \knapsackalg by analyzing its running time.
In addition to analyzing the number of calls made to the value and independence oracles, we also analyze the number of arithmetic operations required by \knapsackalg that arise when working with the knapsack constraints.
In many practical scenarios, however, the computational burden of even a few calls to the value oracle is much greater than the total cost of all arithmetic operations required by the knapsack constraints; and thus, the bound on the number of such operations is of less significance.

\begin{observation} \label{obs:knapsack-alg-runtime}
	\knapsackalg requires at most $\bigtO{\ell n / \eps }$ calls to the value and independence oracles and $\bigtO{m \ell n / \eps}$ arithmetic operations.
\end{observation}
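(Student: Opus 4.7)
The plan is to mirror the proof of Observation~\ref{obs:fast-alg-runtime}, while separately accounting for the extra bookkeeping needed to handle the $m$ knapsack constraints. First, I would bound the number of iterations of the outer while loop. Since $\threshold$ is initialized to $\maxgain$, multiplied by $(1-\eps)$ at the end of each iteration, and the loop terminates once $\threshold \leq (\eps/n)\maxgain$, the same calculation appearing in the proof of Observation~\ref{obs:fast-alg-runtime} shows that the number of iterations is at most $\bigO{\eps^{-1}\log(n/\eps)} = \bigtO{1/\eps}$.

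Next, I would count the work done in a single iteration of the while loop. The inner for loop ranges over all $\ell n$ element-solution pairs $(u,j)$ with $u \in \gnd_{i-1}$ and $1 \leq j \leq \ell$. For each such pair, the check on Line~\ref{line:feasibility-check} uses one call to the value oracle (to compute $f(u \mid \SetF{S}{i-1}{j})$), one call to the independence oracle (to check $\SetF{S}{i-1}{j} + u \in \cI$), and $\bigO{m}$ arithmetic operations (to evaluate $\density\cdot\sum_{r=1}^m c_r(u)$ and compare it with $\threshold$). The conditional knapsack check on Line~\ref{line:knapsack-check} adds at most $\bigO{m}$ further arithmetic operations for verifying $c_r(\SetF{S}{i-1}{j} + u) \leq 1$ for every $r \in [m]$, and the constant-size solution update when a pair is accepted contributes only lower-order terms. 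Thus each iteration of the while loop uses $\bigO{\ell n}$ oracle calls and $\bigO{m \ell n}$ arithmetic operations.

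Combining the per-iteration bound with the bound of $\bigtO{1/\eps}$ on the number of iterations yields the claimed totals of $\bigtO{\ell n/\eps}$ oracle calls and $\bigtO{m\ell n/\eps}$ arithmetic operations. There is no real obstacle: the only subtlety is being careful that the singletons $\{u\}_{u \in \cN}$ examined in the return step can be evaluated in $\bigO{n}$ further value-oracle queries and $\bigO{mn}$ further arithmetic operations, which are absorbed into the stated bounds.
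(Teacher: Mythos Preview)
Your proposal is correct and follows essentially the same approach as the paper's proof: bound the number of while-loop iterations by $\bigtO{1/\eps}$ via the argument of Observation~\ref{obs:fast-alg-runtime}, then charge $\bigO{\ell n}$ oracle calls and $\bigO{m\ell n}$ arithmetic operations per iteration. The only cosmetic difference is that the paper explicitly mentions precomputing the quantities $\density\cdot\sum_{r=1}^m c_r(u)$ once up front and maintaining running knapsack totals $c_r(\SetF{S}{i}{j})$, whereas you simply note that each knapsack-related check costs $\bigO{m}$; both routes yield the same bound.
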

\begin{proof}
As shown in Observation~\ref{obs:fast-alg-runtime}, there are $\bigtO{1/\eps}$ iterations of the while loop.
At each iteration of the while loop, each of the $\bigO{\ell n}$ element-solution pairs are considered and checking feasibility of each pair requires a single call to the value and independence oracles.
Thus, the number of oracle calls is $\bigtO{\ell n / \eps}$.

The arithmetic operations are required for handling the knapsack constraints.
Note that for each element $u \in \gnd$, the term $\density \cdot \sum_{r=1}^m c_r(u)$ can be computed at the beginning of the algorithm using $m$ additions and $1$ multiplication. 
Thus, each of the $n$ terms may be computed using $\bigO{mn}$ arithmetic operations.
If each of the knapsack values $c_r( S^{(i)})$ are maintained for each of the $m$ knapsacks and $\ell$ solutions, then checking the condition in Line~\ref{line:knapsack-check} requires $\bigO{m}$ arithmetic operations.
Since this condition is checked for possibly every element-solution pair, this means $\bigO{m \ell n}$ arithmetic operations per iteration of the while loop.
Moreover, since the number of iterations of the while loop is $\bigtO{1 / \eps}$, we have that a total number of $\bigtO{m \ell n / \eps }$ arithmetic operations is required.
\end{proof}

Next, we present a unified analysis of \knapsackalg which yields approximation guarantees for $k$-systems and $k$-extendible systems.
At a high level, the analysis is similar to that of Proposition~\ref{prop:general_result}.
That is, we analyze the elements of the optimal solution $\OPT$ which must be thrown away as each new element is added.
The key difference here is that we must factor into our analysis the knapsack constraints---which arise via the density threshold criteria in Line~\ref{line:feasibility-check} and the new feasibility condition in Line~\ref{line:knapsack-check}.
It will be beneficial to break up our analysis into two cases based on whether \knapsackalg returns false on any instance of Line~\ref{line:knapsack-check} in its execution.
Towards this goal, let us introduce some new notation.
Let $\knapevent$ be an indicator variable for the event that the knapsack check in Line~\ref{line:knapsack-check} evaluates to false at any point in the algorithm.
That is, if $\knapevent = 0$ then the ``if statement'' in Line~\ref{line:knapsack-check} always evaluates to true; otherwise, $\knapevent = 1$ means that it returned false at some point.

\newcommand{\eqType}{}
\newcommand{\knapsackgeneralresult}[1][]{
	Suppose that there exists sets $\SetF{O}{i}{j}$ for every iteration $0 \leq i \leq \numiter$ and solution $1 \leq j \leq \ell$ and a value $p$ which satisfy the following properties:
	\begin{compactitem}
		\item $\SetF{O}{0}{j} = \OPT$ for every solution $1 \leq j \leq \ell$.
		\item $\SetF{S}{i}{j} + u \in \cI$ for every iteration $0 \leq i \leq \numiter$, solution $1 \leq j \leq \ell$, and element $u \in \SetF{O}{i}{j}$.
		\item $\SetF{O}{i}{j} \subseteq \SetF{O}{i - 1}{j} \cap \gnd_i$ for every iteration $1 \leq i \leq \numiter$ and solution $1 \leq j \leq \ell$.
		\item $(\SetF{S}{\numiter}{j} \setminus \SetF{S}{i}{j}) \cap \OPT \subseteq \SetF{O}{i}{j}$ for every iteration $0 \leq i \leq \numiter$ and solution $1 \leq j \leq \ell$.
		\item $\sum_{i = 1}^\ell |\SetF{O}{i - 1}{j} \setminus (\SetF{O}{i}{j} \cup \SetF{U}{i}{j})| \leq p$ for every iteration $1 \leq i \leq \numiter$.
	\end{compactitem}
	Then, the solution $\retsol$ produced by \knapsackalg satisfies the following approximation guarantees: 
	\ifthenelse{\isempty{#1}}{\renewcommand{\eqType}{equation}}{\renewcommand{\eqType}{equation*}}
	%\ifthenelse{\isempty{#1}}{}{\renewcommand{\theequation}{\ref{eq:knapsack-case-analysis}}}
	%\ifthenelse{\isempty{#1}}{}{\renewcommand{\theHequation}{repeat1}}
	\begin{\eqType} \ifthenelse{\isempty{#1}}{\label{eq:knapsack-case-analysis}}{\tag{\ref{eq:knapsack-case-analysis}}}
	f(S) \geq 
	\left\{
	\begin{array}{lr}
	\frac{1}{2} \density  &\text{ if } \knapevent = 1 \enspace,\\
	\frac{1 - \epsilon}{p+1} \cdot \Big( \left( 1 - \ell^{-1} - \eps \right) f(\OPT) - m \density \Big) &\text{ if } \knapevent = 0 \enspace.
	\end{array}
	\right.
	\end{\eqType}
	Moreover, when $f$ is monotone, these approximation guarantees improve to 
	%\ifthenelse{\isempty{#1}}{}{\renewcommand{\theequation}{\ref{eq:knapsack-case-analysis-monotone}}}
	%\ifthenelse{\isempty{#1}}{}{\renewcommand{\theHequation}{repeat2}}
	\begin{\eqType} \ifthenelse{\isempty{#1}}{\label{eq:knapsack-case-analysis-monotone}}{\tag{\ref{eq:knapsack-case-analysis-monotone}}}
	f(S) \geq 
	\left\{
	\begin{array}{lr}
	\frac{1}{2} \density  &\text{ if } \knapevent = 1 \enspace,\\
	\frac{1 - \epsilon}{p+1} \cdot \Big( \left( 1 - \eps \right) f(\OPT) - m \density \Big) &\text{ if } \knapevent = 0 \enspace.
	\end{array}
	\right.
	\end{\eqType}
	%\ifthenelse{\isempty{#1}}{}{\addtocounter{equation}{-2}}
}
\begin{proposition} \label{prop:knapsack_general_result}
	\knapsackgeneralresult
\end{proposition}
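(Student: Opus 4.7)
The plan is to split the argument into the two cases determined by the indicator $\knapevent$.

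For $\knapevent=1$, I will localize attention to the specific iteration at which the check on Line~\ref{line:knapsack-check} first fails, say for an element $u$, a solution $\SetF{S}{i-1}{j}$, and a knapsack $r$ with $c_r(\SetF{S}{i-1}{j})+c_r(u)>1$. Because the check on Line~\ref{line:feasibility-check} succeeded just before, $f(u\mid \SetF{S}{i-1}{j})\geq \density\sum_{r'=1}^m c_{r'}(u)\geq \density\, c_r(u)$. Telescoping the density condition over the elements previously added to $\SetF{S}{i-1}{j}$ and invoking submodularity yield $f(\SetF{S}{i-1}{j})\geq \density\, c_r(\SetF{S}{i-1}{j})$, while submodularity also gives $f(\{u\})\geq \density\, c_r(u)$. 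Since $c_r(\SetF{S}{i-1}{j})+c_r(u)>1$, at least one of these is $\geq \density/2$, and since $\retsol$ is chosen as the best of the solutions \emph{and} the singletons, $f(\retsol)\geq \density/2$, establishing the first line of both~\eqref{eq:knapsack-case-analysis} and~\eqref{eq:knapsack-case-analysis-monotone}.

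For the main case $\knapevent=0$, I will extend the meta-analysis of Proposition~\ref{prop:general_result} to absorb both the $(1-\eps)$-loss from the marginal-gain thresholding and the density cutoff. The central new ingredient, which replaces the exact-greedy inequality used inside Lemma~\ref{lem:greedy_invariant}, is the bound
\[
f(u \mid \SetF{S}{i-1}{j}) \leq \tfrac{1}{1-\eps}\, f(u_i \mid \SetF{S}{i-1}{j_i}) + \density\, c(u) \qquad \text{for every } u \in \SetF{O}{i-1}{j},\ u\neq u_i,
\]
where $c(u)=\sum_{r=1}^m c_r(u)$. I will establish this by case-analyzing when the pair $(u,j)$ was last examined by the for-loop: in the current while-iteration before $u_i$ (giving $f(u\mid S')<\threshold \leq f(u_i\mid \SetF{S}{i-1}{j_i})$ or $f(u\mid S')<\density\, c(u)$), in the previous while-iteration at threshold $\threshold/(1-\eps)$ (yielding the $1/(1-\eps)$ factor), or never---which can occur only during the very first while-iteration, in which case $f(u\mid\emptyset)\leq\maxgain=\threshold\leq f(u_i\mid \SetF{S}{i-1}{j_i})$ handles it. In every branch, submodularity transfers the bound from the state $S'$ at the relevant visit to $\SetF{S}{i-1}{j}$; the knapsack reason for rejection never arises here because $\knapevent=0$, and the independence reason never arises because $u\in\SetF{O}{i-1}{j}$ together with down-closure.

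Plugging this into the derivation of Lemma~\ref{lem:greedy_invariant} and adding a density bookkeeping term on the left to absorb the $\density\, c(u)$ leakage, I will prove by induction on $i$ the modified invariant
\[
\tfrac{p+1}{1-\eps}\sum_{j=1}^\ell f(\SetF{S}{i}{j}) + \sum_{j=1}^\ell f(\SetF{O}{i}{j}\mid \SetF{S}{i}{j}) + \density\sum_{j=1}^\ell c(\OPT \setminus \SetF{O}{i}{j}) \geq \sum_{j=1}^\ell f(\OPT \cup \SetF{S}{i}{j}),
\]
where the $\density$ telescope cancels exactly the $\density\, c(u)$ leakage from the new bound and the $\frac{p+1}{1-\eps}$ scaling produces a nonnegative $\frac{\eps}{1-\eps}f(u_i\mid\SetF{S}{i-1}{j_i})$ slack that makes the induction close. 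Specializing to $i=\numiter$, the termination condition forces each $u\in \SetF{O}{\numiter}{j}$ to satisfy $f(u\mid \SetF{S}{\numiter}{j})\leq \threshold_{\mathrm{last}}+\density\, c(u)$ with $\threshold_{\mathrm{last}}=O(\eps\maxgain/n)$; summing over $u$ and using $c(\SetF{O}{\numiter}{j})+c(\OPT\setminus \SetF{O}{\numiter}{j})=c(\OPT)\leq m$ bounds the second and third terms of the LHS jointly by $O(\ell\eps f(\OPT)+\ell\density\, m)$. Applying Lemma~\ref{lem:distribution} to get $\sum_j f(\OPT\cup \SetF{S}{\numiter}{j})\geq (\ell-1)f(\OPT)$ (respectively $\ell f(\OPT)$ when $f$ is monotone), using $f(\retsol)\geq \ell^{-1}\sum_j f(\SetF{S}{\numiter}{j})$, and rearranging then yields the second lines of~\eqref{eq:knapsack-case-analysis} and~\eqref{eq:knapsack-case-analysis-monotone}. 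The hard part is the case analysis for the thresholding bound, especially the first-while-iteration boundary where there is no previous threshold to appeal to; once this is in hand, everything else is a bookkeeping extension of Lemmata~\ref{lem:greedy_invariant}--\ref{lem:distribution}.
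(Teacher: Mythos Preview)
Your proposal is correct and follows the same overall structure as the paper's proof: the same argument for $\knapevent=1$, and for $\knapevent=0$ the same inductive strengthening of Lemma~\ref{lem:greedy_invariant} followed by the termination bound and Lemma~\ref{lem:distribution}. The one noteworthy difference is bookkeeping: the paper, at each step, splits the discarded elements of $\OPT$ into high-density ($\SetF{\highdensity}{i}{j}$) and low-density ($\SetF{\lowdensity}{i}{j}$) parts, bounds the former via the approximate-greedy inequality and the latter directly by $\density\,c(\cdot)$, and carries $\density\sum_{t}\sum_{j}c(\SetF{\lowdensity}{t}{j})$ on the right-hand side of the invariant; you instead prove the single combined inequality $f(u\mid\SetF{S}{i-1}{j})\le \tfrac{1}{1-\eps}f(u_i\mid\SetF{S}{i-1}{j_i})+\density\,c(u)$ and absorb the density leakage via the telescoping term $\density\sum_j c(\OPT\setminus\SetF{O}{i}{j})$ on the left. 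Both routes collapse to the same bound $\density\ell m$ at $i=\numiter$ since $c(\OPT)\le m$, so the final inequalities coincide. One small remark: your stated restriction $u\neq u_i$ is unnecessary (and, for $j\neq j_i$, you actually do need the bound at $u=u_i$), but your case analysis already covers that case verbatim, so it is only a slip in the statement, not a gap in the argument.
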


Proposition~\ref{prop:knapsack_general_result} provides a guarantee on the solution produced by \knapsackalg, which depends on the input density threshold $\density$ and also on the question whether Line~\ref{line:knapsack-check} ever evaluates to false during the algorithm's execution.
The conditions of Proposition~\ref{prop:knapsack_general_result} are identical to those in Proposition~\ref{prop:general_result} in Section~\ref{sec:main-alg}, and hence, the previous constructions of these sets $\SetF{O}{i}{j}$ for $k$-systems and $k$-extendible systems can be used here as well.
Note that when $\density = 0$, then every element has high density and we are back in the setting of \mainalg.

The analysis is very similar in spirit to Proposition~\ref{prop:general_result}, except that it features the marginal gain threshold technique for faster approximate greedy search and the density ratio threshold technique for knapsack constraints.
Because the main proof ideas involving the simultaneous greedy technique are presented in Section~\ref{sec:main-alg} and the marginal gain threshold and density ratio threshold techniques already appear in existing works, we defer the proof of Proposition~\ref{prop:knapsack_general_result} to Appendix~\ref{sec:knapsack-proof}.

Now we address the remaining question, which is how to choose a density threshold $\density$ which yields a good approximation.
Note that we always have either $\knapevent = 0$ or $\knapevent=1$, and hence, by taking the minimum of the two lower bounds for the two cases, we obtain the approximation guarantee
\begin{equation} \label{eq:lower_bound}
f(S) \geq \min \left\{ \frac{1}{2} \density, (1 - \eps) \left( \frac{1 - \ell^{-1} - \eps}{p+1} \right) f(\OPT) - \left( \frac{m}{p+1} \right) \density \right\}.
\end{equation}
To maximize this lower bound, we would like to set the density ratio threshold to
$$
\density^* 
= 2(1 - \eps) \left( \frac{1 - \ell^{-1} - \eps}{p+1 + 2m} \right) f(\OPT)
\enspace ,
$$
which would yield an approximation guarantee of
\[
f(S)
\geq (1 - \eps) \left( \frac{1 - \ell^{-1} - \eps}{p+1 + 2m} \right) f(\OPT)
 \enspace.
\]
Unfortunately, we cannot efficiently calculate this optimal choice for density ratio threshold $\density^*$ because it involves $f(\OPT)$.
Nevertheless, by the submodularity and non-negativity of the objective, we know that the optimal value lies within the range 
\[
\maxgain \leq f(\OPT) \leq r \cdot \maxgain \enspace, \]
where we recall that $r$ is the size of the largest independent set.
This interval, which is guaranteed to contain $f(\OPT)$, can be transformed into an interval containing the optimal density threshold $\density^*$.
In particular, we get
\[
\density^* 
= 2(1 - \eps) \left( \frac{1 - \ell^{-1} - \eps}{p+1 + 2m} \right) \maxgain \cdot \alpha
\enspace ,
\]
for some $\alpha \in [1, r]$.
When $r$ is not known exactly, an upper bound may be used here.
One upper bound we can use is that for any base $B \in \cI$,  $r$ is at most $k \cdot |B|$, which follows by the definition of a $k$-system.
Such a base $B$ may be known beforehand or constructed in a greedy fashion using $\bigO{n}$ calls to the independence oracle.
However, for simplicity, we use the somewhat weaker upper bound of $r \leq n$. Using the above mentioned stronger upper bound, or an instance specific upper bound, will reduce the interval in question, and thus, also the runtime. However, the improvement will only be in the logarithmic component of the runtime.

The high level idea is to design an algorithm which calls \knapsackalg several times as a subroutine using various density ratio thresholds in this range and to return the best solution.
\cite{MBK16} propose using a multiplicative grid search over this interval, running the algorithm on each point in the interval.
The multiplicative grid search guarantees that the subroutine algorithm is run with an input density threshold $\density$ which is close to the optimal $\density^*$ in the sense that $(1 - \delta) \density^* \leq \density \leq \density^*$ (for some error parameter $\delta \in (0, 1/2)$).
One may verify that by using this ``approximately-optimal'' density threshold, the approximation ratio obtained by the lower bound~\eqref{eq:lower_bound} is at most a factor $(1 - \delta)^{-1}$ larger than if the optimal threshold $\density^*$ were used.
This multiplicative grid search approach requires running the subroutine on every point in the multiplicative grid, which translates to $\bigO{1 / \delta \cdot \log (n)}$ calls to the subroutine.
Thus, this ``brute force'' multiplicative grid search adds an additional $\bigtO{1 / \delta}$ factor to the running time, which is undesirable, especially for higher accuracy applications where a smaller $\delta$ is preferred.

We propose a binary search method which achieves the same approximation guarantee using exponentially fewer calls to \knapsackalg as a subroutine.
The key to our binary search method is a careful use of the case analysis in Proposition~\ref{prop:knapsack_general_result}.
The algorithm \densitysearchalg is stated formally below as Algorithm~\ref{alg:density-search-alg}.
We consider points on a multiplicatively spaced grid of the interval $[1,n]$, which is given by $ \alpha_k = (1+\delta)^{k}$ for $k = 0, 1, \dotsc, \left\lceil \frac{1}{\delta} \log n \right\rceil$, where $\delta$ is an input parameter that specifies the granularity of the grid.
Another input to the algorithm is $\beta$, which specifies the relation between the points in the grid $[1,n]$ and the density thresholds which are used.
Let us consider the non-monotone case for now, in which case we should set $$\beta = 2(1 - \eps) \left( \frac{1 - \ell^{-1} - \eps}{p+1 + 2m} \right).$$
In this case, note that each point $\alpha_k$ in the $[1,n]$ grid corresponds to the choice of density threshold $$\density_k = \beta \cdot \maxgain \cdot \alpha_k = 2(1 - \eps) \left( \frac{1 - \ell^{-1} - \eps}{p+1 + 2m} \right) \maxgain \cdot \alpha_k.$$ 
The algorithm tries to zoom in on the optimal density threshold using binary search, while using the value of the indicator $E$ for each call to \knapsackalg to make the decision in each iteration of the search (we denote by $E_i$ the value of this indicator for call number $i$). 
While this indicator does not necessarily indicate the relationship between the the current density threshold and $\density^*$, it does give enough of a signal around which we may construct a binary search. 
In particular, if $E_i = 0$, then we get a good approximation as long as our current density threshold is an overestimate of $\density^*$, and thus, in the future we only need to consider higher density thresholds. 
Likewise, if $E_i = 1$, then we get a good approximation as long as our current density threshold is an underestimate  of $\density^*$, and thus, in the future we only need to consider lower density thresholds.

%Using a binary search on these points, we seek to find a pair of adjacent points in the grid, $\alpha_k$ and $\alpha_{k+1}$, such that $E = 1$ if \knapsackalg is run with input density threshold $\density_k$ and $E = 0$ if  \knapsackalg is run with input density threshold $\density_{k+1}$.
%Of course, the binary search may also find that $E = 1$ or $E = 0$ for all points in the grid.

\begin{algorithm}[H]
	\caption{\densitysearchalg($\gnd, f, \cI, \ell, \delta, \varepsilon, \beta$)}\label{alg:density-search-alg}
	Initialize upper and lower bounds $k_{\ell} = 1$, $k_{u} = \lceil \frac{1}{\delta} \log n \rceil$. \\
	Let $\maxgain = \max_{u \in \gnd} f(u)$, and initialize iteration counter $i \gets 1$. \\
	\While{ $| k_u - k_\ell | > 1$}
	{
		Set middle bound $k_i = \left\lceil \frac{k_\ell + k_u }{2} \right\rceil$. \\
		Set density ratio $\density_i \gets \beta \cdot  \maxgain (1 + \delta)^{k_i}$.\\
		Obtain set $S_i \gets \knapsackalg(\gnd, f, \cI, \ell, \density_i, \varepsilon)$. \\
		\uIf{$E_i = 0$}{
			Increase lower bound $k_\ell \gets k_i$. \\
		}
		\Else{
			Decrease upper bound $k_u \gets k_i$. \\	
		}
		Update iteration counter $i \gets i+1$.
	}
	Set density ratio $\density_i \gets \beta \cdot \maxgain (1 + \delta)^{k_\ell}$.\\
	Obtain set $S_i \gets \knapsackalg(\gnd, f, \cI, \ell, \density_i, \varepsilon)$. \\
	\Return{the set $\retsol$ maximizing $f$ among the sets $\{ S_j \}_{j=1}^i$}.
\end{algorithm}

The following proposition bounds the number of calls made to \knapsackalg and provides an approximation guarantee.

\begin{proposition} \label{prop:density-search-alg-result}
	\densitysearchalg makes $\bigtO{1}$ calls to \knapsackalg.
	Additionally assume that the independence system satisfies the conditions in Proposition~\ref{prop:knapsack_general_result} for every execution of $\knapsackalg$.
	If $\beta = 2(1 - \eps) \left( \frac{1 - \ell^{-1} - \eps}{p+1 + 2m} \right)$, then the solution $S$ returned by \densitysearchalg satisfies 
	\[
	f(S) 
	\geq (1 - \delta)(1 - \eps) \left( \frac{1 - \ell^{-1} - \eps}{p+1 + 2m} \right) f(\OPT)
	\geq (1 - \delta) (1 - 2 \eps)^{2} \left( \frac{1 - \ell^{-1}}{p +1 + 2m} \right) f(\OPT)
	\]
	when the number of solutions $\ell$ is at least 2.
	Moreover, if $f$ is monotone and $\beta = 2(1 - \eps) \left( \frac{1 - \eps}{p+1 + 2m} \right)$ then this lower bound further improves to 
	\[
	f(S) \geq 
%	(1 - \delta)(1 - \eps) \left( \frac{1 - \eps}{p+1 + 2m} \right) f(\OPT) =
	(1 - \delta)( 1 - \eps)^2 \left( \frac{1}{p + 1 + 2m} \right) f(\OPT)
	\]
	for any number of solutions $\ell$.
\end{proposition}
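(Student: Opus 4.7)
The plan is to handle the two conclusions of the proposition (the $\bigtO{1}$ bound on calls and the approximation guarantee) separately, and to do the non-monotone case first since the monotone case follows by an analogous argument with a cleaner bound.

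For the first part (number of calls), the observation is straightforward: the while loop bisects an interval of initial length $\lceil \frac{1}{\delta}\log n\rceil$ and each iteration strictly halves the remaining interval, so it terminates after $O(\log(\log(n)/\delta)) = \bigtO{1}$ iterations. Combined with the single post-loop execution of \knapsackalg, this yields $\bigtO{1}$ total calls.

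For the approximation, I would first pin down the ``ideal'' density. Let $\density^* = \beta f(\OPT)$. Since $\maxgain \leq f(\OPT) \leq n\maxgain$ (by non-negativity, submodularity, and the assumption that singletons are feasible), the multiplicative grid $\{\beta\maxgain(1+\delta)^k\}_{k=0}^{\lceil\log n/\delta\rceil}$ straddles $\density^*$, so there exists $k^*$ with $(1-\delta)\density^* \leq \density_{k^*} \leq \density^*$. Plugging the prescribed $\beta = 2(1-\eps)(1-\ell^{-1}-\eps)/(p+1+2m)$ into the two branches of Proposition~\ref{prop:knapsack_general_result} shows that $\density^*$ is precisely the value at which the two branches coincide, and a direct calculation verifies that for any $\density \in [(1-\delta)\density^*, \density^*]$, both branches yield at least $T := (1-\delta)(1-\eps)(1-\ell^{-1}-\eps)/(p+1+2m)\cdot f(\OPT)$. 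Hence an execution of \knapsackalg at $\density = \density_{k^*}$ would immediately yield $f(S) \geq T$ regardless of the value of $\knapevent$.

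The main obstacle is that the binary search is not guaranteed to ever test $k_i = k^*$ exactly, so one must argue that the aggregate of the in-loop iterations plus the post-loop call still produces some $S_i$ with $f(S_i) \geq T$. I would perform case analysis on the terminal values $k_\ell^{\text{fin}}$ and $k_u^{\text{fin}} = k_\ell^{\text{fin}}+1$ and on the post-loop outcome $\knapevent_{\text{fin}}$. The two favorable scenarios are: (i) if $k_u^{\text{fin}} \geq k^*+1$ and $k_u$ was assigned by some in-loop iteration $i$, then $\knapevent_i = 1$ and $\density_{k_u^{\text{fin}}} > \density^*$, giving $f(S_i) \geq \density_{k_u^{\text{fin}}}/2 \geq T$; and (ii) if the post-loop call has $\knapevent_{\text{fin}} = 0$ and $k_\ell^{\text{fin}} \leq k^*$, then $\density_{k_\ell^{\text{fin}}} \leq \density^*$ and the $\knapevent = 0$ branch yields $f(S) \geq T$ by the calculation above. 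The core difficulty is to show that one of these two scenarios must occur in every termination pattern, handling in particular the edge cases where $k_\ell^{\text{fin}}$ or $k_u^{\text{fin}}$ still equals its initial value (which corresponds to all in-loop iterations having the same outcome $\knapevent$).

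For the monotone case, Proposition~\ref{prop:knapsack_general_result} provides the sharper bound $\frac{1-\eps}{p+1}((1-\eps)f(\OPT) - m\density)$ in the $\knapevent = 0$ branch, with corresponding optimal density $\density^* = 2(1-\eps)^2/(p+1+2m) \cdot f(\OPT)$, matching the prescribed $\beta = 2(1-\eps)^2/(p+1+2m)$. Repeating the argument with $T = (1-\delta)(1-\eps)^2/(p+1+2m)\cdot f(\OPT)$ yields the improved guarantee; since the $(1-\ell^{-1})$ factor no longer appears, this bound holds for every $\ell \geq 1$.
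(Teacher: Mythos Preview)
Your bound on the number of calls to \knapsackalg is correct and matches the paper's argument.

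For the approximation guarantee, however, your two-scenario case analysis is too narrow to be exhaustive, and the ``core difficulty'' you flag is not merely a matter of edge-case bookkeeping: your scenarios genuinely miss executions in which a good bound is available. Concretely, suppose every in-loop iteration returns $E_i = 1$, so $k_u$ strictly decreases each step while $k_\ell$ stays at its initial value. Then $k_u^{\text{fin}}$ can end up strictly below $k^*+1$, so your scenario~(i) does not apply; and if the post-loop call at $k_\ell^{\text{fin}}$ also returns $E_{\text{fin}} = 1$, your scenario~(ii) does not apply either. Yet the very first in-loop call had $k_1 \approx k_u^{\text{init}}/2$, typically far above $k^*$, so $\density_1 > \density^*$ together with $E_1 = 1$ gives $f(S_1) \geq \density_1/2 > \density^*/2$. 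Your terminal-value framing discards this perfectly usable iteration because $k_u$ was subsequently overwritten.

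The paper avoids this by structuring the analysis around \emph{all} iterations rather than only terminal ones. Its three cases are: (1)~some iteration $i$ has $\density_i \leq \density^*$ and $E_i = 0$; (2)~some iteration $i$ has $\density_i \geq \density^*$ and $E_i = 1$; (3)~neither, which forces $E_i = 0 \Rightarrow \density_i > \density^*$ and $E_i = 1 \Rightarrow \density_i < \density^*$ at every step, so the binary search always retains the half of its current range containing $\density^*$, and the final post-loop density $\density_{\hat\imath}$ satisfies $\density_{\hat\imath} \leq \density^* \leq (1+\delta)\density_{\hat\imath}$; either branch of Proposition~\ref{prop:knapsack_general_result} then yields the $(1-\delta)$-degraded bound. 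Cases~(1) and~(2) are your scenarios~(ii) and~(i) once they are freed from the restriction to the post-loop call and the final $k_u$-assignment respectively, and Case~(3) is the convergence argument that your outline lacks entirely.

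You also skip the second displayed inequality in the proposition. The paper dispatches it with the observation that for $\ell \geq 2$ one has $1 - \ell^{-1} - \eps \geq (1-2\eps)(1-\ell^{-1})$, which combined with $(1-\eps) \geq (1-2\eps)$ gives the $(1-2\eps)^2$ factor.
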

\noindent \textbf{Remark:} Observe that Proposition~\ref{prop:density-search-alg-result} requires a different value for $\beta$ in the cases of monotone and non-monotone functions. This is necessary because the $\rho^*$ corresponding to these two cases are different, and the value of $\beta$ is used to adjust the range in which Algorithm~\ref{alg:density-search-alg} searches for a density approximating $\rho^*$ so that this range is guarantee to include $\rho^*$. One can avoid this by slightly increasing the range in which Algorithm~\ref{alg:density-search-alg} searches so that it is guaranteed to include both possible values for $\rho^*$. Since the ratio between the values of $\rho^*$ corresponding to the two cases is only a constant as long as the sum $\ell^{-1} + \eps$ is bounded away from $1$, such an expansion of the search range will have an insignificant effect on the time complexity of the algorithm in most regimes of interest.
\begin{proof}[Proof of Proposition~\ref{prop:density-search-alg-result}]
	We begin by bounding the number of calls that \densitysearchalg makes to \knapsackalg.
	Recall that the number of points in the $\delta$-multiplicative discretization is $\bigO{1 / \delta \cdot \log n}$. 
	At each iteration of the binary search, \knapsackalg is called once.
	It is well known that binary search requires only logarithmically many iterations to terminate.
	Thus, the number of calls to \knapsackalg is 
	$
	\bigO{\log \left( 1 / \delta \cdot \log n \right)}
	= \bigO{\log \left( 1 / \delta \right) + \log \log n }
	= \bigtO{ 1 }$ 
	calls to \knapsackalg.
		
	Now we prove the approximation guarantee of \densitysearchalg using the approximation guarantees of \knapsackalg.
	Let us first consider the general non-monotone case when $\beta = 2(1 - \eps) \left( \frac{1 - \ell^{-1} - \eps}{p+1 + 2m} \right)$.
	We proceed by a case analysis. 
	For the first case, suppose that at some iteration $i$ of \densitysearchalg, it called \knapsackalg with a density threshold $\density_i$ such that $\density_i \leq \density^*$ and the indicator $E_i$ ended up with the value $0$. 
	By Proposition~\ref{prop:knapsack_general_result}, we get in this case
	\begin{align*}
	f(S) &\geq f(S_i)\\
	& \geq  \frac{1 - \eps}{p+1} \cdot \Big( \left( 1 - \ell^{-1} - \eps \right) f(\OPT) - m \density_i \Big) \\
	&\geq \frac{1 - \eps}{p+1} \cdot \Big( \left( 1 - \ell^{-1} - \eps \right) f(\OPT) - m \density^* \Big)\\
	&= (1 - \eps)\left( \frac{1 - \ell^{-1} - \eps}{p+1 + 2m} \right) f(\OPT) \enspace.
	\end{align*}
	The second case is that at some iteration $i$ of \densitysearchalg, it called \knapsackalg with a density threshold $\density_i$ such that $\density_i \geq \density^*$ and the indicator $E_i$ ended up with the value $1$. 
	By Proposition~\ref{prop:knapsack_general_result}, we get in this case
	\[
	f(S)
	\geq
	f(S_i)
	\geq \frac{1}{2} \density_i 
	\geq \frac{1}{2} \density^* 
	= (1 - \eps) \left( \frac{1 - \ell^{-1} - \eps}{p+1 + 2m} \right) f(\OPT)
	\enspace.
	\]
	
	The last case we need to consider is the case that neither of the above cases happens in any iteration. 
	One can observe that in this case the binary search of \densitysearchalg chooses in each iteration the half of its current range that includes $\rho^*$. 
	Thus, we have
	\[
		2(1 - \eps) \left( \frac{1 - \ell^{-1} - \eps}{ p+1 + 2m} \right) \maxgain (1 + \delta)^{k_\ell}
		\leq
		\density^*
		\leq
		2(1 - \eps) \left( \frac{1 - \ell^{-1} - \eps}{ p+1 + 2m} \right) \maxgain (1 + \delta)^{k_u}
		\enspace.
	\]
	Let us now denote the final value of $i$ by $\hat{\imath}$. 
	Since the leftmost side of the last inequality is equal to $\density_{\hat{\imath}}$ and the rightmost side is larger than the leftmost side by at most a factor $1 + \delta$ (since $k_u - k_\ell \leq 1$ when \densitysearchalg terminates), we get $\density_{\hat{\imath}} \leq \density^* \leq (1 + \delta)\density_{\hat{\imath}}$, and by Proposition~\ref{prop:knapsack_general_result},
	\begin{align*}
	f(S)
	\geq{} & f(S_{\hat{\imath}})\\
	\geq{}& \min\left\{\frac{1}{2} \density_{\hat{\imath}}, \frac{1 - \epsilon}{p+1} \cdot \Big( \left( 1 - \ell^{-1} - \eps \right) f(\OPT) - m \density_{\hat{\imath}} \Big)\right\}\\
	\geq{} & \min\left\{\frac{1}{2} \frac{\density^*}{1 + \delta}, \frac{1 - \epsilon}{p+1} \cdot \Big( \left( 1 - \ell^{-1} - \eps \right) f(\OPT) - m \density^* \Big) \right\}\\
	\geq{} & (1 - \delta) \min\left\{\frac{1}{2} \density^*, \frac{1 - \epsilon}{p+1} \cdot \Big( \left( 1 - \ell^{-1} - \eps \right) f(\OPT) - m \density^* \Big)\right\}\\
	={} & (1 - \delta)(1 - \eps) \left( \frac{1 - \ell^{-1} - \eps}{ p+1 + 2m} \right) f(\OPT) \enspace,
%	\tag*{\BlackBox}
	\end{align*}
	which establishes the first inequality in the statement of the proposition.
	We establish the second inequality by observing that for $\ell \geq 2$, the quantity $- \eps + 2 \eps \ell^{-1}$ is non-positive; and
	thus, 
	\[
	1 - \ell^{-1} - \eps
	\geq %1 - \ell^{-1} - \eps - \eps + 2 \eps \ell^{-1}
	1 - \ell^{-1} - 2\eps + 2 \eps \ell^{-1}
	= (1 - \ell^{-1}) ( 1 - 2 \eps) \enspace,
	\]
	which establishes the proposition for non-monotone objectives.
	The analysis for monotone objectives follows in an analogous manner.
\end{proof}

%The same arguments as before show that for $k$-systems and $k$-extendible systems, the sets $\SetF{O}{i}{j}$ may be constructed in the same manner for appropriate $p$ when the number of solutions $\ell$ is appropriately set.
%This is formalized in the two theorems below.

%For $k$-extendible systems, the construction described in Section~\ref{sec:analysis-extendible} shows that one can construct sets $\SetF{O}{i}{j}$ which satisfy the conditions of Proposition~\ref{prop:knapsack_general_result} with $p = \max(k, \ell-1)$ for any number of solutions $\ell$. Using this observation, Proposition~\ref{prop:density-search-alg-result} implies the following theorem. 

We are now ready to present the main approximation results for \densitysearchalg when the independence system is either a $k$-system or a $k$-extendible system.

\begin{theorem} \label{thm:knapsack-deterministic_extendible}
	Suppose that $(\gnd, \cI)$ is a $k$-extendible system, the number of solutions is set to $\ell = M + 1$---where $M = \max \left( \lceil \sqrt{1 + 2 m} \rceil , k \right)$, and the two error terms are set to be equal (\ie, $\eps = \delta \in (0,1/2)$).
	Then, \densitysearchalg requires $\bigtO{M n / \eps}$ oracle calls as well as $\bigtO{M m n / \eps}$ arithmetic operations and produces a solution whose approximation ratio is at most $(1 - 2 \eps)^{-3}$ times
	\[ 
		 \max\left\{k + \frac{2m + 1}{k}, 1 + 2\sqrt{2m + 1} \right\} + 2m + 2 \enspace.
	\]
	Moreover, when $f$ is non-negative monotone submodular and the number of solutions is chosen so that $\ell \leq k+1$, then the approximation ratio improves to $(1 - \eps)^{-3} \cdot (k+ 2m + 1)$.
\end{theorem}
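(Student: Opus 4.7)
The plan is to derive Theorem~\ref{thm:knapsack-deterministic_extendible} as a direct consequence of Proposition~\ref{prop:density-search-alg-result}, the $k$-extendible construction of Proposition~\ref{prop:k-extendible-construction}, and the runtime bounds of Observation~\ref{obs:knapsack-alg-runtime}. First, since $(\gnd, \cI)$ is $k$-extendible, Proposition~\ref{prop:k-extendible-construction} tells us that for any number of solutions $\ell$ the hypotheses of Proposition~\ref{prop:knapsack_general_result} are satisfied with $p = \max(k, \ell - 1)$; in particular, with the choice $\ell = M + 1$ where $M = \max(\lceil \sqrt{1 + 2m} \rceil, k) \geq k$, we get $p = \ell - 1 = M$. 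Hence Proposition~\ref{prop:density-search-alg-result} (non-monotone case, applied with $\beta$ as specified and $\ell \geq 2$) yields the lower bound
\[
f(S) \;\geq\; (1 - \delta)(1 - 2\eps)^{2} \left( \frac{1 - \ell^{-1}}{p + 1 + 2m} \right) f(\OPT) \enspace.
\]
Setting $\delta = \eps$ and using that $1 - \eps \geq 1 - 2\eps$ for $\eps \in (0, 1/2)$, the reciprocal of the leading prefactor is bounded by $(1 - 2\eps)^{-3}$, so it suffices to show that $(p + 1 + 2m)/(1 - \ell^{-1})$ equals the max expression in the theorem plus $2m + 2$.

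Next, I would carry out the algebra for this ratio. Substituting $p = M$ and $\ell = M + 1$,
\[
\frac{p + 1 + 2m}{1 - \ell^{-1}}
= \frac{(M + 1)(M + 1 + 2m)}{M}
= M + 2 + \frac{1}{M} + 2m + \frac{2m}{M}
= \left( M + \frac{2m + 1}{M} \right) + 2m + 2 \enspace.
\]
It then remains to bound $M + (2m+1)/M$ by $\max\{\,k + (2m+1)/k,\; 1 + 2\sqrt{2m+1}\,\}$ under the two possible values of $M$. If $M = k$ (so $k \geq \lceil \sqrt{2m+1} \rceil$) then we get the first branch exactly. If instead $M = \lceil \sqrt{2m+1} \rceil$, then $\sqrt{2m+1} \leq M < \sqrt{2m+1} + 1$, so $(2m+1)/M \leq \sqrt{2m+1}$ and $M + (2m+1)/M < 1 + 2\sqrt{2m+1}$, which is exactly the second branch. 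Combining both cases gives the claimed approximation ratio.

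For the runtime, Proposition~\ref{prop:density-search-alg-result} shows that \densitysearchalg invokes \knapsackalg only $\bigtO{1}$ times, and each such invocation costs $\bigtO{\ell n / \eps} = \bigtO{M n / \eps}$ oracle calls together with $\bigtO{m \ell n / \eps} = \bigtO{M m n / \eps}$ arithmetic operations by Observation~\ref{obs:knapsack-alg-runtime}; multiplying through gives the stated complexity. Finally, the monotone case is handled analogously: when $f$ is monotone and $\ell \leq k + 1$, Proposition~\ref{prop:k-extendible-construction} still applies with $p = \max(k, \ell - 1) = k$, while the monotone bound of Proposition~\ref{prop:density-search-alg-result} gives $f(S) \geq (1 - \delta)(1 - \eps)^{2}(p + 1 + 2m)^{-1} f(\OPT)$, which with $\delta = \eps$ becomes an approximation ratio of at most $(1 - \eps)^{-3}(k + 1 + 2m)$ and no $(1 - \ell^{-1})$ penalty. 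The only non-routine step in this proof is the two-case algebraic bound for the $\lceil \sqrt{2m+1} \rceil$ branch; everything else is substitution into the meta-propositions already established.
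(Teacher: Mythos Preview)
Your proposal is correct and follows essentially the same approach as the paper: invoke Proposition~\ref{prop:k-extendible-construction} to get $p = \max(k,\ell-1) = M$, feed this into Proposition~\ref{prop:density-search-alg-result}, and then simplify $(M+1)(M+1+2m)/M$. You are in fact more explicit than the paper in justifying the final inequality via the two-case analysis on $M$, and you also spell out the runtime bound (which the paper's proof omits); the only minor imprecision is that you say ``equals'' where you then prove an upper bound, but the substance is right.
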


\begin{theorem} \label{thm:knapsack-deterministic_system}
	Suppose that $(\gnd, \cI)$ is $k$-system, the number of solutions is $\ell = \lfloor 2 + \sqrt{k + 2m + 2 } \rfloor$, and the two error terms are set equal as $\eps = \delta \in (0, \nicefrac{1}{2})$.
	Then, $\densitysearchalg$ requires $\bigtO{n\sqrt{k + m} / \eps}$ oracle calls as well as $\bigtO{mn \sqrt{k + m} / \eps}$ arithmetic operations and produces a solution whose approximation ratio is at most $(1 - 2\eps)^{-3} \cdot (1 + \sqrt{k+2m +2})^2$. 
	Moreover, when $f$ is non-negative monotone submodular and the number of solutions is set to $\ell=1$, then the approximation ratio improves to $(1 - \eps)^{-3}\cdot \left( k+ 2m +1 \right)$.
\end{theorem}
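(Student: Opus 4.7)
The plan is to derive Theorem~\ref{thm:knapsack-deterministic_system} from Proposition~\ref{prop:density-search-alg-result} in essentially the same way that Theorem~\ref{thm:knapsack-deterministic_extendible} follows for $k$-extendible systems---only here the value of $p$ and the choice of $\ell$ are different.

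First I would verify that each call to \knapsackalg made within \densitysearchalg satisfies the hypotheses of Proposition~\ref{prop:knapsack_general_result} with $p = k + \ell - 1$. The conditions listed in Proposition~\ref{prop:knapsack_general_result} coincide with those in Proposition~\ref{prop:general_result}, and the recursive construction of the sets $\SetF{O}{i}{j}$ given in Section~\ref{sec:analysis-k-system} depends only on the sequence of additions $(u_i, j_i)$ produced by the algorithm, not on which algorithm produced that sequence. Consequently, Lemmata~\ref{lem:system_condition_1}--\ref{lem:system_condition_4} carry over verbatim to runs of \knapsackalg, establishing $p = k + \ell - 1$. Plugging this into Proposition~\ref{prop:density-search-alg-result} with $\delta = \eps$ and $\beta = 2(1-\eps)(1-\ell^{-1}-\eps)/(p+1+2m)$, and using $(1-\eps) \geq (1-2\eps)$ to consolidate the prefactor, gives the bound
\[
f(\retsol) \;\geq\; (1-2\eps)^{3}\cdot\frac{\ell-1}{\ell(k+\ell+2m)}\cdot f(\OPT).
\]
It therefore suffices to show that $\ell(k+\ell+2m)/(\ell-1) \leq (1+\sqrt{k+2m+2})^2$ for the stated choice $\ell = \lfloor 2+\sqrt{k+2m+2}\rfloor$.

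The core arithmetic step, after setting $\alpha = k+2m+2$ and clearing denominators, reduces to the quadratic inequality $\ell^2 - (3+2\sqrt{\alpha})\ell + (1+\sqrt{\alpha})^2 \leq 0$, whose roots are $\ell_{\pm} = \tfrac{1}{2}\bigl(3+2\sqrt{\alpha}\pm\sqrt{5+4\sqrt{\alpha}}\bigr)$. I would verify that the integer $\ell = \lfloor 2+\sqrt{\alpha}\rfloor$, which satisfies $1+\sqrt{\alpha} < \ell \leq 2+\sqrt{\alpha}$, sits within the interval $[\ell_-, \ell_+]$; this boils down to the elementary estimate $\sqrt{5+4\sqrt{\alpha}} \geq 1$, which holds trivially. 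For the monotone case, taking $\ell = 1$ gives $p = k$, so Proposition~\ref{prop:density-search-alg-result} with $\delta = \eps$ and $\beta = 2(1-\eps)^2/(p+1+2m)$ yields the approximation ratio $\frac{k+2m+1}{(1-\eps)^3}$ immediately.

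Finally, the runtime bounds follow from combining the $\bigtO{1}$ calls to \knapsackalg guaranteed by Proposition~\ref{prop:density-search-alg-result} with the per-call complexity of Observation~\ref{obs:knapsack-alg-runtime}, namely $\bigtO{\ell n/\eps}$ oracle queries and $\bigtO{m\ell n/\eps}$ arithmetic operations, after substituting $\ell = \bigO{\sqrt{k+m}}$. The only step that requires real care is the quadratic verification in the middle paragraph---making sure the specific integer $\lfloor 2+\sqrt{k+2m+2}\rfloor$ (rather than some real-valued minimizer) still lies inside the admissible interval; everything else is a direct substitution into the bounds already established.
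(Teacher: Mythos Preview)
Your proposal is correct and follows essentially the same route as the paper: invoke the $k$-system construction of Section~\ref{sec:analysis-k-system} to get $p = k+\ell-1$, feed this into Proposition~\ref{prop:density-search-alg-result} with $\delta=\eps$, and then verify that the choice $\ell=\lfloor 2+\sqrt{k+2m+2}\rfloor$ bounds $(k+\ell+2m)/(1-\ell^{-1})$ by $(1+\sqrt{k+2m+2})^2$. The only difference is in that last algebraic step: the paper avoids your quadratic-roots computation by bounding numerator and denominator separately via $1+\sqrt{\alpha}\leq\ell\leq 2+\sqrt{\alpha}$ and then simplifying $\dfrac{\alpha+\sqrt{\alpha}}{1-1/(1+\sqrt{\alpha})}=(1+\sqrt{\alpha})^2$ directly, which is a bit cleaner but equivalent.
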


As for the previous algorithms, the approximation ratio guaranteed by \densitysearchalg is improved for the subclass of $k$-extendible systems, at the cost of a slightly larger running time.
The algorithm guarantees the same approximation factor for monotone objectives for both $k$-extendible and $k$-systems.
In both cases, the best choice of the number of solutions $\ell$ depends on the number of knapsack constraints $m$.
Moreover, we remark that in the absence of any additional knapsack constraints (that is, $m = 0$), the approximation guarantees of Theorems~\ref{thm:knapsack-deterministic_extendible} and \ref{thm:knapsack-deterministic_system} recover the guarantees of the slower \mainalg, up to the $(1 - \eps)^{-3}$ error terms.
However, unlike \mainalg, the running time of \densitysearchalg is nearly-linear in the size of the ground set.

The $(1 - 2\eps)^{-3}$ multiplicative error terms may seem somewhat non-intuitive at first glance, but it turns out that they can be replaced with $1 + \bigO{\eps}$. In particular, for $\eps \in (0, \nicefrac{1}{4})$ the multiplicative error term  $(1 - 2\eps)^{-3}$ is at most $1 + 28 \eps$.
This follows by the convexity of the function $y(t) = (1 - 2 t)^{-3}$ within the range $[0, 1/2)$.
More specifically, by setting $\lambda = 4 \eps \in (0, 1)$, we get
\begin{align*}
(1 - 2 \eps)^{-3}
&= y(\eps)
= y \left( (1 - \lambda) \cdot 0 + \lambda \cdot \nicefrac{1}{4} \right) 
\leq ( 1 - \lambda) \cdot y(0) + \lambda \cdot y( \nicefrac{1}{4} ) \\
&= (1 - 4 \eps) + 4 \eps \left( 1 - 2 \cdot \frac{1}{4} \right)^{-3}
= (1 - 4 \eps) + 4 \eps \cdot 8
= 1 + 28 \eps \enspace.
\end{align*}
Similarly, one can also show that for all $\eps \in (0, \nicefrac{1}{4})$, the error term $(1 - \eps)^{-3}$, which appears in the approximation for monotone submodular objectives, is at most $(1 + 6 \eps)$.
Furthermore, by scaling $\eps$ one may transfer the constant in front of $\eps$ to the running time, which remains $\bigtO{n / \eps}$. This way, one may consider the multiplicative error term in the approximation factor of the algorithm to be a clean $1 + \eps$.

The proofs of Theorems~\ref{thm:knapsack-deterministic_extendible} and \ref{thm:knapsack-deterministic_system} follow from the unified meta-analysis of Proposition~\ref{prop:density-search-alg-result} in the same way that the Theorems~\ref{thm:deterministic_extendible} and \ref{thm:deterministic_system} follow from the meta-analysis of Proposition~\ref{prop:general_result}.
Namely, the constructions of the sets $\SetF{O}{i}{j}$ in Sections~\ref{sec:analysis-extendible} and \ref{sec:analysis-k-system} demonstrate that the conditions of Proposition~\ref{prop:density-search-alg-result} hold with $p = \max (k, \ell - 1)$ for $k$-extendible systems and $p = k + \ell - 1$ for general $k$-systems.
The final step is then to choose the number of solutions $\ell$ to optimize the resulting approximation ratios.
Although these steps are conceptually similar to the choice of $\ell$ in the analysis of \mainalg, they are somewhat involve, and so we reproduce them here.

\begin{proof}[Proof of Theorem~\ref{thm:knapsack-deterministic_extendible}]
The construction of sets $\SetF{O}{i}{j}$ in Proposition~\ref{prop:k-extendible-construction} demonstrates that the conditions of Proposition~\ref{prop:knapsack_general_result} are satisfied with with $p = \max (k, \ell - 1)$.
Thus, Proposition~\ref{prop:density-search-alg-result} implies that the approximation ratio of \densitysearchalg with $\ell = M + 1$ is at most $(1 - 2 \eps)^{-3}$ times the quantity,
\[
\frac{p + 1 + 2m }{1 - \ell^{-1}}
=
\frac{\max (k, \ell - 1) + 1 + 2m}{1 - \ell^{-1}} \enspace.
\]
Trying to optimize this quantity, we may set $ \ell = M + 1$, where $M = \max \left( \lceil \sqrt{1 + 2 m} \rceil , k \right)$. 
Note that $M$ is at least $k$ and so $\max(k, \ell - 1) = \max(k, M) = M$.
Thus, by the above, we have that the approximation ratio of \densitysearchalg with $\ell = M + 1$ is at most $(1 - 2 \eps)^{-3}$ times the quantity
\begin{align*}
\frac{M + 1 + 2m}{1 - \frac{1}{M +1}}
={} & \frac{(M+1) \left( M+1 + 2m \right) }{M}\\
={}& \frac{(M+1)^2}{M} + \left( \frac{M+1}{M} \right) 2m \\
\leq{} &
	\max\left\{k + 2m + 2 + \frac{2m + 1}{k}, 2m + 3 + 2\sqrt{2m + 1}\right\}\\
	={} &
	2m + 2 + \max\left\{k + \frac{2m + 1}{k}, 1 + 2\sqrt{2m + 1}\right\}
\enspace.
\end{align*}
For monotone submodular objectives, Proposition~\ref{prop:density-search-alg-result} implies that for all number of solutions $\ell \leq k + 1$, the approximation ratio is at most $(1 - \eps)^{-3}$ times the quantity
\[
p + 1 + 2m \leq \max(k, \ell - 1) + 1 + 2m \leq k + 1 + 2m \enspace. 
\qedhere
\]
\end{proof}

\begin{proof}[Proof of Theorem~\ref{thm:knapsack-deterministic_system}]
The construction of sets $\SetF{O}{i}{j}$ in Proposition~\ref{prop:k-system-construction} demonstrates that the conditions of Proposition~\ref{prop:knapsack_general_result} are satisfied with with $p = k + \ell - 1$.
Thus, Proposition~\ref{prop:density-search-alg-result} implies that the approximation ratio of \densitysearchalg with $\ell = \lfloor 2 + \sqrt{k + 2m + 2 } \rfloor$ is at most $(1 - 2 \eps)^{-3}$ times the quantity,
\[
\frac{p + 1 + 2m }{1 - \ell^{-1}}
=
\frac{k + \ell + 2m}{1 - \ell^{-1}}
=
\frac{k + 2m + \lfloor 2 + \sqrt{k + 2m + 2} \rfloor}{1 - 1/\lfloor 2 + \sqrt{k + 2m + 2} \rfloor}
\leq
\frac{k + 2m + 2 + \sqrt{k + 2m + 2}}{1 - 1/( 1 + \sqrt{k + 2m + 2})}.
\]
To simplify some calculations, let $\alpha = k + 2m + 2$.
By substituting $\alpha$ and rearranging terms, we obtain that the right hand side may be expressed as
\[
\frac{\alpha + \sqrt{\alpha}}{1 - 1/( 1 + \sqrt{\alpha})}
=
\frac{(1 + \sqrt{\alpha}) \cdot (\alpha +  \sqrt{\alpha})}{\sqrt{\alpha} }
= 
(1 + \sqrt{\alpha})(1 + \sqrt{\alpha}) 
= (1 + \sqrt{\alpha})^2
\enspace .
\]
Substituting back the value of $\alpha = k + 2m + 2$, we have that the approximation ratio is at most $(1 - 2\eps)^{-3} \cdot (1 + \sqrt{k+2m +2})^2$.
Finally, suppose that $f$ is monotone and $\ell = 1$. Then, by Proposition~\ref{prop:density-search-alg-result}, the approximation factor is at most
\[
(1 - \eps)^{-3} \cdot (p + 2m +1) 
= (1 - \eps)^{-3} \cdot (k + \ell -1 + 2m + 1) 
= (1 - \eps)^{-3} \cdot (k + 2m + 1) 
\enspace.
\qedhere
\]
\end{proof}

\section{Repeated Greedy} \label{sec:repeated-greedy}

In this section, we present and analyze the \repeatedgreedy algorithm for maximizing a submodular function $f$ subject to a $k$-system constraint $(\gnd, \cI)$. 
\repeatedgreedy iteratively executes the following three operations: first, the greedy algorithm is called as a subroutine to produce a feasible set $S_i$, then a subroutine for unconstrained submodular maximization produces a set $S'_i \subset S_i$ with large objective value, and elements of $S_i$ are removed from the remaining ground set.
After the algorithm makes $\ell$ iterations of this kind, the algorithm terminates and outputs the best set among all the sets constructed during its iterations.
The choice of $\ell$ will be determined later to yield the best approximation ratio.
A more formal description of \repeatedgreedy is given as Algorithm~\ref{alg:repeated_greedy}.

\begin{algorithm}
	\DontPrintSemicolon
	\caption{\greedy($\gnd, f, \cI $)}\label{alg:greedy}
	$S \leftarrow \varnothing$ \\
	\While{there exists $u \in \gnd'$ such that $S + u \in \cI$ and $ f(u \mid S) > 0$}{
		Let $u \in \gnd'$ be the element of this kind maximizing $f(u \mid S)$.\\
		Add $u$ to $S$.\\
	}
	\Return{$S$}.\\
\end{algorithm}

\begin{algorithm}
	\DontPrintSemicolon
	\caption{Repeated Greedy($\gnd, f, \cI, \ell$)}\label{alg:repeated_greedy}
	Let $\gnd_1 \gets \gnd$.\\
	\For{$i = 1$ \KwTo $\ell$}
	{
		Run greedy procedure $S_i \gets \greedy(\gnd_i, f, \cI)$ \\
		Filter the greedy solution $S'_i \gets \usm(S_i, f)$ \\
		Update ground set $\gnd_{i + 1} \gets \gnd_i \setminus S_i$.
	}
	\Return the set $S$ maximizing $f$ among the sets $\{S_i, S'_i \}_{i = 1}^\ell$.
\end{algorithm}

\repeatedgreedy calls a subroutine \usm for unconstrained submodular maximization.
Formally, the subroutine $\usm(A,f)$ takes as input a set $A$ and a non-negative submodular function $f$ defined on subsets of $A$ and returns a set $X \subset A$ such that $f(X) \geq \frac{1}{\alpha} f(B)$ for all $B \subset A$.
There are several known algorithms for \usm \citep{FMV07, GV11, Buchbinder2015}.
\cite{FMV07} showed that no algorithm using only polynomially many oracle queries can achieve an approximation ratio smaller than $\alpha = 2$.
For the sake of generality, we remain agnostic to the specific \usm subroutine that is being used and derive an approximation ratio for \repeatedgreedy that depends on $\alpha$; however, in order to obtain nearly-linear time algorithms, we restrict our attention to \usm subroutines which require at most $\bigO{n}$ oracle calls.
In the spirit of proposing deterministic algorithms, we further restrict our attention to deterministic \usm subroutines, although a randomized subroutine may be used here as well, with appropriate probabilistic caveats in the approximation ratio.
At the time of this writing, it is most natural to use the deterministic algorithm of \cite{Buchbinder2015}, which yields an approximation ratio of $\alpha = 3$ and runs in linear time.
We remark there that it is an interesting open problem to construct a deterministic linear time algorithm for \usm which achieves the optimal $\alpha = 2$ approximation, although~\cite{Buchbinder2016b} come close to achieving this goal. Specifically, they designed an algorithm for \usm achieving $(2 + \eps)$-approximation using $O(n/\eps)$ time.

\begin{observation}\label{obs:repeated-greedy-runtime-and-feasibility}
	\repeatedgreedy requires $\bigO{\ell r n}$ oracle calls and its output $S$ is independent.
\end{observation}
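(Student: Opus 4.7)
The plan is to verify the two claims separately; both follow from simple structural properties and so I expect no serious technical obstacle. The main thing to be careful about is accounting for the cost of the inner \greedy loop correctly.

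For feasibility, I would first observe that the returned set $S$ is either some $S_i$ or some $S'_i$. The set $S_i$ is the output of \greedy, which only adds an element $u$ to its running solution when $S + u \in \cI$; hence the final $S_i$ lies in $\cI$. The set $S'_i$ is a subset of $S_i$, so by the down-closure property of the independence system, $S'_i \in \cI$ as well. In either case the returned set is independent.

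For the oracle query bound, I would analyze one outer iteration of \repeatedgreedy and then multiply by $\ell$. Inside \greedy, each iteration of the while loop scans the remaining ground set (of size at most $n$) to find the feasible element of maximum positive marginal gain, requiring $\bigO{n}$ value and independence queries. The key step is bounding the number of such iterations: since the running solution inside \greedy is maintained to be independent in $\cI$, its cardinality never exceeds $r$, and each iteration adds exactly one element. Hence \greedy makes at most $r$ iterations, so the call $\greedy(\gnd_i, f, \cI)$ uses $\bigO{rn}$ oracle calls. The call to \usm on $S_i$ uses $\bigO{|S_i|} = \bigO{r}$ oracle calls by our assumption of a linear-time unconstrained-maximization subroutine, and updating $\gnd_{i+1} \gets \gnd_i \setminus S_i$ requires no oracle queries. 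Summing over the $\ell$ outer iterations gives $\bigO{\ell r n}$ oracle calls in total, as claimed.

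The only subtlety is justifying the $r$-bound on the length of \greedy's main loop, but this is immediate from feasibility; everything else is routine bookkeeping.
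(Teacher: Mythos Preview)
Your proposal is correct and follows essentially the same argument as the paper: bound each \greedy call by $r$ iterations of $\bigO{n}$ queries, note that \usm is negligible, multiply by $\ell$, and handle feasibility via the down-closed property applied to $S'_i \subseteq S_i$. The only cosmetic difference is that the paper bounds the \usm cost by $\bigO{n}$ rather than your tighter $\bigO{|S_i|} = \bigO{r}$, but both are absorbed into the $\bigO{rn}$ per-iteration bound.
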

\begin{proof}
	We begin the proof by bounding the number of oracle calls used by \repeatedgreedy.
	Observe that \greedy has at most $|S| \leq r$ iterations, during which at most $n$ calls to the value and independence oracle calls are made.
	This means that a single execution of \greedy requires $\bigO{rn}$ oracle calls.
	As discussed above, we only consider implementations of \usm which require $\bigO{n}$ oracle calls, which is negligible compared to the computational requirements of \greedy.
	Finally, because $\ell$ solutions are produced by \repeatedgreedy (and thus, the algorithm makes only $\ell$ iterations), the total number of oracle calls is $\bigO{\ell r n}$.
	
	Next, we prove that the output $S$ is independent.
	For every $1 \leq i \leq \ell$, the set $S_i$ is initialized as independent because the greedy algorithm returns an independent set. 
	Moreover, the output $S'_i$ of \usm satisfies  $S'_i \subseteq S_i$, and thus, $S'_i$ is independent by the down-closed property of $k$-systems. 
	The observation now follows since the output $S$ of \repeatedgreedy is chosen as either $S_i$ or $S'_i$ for one of such $i$.
\end{proof}

We now present the main runtime and approximation guarantees for \repeatedgreedy.

\begin{theorem}\label{thm:repeated-greedy}
	Suppose that $(\gnd, \cI)$ is $k$-system and that the number of solutions is set to $\ell = \lfloor 1 + \sqrt{2(k+1)/ \alpha }  \rfloor $.
	Then, \repeatedgreedy requires $\bigO{\sqrt{k} r n}$ oracle calls and produces a solution whose approximation ratio is at most $k + \left( \sqrt{2 \alpha} \right) \sqrt{k} + (\alpha + 1) + \littleO{1}$. 
	Moreover, when $f$ is a non-negative monotone submodular function and the number of solutions is set to $\ell=1$, then the approximation ratio of \repeatedgreedy improves to $k+1$.
\end{theorem}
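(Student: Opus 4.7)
The plan is to bound the runtime by direct accounting and then establish the approximation ratio through an iteration-by-iteration analysis that tracks how the optimum degrades as elements are removed from the ground set. The runtime is immediate: \greedy uses $\bigO{rn}$ oracle queries per call, \usm uses $\bigO{n}$ by hypothesis, and there are $\ell = \bigO{\sqrt{k/\alpha}} = \bigO{\sqrt{k}}$ outer iterations, giving $\bigO{\sqrt{k}\,rn}$ total. The monotone case with $\ell = 1$ reduces to the classical guarantee of \cite{FNW78}: a single run of \greedy on a $k$-system with a monotone objective already achieves a $(k+1)$-approximation, and returning the better of $S_1$ and $S'_1$ preserves this bound.

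For the non-monotone case, I would track, at each iteration $i$, the decomposition $\OPT = C_i \cup O_i$ into the ``consumed'' part $C_i = \OPT \setminus \gnd_i = \bigcup_{j < i} (\OPT \cap S_j)$ and the ``available'' part $O_i = \OPT \cap \gnd_i$. The argument combines four ingredients. First, the non-monotone $k$-system guarantee for \greedy gives $(k+1) f(S_i) \geq f(S_i \cup O_i)$, obtained via the standard $k$-system partition of $O_i$ into blocks of size at most $k$ indexed by the greedy additions, together with the greedy selection rule and submodularity. Second, submodularity plus non-negativity yields $f(S_i \cup O_i) + f(C_i) \geq f(S_i \cup \OPT)$. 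Third, applying the \usm guarantee to the subset $\OPT \cap S_j \subseteq S_j$ gives $f(\OPT \cap S_j) \leq \alpha f(S'_j)$, and subadditivity then bounds $f(C_i) \leq \alpha \sum_{j < i} f(S'_j)$. Finally, since the sets $\{S_i\}$ are disjoint by construction, each element of $\OPT$ lies in at most one $S_i$, so Lemma~2.2 of \cite{BFNS14} applied to $g(T) = f(\OPT \cup T)$ with $T$ uniform over $\{S_1, \dotsc, S_\ell\}$ yields $\sum_i f(S_i \cup \OPT) \geq (\ell - 1) f(\OPT)$.

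Combining the four ingredients above and summing over $i$, after swapping the order of summation in the $\sum_i \sum_{j < i}$ term, produces
\[
(k+1) \sum_{i=1}^\ell f(S_i) + \alpha \sum_{j = 1}^\ell (\ell - j) f(S'_j) \geq \sum_{i=1}^\ell f(S_i \cup \OPT) \geq (\ell - 1) f(\OPT).
\]
Let $M = f(\retsol)$, so $f(S_i), f(S'_j) \leq M$. Using the tight bounds $\sum_i f(S_i) \leq \ell M$ and $\sum_j (\ell - j) f(S'_j) \leq \frac{\ell(\ell-1)}{2} M$, this rearranges to $f(\OPT)/M \leq \frac{(k+1)\ell}{\ell - 1} + \frac{\alpha \ell}{2}$. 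This convex function of $\ell$ is minimized at $\ell - 1 = \sqrt{2(k+1)/\alpha}$ with value $(k+1) + \alpha/2 + \sqrt{2\alpha(k+1)}$, which has the asymptotic form $k + \sqrt{2\alpha}\sqrt{k} + \bigO{1} + \littleO{1}$ and matches the theorem after the integer floor absorbs a further $\littleO{1}$ correction.

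The main obstacle I anticipate is establishing the non-monotone greedy guarantee in the first ingredient: the classical proof handles elements of $O_i$ that were blocked at some greedy step, but for non-monotone $f$ one must also account for elements that were never blocked and whose marginal gain became non-positive at termination. These contribute non-positively, so they can be safely dropped, but the bookkeeping requires care. A secondary subtle point is the triangular sum $\sum_j (\ell - j) = \ell(\ell-1)/2$ versus the coarser $(\ell-1)\ell$: using the tighter bound is exactly what justifies the $\sqrt{2(k+1)/\alpha}$ appearing in the theorem's choice of $\ell$, as opposed to $\sqrt{(k+1)/\alpha}$ that the looser bound would produce.
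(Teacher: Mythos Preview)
Your proposal is correct and follows essentially the same route as the paper: the same four ingredients (the $k$-system greedy bound of \cite{GRST10}, the submodular inequality $f(S_i\cup O_i)+f(C_i)\geq f(S_i\cup\OPT)$, the \usm\ bound combined with subadditivity, and Lemma~2.2 of \cite{BFNS14}) are assembled into the identical intermediate ratio $\frac{(k+1)\ell}{\ell-1}+\frac{\alpha\ell}{2}$, which is then optimized over $\ell$. Your final constant $1+\alpha/2$ is in fact slightly tighter than the paper's $\alpha+1$ (the paper loses $\alpha/2$ through a coarser treatment of the floor), but either bound suffices for the theorem as stated.
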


Let us compare the approximation ratio of \repeatedgreedy to that of \mainalg.
Before continuing, we remark that while \mainalg was able to achieve an improved $k + \bigO{1}$ approximation for the subclass of $k$-extendible systems, it is not possible to prove a similar approximation guarantee for \repeatedgreedy when the constraint belongs to this subclass. 
This is discussed in more detail in Section~\ref{sec:repeated-greedy-tight-extendible-approx}.

Furthermore, although both \mainalg and \repeatedgreedy achieve similar asymptotic approximation factors of $k + \bigO{\sqrt{k}}$ for the class of $k$-systems, the low order terms in the approximation factor of \repeatedgreedy are larger.
More precisely, \repeatedgreedy achieves an approximation of $k + (\sqrt{2 \alpha}) \sqrt{k} + (1 + \alpha) + \littleO{1}$ while \mainalg achieves an approximation of $(1 + \sqrt{k+2})^2 = k + 2 \sqrt{k} + 3 + \littleO{1}$.
While both algorithms have the same coefficient for the leading $k$ term, \repeatedgreedy has larger coefficients in the low order terms.
In particular, the lower order terms of \repeatedgreedy depend on $\alpha$, the approximation ratio for \usm, which will be at least $2$ by the hardness result of \cite{FMV07}.
Moreover, even the subconstant $\littleO{1}$ term is larger for \repeatedgreedy.
While this term goes to zero as $k$ grows, it may be non-negligible for very small $k$ values.
An explicit form for this term is derived in the proof of Theorem~\ref{thm:repeated-greedy}.
After the proof, we analyze this term more carefully, showing that for $\alpha=3$ and $k=1$ it is $\approx 20.4$ and for $k=10$ the term is $\approx 2.1$.
On the other hand, the $\littleO{1}$ term in the approximation ratio of \mainalg is at most $2/\sqrt{k}$.

% Descrive high level aspects of the proof
We now begin the analysis of the approximation ratio of \repeatedgreedy for $k$-systems. 
As before, we write $\OPT$ to denote an independent set of $(\gnd, \cI)$ maximizing $f$. 
At a high level, our analysis proceeds by showing that (1) by properties of the \greedy and \usm procedures, the value of the output of \repeatedgreedy is proportional to the average value of the union between a set from $\{ S \}_{i=1}^\ell$ and $\OPT$, and then (2) because the sets $\{ S \}_{i=1}^\ell$ are disjoint, this aforementioned average cannot be considerably smaller than the value of $\OPT$.
More concretely, our analysis is based on three lemmata.
The first of these lemmata, presented below, gives lower bounds on the objective values of the two sets $S_i$ and $S'_i$ produced at each iteration.

\begin{lemma} \label{lem:known_approx_results}
	For every $1 \leq i \leq \ell$, $f(S_i) \geq \frac{1}{k+1}f(S_i \cup (\OPT \cap \gnd_i))$ and $f(S'_i) \geq \frac{1}{\alpha} f(S_i \cap \OPT)$.
\end{lemma}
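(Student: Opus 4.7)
The plan is to prove the two inequalities separately. The second is an immediate consequence of the specification of \usm, while the first is the classical Fisher--Nemhauser--Wolsey-style approximation guarantee of greedy on a $k$-system, adapted to the (possibly) non-monotone setting.

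The second inequality is a one-liner. By the specification of the \usm subroutine, $S'_i = \usm(S_i, f)$ satisfies $f(S'_i) \geq \alpha^{-1} f(B)$ for every $B \subseteq S_i$. Since $S_i \cap \OPT \subseteq S_i$, taking $B = S_i \cap \OPT$ yields the claim.

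For the first inequality, I would set $O := \OPT \cap \gnd_i$, which lies in $\cI$ by down-closedness, and list $S_i = \{u_1, \dots, u_s\}$ in the order \greedy added the elements, with $S_i^{(j)} := \{u_1, \dots, u_j\}$. The heart of the proof is a charging argument: construct a map $\sigma : O \setminus S_i \to \{1, \dots, s\}$ satisfying (i) $|\sigma^{-1}(j)| \leq k$ for every $j$, and (ii) $S_i^{(j-1)} + v \in \cI$ whenever $\sigma(v) = j$. Such a map can be built from the $k$-system property applied to the nested chain $S_i^{(0)} \subset \dots \subset S_i^{(s)} = S_i$ together with $O$: for each $v \in O \setminus S_i$, assign $\sigma(v)$ to be the smallest index $j$ for which $S_i^{(j)} + v \notin \cI$, or to $s$ if $S_i + v \in \cI$. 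Bounding $|\sigma^{-1}(j)| \leq k$ uses the bounded-ratio-of-bases definition of a $k$-system on appropriately chosen subsets; this is the main technical obstacle and is a standard combinatorial fact (analogous to the one underlying the classical greedy analysis on $k$-systems).

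With $\sigma$ in hand, each $v \in O \setminus S_i$ was a feasible candidate at iteration $\sigma(v)$, so either \greedy selected $u_{\sigma(v)}$ over $v$, giving $f(u_{\sigma(v)} \mid S_i^{(\sigma(v)-1)}) \geq f(v \mid S_i^{(\sigma(v)-1)})$, or \greedy terminated at step $s$ because no remaining element had positive marginal gain, giving $f(v \mid S_i^{(\sigma(v)-1)}) \leq 0 \leq f(u_{\sigma(v)} \mid S_i^{(\sigma(v)-1)})$. Combining with submodularity, $f(v \mid S_i) \leq f(v \mid S_i^{(\sigma(v)-1)})$, and the standard subadditive upper bound $f(S_i \cup O) - f(S_i) \leq \sum_{v \in O \setminus S_i} f(v \mid S_i)$, we obtain
\[
f(S_i \cup O) - f(S_i) \leq \sum_{v \in O \setminus S_i} f(u_{\sigma(v)} \mid S_i^{(\sigma(v)-1)}) \leq k \sum_{j=1}^{s} f(u_j \mid S_i^{(j-1)}) = k \cdot f(S_i),
\]
which rearranges to $(k+1) f(S_i) \geq f(S_i \cup O)$ as required. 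The only delicate step is the construction of $\sigma$ with the stated pre-image bound; everything else is routine application of submodularity, the greedy selection rule, and the termination condition for non-monotone greedy.
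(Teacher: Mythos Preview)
Your treatment of the second inequality matches the paper exactly. For the first inequality the paper simply invokes Lemma~3.2 of \cite{GRST10}; you go further and try to reconstruct its proof. The overall charging structure you describe is the right idea, but the specific map $\sigma$ you propose does \emph{not} satisfy the per-bucket bound $|\sigma^{-1}(j)| \leq k$ for general $k$-systems, and that bound is \emph{not} a consequence of the ratio-of-bases definition. Take the rank-$2$ uniform matroid (a $1$-system) on $\{a,u,v_1,v_2\}$, with greedy output $S_i=\{a,u\}$ and $O=\{v_1,v_2\}$: both $v_t$ satisfy $\{a\}+v_t\in\cI$ and $\{a,u\}+v_t\notin\cI$, so your rule gives $\sigma(v_1)=\sigma(v_2)=2$ and $|\sigma^{-1}(2)|=2>1=k$. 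The same phenomenon occurs at interior buckets, e.g.\ in a rank-$3$ linear matroid with two elements of $O$ lying in the span of the first two greedy picks and a third greedy step still to come. Separately, your fallback ``$\sigma(v)=s$ whenever $S_i+v\in\cI$'' can dump arbitrarily many elements into bucket $s$, and the termination condition only yields $f(v\mid S_i)\le 0$, not $f(v\mid S_i^{(s-1)})\le 0$ as you write.

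The construction that actually works---the one in \cite{GRST10}, reproduced later in this paper as Lemma~\ref{lemma:greedy_partition_opt}---first sets aside the still-feasible elements $C_{s+1}=\{v\in O\setminus S_i : S_i+v\in\cI\}$, handled directly by $f(v\mid S_i)\le 0$, and then builds $C_s,\dots,C_1$ \emph{backwards}, at each step selecting at most $k$ not-yet-covered elements of $O$ that are addable to $S_i^{(t-1)}$. The $k$-system property is then used to show these buckets \emph{cover} $O$, while $|C_t|\le k$ holds by construction. Alternatively, one can keep your $\sigma$ but replace the per-bucket claim by the cumulative one $|\{v\in O\setminus S_i : S_i^{(j)}+v\notin\cI\}|\le kj$ (which \emph{does} follow from the ratio-of-bases definition applied to $S_i^{(j)}\cup\{v: S_i^{(j)}+v\notin\cI\}$), and combine this with the fact that the greedy marginals $f(u_j\mid S_i^{(j-1)})$ are non-increasing in $j$ via an Abel-summation argument.
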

\begin{proof}
	The first inequality is a direct application of Lemma~3.2 of~\cite{GRST10}, which states that a set $S$ obtained by running greedy with a $k$-system constraint must obey $f(S) \geq \frac{1}{k+1} f(S \cup C)$ for all independent sets $C$.
	Notice that the set $S_i$ is the output of the greedy algorithm when executed on the $k$-system obtained by restricting $(\gnd, \cI)$ to the ground set $\gnd_i$ and that $C = \OPT \cap \gnd_i$ is an independent set of this restricted $k$-system.
	This yields that $f(S_i) \geq \frac{1}{k+1} f(S_i \cup (\OPT \cap \gnd_i))$.
	
	Let us now explain why the second inequality of the lemma holds. 
	Observe that $S_i \cap \OPT$ is a subset of $S_i$.
	Thus, by the approximation guarantees of \usm, $f(S_i') \geq \frac{1}{\alpha}  f(S_i \cap \OPT) $.
%	\qedhere
\end{proof}

The second lemma we need is the following basic fact about submodular functions.

\newcommand{\submodularFactLemma}{
	Suppose $f$ is a non-negative submodular function over ground set $\gnd$. For every three sets $A, B, C \subseteq \gnd$, $f(A \cup (B \cap C)) + f(B \setminus C) \geq f(A \cup B)$.
}
\begin{lemma}\label{lem:submodular_fact1}
	\submodularFactLemma
\end{lemma}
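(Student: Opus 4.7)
The plan is to prove this by a direct application of submodularity together with non-negativity, choosing the ``right'' pair of sets to which submodularity is applied. Specifically, I will set $X = A \cup (B \cap C)$ and $Y = B \setminus C$, which are exactly the two sets whose values appear on the left-hand side of the inequality.

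The first step is to compute $X \cup Y$ and $X \cap Y$. For the union, observe that $X \cup Y = A \cup (B \cap C) \cup (B \setminus C) = A \cup B$, since $(B \cap C) \cup (B \setminus C) = B$. For the intersection, since $B \cap C$ and $B \setminus C$ are disjoint, we get $X \cap Y = (A \cup (B \cap C)) \cap (B \setminus C) = A \cap (B \setminus C)$.

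The second step is to apply the submodularity inequality $f(X) + f(Y) \geq f(X \cup Y) + f(X \cap Y)$, which after substituting the two identities above becomes
\[
f(A \cup (B \cap C)) + f(B \setminus C) \geq f(A \cup B) + f(A \cap (B \setminus C)).
\]
The final step is to invoke non-negativity of $f$ to discard the term $f(A \cap (B \setminus C)) \geq 0$, which yields the desired inequality.

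I do not anticipate any real obstacle here; the only ``trick'' is recognizing the correct choice of sets for submodularity, namely taking $X$ and $Y$ to be exactly the two arguments appearing on the left-hand side. Once this is done, the identities $X \cup Y = A \cup B$ and the disjointness of $B \cap C$ with $B \setminus C$ make the computation immediate, and the role of non-negativity is to drop the leftover $f(A \cap (B \setminus C))$ term.
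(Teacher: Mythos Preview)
Your proof is correct and essentially identical to the paper's: both apply submodularity to the pair $X = A \cup (B \cap C)$ and $Y = B \setminus C$, identify $X \cup Y = A \cup B$, and then drop the non-negative term $f(X \cap Y)$. The only cosmetic difference is that you simplify $X \cap Y$ to $A \cap (B \setminus C)$ before discarding it, whereas the paper discards $f((A \cup (B \cap C)) \cap (B \setminus C))$ without simplifying.
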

\newcommand{\submodularFactProof}{
	\begin{proof}
		Observe that
		\begin{align*}
			f(A \cup (B \cap C)) + f(B \setminus C)
			\geq{} &
			f(A \cup (B \cap C) \cup (B \setminus C)) + f((A \cup (B \cap C)) \cap (B \setminus C))\\
			\geq{} &
			f(A \cup (B \cap C) \cup (B \setminus C)) \\
			= &	f(A \cup B)
			\enspace,
		\end{align*}
		where the first inequality follows from the submodularity of $f$, and the second inequality follows from its non-negativity.
	\end{proof}
}\submodularFactProof

The third lemma we need is Lemma~\ref{lem:distribution} [Lemma~2.2 of \cite{BFNS14}] which allows us to relate the average value of $f(S_i \cup \OPT)$ to the optimal value $f(\OPT)$.
Together, these allow us to prove Proposition~\ref{prop:repeated-greedy-analysis}, which is a general approximation guarantee for \repeatedgreedy that holds for any number of iterations $\ell \geq 1$.

\begin{proposition}\label{prop:repeated-greedy-analysis}
	If $(\gnd, \cI)$ is a $k$-system, then the solution returned by \repeatedgreedy has an approximation ratio of at most $\frac{ k+1 + \frac{\alpha}{2} (\ell - 1)}{1-1/\ell}$. 
	Moreover, this approximation improves to $k + 1 + \frac{\alpha}{2}(\ell - 1)$ for monotone submodular objectives.
\end{proposition}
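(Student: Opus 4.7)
The plan is to derive a per-iteration bound on $f(S_i \cup \OPT)$, sum these bounds, then relate the sum on the right-hand side to $f(S)$ (since $S$ beats every $S_i$ and $S'_i$) and the sum on the left-hand side to $f(\OPT)$ (using the disjointness of the $S_i$'s together with Lemma~\ref{lem:distribution}). The key combinatorial fact driving the analysis is that, because $\gnd_{i+1} = \gnd_i \setminus S_i$, the ``missing'' portion of $\OPT$ at iteration $i$ satisfies $\OPT \setminus \gnd_i = \bigcup_{j<i}(\OPT \cap S_j)$, which links iteration $i$ back to the \usm outputs of earlier iterations.

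For the per-iteration bound, I would apply Lemma~\ref{lem:submodular_fact1} with $A = S_i$, $B = \OPT$, and $C = \gnd_i$ to obtain
\[
f(S_i \cup \OPT) \leq f\bigl(S_i \cup (\OPT \cap \gnd_i)\bigr) + f(\OPT \setminus \gnd_i).
\]
The first term on the right is at most $(k+1) f(S_i)$ by Lemma~\ref{lem:known_approx_results} applied to the $k$-system restricted to $\gnd_i$. For the second term, I would use subadditivity (a consequence of submodularity and non-negativity) together with $\OPT \setminus \gnd_i = \bigcup_{j<i}(\OPT \cap S_j)$ to get $f(\OPT \setminus \gnd_i) \leq \sum_{j<i} f(\OPT \cap S_j) \leq \alpha \sum_{j<i} f(S'_j)$, where the last step is the \usm guarantee from Lemma~\ref{lem:known_approx_results}.

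Summing the resulting inequality over $i = 1,\dots,\ell$ converts the double sum $\sum_i \sum_{j<i} f(S'_j)$ into $\sum_j (\ell-j) f(S'_j)$, and this telescoping is exactly where the factor $\tfrac{1}{2}$ appears: bounding $f(S'_j) \leq f(S)$ for each $j$ yields $\sum_j (\ell-j) f(S'_j) \leq \tfrac{\ell(\ell-1)}{2} f(S)$, while $\sum_i f(S_i) \leq \ell f(S)$. Combined, this gives
\[
\sum_{i=1}^\ell f(S_i \cup \OPT) \leq \ell \left[k+1 + \tfrac{\alpha(\ell-1)}{2}\right] f(S).
\]

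Finally, because the sets $S_1,\dots,S_\ell$ are pairwise disjoint, picking $\bar S$ uniformly from among them makes every element appear with probability at most $1/\ell$, so Lemma~\ref{lem:distribution} applied to $g(X) = f(\OPT \cup X)$ yields $\tfrac{1}{\ell}\sum_i f(S_i \cup \OPT) \geq (1 - 1/\ell) f(\OPT)$. Chaining this with the previous inequality produces the stated $\tfrac{k+1 + \tfrac{\alpha}{2}(\ell-1)}{1 - 1/\ell}$ approximation. For the monotone case, the trivial bound $f(S_i \cup \OPT) \geq f(\OPT)$ replaces Lemma~\ref{lem:distribution} and removes the $(1-1/\ell)$ factor. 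The main subtle point is getting the coefficient $\tfrac{\alpha}{2}$ rather than $\alpha$; this comes entirely from averaging the arithmetic progression $\ell - j$ over $j$, so the proof hinges on carefully routing the iteration-$i$ ``external'' loss through the earlier \usm outputs rather than, say, through $\OPT$ directly.
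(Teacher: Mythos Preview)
Your proposal is correct and follows essentially the same approach as the paper: apply Lemma~\ref{lem:submodular_fact1} with $A=S_i$, $B=\OPT$, $C=\gnd_i$; use $\OPT\setminus\gnd_i=\bigcup_{j<i}(\OPT\cap S_j)$ together with subadditivity and the two parts of Lemma~\ref{lem:known_approx_results}; sum over $i$ and bound each $f(S_i),f(S'_j)$ by $f(S)$ to obtain the $\tfrac{\alpha(\ell-1)}{2}$ coefficient; and finish with Lemma~\ref{lem:distribution} (or monotonicity) exactly as you describe.
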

\begin{proof}
	Observe that, for every $1 \leq i \leq \ell$, we have
	\begin{equation} \label{eq:set_equalities}
		\OPT \setminus \gnd_i = \OPT \cap ( \gnd \setminus \gnd_i) = \OPT \cap \left( \cup_{j=1}^{i-1} S_i \right) = \cup_{j=1}^{i-1} \left( \OPT \cap S_j \right)
	\end{equation}
	where the first equality holds because $\OPT \subseteq \gnd$, and the second equality follows from the removal of $S_i$ from the ground set in each iteration of \repeatedgreedy.
	Using the previous lemmata and this observation, we can obtain a lower bound on the objective value of the returned solution $S$ in terms of the  average value of $f(S_i \cup \OPT)$ as
	{\allowdisplaybreaks
		\begin{align*}
			\frac{1}{\ell} \sum \limits_{i=1}^\ell f(S_i \cup \OPT) 
			&\leq \frac{1}{\ell} \sum \limits_{i=1}^\ell f(S_i \cup (\OPT \cap \gnd_i)) + \frac{1}{\ell} \sum \limits_{i=1}^\ell f(\OPT \setminus \gnd_i) &\text{(Lemma~\ref{lem:submodular_fact1})} \\
			%	&= \sum \limits_{i=1}^r f(S_i \cup (\OPT \cap \gnd_i)) + \sum \limits_{i=1}^r f(\OPT \cap (\gnd \setminus \gnd_i)) &\text{because $\gnd_i, \OPT \subset \gnd$}\\
			%	&= \sum \limits_{i=1}^r f(S_i \cup (\OPT \cap \gnd_i)) + \sum \limits_{i=1}^r f \left( \OPT \cap (\cup_{j=1}^{i-1} S_j) \right) &\text{by \ref{eq:leftover_elements}} \\
			&= \frac{1}{\ell} \sum \limits_{i=1}^\ell f(S_i \cup (\OPT \cap \gnd_i)) + \frac{1}{\ell} \sum \limits_{i=1}^\ell f \left(\cup_{j=1}^{i-1}(\OPT \cap S_j) \right) &\text{(Equality~\eqref{eq:set_equalities})} \\
			&\leq \frac{1}{\ell} \sum \limits_{i=1}^\ell f(S_i \cup (\OPT \cap \gnd_i)) + \frac{1}{\ell} \sum \limits_{i=1}^\ell \sum \limits_{j=1}^{i-1} f(\OPT \cap S_j) &\text{(submodularity)} \\
			&\leq \frac{k+1}{\ell} \sum \limits_{i=1}^\ell f(S_i) + \frac{\alpha}{\ell} \sum \limits_{i=1}^\ell \sum \limits_{j=1}^{i-1} f(S'_j) &\text{(Lemma~\ref{lem:known_approx_results})} \\
			&\leq \frac{k+1}{\ell} \sum \limits_{i=1}^\ell f(S) + \frac{\alpha}{\ell} \sum \limits_{i=1}^\ell \sum \limits_{j=1}^{i-1} f(S) &\text{(definition of $S$)} \\
			&= \left[ k+1 + \alpha(\ell-1) /2 \right] f(S)
			\enspace.
		\end{align*}
	}
	Rearranging this inequality yields the following lower bound on the value of the returned solution.
	\begin{equation}\label{eq:rg-lower-bound-average}
		f(S) \geq \frac{1}{k + 1 + \frac{\alpha}{2} (\ell - 1)} \cdot \frac{1}{\ell} \sum_{i=1}^\ell f(S_i \cup \OPT) \enspace.
	\end{equation}
	In order to remove the dependence of the right hand side on the solutions $S_i$, we again use Lemma~\ref{lem:distribution} [Lemma~2.2 of \cite{BFNS14}].
	In particular, consider a set $\bar{S}$ chosen uniformly at random from the $\ell$ constructed solutions $S_1, S_2, \dots S_\ell$. 
	Because these solutions are disjoint by construction, an element can belong to $\bar{S}$ with probability at most $\ell^{-1}$. 
	Hence, applying Lemma~\ref{lem:distribution} to the submodular function $g(S) = f(\OPT \cup S)$, we get
	\begin{equation}\label{eq:buchbinder-lb-on-average}
		\frac{1}{\ell} \sum_{i=1}^\ell f(S_i \cup \OPT)
		= \Exp{f(\OPT \cup \bar{S})}
		= \Exp{g(\bar{S})} 
		\geq (1 - \ell^{-1}) \cdot g(\varnothing)
		= (1 - \ell^{-1}) \cdot f(\OPT) \enspace.
	\end{equation}
	Substituting \eqref{eq:buchbinder-lb-on-average} into the lower bound of \eqref{eq:rg-lower-bound-average} yields the desired approximation.
	
	When $f$ is monotone submodular, we may obtain an improved approximation ratio by applying monotonicity directly to the lower bound \eqref{eq:rg-lower-bound-average}.
	In particular, applying monotonicity yields
	\[
	\frac{1}{\ell} \sum_{i=1}^\ell f(S_i \cup \OPT)
	\geq \frac{1}{\ell} \sum_{i=1}^\ell f( \OPT)
	= f(\OPT) \enspace,
	\]
	which yields the desired approximation in the monotone setting.
\end{proof}

Note that the approximation factor derived in Proposition~\ref{prop:repeated-greedy-analysis} is not a monotone function of the number of iterations $\ell$ in \repeatedgreedy.
In this sense, we can optimize this derived approximation factor by choosing some appropriate value of $\ell$.
On the other hand, the true approximation factor of \repeatedgreedy can only increase as the number of iterations increases, as more solutions are produced.
Thus, the non-monotonicity of our derived approximation factor in $\ell$ should be regarded as an artifact of our analysis and not as the actual behavior of \repeatedgreedy.

Nevertheless, we may derive bounds on the approximation factor of \repeatedgreedy for $k$-systems when the number of iterations is set as $\ell = \bigO{\sqrt{k}}$.
In particular, setting $\ell$ to minimize the approximation factor presented in Proposition~\ref{prop:repeated-greedy-analysis} yields the approximation guarantee of Theorem~\ref{thm:repeated-greedy}.

\begin{proof}[Proof of Theorem~\ref{thm:repeated-greedy}]
	First, we show that the number of oracle calls is at most $\bigO{\sqrt{k} r n}$.
	By Observation~\ref{obs:repeated-greedy-runtime-and-feasibility}, the number of oracle calls is at most $\bigO{\ell r n}$ and so the result follows from our choice of the number of solutions, $\ell = \lfloor 1 + \sqrt{2(k+1)/ \alpha }  \rfloor = \bigO{\sqrt{k}}$.
	
	Next, we show that setting the number of solutions to $\ell = \lfloor 1 + \sqrt{2(k+1)/ \alpha }  \rfloor $ yields an approximation factor of at most $k + \left( \sqrt{2 \alpha} \right) \sqrt{k} + (\alpha + 1) + \littleO{1}$.
	We begin by substituting this value of $\ell$ into the approximation guarantee of Proposition~\ref{prop:repeated-greedy-analysis}.
	This gives us that the approximation factor is
	\begin{align*}
		\frac{ k+1 + \frac{\alpha}{2} (\ell - 1)}{1-1/\ell}
		&=	\frac{ k+1 + \frac{\alpha}{2} (\lfloor 1 + \sqrt{2(k+1)/ \alpha }  \rfloor - 1)}{1-\frac{1}{\lfloor 1 + \sqrt{2(k+1)/ \alpha }  \rfloor}} \\
		&\leq \frac{ k+1 + \frac{\alpha}{2} ( 1 + \sqrt{2(k+1)/ \alpha }  - 1)}{1-\frac{1}{ \sqrt{2(k+1)/ \alpha } }} \\
		&= \frac{\alpha}{2} \left( \frac{ 2(k+1)/ \alpha + \sqrt{2(k+1)/ \alpha }}{1 - \frac{1}{\sqrt{2(k+1)/ \alpha }}} \right)
		\enspace.
	\end{align*}
	By defining $\gamma = 2(k+1)/\alpha$, we can simplify the term inside the parenthesis, and write it as
	\[
	\frac{\gamma + \sqrt{\gamma}}{1 - 1/ \sqrt{\gamma}}
	= (1 + \sqrt{\gamma})^2 + 1 + \frac{4}{\gamma - 1} + \frac{2}{1 + \sqrt{\gamma}}
	\enspace.
	\]
	Substituting this back into the calculation above, we have that the approximation factor is at most
	\begin{align*}
		\frac{\alpha}{2} 
		&
		\left( \left( 1 + \sqrt{\frac{2}{\alpha} (k+1)}\right)^2 +1 + \frac{4}{\frac{2}{\alpha} (k+1) - 1} + \frac{2}{1 + \sqrt{\frac{2}{\alpha} (k+1)}}  \right) \\
		&= 
		\frac{\alpha}{2} 
		\left( \frac{2}{\alpha} (k+1) + 2 \sqrt{\frac{2}{\alpha} (k+1)} + 2 + \frac{4}{\frac{2}{\alpha} (k+1) - 1} + \frac{2}{1 + \sqrt{\frac{2}{\alpha} (k+1)}}  \right) \\
		&= k + 1 + \sqrt{2 \alpha (k+1)} + \alpha + \frac{\alpha}{\sqrt{\frac{2}{\alpha} (k+1)} - 1}\\%\frac{2\alpha}{\frac{2}{\alpha} (k+1) - 1} + \frac{\alpha}{1 + \sqrt{\frac{2}{\alpha} (k+1)}} \\
		&= k + \left( \sqrt{2 \alpha} \right) \sqrt{k} + (\alpha + 1) + \eta  \enspace,
	\end{align*}
	where the term $\eta = \littleO{1}$ decreases gradually with $k$ and is given explicitly as
	\[
	\eta = \sqrt{2 \alpha} \left( \sqrt{k+1} - \sqrt{k} \right) + \frac{\alpha}{\sqrt{\frac{2}{\alpha} (k+1)} - 1} \enspace.
	\]
	
	Finally, we demonstrate the our analysis obtains the improved approximation ratio of $k+1$ for monotone submodular functions when running the greedy algorithm.
	For monotone submodular functions, Proposition~\ref{prop:repeated-greedy-analysis} yields an approximation factor of $k + 1 + \frac{\alpha}{2}(\ell - 1)$.
	In this case, constructing a single solution ($\ell = 1$) yields the approximation factor $k+1$ in the monotone case.
\end{proof}

Although the term $\eta$ goes to zero for large $k$, it may be non-negligible for very small $k$.
The magnitude of this $\eta$ term also depends on the \usm approximation ratio, $\alpha$.
Roughly speaking, $\eta = \bigO{\sqrt{\alpha^3 / k}}$ so that a decrease in $\alpha$ can yield a significant decrease in this sub-constant term.
For example, for $\alpha = 3$ and $k=1$, we have that $\eta \approx 20.4$, and if $k$ increases to $10$ then $\eta \approx 2.1$.
On the other hand, if $\alpha = 2$ and $k = 1$, then we have that $\eta \approx 5.7$, while when $k$ increases to $10$, we get $\eta \approx 1.2$.

\subsection{Tight approximation analysis for $k$-extendible systems } \label{sec:repeated-greedy-tight-extendible-approx}

Earlier in Section~\ref{sec:repeated-greedy}, we proved that \repeatedgreedy achieves a $k + \bigO{\sqrt{k}}$ approximation under a $k$-system if allowed to run for $\bigO{\sqrt{k}}$ iterations.
A natural question is whether, like \mainalg, the approximation factor of \repeatedgreedy may improve for the subclass of $k$-extendible systems.
In this section, we answer this question in the negative by showing that \repeatedgreedy achieves an approximation factor of $k + \Omega(\sqrt{k})$ for the subclass of $k$-extendible systems.
In particular, we prove the following theorem:

\begin{theorem} \label{th:repeated_bad_example}
	The approximation ratio of \repeatedgreedy for the problem of maximizing a non-negative submodular function subject to a $k$-extendible constraint is $k + \Omega(\sqrt{k})$ regardless of the number of iterations used by the algorithm.
\end{theorem}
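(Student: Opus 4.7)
The plan is to exhibit a family of hard instances $(\gnd, \cI, f)$, each a $k$-extendible system carrying a non-negative submodular objective, on which every possible run of \repeatedgreedy---no matter how many iterations $\ell$ it performs---outputs a set of value at most $f(\OPT)/(k + c\sqrt{k})$ for some absolute constant $c > 0$. Set $t = \lceil \sqrt{k} \rceil$ and fix a large integer $L$ (to be chosen larger than any target $\ell$). The ground set $\gnd$ will consist of one distinguished block $O = \{o_1, \dots, o_k\}$ playing the role of $\OPT$, together with $L$ pairwise disjoint ``decoy'' blocks $D_1, \dots, D_L$ of size $k+t$. The independence system $\cI$ will be built as the intersection of $k$ carefully chosen matroids so that (i) $O$ is independent, (ii) each $D_j$ admits independent subsets of size $k$, but (iii) no independent set mixes elements from two different blocks. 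Since the intersection of $k$ matroids is a $k$-extendible system, $(\gnd, \cI)$ is $k$-extendible by construction.

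The submodular objective $f$ will be defined so that $f(O) = 1$ while any independent subset of a single decoy block has value at most $1/(k + c\sqrt{k})$. Crucially, the marginal-gain profile of $f$ is calibrated so that, starting from the empty set, greedy strictly prefers decoy elements to elements of $O$, and moreover commits fully to a single decoy block before it would consider touching $O$ or a different decoy. The non-monotonicity is engineered so that every subset $T$ of any single $D_j$ satisfies $f(T) \leq 1/(k + c\sqrt{k})$; this bound propagates to any set that the \usm subroutine can return when applied to $D_j$.

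Granting the construction, I would analyze \repeatedgreedy by induction on the iteration $i$. The invariant is that at the start of iteration $i \leq L$ the current ground set still contains $O$ together with the intact blocks $D_i, \dots, D_L$. By the marginal-gain hierarchy and the block-exclusion property of $\cI$, \greedy then returns an independent subset $S_i \subseteq D_i$ with $f(S_i) \leq 1/(k+c\sqrt{k})$, and \usm returns $S'_i \subseteq D_i$ with $f(S'_i) \leq f(S_i)$. Removing $S_i$ (and any leftovers of $D_i$ that greedy could not pick due to rank constraints) leaves $O$ and $D_{i+1}, \dots, D_L$ intact, so the invariant passes to iteration $i+1$. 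Choosing $L > \ell$, the value $f(\OPT) = 1$ is never attained or approached, so the best over all $S_i, S'_i$ is at most $1/(k + c\sqrt{k})$, giving the claimed $k + \Omega(\sqrt{k})$ lower bound on the approximation ratio.

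The main obstacle will be the joint realisation of all these properties in a single explicit construction: $f$ must be genuinely submodular, $\cI$ must be written as the intersection of exactly $k$ matroids so that $k$-extendibility follows for free (and not via an ad-hoc downward-closed family), and the marginal-gain hierarchy must be robust enough that greedy commits to a whole decoy block even as more and more elements are added. A promising approach is an additive decomposition $f = g - h$, with $g$ a weighted coverage-type function that produces the diminishing-returns behaviour on decoys and a controlled gap of order $1/\sqrt{k}$ in favour of decoy elements, and $h$ a small ``penalty'' term supported on the decoys that enforces the cap $f(T) \leq 1/(k+c\sqrt{k})$ for subsets of each $D_j$; the penalty must be gentle enough to preserve submodularity and the marginal-gain ordering, which is where the careful tuning of constants will live.
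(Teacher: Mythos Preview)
Your plan has a genuine arithmetic obstruction that makes the target ratio $k+\Omega(\sqrt{k})$ unreachable, not merely hard to engineer. You set $|O|=k$ and $f(O)=1$. Since a non-negative submodular function is subadditive, $1=f(O)\le\sum_{o\in O}f(\{o\})$, so some $o^\star\in O$ has $f(\{o^\star\})\ge 1/k$. For \greedy to prefer a decoy element $d\in D_j$ in the very first step you need $f(\{d\})>f(\{o^\star\})\ge 1/k$. But you also require every subset $T\subseteq D_j$---in particular $T=\{d\}$---to satisfy $f(T)\le 1/(k+c\sqrt{k})<1/k$. These two requirements are contradictory, so no submodular $f$ realises your specification. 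Even relaxing the ``all subsets'' cap does not help: once greedy's first pick has value $>1/k$, and greedy only adds elements with positive marginal gain, the returned $S_i$ has $f(S_i)>1/k$, and \usm can always return that singleton; the ratio is then strictly below $k$.

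The paper avoids this by \emph{not} using a size-$k$ optimum. Its $\OPT=\bigcup_{i}O_i$ has $k+\sqrt{k}$ elements (so subadditivity only gives singleton values $\approx 1$, not $\approx 1/k$), and the decoy elements $b_i$ beat the $O$-singletons by only a $\Theta(1/\sqrt{k})$ margin. The constraint is a $k$-matchoid whose colour structure forces greedy, after picking $b_i$, into a set of size two with value $1+O(1/\sqrt{k})$, while $f(\OPT)=k+\Theta(\sqrt{k})$; the ratio is then $k+\Omega(\sqrt{k})$. If you want to salvage your block idea, you would at minimum need $|\OPT|\ge k+\Omega(\sqrt{k})$ and decoys whose singleton advantage over $\OPT$-elements is only $\Theta(1/\sqrt{k})$, and a constraint that pins greedy to a constant-size set---essentially the paper's construction.
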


We remark that the proof of Theorem~\ref{th:repeated_bad_example} works even when the constraint is restricted to be a $k$-matchoid, which is a special case of a $k$-extendible constraint.

To prove Theorem~\ref{th:repeated_bad_example}, we construct a family of instances on which \repeatedgreedy performs poorly. Specifically, we construct a bad instance, denoted by $I_k$, for every integer $k \geq 1$ such that $\sqrt{k}$ is also an integer. We begin the construction by defining for every integer $1 \leq i \leq \sqrt{k}$ the sets
\[
O_i = \{o_{i, j} \mid 1 \leq j \leq 1 + \sqrt{k}\} \enspace, \quad  B_i = \{b_i\} \enspace, \quad \text{and} \quad  D_i = \{d_{i, j} \mid 1 \leq j \leq \sqrt{k}\} \enspace.
\]
Then, the ground set of the instance $I_k$ is the set $\cN_k = \bigcup_{i = 1}^{\sqrt{k}} (O_i \cup D_i \cup B_i)$. To define the objective function $f_k\colon 2^{\cN_k} \to \nnR$ of $I_k$, we first define for every integer $1 \leq i \leq \sqrt{k}$ an auxiliary function $g_i \colon 2^{O_i \cup D_i \cup B_i} \to \nnR$. For every set $S \subseteq O_i \cup D_i \cup B_i$,
\[
g_i(S)
=
\begin{cases}
	|\{j \mid S \cap \{o_{i, j}, d_{i, j}\} \neq \varnothing\}| + \frac{2|S \cap D_i| + |S \cap O_i|}{4\sqrt{k}} & \text{if $b_i \not \in S$} \enspace,\\
	\frac{3 + 4\sqrt{k} + |S \cap \{o_{i, 1 + \sqrt{k}}\}| - |S \cap (O_i \cup D_i - o_{i, 1 + \sqrt{k}})|}{4\sqrt{k}} & \text{if $b_i \in S$} \enspace.
\end{cases}
\]

\begin{observation}
	For every integer $1 \leq i \leq \sqrt{k}$, the function $g_i$ is non-negative and submodular.
\end{observation}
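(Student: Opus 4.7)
The proof will have two parts: verifying non-negativity and then verifying submodularity, where the latter requires case analysis due to the piecewise definition.

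For non-negativity, the Case 1 expression (when $b_i \notin S$) is manifestly non-negative, being a sum of a non-negative integer and a non-negative rational. For Case 2 (when $b_i \in S$), I would bound the numerator from below by noting that $|S \cap (O_i \cup D_i - o_{i,1+\sqrt{k}})| \leq |O_i| + |D_i| - 1 = 2\sqrt{k}$, so the numerator is at least $3 + 4\sqrt{k} + 0 - 2\sqrt{k} = 3 + 2\sqrt{k} > 0$.

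For submodularity, I would verify the diminishing returns inequality $g_i(S+u) - g_i(S) \geq g_i(T+u) - g_i(T)$ for every $S \subseteq T \subseteq O_i \cup D_i \cup B_i$ and $u \notin T$. Since the piecewise definition depends only on whether $b_i$ belongs to the argument, I would split into four cases based on the membership of $b_i$ in $S$ and $T$ and on whether $u = b_i$. In the first case ($b_i \notin T$, $u \neq b_i$), all four sets fall under Case 1, where $g_i$ is the sum of a standard coverage function $|\{j : S \cap \{o_{i,j}, d_{i,j}\} \neq \varnothing\}|$ (submodular) and a modular function, so the inequality follows immediately. In the second case ($b_i \in S$, $u \neq b_i$), all four sets fall under Case 2, where after dropping $b_i$ and the constant $3 + 4\sqrt{k}$ the function is linear in indicator variables of individual elements, hence modular, and the inequality holds with equality.

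The crossing cases are the substance of the proof. In the third case ($b_i \in T \setminus S$, $u \neq b_i$), the LHS marginal is computed in Case 1 while the RHS is computed in Case 2. I would compute both explicitly: for $u = o_{i,j}$ with $j \leq \sqrt{k}$ the Case 1 marginal is $\mathbf{1}[d_{i,j} \notin S] + \tfrac{1}{4\sqrt{k}}$; for $u = o_{i,1+\sqrt{k}}$ it is $1 + \tfrac{1}{4\sqrt{k}}$; for $u = d_{i,j}$ it is $\mathbf{1}[o_{i,j} \notin S] + \tfrac{1}{2\sqrt{k}}$. The Case 2 marginal is $+\tfrac{1}{4\sqrt{k}}$ for $u = o_{i,1+\sqrt{k}}$ and $-\tfrac{1}{4\sqrt{k}}$ otherwise. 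In every subcase the Case 1 marginal dominates the Case 2 marginal. In the fourth case ($u = b_i$), both marginals are transitions from Case 1 to Case 2; after rewriting the Case 2 expression using $|S \cap (O_i \cup D_i - o_{i,1+\sqrt{k}})| = |S \cap (O_i \cup D_i)| - |S \cap \{o_{i,1+\sqrt{k}}\}|$, the transition marginal takes the form $\tfrac{(3 + 4\sqrt{k}) + 2|S \cap \{o_{i,1+\sqrt{k}}\}| - 2|S \cap O_i| - 3|S \cap D_i|}{4\sqrt{k}} - h_1(S)$. A direct check shows that adding any element of $O_i \cup D_i$ to $S$ either weakly decreases the linear term and strictly increases $h_1$ (the cases of $o_{i,j}, j \leq \sqrt{k}$ and of $d_{i,j}$) or leaves the linear term unchanged while strictly increasing $h_1$ (the case of $o_{i,1+\sqrt{k}}$). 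Hence the transition marginal is monotone non-increasing in $S$, which is precisely what is needed.

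The main obstacle is the fourth (transition) case, because one must verify that the carefully tuned constants in Case 2 produce a penalty on $O_i \cup D_i$ elements that is strong enough to offset the coverage gains of Case 1, so that the gain of adding $b_i$ genuinely diminishes as more elements of $O_i \cup D_i$ are accumulated. Once this bookkeeping is carried out element-by-element as sketched above, all four cases combine to give submodularity.
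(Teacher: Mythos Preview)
Your proof is correct and follows essentially the same approach as the paper, which also computes the marginal $g_i(u \mid S)$ for each type of element $u$ and checks that it is down-monotone in $S$; the only difference is that the paper organizes the casework by the type of $u$ rather than by the location of $b_i$ relative to $S$ and $T$. One small slip in your fourth case: for $u \in \{o_{i,j}, d_{i,j}\}$ with $j \leq \sqrt{k}$, the linear term \emph{strictly} decreases while $h_1$ only \emph{weakly} increases (it is unchanged when the partner element is already in $S$), not the other way around; the conclusion that the transition marginal is non-increasing is unaffected.
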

\begin{proof}
	The non-negativity of $g_i$ follows immediately from its definition since the size of the set $O_i \cup D_i - o_{i, k + 1}$ is $2\sqrt{k}$. 
	Therefore, we focus on proving that $g_i$ is submodular. 
	Recall that the function $g_i$ is defined on the ground set $O_i \cup D_i \cup B_i$.
	Thus, $g_i$ is submodular if $g_i(u \mid S)$ is a down-monotone function of $S$ for every element $u \in O_i \cup D_i \cup B_i$, where $S \subseteq O_i \cup D_i \cup B_i \setminus \{u\}$. 
	We do that by considering a few cases. 
	Consider first the case in which $u = b_i$. In this case
	\begin{align*}
		g_i(b_i \mid S)
		={} &
		\frac{3 + 4\sqrt{k} + |S \cap \{o_{i, 1 + \sqrt{k}}\}| - |S \cap (O_i \cup D_i - o_{i, 1 + \sqrt{k}})|}{4\sqrt{k}} \\&- |\{j \mid S \cap \{o_{i, j}, d_{i, j}\} \neq \varnothing\}| - \frac{2|S \cap D_i| + |S \cap O_i|}{4\sqrt{k}} \\
		={} &
		\frac{3 + 4\sqrt{k} - (4\sqrt{k} - 1) \cdot |S \cap \{o_{i, 1 + \sqrt{k}}\}| - |S \cap (O_i \cup D_i - o_{i, 1 + \sqrt{k}})|}{4\sqrt{k}} \\&- |\{j \mid S \cap \{o_{i, j}, d_{i, j}\} \neq \varnothing, j \neq 1 + \sqrt{k}\}| - \frac{2|S \cap D_i| + |S \cap O_i|}{4\sqrt{k}}
		\enspace,
	\end{align*}
	which is clearly a down-monotone function of $S$ (the second equality holds since $d_{i, k + 1}$ does not belong to the ground set $\cN_k$, and therefore, cannot appear in $S$). Consider now the case in which $u = o_{i, j}$ for some integer $1 \leq j \leq \sqrt{k}$. In this case
	\[
	g_i(o_{i, j} \mid S)
	=
	\begin{cases}
		\frac{1}{4\sqrt{k}} + 1 - |S \cap \{d_{i, j}\}| & \text{if $b_i \not \in S$} \enspace,\\
		- \frac{1}{4\sqrt{k}} & \text{if $b_i \in S$} \enspace,
	\end{cases}
	\]
	which is a down-monotone function of $S$ since the expression for the case $b_i \not \in S$ is always non-negative. The next case is when $u = o_{i, 1 + \sqrt{k}}$, which yields
	\[
	g_i(o_{i, 1 + \sqrt{k}} \mid S)
	=
	\begin{cases}
		1 + \frac{1}{4\sqrt{k}} & \text{if $b_i \not \in S$} \enspace,\\
		\frac{1}{4\sqrt{k}} & \text{if $b_i \in S$} \enspace,
	\end{cases}
	\]
	which is down-monotone. The last case to consider is the case of $u = d_{i, j}$ for some integer $1 \leq j \leq \sqrt{k}$. In this case
	\[
	g_i(d_{i, j} \mid S)
	=
	\begin{cases}
		\frac{1}{2\sqrt{k}} + 1 - |S \cap \{o_{i, j}\}| & \text{if $b_i \not \in S$} \enspace,\\
		- \frac{1}{4\sqrt{k}} & \text{if $b_i \in S$} \enspace,
	\end{cases}
	\]
	which is down-monotone since the expression for the case $b_i \not \in S$ is again always non-negative.
\end{proof}

The objective function $f_k$ of the instance $I_k$ can now be defined for every set $S \subseteq \cN_k$ by $f_k(S) = \sum_{i = 1}^{\sqrt{k}} g_i(S \cap (O_i \cup D_i \cup B_i))$. One can note that $f_k$ is non-negative and submodular since it is the sum of $\sqrt{k}$ functions having these properties. To complete the description of the instance $I_k$, we still need to define its constraint. To do that, let us associate each element of $I_k$ with up to $k$ colors from the list $\{\bot\} \cup \{(i, j) \mid 1 \leq i \leq \sqrt{k}, 1 \leq j \leq 1 + \sqrt{k}\}$. A set is feasible under our constraint if no two elements in it share a color (this constraint is a $k$-extendible set system because it can be represented as a $k$-matchoid by having one matroid for each color whose role is to allow at most a single element with that color in a feasible set). The colors of the different elements are as follows.
\begin{itemize}
	\item An element $o_{i, j} \in \cN_k$ has $(i, j)$ as its single color.
	\item An element $d_{i, j} \in \cN_k$ has all the colors in $\{\bot\} \cup \{(i', j') \mid 1 \leq i' \leq \sqrt{k}, 1 \leq j' \leq \sqrt{k}\} \setminus \{(i, j)\}$, which is $1 + (k - 1) = k$ different colors.
	\item An element $b_i \in \cN_k$ has all the colors in $\{\bot\} \cup\{(i', j') \mid 1 \leq i' \leq \sqrt{k}, i' \neq i, 1 \leq j' \leq 1 + \sqrt{k}\}$, which is $1 + (\sqrt{k} - 1)(1 + \sqrt{k}) = k$ different colors.
\end{itemize}

The following observation shows that the optimal solution for the instance $I_k$ has a lot of value.
\begin{observation} \label{obs:optimal_value}
	The value of the optimal solution for $I_k$ is at least $k + \frac{5\sqrt{k}}{4}$.
\end{observation}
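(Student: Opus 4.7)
The plan is to prove the observation by exhibiting a concrete feasible solution and computing its value directly; the bound then follows since $\OPT$ is at least as valuable.

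Specifically, I will take as the candidate solution $S^* = \bigcup_{i = 1}^{\sqrt{k}} O_i$, i.e., the union of all $o_{i,j}$ elements. The first step is to verify feasibility. By definition, each element $o_{i,j}$ carries the single color $(i,j)$, and the colors assigned to distinct elements of $S^*$ are distinct since $S^*$ consists of exactly one element per pair $(i,j)$ with $1 \leq i \leq \sqrt{k}$ and $1 \leq j \leq 1 + \sqrt{k}$. Therefore no two elements of $S^*$ share a color, and $S^*$ is feasible under the $k$-matchoid constraint.

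The second step is to evaluate $f_k(S^*)$ using the decomposition $f_k(S^*) = \sum_{i = 1}^{\sqrt{k}} g_i(S^* \cap (O_i \cup D_i \cup B_i)) = \sum_{i = 1}^{\sqrt{k}} g_i(O_i)$. Since $b_i \notin O_i$, the first branch of $g_i$ applies. Observing that $O_i \cap D_i = \varnothing$, that $|O_i \cap O_i| = |O_i| = 1 + \sqrt{k}$, and that $O_i$ meets $\{o_{i, j}, d_{i, j}\}$ for all $1 \leq j \leq 1 + \sqrt{k}$ (recall that $d_{i, 1 + \sqrt{k}}$ does not exist but the set $\{o_{i, 1 + \sqrt{k}}, d_{i, 1 + \sqrt{k}}\} \cap (O_i \cup D_i \cup B_i) = \{o_{i, 1 + \sqrt{k}}\}$ is still met by $O_i$), we get
\[
g_i(O_i) = (1 + \sqrt{k}) + \frac{0 + (1 + \sqrt{k})}{4\sqrt{k}} \enspace.
\]
Summing over $i$ yields
\[
f_k(S^*) = \sqrt{k}(1 + \sqrt{k}) + \sqrt{k} \cdot \frac{1 + \sqrt{k}}{4\sqrt{k}} = k + \sqrt{k} + \frac{1 + \sqrt{k}}{4} = k + \frac{5\sqrt{k}}{4} + \frac{1}{4} \enspace,
\]
which is at least $k + \frac{5\sqrt{k}}{4}$. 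Since $\OPT$ is the value of a maximizing feasible solution, the observation follows.

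There is no real obstacle here; the argument is a direct construction plus a routine computation. The only point that requires a moment of care is the handling of the index $j = 1 + \sqrt{k}$ in the first branch of $g_i$, where $d_{i, j}$ does not exist in the ground set but the indicator $\{j \mid S \cap \{o_{i,j}, d_{i,j}\} \neq \varnothing\}$ is still well defined and counts $j = 1 + \sqrt{k}$ because $o_{i, 1 + \sqrt{k}} \in O_i$.
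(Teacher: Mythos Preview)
Your proof is correct and follows essentially the same approach as the paper: both exhibit the feasible set $\bigcup_{i=1}^{\sqrt{k}} O_i$, compute $g_i(O_i) = (1 + \tfrac{1}{4\sqrt{k}})(1 + \sqrt{k})$ via the first branch of $g_i$, and sum to obtain $k + \tfrac{5\sqrt{k}}{4} + \tfrac{1}{4}$. You simply spell out the feasibility argument and the handling of the $j = 1 + \sqrt{k}$ index in a bit more detail than the paper does.
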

\begin{proof}
	Note that the set $\bigcup_{i = 1}^{\sqrt{k}} O_i$ is a feasible solution. The value of this set according to $f_k$ is
	\[
	f_k \left( \bigcup_{i = 1}^{\sqrt{k}} O_i \right)
	=
	\sum_{i = 1}^{\sqrt{k}} g_i(O_i)
	=
	\sum_{i = 1}^{\sqrt{k}} \left(1 + \frac{1}{4\sqrt{k}}\right) \cdot |O_i|
	=
	\sqrt{k} \cdot \left(1 + \frac{1}{4\sqrt{k}}\right) \cdot (1 + \sqrt{k})
	\geq
	k + \frac{5\sqrt{k}}{4}
	\enspace.
	\qedhere
	\]
\end{proof}

Our next objective is to analyze the performance of \repeatedgreedy given the input $I_k$. We do that using the following two lemmata.
\begin{lemma}
	Let $L$ be a strict subset of $\{1, 2, \dotsc, \sqrt{k}\}$, and assume that the greedy algorithm is applied to the instance $I_k$ restricted to the ground set $\cN_k \setminus \{b_i, o_{i, 1 + \sqrt{k}} \mid i \in L\}$. Then, it outputs the set $\{b_i, o_{i, 1 + \sqrt{k}}\}$ for some $i \not \in L$.
\end{lemma}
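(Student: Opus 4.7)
The plan is to simulate the execution of \greedy on the restricted instance by computing marginal gains iteration by iteration, showing that \greedy first selects some $b_{i^*}$ with $i^* \notin L$, then selects $o_{i^*, 1 + \sqrt{k}}$, and then halts. Since $f_k$ decomposes as the sum of the $g_i$'s over disjoint blocks $O_i \cup D_i \cup B_i$, the marginal gain of any element $u$ in block $i$ with respect to any set $S$ equals $g_i(u \mid S \cap (O_i \cup D_i \cup B_i))$, so we only ever need to work with one $g_i$ at a time.

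First, using the two cases in the definition of $g_i$, I would compute the singleton values: $g_i(b_i \mid \varnothing) = (3 + 4\sqrt{k})/(4\sqrt{k}) = 1 + 3/(4\sqrt{k})$, $g_i(d_{i,j} \mid \varnothing) = 1 + 1/(2\sqrt{k})$, and $g_i(o_{i,j} \mid \varnothing) = 1 + 1/(4\sqrt{k})$ for every $j$ (including $j = 1 + \sqrt{k}$). Since $L$ is a strict subset of $\{1,\dotsc,\sqrt{k}\}$, at least one element of the form $b_{i}$ with $i \notin L$ is present in the restricted ground set, so the first element chosen by \greedy is $b_{i^*}$ for some $i^* \notin L$, since the elements of the form $b_i$ have a strictly larger marginal gain than all the other elements.

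Next, I would chase down the color constraints to see what remains feasible after $b_{i^*}$ has been added. Recall $b_{i^*}$ carries the color $\bot$ together with all colors $(i',j')$ with $i' \neq i^*$. The color $\bot$ rules out every remaining $b_i$ and every $d_{i,j}$, while the colors $(i',j')$ rule out every $o_{i',j'}$ with $i' \neq i^*$. Thus the only still-feasible candidates are the elements $o_{i^*, j}$ for $1 \leq j \leq 1 + \sqrt{k}$. Plugging these into the $b_{i^*} \in S$ branch of $g_{i^*}$ gives $g_{i^*}(o_{i^*, j} \mid \{b_{i^*}\}) = -1/(4\sqrt{k}) < 0$ for $j \leq \sqrt{k}$ and $g_{i^*}(o_{i^*, 1 + \sqrt{k}} \mid \{b_{i^*}\}) = +1/(4\sqrt{k}) > 0$, so \greedy's second (and only acceptable) choice is $o_{i^*, 1 + \sqrt{k}}$.

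Finally, after $\{b_{i^*}, o_{i^*, 1 + \sqrt{k}}\}$ has been built, the still-feasible candidates are still just the elements $o_{i^*, j}$ with $j \leq \sqrt{k}$, and plugging into the same branch of $g_{i^*}$ gives a marginal gain of $-1/(4\sqrt{k})$ for each of them. Consequently \greedy terminates and returns $\{b_{i^*}, o_{i^*, 1 + \sqrt{k}}\}$, as claimed. The only non-trivial step is the bookkeeping in the color analysis that certifies $o_{i^*, 1 + \sqrt{k}}$, and nothing else, can be added after $b_{i^*}$; everything else is a direct substitution into the case definition of $g_i$.
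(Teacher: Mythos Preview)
Your proof is correct and follows essentially the same approach as the paper: compute the marginal gains at $\varnothing$ to see that some $b_{i^*}$ is picked first, use the color constraints to see that only elements of $O_{i^*}$ remain feasible, compute their marginals with respect to $\{b_{i^*}\}$ to see that only $o_{i^*,1+\sqrt{k}}$ is added, and then verify that the remaining $o_{i^*,j}$ have negative marginal so greedy halts. Your explicit color-chasing to justify why only $O_{i^*}$ survives is slightly more detailed than the paper's one-line appeal to the constraint, but the argument is the same.
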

\begin{proof}
	We begin the proof by considering the marginal contribution of every element of $\cN_k$ with respect to $\varnothing$.
	\begin{itemize}
		\item For every two integers $1 \leq i \leq \sqrt{k}$ and $1 \leq j \leq 1 + \sqrt{k}$, $f_k(o_{i, j} \mid \varnothing) = 1 + \frac{1}{4\sqrt{k}}$.
		\item For every two integers $1 \leq i \leq \sqrt{k}$ and $1 \leq j \leq \sqrt{k}$, $f_k(d_{i, j} \mid \varnothing) = 1 + \frac{2}{4\sqrt{k}}$.
		\item For every integer $1 \leq i \leq \sqrt{k}$, $f_k(b_{i} \mid \varnothing) = \frac{3 + 4\sqrt{k}}{4\sqrt{k}} = 1 + \frac{3}{4\sqrt{k}}$.
	\end{itemize}
	One can observe that the marginal contribution calculated above for the $b_i$ elements is larger than the marginal contributions calculated for the other elements (and is positive), and therefore, the first element that the greedy algorithm will add to its solution will be one such element that is available in the ground set.
	
	Assume therefore that the greedy algorithm has picked so far into its solution only the element $b_i$ for some $i \not \in L$. Due to the constraint, the only elements that can still be added to the solution once $b_i$ is in it are the elements of $O_i$. Their marginal contribution with respect to $\{b_i\}$ is
	\begin{itemize}
		\item For every integer $1 \leq j \leq \sqrt{k}$, $f_k(o_{i, j} \mid \{b_i\}) = -\frac{1}{4\sqrt{k}}$.
		\item $f_k(o_{i, 1 + \sqrt{k}} \mid \{b_i\}) = \frac{1}{4\sqrt{k}}$.
	\end{itemize}
	Since the marginal contribution calculated for $o_{i, 1 + \sqrt{k}}$ is the largest (and is positive), the greedy algorithm picks $o_{i, 1 + \sqrt{k}}$ as the next element to add to its solution. Furthermore, since the marginal contributions of the remaining elements of $O_i$ are already negative at this stage (and hence, will be negative in the future as well), the greedy algorithm does not pick any of them. Thus, its output set is $\{b_i, o_{i, 1 +\sqrt{k}}\}$, as promised.
\end{proof}

\begin{lemma}
	Let $L$ be a strict subset of $\{(i, j) \mid 1 \leq i, j \leq \sqrt{k}\}$, and assume that the greedy algorithm is applied to the instance $I_k$ restricted to the ground set $\cN_k \setminus (\{o_{i, j}, d_{i, j} \mid (i, j) \in L\} \cup \{b_i, o_{i, 1 + \sqrt{k}} \mid 1 \leq i \leq \sqrt{k} \}$. Then, it outputs the set $\{o_{i, j}, d_{i, j}\}$ for some $(i, j) \not \in L$.
\end{lemma}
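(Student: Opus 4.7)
The plan is to mirror the proof of the preceding lemma by tracking the marginal contributions and then using the coloring structure to see which elements remain feasible after each greedy pick. Since the hypothesis removes every $b_i$ as well as every $o_{i, 1+\sqrt{k}}$ from the ground set, the only remaining elements are $d_{i,j}$ and $o_{i,j}$ with $1 \leq j \leq \sqrt{k}$ and $(i,j) \notin L$. I would first compute the marginal contributions with respect to the empty set, which by the definition of $g_i$ (the $b_i \notin S$ branch) give $f_k(o_{i,j} \mid \varnothing) = 1 + \tfrac{1}{4\sqrt{k}}$ and $f_k(d_{i,j} \mid \varnothing) = 1 + \tfrac{2}{4\sqrt{k}}$. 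Since the latter is strictly larger and positive, the greedy algorithm's first pick must be some element $d_{i,j}$ with $(i,j) \notin L$.

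Next I would use the color assignments to show that once $d_{i,j}$ is in the solution, the only element of the restricted ground set that can still be added without creating a color conflict is $o_{i,j}$ itself. Specifically, $d_{i,j}$ carries the color $\bot$ and every color $(i',j')$ with $(i',j') \neq (i,j)$ and $1 \leq i', j' \leq \sqrt{k}$. Consequently, every other $d_{i',j'}$ (which also carries $\bot$) is excluded, and every $o_{i',j'}$ with $j' \leq \sqrt{k}$ and $(i',j') \neq (i,j)$ is excluded because its unique color $(i',j')$ is among those blocked. Only $o_{i,j}$, whose sole color is $(i,j)$, remains feasible.

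Then I would verify that the greedy algorithm does add $o_{i,j}$ by computing $f_k(o_{i,j} \mid \{d_{i,j}\}) = g_i(\{d_{i,j}, o_{i,j}\}) - g_i(\{d_{i,j}\}) = \tfrac{1}{4\sqrt{k}} > 0$. After this second addition, no further element can be added (by the color argument just given), so greedy terminates and outputs $\{o_{i,j}, d_{i,j}\}$ for this pair $(i,j) \notin L$, as claimed.

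The main obstacle is really just careful bookkeeping: making sure that the removals of $b_i$ and $o_{i,1+\sqrt{k}}$ in the hypothesis actually prevent any alternative greedy move (e.g., that no $b_{i'}$ is available as a cheaper pick, which it would be by the preceding lemma's computation), and verifying the color-exclusion claim element by element. The calculations themselves are immediate from the definition of $g_i$ restricted to the $b_i \notin S$ branch, since the configuration encountered throughout this greedy run never contains any $b_i$.
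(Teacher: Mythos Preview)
Your proposal is correct and follows essentially the same approach as the paper: compute the marginals at $\varnothing$ to see that a $d_{i,j}$ is picked first, then use the coloring to show only $o_{i,j}$ remains feasible, and verify its marginal is positive. You actually spell out the color-exclusion argument in more detail than the paper (which simply says ``due to the constraint''), and your computed value $f_k(o_{i,j}\mid\{d_{i,j}\}) = \tfrac{1}{4\sqrt{k}}$ is correct.
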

\begin{proof}
	We begin the proof by considering the marginal contribution of every element of $\cN_k \setminus \{b_i, o_{i, 1 + \sqrt{k}} \mid 1 \leq i \leq \sqrt{k} \}$ with respect to $\varnothing$.
	\begin{itemize}
		\item For every two integers $1 \leq i \leq \sqrt{k}$ and $1 \leq j \leq \sqrt{k}$, $f_k(o_{i, j} \mid \varnothing) = 1 + \frac{1}{4\sqrt{k}}$.
		\item For every two integers $1 \leq i \leq \sqrt{k}$ and $1 \leq j \leq \sqrt{k}$, $f_k(d_{i, j} \mid \varnothing) = 1 + \frac{2}{4\sqrt{k}}$.
	\end{itemize}
	One can observe that the marginal contribution calculated above for the $d_{i, j}$ elements is larger than the marginal contribution calculated for the other elements (and is positive), and therefore, the first element that the greedy algorithm adds to its solution is one such element that is available in the ground set.
	
	Assume therefore that the greedy algorithm has picked so far into its solution only the element $d_{i, j}$ for some $(i, j) \not \in L$. Due to the constraint, the only element that can still be added to the solution once $d_{i, j}$ is in it is $o_{i, j}$, whose marginal with respect to $\{d_{i, j}\}$ is $f_k(o_{i, j} \mid \{d_{i, j}\}) = 1 / \sqrt{4k}$, which is positive. Hence, the greedy algorithm selects at this point $o_{i, j}$, and outputs the set $\{o_{i, j}, d_{i, j}\}$, as promised.
\end{proof}

Combining the two last lemmata, we get the following corollary.
\begin{corollary} \label{cor:possible_outputs}
	Regardless of number of iterations of \repeatedgreedy used, the only sets it can output are either subsets of $\{b_i, o_{i, 1 + \sqrt{k}}\}$ for some integer $1 \leq i \leq \sqrt{k}$ or subsets of $\{o_{i, j}, d_{i, j}\}$ for some integers $1 \leq i, j \leq \sqrt{k}$.
\end{corollary}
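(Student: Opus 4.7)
The plan is to prove the corollary by a straightforward induction on the iteration index $i$ of \repeatedgreedy, using the two previous lemmata to characterize the set $S_i$ at every iteration. The key observation is that, since $S'_i \subseteq S_i$ by the definition of \usm, it suffices to show that each greedy output $S_i$ is a subset of one of the two forms listed in the statement. The returned solution $\retsol$ is the best element among $\{S_i, S'_i\}_{i=1}^\ell$, and so once each $S_i$ is shown to have this form, the conclusion follows immediately for $\retsol$.

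First I would track the available ground set $\gnd_i$ used by the $i$-th call to \greedy. At iteration $1$, we have $\gnd_1 = \cN_k$, which is the $L = \varnothing$ case of the first lemma; thus $S_1 = \{b_{i_1}, o_{i_1, 1 + \sqrt{k}}\}$ for some $i_1$. Inductively, as long as not all $\sqrt{k}$ of the pairs $\{b_i, o_{i, 1 + \sqrt{k}}\}$ have been removed, the current ground set still has the form $\cN_k \setminus \{b_i, o_{i, 1+\sqrt{k}} \mid i \in L\}$ for some strict subset $L \subsetneq \{1, \dotsc, \sqrt{k}\}$, and the first lemma applies to give $S_{i+1} = \{b_j, o_{j, 1+\sqrt{k}}\}$ for some $j \notin L$. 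This ``phase one'' can continue for at most $\sqrt{k}$ iterations, after which all elements of the form $b_i$ and $o_{i, 1+\sqrt{k}}$ have been deleted from the remaining ground set.

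Next, I would handle ``phase two'' analogously. After phase one ends, the remaining ground set is exactly $\cN_k \setminus \{b_i, o_{i, 1+\sqrt{k}} \mid 1 \leq i \leq \sqrt{k}\}$, which is the $L = \varnothing$ case of the second lemma. So the next greedy output is a pair $\{o_{i,j}, d_{i,j}\}$. Inductively, as long as not all $k$ of the pairs $\{o_{i,j}, d_{i,j}\}$ have been removed, the second lemma continues to apply and each subsequent greedy output is another such pair. Finally, once every $o_{i,j}$, $d_{i,j}$, $b_i$, and $o_{i, 1+\sqrt{k}}$ has been removed (which takes at most $\sqrt{k} + k$ iterations), the ground set is empty and \greedy returns $\varnothing$, which is trivially a subset of any pair in the allowed forms. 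In all cases, $S_i$ — and hence $S'_i \subseteq S_i$ — is a subset of one of the prescribed pairs, so the corollary holds regardless of how many iterations $\ell$ of \repeatedgreedy are performed.

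I do not anticipate a main obstacle here: the two previous lemmata do essentially all the work, and the only thing to verify is that the ground set at each iteration falls into the hypothesis of one of the two lemmata (or is empty). The bookkeeping is standard — keep an index set $L$ of already-removed pairs, check that $L$ remains a strict subset in the relevant range, and invoke the appropriate lemma. The transition from phase one to phase two happens automatically once $|L| = \sqrt{k}$ in the first setting, since then the ground set exactly matches the $L = \varnothing$ instance of the second lemma.
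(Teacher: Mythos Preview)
Your proposal is correct and is essentially the same argument the paper has in mind: the paper gives no explicit proof beyond ``Combining the two last lemmata,'' and your two-phase induction, together with the observation that $S'_i \subseteq S_i$ so it suffices to control the greedy outputs $S_i$, is exactly the routine bookkeeping needed to make that sentence precise.
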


We can now upper bound the value of the output of \repeatedgreedy.
\begin{lemma} \label{lem:repeated_greedy_output_value}
	The value of the output set of \repeatedgreedy given $I_k$ is at most $1 + \frac{1}{\sqrt{k}}$.
\end{lemma}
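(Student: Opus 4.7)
The plan is to exploit Corollary~\ref{cor:possible_outputs} to reduce the claim to a finite (and in fact very small) case analysis. Since every set that \repeatedgreedy may return is a subset of either $\{b_i, o_{i, 1+\sqrt{k}}\}$ for some $1 \leq i \leq \sqrt{k}$, or a subset of $\{o_{i,j}, d_{i,j}\}$ for some $1 \leq i, j \leq \sqrt{k}$, it suffices to compute $f_k$ on each of these (at most seven distinct) set types and take the maximum.

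First I would handle the subsets of $\{b_i, o_{i,1+\sqrt{k}}\}$. Since all relevant elements lie in $O_i \cup D_i \cup B_i$, we have $f_k(S) = g_i(S)$ for every such $S$. A direct substitution into the definition of $g_i$ yields $g_i(\varnothing) = 0$, $g_i(\{o_{i,1+\sqrt{k}}\}) = 1 + \tfrac{1}{4\sqrt{k}}$ (using the branch with $b_i \notin S$, noting the pair indexed by $j = 1+\sqrt{k}$ contributes $1$ to the set-count and the $O_i$-term contributes $\tfrac{1}{4\sqrt{k}}$), $g_i(\{b_i\}) = \tfrac{3+4\sqrt{k}}{4\sqrt{k}} = 1 + \tfrac{3}{4\sqrt{k}}$, and finally $g_i(\{b_i, o_{i,1+\sqrt{k}}\}) = \tfrac{3 + 4\sqrt{k} + 1}{4\sqrt{k}} = 1 + \tfrac{1}{\sqrt{k}}$. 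So the largest value obtainable on this side of the dichotomy is $1 + \tfrac{1}{\sqrt{k}}$.

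Next I would handle the subsets of $\{o_{i,j}, d_{i,j}\}$ for $1 \leq i, j \leq \sqrt{k}$. Again, $f_k(S) = g_i(S)$ and $b_i \notin S$, so we use the first branch of $g_i$. The set-count term is $0$ for $\varnothing$ and $1$ for each of $\{o_{i,j}\}$, $\{d_{i,j}\}$, $\{o_{i,j}, d_{i,j}\}$, while the $O_i/D_i$ term is $0$, $\tfrac{1}{4\sqrt{k}}$, $\tfrac{2}{4\sqrt{k}}$, and $\tfrac{3}{4\sqrt{k}}$ respectively. The maximum is attained by the full pair with value $1 + \tfrac{3}{4\sqrt{k}}$, which is strictly smaller than $1 + \tfrac{1}{\sqrt{k}}$.

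Combining the two cases, the returned set value is at most $1 + \tfrac{1}{\sqrt{k}}$, as claimed. The main (only) obstacle is purely bookkeeping: one has to be a bit careful that the index $j = 1+\sqrt{k}$ lies outside the range of $D_i$, so that the ``pair'' $\{o_{i,1+\sqrt{k}}, d_{i,1+\sqrt{k}}\}$ reduces to the singleton $\{o_{i,1+\sqrt{k}}\}$ when evaluating the set-count term of $g_i$. Once this is handled, the proof is a direct computation, with no new inequalities beyond the definitions.
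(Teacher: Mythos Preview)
Your proposal is correct and follows essentially the same approach as the paper: invoke Corollary~\ref{cor:possible_outputs} to reduce to a finite list of candidate output sets, then evaluate $f_k$ (equivalently $g_i$) on each by direct substitution and observe that the maximum is $1 + \tfrac{1}{\sqrt{k}}$, attained at $\{b_i, o_{i,1+\sqrt{k}}\}$. The paper's proof is the same enumeration, just organized as a single bullet list rather than your two-case split.
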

\begin{proof}
	To prove the lemma, we need to show that every set that \repeatedgreedy might output according to Corollary~\ref{cor:possible_outputs} has a value of at most $1 + \frac{1}{\sqrt{k}}$. We do that by considering every possible type of such sets.
	\begin{itemize}
		\item $f_k(\varnothing) = 0$.
		\item For every integer $1 \leq i \leq \sqrt{k}$, $f_k(\{b_i\}) = 1 + \frac{3}{4\sqrt{k}} < 1 + \frac{1}{\sqrt{k}}$.
		\item For every integer $1 \leq i \leq \sqrt{k}$, $f_k(\{b_i, o_{i, 1 + \sqrt{k}}\}) = 1 + \frac{4}{4\sqrt{k}} = 1 + \frac{1}{\sqrt{k}}$.
		\item For every two integers $1 \leq i \leq \sqrt{k}$ and $1 \leq j \leq 1 + \sqrt{k}$, $f_k(\{o_{i, j}\}) = 1 + \frac{1}{4\sqrt{k}} < 1 + \frac{1}{\sqrt{k}}$.
		\item For every two integers $1 \leq i, j \leq \sqrt{k}$, $f_k(\{d_{i, j}\}) = 1 + \frac{2}{4\sqrt{k}} < 1 + \frac{1}{\sqrt{k}}$.
		\item For every two integers $1 \leq i, j \leq \sqrt{k}$, $f_k(\{o_{i, j}, d_{i, j}\}) = 1 + \frac{3}{4\sqrt{k}} < 1 + \frac{1}{\sqrt{k}}$. \qedhere
	\end{itemize} 
\end{proof}

To complete the proof of Theorem~\ref{th:repeated_bad_example}, it remains to observe that the ratio between the value of the optimal solution of $I_k$ (lower bounded by Observation~\ref{obs:optimal_value}) and the maximum value of a set that \repeatedgreedy can output (upper bounded by Lemma~\ref{lem:repeated_greedy_output_value}) is at least
\[
\frac{k + \frac{5\sqrt{k}}{4}}{1 + \frac{1}{\sqrt{k}}}
=
\frac{4k\sqrt{k} + 5k}{4\sqrt{k} + 4}
=
k + \frac{k}{4\sqrt{k} + 4}
=
k + \Omega(\sqrt{k})
\enspace.
\]

\subsection{Nearly Linear Time  with Knapsack Constraints}

In this section, we demonstrate how \repeatedgreedy may be modified to run in nearly linear time and achieve approximations for the more general problem \eqref{eq:opt-problem}, where there are $m$ additional knapsack constraints.
As before, we use the marginal gain thresholding technique of \cite{Badanidiyuru14} to ensure a nearly linear run time at the cost of an (arbitrarily) small multiplicative increase in the approximation factor.
To handle knapsack constraints, we use the density threshold technique of \cite{MBK16} with our improved binary search analysis.
These modification techniques are identical to those used in Sections~\ref{sec:fast-alg} and \ref{sec:knapsack-alg}, and so our discussion of them in this section is considerably shorter.

These modifications are made primarily in the greedy subroutine, which we present below as \modgreedy.
\modgreedy is similar to \greedy, but differs in two respects: rather than iteratively searching over all elements to find the one with largest marginal gain, the \modgreedy iteratively decreases marginal gain thresholds and accepts any element whose marginal gain is above the threshold \emph{and} whose density is above the density threshold.
The formal details of the implementation are given below as Algorithm~\ref{alg:modified-greedy}.

\begin{algorithm}[th]
	\caption{\modgreedy($\gnd, f, \cI, \density, \varepsilon$)} \label{alg:modified-greedy}
	Initialize solution $S_0 \gets \varnothing$ and iteration counter $i \gets 1$.\\
	Let $\maxgain = \max_{u \in \gnd} f(u)$, and initialize threshold $\threshold = \maxgain$. \\
	\While{$\threshold > (  \varepsilon / n)  \cdot \maxgain $}
	{
		\For{every element $u$ with $u \in \gnd$ such that $S_{i-1} + u \in \cI$}{
			\If{$f(u \mid S_{i-1}) \geq \max \left( \threshold, \density \cdot \sum_{r=1}^m c_r(u) \right)$ \label{line:mg-feasibility-check}}{
				\If{$c_r(S_{i-1} + u) \leq 1$ for all $1 \leq r \leq m$}{ \label{line:mg-knapsack-check}
					Let $u_i \gets u$ .\\
					Update the solution as $S_i \gets S_{i-1} + u$.\\
					Update the iteration counter $i \gets i+1$.
				}
			}	
		}
		Update marginal gain $\threshold \gets (1 -  \varepsilon) \cdot \threshold$.
	}
	Let $u^* \gets \argmax_{u \in \gnd} f(u)$.\\
	\Return{the set $S$ maximizing $f$ among the sets $S_i$ and $\{ u^*\}$}.
\end{algorithm}

The \modrepeatedgreedy algorithm iteratively calls \modgreedy to produce a solution $S_i$, runs an unconstrained submodular maximization (\usm) subroutine on $S_i$ to obtain $S'_i$, and then removes $S_i$ from the ground set.
Finally, \modrepeatedgreedy returns the best solution among the sequence $S_1, S'_1, \dots S_\ell, S'_\ell$ produced.
Note that this is similar to \repeatedgreedy, the only difference being that \modrepeatedgreedy calls \modgreedy rather than the vanilla greedy algorithm. 
We present \modrepeatedgreedy formally below as Algorithm~\ref{alg:repeated_greedy}.

\begin{algorithm}[H]
	\DontPrintSemicolon
	\caption{\modrepeatedgreedy($\gnd, f, \cI, \ell, \density, \varepsilon$)}\label{alg:mod-repeated_greedy}
	Let $\gnd_1 \gets \gnd$.\\
	\For{$i = 1$ \KwTo $\ell$}
	{
		Run modified greedy procedure $S_i \gets \modgreedy(\gnd_i, f, \cI, \density, \varepsilon)$ \\
		Filter the greedy solution $S'_i \gets \usm(S_i)$ \\
		Update ground set $\gnd_{i + 1} \gets \gnd_i \setminus S_i$.
	}
	\Return the set $S$ maximizing $f$ among the sets $\{S_i, S'_i \}_{i = 1}^\ell$.
\end{algorithm}

In the following observation, we bound the running time of \modrepeatedgreedy and prove feasibility of the returned solution.

\begin{observation}\label{obs:mod-repeated-greedy-runtime-and-feasibility}
	\modrepeatedgreedy requires $\bigO{\ell n / \eps}$ oracle calls, $\bigO{\ell n m / \eps}$ arithmetic operations, and its output $S$ is independent in $\cI$ and satisfies the knapsack constraints.
\end{observation}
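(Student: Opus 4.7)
The plan is to follow the template of Observations~\ref{obs:knapsack-alg-runtime} and~\ref{obs:repeated-greedy-runtime-and-feasibility}: first bound the cost of a single call to \modgreedy, then multiply by $\ell$ for the outer loop, and finally verify feasibility by a short induction together with the down-closedness of $\cI$ and the modularity of the knapsack functions.

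For the resource counts, I would start by bounding the number of iterations of the while loop inside \modgreedy. The threshold $\threshold$ starts at $\maxgain$, is multiplied by $(1-\eps)$ each iteration, and the loop stops once $\threshold \leq (\eps/n)\maxgain$. The same geometric calculation used in Observation~\ref{obs:fast-alg-runtime} then gives $\bigO{(1/\eps)\log(n/\eps)}$ iterations of the while loop. Each iteration scans the $n$ elements of $\gnd$, performs at most one value and one independence oracle call per element in Lines~\ref{line:mg-feasibility-check}--\ref{line:mg-knapsack-check}, and performs $\bigO{m}$ arithmetic operations for the density threshold and knapsack bookkeeping (after the $m$ quantities $\density \cdot \sum_{r=1}^m c_r(u)$ have been precomputed once in $\bigO{mn}$ total work, and the running knapsack loads $c_r(S_{i-1})$ are maintained incrementally). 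Therefore a single call to \modgreedy uses $\bigtO{n/\eps}$ oracle calls and $\bigtO{mn/\eps}$ arithmetic operations. The \usm subroutine contributes $\bigO{n}$ additional oracle calls by our restriction and at most the same order of arithmetic work. Summing these costs over the $\ell$ outer iterations of \modrepeatedgreedy yields the claimed $\bigO{\ell n /\eps}$ and $\bigO{\ell m n /\eps}$ bounds (up to the suppressed log factor).

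For feasibility, I would argue by induction on the iterations of \modgreedy that every set $S_i$ produced lies in $\cI$ and satisfies every knapsack constraint: $S_0 = \varnothing$ trivially does, and whenever \modgreedy enlarges $S_{i-1}$ to $S_i = S_{i-1} + u$, the tests in Line~\ref{line:mg-feasibility-check} and Line~\ref{line:mg-knapsack-check} certify exactly these two properties. The set $S'_i$ returned by \usm satisfies $S'_i \subseteq S_i$, so $S'_i \in \cI$ by the down-closedness of the independence system, and $c_r(S'_i) \leq c_r(S_i) \leq 1$ for every $r$ by the non-negativity and modularity of each $c_r$. The singleton $\{u^*\}$ considered inside \modgreedy is feasible by the paper's blanket assumption that every singleton is a feasible solution. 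Hence every candidate set considered by \modrepeatedgreedy{\textemdash}and in particular the returned maximizer $S${\textemdash}is independent in $\cI$ and satisfies all $m$ knapsack constraints.

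There is no significant obstacle here; the only subtlety is making the per-iteration arithmetic cost really $\bigO{m}$ rather than $\bigO{mn}$, which requires pointing out the one-time precomputation of the density thresholds and the incremental maintenance of the knapsack loads $c_r(S_{i-1})$. With that accounted for, both the oracle-call bound and the arithmetic-operation bound follow directly from the two-level iteration structure of \modrepeatedgreedy.
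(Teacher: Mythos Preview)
Your proposal is correct and follows essentially the same approach as the paper: bound the cost of a single \modgreedy call via the geometric-threshold argument from Observation~\ref{obs:fast-alg-runtime}, multiply by $\ell$ for the outer loop (plus the linear-time \usm), and verify feasibility via the acceptance criteria and down-closedness. If anything, you are slightly more careful than the paper in two places: you make the per-element $\bigO{m}$ arithmetic cost explicit via precomputation and incremental knapsack-load maintenance (the paper simply asserts $\bigO{mn}$ arithmetic per while-iteration), and you explicitly address the feasibility of the singleton $\{u^*\}$, which the paper glosses over.
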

\begin{proof}
These statements follow largely from analysis of \modgreedy.
Note that each iteration of the while loop of \modgreedy examines each element only once so that $\bigO{n}$ oracle calls and $\bigO{mn}$ arithmetic operations are required during each iteration.
As discussed in the proof of Observation~\ref{obs:fast-alg-runtime}, there are at most $\bigtO{1/ \eps}$ iterations of the while loop when the input error term satisfying $\eps < 1/2$.
Thus, \modgreedy requires a total of $\bigtO{n / \eps}$ oracle queries and $\bigtO{n m / \eps}$ arithmetic operations.
Moreover, by the acceptance criteria, the solution returned by \modgreedy is independent in $\cI$ and satisfies the knapsack constraints.

\modrepeatedgreedy makes $\ell$ calls to \modgreedy and to the \usm subroutine, which is assumed to run in linear time. 
Thus, the algorithm requires $\bigtO{\ell n / \eps}$ oracle calls and $\bigtO{ \ell n m / \eps}$ arithmetic operations.
Finally, the solution returned by \modrepeatedgreedy is feasible with respect to independence and knapsack constraints because all the outputs of \modgreedy and \usm have this property.
\end{proof}

The following proposition provides an approximation guarantee for the solution returned by \modrepeatedgreedy.
As in Section~\ref{sec:knapsack-alg}, we define $\knapevent$ to be an indicator variable which takes the value $1$ if the knapsack check in Line~\ref{line:mg-knapsack-check} of \modgreedy evaluates to false at any point in the execution of \modrepeatedgreedy and $0$ otherwise.

\newcommand{\modRGguarantees}[1][]{
	If $(\gnd, \cI)$ is a $k$-system, then the solution $S$ returned by \modrepeatedgreedy satisfies the following approximation guarantees.
	\ifthenelse{\isempty{#1}}{\renewcommand{\eqType}{equation}}{\renewcommand{\eqType}{equation*}}
	%\ifthenelse{\isempty{#1}}{}{\renewcommand{\theequation}{\ref{eq:knapsack-case-analysis}}}
	%\ifthenelse{\isempty{#1}}{}{\renewcommand{\theHequation}{repeat1}}
	\begin{\eqType} \ifthenelse{\isempty{#1}}{\label{eq:modRG-case-analysis}}{\tag{\ref{eq:modRG-case-analysis}}}
		f(S) \geq 
		\left\{
		\begin{array}{lr}
			\frac{1}{2} \density  &\text{ if } \knapevent = 1 \enspace,\\
			\left( \frac{1 - \eps}{k + 1 + \alpha(\ell-1)/2} \right) \Big( (1 - 1 / \ell - \eps) f(\OPT) - \density m \Big) &\text{ if } \knapevent = 0 \enspace.
		\end{array}
		\right.
	\end{\eqType}
	Moreover, when $f$ is monotone, these approximation guarantees improve to 
	%\ifthenelse{\isempty{#1}}{}{\renewcommand{\theequation}{\ref{eq:knapsack-case-analysis-monotone}}}
	%\ifthenelse{\isempty{#1}}{}{\renewcommand{\theHequation}{repeat2}}
	\begin{\eqType} \ifthenelse{\isempty{#1}}{\label{eq:modRG-case-analysis-monotone}}{\tag{\ref{eq:modRG-case-analysis-monotone}}}
		f(S) \geq 
		\left\{
		\begin{array}{lr}
			\frac{1}{2} \density  &\text{ if } \knapevent = 1 \enspace,\\
			\left(\frac{(1 - \eps) }{k + 1 + \alpha(\ell-1)/2}\right)  \Big( (1-\eps) f(\OPT) - \density m \Big) &\text{ if } \knapevent = 0 \enspace.
		\end{array}
		\right.
	\end{\eqType}
	%\ifthenelse{\isempty{#1}}{}{\addtocounter{equation}{-2}}
}
\begin{proposition} \label{prop:mod-repeated-greedy-guarantees}
	\modRGguarantees
\end{proposition}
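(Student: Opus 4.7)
The plan is to split the argument by the value of the knapsack indicator $\knapevent$, grafting the density-and-threshold machinery from the proof of Proposition~\ref{prop:knapsack_general_result} onto the telescoping analysis of Proposition~\ref{prop:repeated-greedy-analysis}.

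\textbf{Case $\knapevent = 1$.} Consider the first moment at which the knapsack check at Line~\ref{line:mg-knapsack-check} rejects some pair $(u, S)$. The preceding density test at Line~\ref{line:mg-feasibility-check} gives $f(u \mid S) \geq \density \cdot \sum_{r=1}^{m} c_r(u)$, and the same test applied to every element previously added to $S$ lets me sum marginal gains to obtain $f(S) \geq \density \cdot \sum_{r=1}^{m} c_r(S)$. Some knapsack $r$ satisfies $c_r(S) + c_r(u) > 1$, so either $c_r(S) \geq 1/2$ (which gives $f(S) \geq \density/2$) or $c_r(u) > 1/2$ (in which case submodularity and non-negativity yield $f(\{u\}) \geq f(u \mid S) \geq \density \cdot c_r(u) > \density/2$). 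Since \modgreedy explicitly compares its candidate solutions against the best singleton $\{u^*\}$ before returning, the output of the corresponding call---and hence of \modrepeatedgreedy---has value at least $\density/2$.

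\textbf{Case $\knapevent = 0$.} The central step is to prove a modified version of Lemma~\ref{lem:known_approx_results}: for every $1 \leq i \leq \ell$,
\[
(k+1) f(S_i) \;\geq\; (1 - \eps) \bigl[ f(S_i \cup (\OPT \cap \gnd_i)) \;-\; \density \cdot m \;-\; \eps \cdot f(\OPT) \bigr] \enspace .
\]
I would obtain this by adapting the $k$-system swap argument behind Lemma~3.2 of~\cite{GRST10} to the three possible reasons an element $v \in (\OPT \cap \gnd_i) \setminus S_i$ is absent from $S_i$: if $v$ failed the independence test against the current greedy solution, the swap charges $f(v \mid S_v)$ against a $(1-\eps)^{-1}$-slack version of a greedy marginal gain, where the slack encodes the ratio between consecutive threshold values $\threshold$ and $(1-\eps)\threshold$; if $v$ failed the density test then $f(v \mid S_i) < \density \sum_{r} c_r(v)$, and these contributions sum to at most $\density \cdot m$ because $\OPT$ satisfies every knapsack; and if $v$ was excluded only when $\threshold$ dropped below $(\eps/n)\Delta_f$, then $f(v \mid S_i) < (\eps/n) \Delta_f \leq (\eps/n) f(\OPT)$, totalling at most $\eps f(\OPT)$ across at most $n$ such elements. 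The \usm bound $\alpha \cdot f(S'_i) \geq f(S_i \cap \OPT)$ of Lemma~\ref{lem:known_approx_results} is retained unchanged.

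Given the modified lemma, I would replay the telescoping computation of Proposition~\ref{prop:repeated-greedy-analysis}: decompose $f(S_i \cup \OPT)$ via Lemma~\ref{lem:submodular_fact1} and Equation~\eqref{eq:set_equalities}, substitute the modified greedy bound together with the \usm bound, and use $f(S_i), f(S'_j) \leq f(S)$ to factor out $f(S)$. Averaging over $i$ and applying Lemma~\ref{lem:distribution} to $g(T) = f(\OPT \cup T)$---exactly as in~\eqref{eq:buchbinder-lb-on-average}---lower bounds the average by $(1 - 1/\ell) f(\OPT)$, or by $f(\OPT)$ when $f$ is monotone. Rearranging the resulting inequality and multiplying through by $(1-\eps)$ delivers \eqref{eq:modRG-case-analysis} and \eqref{eq:modRG-case-analysis-monotone}. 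The principal technical obstacle is the hybrid modified lemma, since the $k$-system swap of~\cite{GRST10}, the marginal-gain thresholding of~\cite{Badanidiyuru14}, and the density-threshold accounting of~\cite{MBK16} must be merged into a single charging scheme that avoids double-charging any element of $\OPT \cap \gnd_i$; once this is in place the remainder of the proof is straightforward bookkeeping paralleling Proposition~\ref{prop:repeated-greedy-analysis}.
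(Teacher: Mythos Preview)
Your proposal is correct and follows essentially the same route as the paper's proof. The paper likewise splits on $\knapevent$, proves the $\knapevent = 1$ bound by combining the density inequalities for $S$ and $u$ with the singleton comparison built into \modgreedy, and for $\knapevent = 0$ establishes precisely your ``hybrid modified lemma'' (stated there as Lemma~\ref{lemma:modified-greedy-analysis}, proved via a $k$-system partition lemma in the spirit of~\cite{GRST10}) before replaying the telescoping argument of Proposition~\ref{prop:repeated-greedy-analysis} with Lemma~\ref{lem:submodular_fact1}, Equation~\eqref{eq:set_equalities}, and Lemma~\ref{lem:distribution}. The only cosmetic difference is that in the $\knapevent = 1$ case the paper bounds $f(S+u) > \density$ directly and then uses $\max\{f(S), f(\{u\})\} \geq \tfrac{1}{2}\bigl(f(S)+f(\{u\})\bigr) \geq \tfrac{1}{2} f(S+u)$ via submodularity, whereas you case-split on whether $c_r(S)$ or $c_r(u)$ exceeds $1/2$; both arguments are equally short and yield the same $\density/2$ bound.
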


The proof of Proposition~\ref{prop:mod-repeated-greedy-guarantees} is similar to that of Proposition~\ref{prop:repeated-greedy-analysis}, except that we cannot use the analysis of \cite{GRST10} for the approximation ratio of \greedy subject to a $k$-system.
Instead, we must use an analysis which takes into account the marginal gain and density thresholding techniques.
Because the main proof ideas involving the repeated greedy technique are presented in Section~\ref{sec:repeated-greedy} and the marginal gain threshold and density ratio threshold techniques already appear in existing works, we defer the proof of Proposition~\ref{prop:mod-repeated-greedy-guarantees} to Appendix~\ref{sec:mod-rg-proof}.

We would like to choose a density parameter $\density$ to maximize the lower bound on the returned objective value in Proposition~\ref{prop:mod-repeated-greedy-guarantees}.
As in Section~\ref{sec:knapsack-alg}, we propose a binary search approach on the density parameter $\density$ using the knapsack rejection indicator $\knapevent$.
We sketch this idea again here for completeness.

Suppose that the $f$ is a general non-monotone submodular function{\textemdash}the monotone case may be handled similarly.
Choosing the density parameter
$
\density^* 
= 2(1 - \eps) \left( \frac{1 - 1/\ell - \eps}{k + 2m + 1 + \alpha(\ell-1)/2} \right) 
\triangleq \beta \cdot f(\OPT)
$
approximately maximizes the lower bound \eqref{eq:modRG-case-analysis} presented in Proposition~\ref{prop:mod-repeated-greedy-guarantees}, which yields an approximation guarantee of
\[
f(S) \geq (1 -\eps) \left( \frac{1 - 1/\ell - \eps}{k+2m + 1 + \alpha(\ell-1)/2} \right) f(\OPT) \enspace.
\]
Although the term $\beta$ is known, we do not know the optimal objective value $f(\OPT)$ and so there is no way for us to know the value of the (approximately) optimal density parameter $\density^*$.
However, the submodularity of $f$ implies that the optimal objective value lies within the interval $\maxgain \leq f(\OPT) \leq r \cdot \maxgain$, and thus, the optimal density parameter $\density^*$ lies in the interval $ \beta \cdot \maxgain \leq \density^* \leq \beta \cdot (r \cdot \maxgain)$.
\cite{MBK16} proposed running a multiplicative grid search over this interval, where the repeated greedy algorithm is run with each $\density$ in this grid.
This grid search technique yields an approximation factor which is only $(1 + \delta)$ times larger than if we had used the optimal $\density^*$ and requires $\bigO{1/\delta}$ calls to the repeated greedy algorithm.
As in Section~\ref{sec:knapsack-alg}, we improve upon this technique by proposing a binary search method which uses the knapsack rejection indicator $\knapevent$.
The same approximation factor is achieved, but only $\bigtO{1}$ calls to the repeated greedy algorithm are required, which is an exponential decrease compared to the ``brute-force'' grid search.

We refer to this binary search routine for calling \modrepeatedgreedy with different values of the density parameter as \densitysearchRG.
This is formally outlined below as Algorithm~\ref{alg:density-search-rg}.

\begin{algorithm}[H]
	\caption{\densitysearchRG($\gnd, f, \cI, \ell, \delta, \varepsilon, \beta$)}\label{alg:density-search-rg}
	Initialize upper and lower bounds $k_{\ell} = 1$, $k_{u} = \lceil \frac{1}{\delta} \log n \rceil$. \\
	Let $\maxgain = \max_{u \in \gnd} f(u)$, and initialize iteration counter $i \gets 1$. \\
	\While{ $| k_u - k_\ell | > 1$}
	{
		Set middle bound $k_i = \left\lceil \frac{k_\ell + k_u }{2} \right\rceil$. \\
		Set density ratio $\density_i \gets \beta \cdot  \maxgain (1 + \delta)^{k_i}$.\\
		Obtain set $S_i \gets \modrepeatedgreedy(\gnd, f, \cI, \ell, \density_i, \varepsilon)$. \\
		\uIf{$E_i = 0$}{
			Increase lower bound $k_\ell \gets k_i$. \\
		}
		\Else{
			Decrease upper bound $k_u \gets k_i$. \\	
		}
		Update iteration counter $i \gets i+1$.
	}
	Set density ratio $\density_i \gets \beta \cdot \maxgain (1 + \delta)^{k_\ell}$.\\
	Obtain set $S_i \gets \modrepeatedgreedy(\gnd, f, \cI, \ell, \density_i, \varepsilon)$. \\
	\Return{the set $\retsol$ maximizing $f$ among the sets $\{ S_1, \dots S_T \}$}.
\end{algorithm}

The approximation guarantees of \densitysearchRG are given below in Proposition~\ref{prop:density-search-rg-guarantees}.
The proof is omitted because it is nearly identical to that of Proposition~\ref{prop:density-search-alg-result}.
In particular, the proof of Proposition~\ref{prop:density-search-rg-guarantees} uses Proposition~\ref{prop:mod-repeated-greedy-guarantees} in the same way that Proposition~\ref{prop:knapsack_general_result} is used in the proof of Proposition~\ref{prop:density-search-alg-result}.

% proposition: general approx guarantees for any # iterations
\begin{proposition}\label{prop:density-search-rg-guarantees}
	\densitysearchRG makes $\bigtO{1}$ calls to \modrepeatedgreedy.
	If $(\gnd, \cI)$ is a $k$-system and $\beta = 2(1 - \eps) \left( \frac{1 - 1/\ell - \eps}{k + 2m + 1 + \alpha(\ell-1)/2} \right)$, then the solution $S$ returned by \densitysearchRG satisfies
	\begin{align*}
	f(S) 
	&\geq (1 -\delta) (1 -\eps) \left( \frac{1 - 1/\ell - \eps}{k+2m + 1 + \alpha(\ell-1)/2} \right) f(\OPT) \\
	&\geq (1 - \delta) (1 - 2 \eps)^2 \left( \frac{1 - 1/\ell }{k+2m + 1 + \alpha(\ell-1)/2} \right) f(\OPT) 
	\end{align*}
	when the number of iterations $\ell$ is at least $2$.
	Moreover, if $f$ is monotone and $\beta =   \frac{2(1 - \eps)^2}{k + 2m + 1 + \alpha(\ell-1)/2}$ then this lower bound improves to 
	\[
	f(S) 
	\geq (1-\delta) (1-\eps)^2 \left( \frac{1 }{k+2m + 1 + \alpha(\ell-1)/2} \right) f(\OPT) 
	\]
	for any number of iterations $\ell$.
\end{proposition}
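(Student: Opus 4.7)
The plan is to mirror the proof of Proposition~\ref{prop:density-search-alg-result} step by step, replacing each appeal to Proposition~\ref{prop:knapsack_general_result} by the corresponding appeal to Proposition~\ref{prop:mod-repeated-greedy-guarantees}; since the binary search, the case analysis, and the final algebraic simplifications depend only on the shape of the two branches of the lower bound (not on the construction of the sets $\SetF{O}{i}{j}$), essentially nothing changes beyond the substitution of $p+1$ in the denominators by $k+1+\alpha(\ell-1)/2$. The $\bigtO{1}$ bound on the number of calls is immediate: the multiplicative grid $\{\maxgain (1+\delta)^k : 1 \le k \le \lceil \delta^{-1} \log n \rceil\}$ has $\bigO{\delta^{-1} \log n}$ points, and each iteration of the while loop halves the bracket $[k_\ell, k_u]$, so the loop exits after $\bigO{\log(\delta^{-1} \log n)} = \bigtO{1}$ iterations, each consuming one call to \modrepeatedgreedy.

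For the approximation guarantee in the non-monotone setting, I would define $\density^{*} = \beta \cdot f(\OPT)$, which by the choice of $\beta$ approximately equalizes the two branches of \eqref{eq:modRG-case-analysis}. A three-way case analysis on the trajectory of the binary search inside \densitysearchRG then gives the result. \emph{Case A}: if at some iteration $i$ one has $\density_i \le \density^{*}$ and $E_i = 0$, the second branch of Proposition~\ref{prop:mod-repeated-greedy-guarantees} is monotonically decreasing in $\density_i$, and substituting $\density^{*} = \beta f(\OPT)$ followed by algebraic simplification produces the claimed bound. \emph{Case B}: if at some iteration $i$ one has $\density_i \ge \density^{*}$ and $E_i = 1$, then the first branch gives $f(S) \ge f(S_i) \ge \density_i/2 \ge \density^{*}/2$, which by the same substitution again matches the claimed bound. \emph{Case C}: if neither of the above ever happens, the binary search retains $\density^{*}$ within its bracket throughout (exactly the invariant used in the proof of Proposition~\ref{prop:density-search-alg-result}), so the final density $\density_{\hat\imath}$ satisfies $\density_{\hat\imath} \le \density^{*} \le (1+\delta) \density_{\hat\imath}$; taking the minimum of the two branches of \eqref{eq:modRG-case-analysis} at $\density_{\hat\imath}$ and absorbing the factor $(1+\delta)^{-1} \ge 1-\delta$ yields the target bound with an extra multiplicative $(1-\delta)$ factor. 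The second inequality in the statement then follows from the elementary observation that $1 - \ell^{-1} - \eps \ge (1 - \ell^{-1})(1 - 2\eps)$ whenever $\ell \ge 2$, which is equivalent to $2\eps\ell^{-1} - \eps \le 0$.

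The monotone case is handled by the same three-case argument, using the monotone lower bound \eqref{eq:modRG-case-analysis-monotone} together with the monotone choice $\beta = 2(1-\eps)^2/(k+2m+1+\alpha(\ell-1)/2)$; because this bound does not carry the $(1-\ell^{-1})$ averaging factor, the restriction $\ell \ge 2$ disappears and the conclusion holds for any $\ell \ge 1$. The main obstacle, such as it is, is purely bookkeeping: one must verify that the two branches of \eqref{eq:modRG-case-analysis} really do balance at $\density = \density^{*}$ under the stated $\beta$, and that the resulting algebra cleanly produces the promised form of the lower bound. No new conceptual ideas are required beyond those already present in the proof of Proposition~\ref{prop:density-search-alg-result}.
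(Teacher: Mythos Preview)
Your proposal is correct and matches the paper's approach exactly: the paper explicitly omits the proof, noting that it is ``nearly identical to that of Proposition~\ref{prop:density-search-alg-result}'' with Proposition~\ref{prop:mod-repeated-greedy-guarantees} playing the role of Proposition~\ref{prop:knapsack_general_result}. Your three-case analysis, the bracketing invariant for the binary search, and the final algebraic simplification via $1 - \ell^{-1} - \eps \ge (1 - \ell^{-1})(1 - 2\eps)$ for $\ell \ge 2$ all reproduce the argument of Proposition~\ref{prop:density-search-alg-result} with the appropriate substitution of $k+1+\alpha(\ell-1)/2$ for $p+1$.
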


By setting the number of solutions $\ell$ to maximize the lower bounds in Proposition~\ref{prop:density-search-rg-guarantees}, we may obtain the following result.
	
% THEOREM: setting ell, give the run time + approx guarantees
\begin{theorem}\label{thm:density-search-rg-guarantees}
	Suppose that $(\gnd, \cI)$ is a $k$-system, the number of iterations is set to $\ell = \lfloor 1 + \sqrt{2 (k+2m + 1)/\alpha} \rfloor$, and the two error terms are set to be equal (i.e., $\varepsilon = \delta \in (0, 1/2)$).
	Then, \densitysearchRG requires $\bigtO{\sqrt{k+m} \cdot n / \varepsilon}$ oracle calls and $\bigtO{\sqrt{k+m} \cdot m n / \varepsilon}$ arithmetic operations and produces a solution whose approximation ratio is at most
	\[
	(1 - 2\eps)^{-3} \Big[ k+ 2m + (\sqrt{2 \alpha}) \sqrt{k + 2m} + (\alpha + 1) + \littleO{1} \Big] \enspace.
	\]
	Moreover, when $f$ is non-negative monotone submodular and the number of iterations is set to $\ell = 1$, then the approximation ratio improves to $(1-\eps)^{-3} \left( k + 2m + 1 \right)$.
\end{theorem}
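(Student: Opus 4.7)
The plan is to derive Theorem~\ref{thm:density-search-rg-guarantees} by directly combining the per-call costs from Observation~\ref{obs:mod-repeated-greedy-runtime-and-feasibility} with the end-to-end approximation bound of Proposition~\ref{prop:density-search-rg-guarantees}, and then choosing $\ell$ to optimize the resulting expression in exactly the same way that Theorem~\ref{thm:repeated-greedy} was obtained from Proposition~\ref{prop:repeated-greedy-analysis}. The main parameter substitution is $K := k + 2m$, which reduces the knapsack-augmented analysis to the shape already handled for pure $k$-systems.

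For the running time I would first note that \densitysearchRG makes $\bigtO{1}$ invocations of \modrepeatedgreedy by Proposition~\ref{prop:density-search-rg-guarantees}. Each such invocation, by Observation~\ref{obs:mod-repeated-greedy-runtime-and-feasibility}, uses $\bigtO{\ell n / \eps}$ oracle calls and $\bigtO{\ell m n / \eps}$ arithmetic operations. Substituting $\ell = \lfloor 1 + \sqrt{2(k+2m+1)/\alpha}\rfloor = \bigO{\sqrt{k+m}}$ yields the claimed $\bigtO{\sqrt{k+m}\, n/\eps}$ and $\bigtO{\sqrt{k+m}\, m n/\eps}$ bounds.

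For the approximation ratio in the non-monotone case, I would invoke the first (non-monotone) bound of Proposition~\ref{prop:density-search-rg-guarantees} with $\delta = \eps$ and $\beta$ as prescribed. Since $(1-\delta)(1-2\eps)^2 \geq (1-2\eps)^3$, the approximation factor is at most
\[
(1-2\eps)^{-3}\cdot \frac{k + 2m + 1 + \alpha(\ell-1)/2}{1 - 1/\ell}.
\]
Setting $K = k + 2m$ and $\ell = \lfloor 1 + \sqrt{2(K+1)/\alpha}\rfloor$ reduces this to precisely the quantity analyzed in the proof of Theorem~\ref{thm:repeated-greedy} with $k$ replaced by $K$. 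Following those algebraic manipulations verbatim (introducing $\gamma = 2(K+1)/\alpha$, simplifying $(\gamma+\sqrt{\gamma})/(1-1/\sqrt{\gamma}) = (1+\sqrt{\gamma})^2 + 1 + 2/(\sqrt{\gamma}-1)$, and multiplying by $\alpha/2$) yields
\[
(K+1) + \sqrt{2\alpha(K+1)} + \alpha + \frac{\alpha}{\sqrt{2(K+1)/\alpha} - 1}
= K + \sqrt{2\alpha}\sqrt{K} + (\alpha + 1) + \littleO{1},
\]
which is exactly the claimed bound after substituting $K = k + 2m$.

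For the monotone case, I would use the second bound of Proposition~\ref{prop:density-search-rg-guarantees} with $\ell = 1$ and $\delta = \eps$. Plugging in $\alpha(\ell-1)/2 = 0$ and noting that $(1-\delta)(1-\eps)^2 = (1-\eps)^3$, the approximation factor is at most $(1-\eps)^{-3}(k+2m+1)$, as stated. There is no real obstacle here; the work is almost entirely bookkeeping, and the only step requiring any care is checking that the $o(1)$ term (explicitly $\sqrt{2\alpha}(\sqrt{K+1}-\sqrt{K}) + \alpha/(\sqrt{2(K+1)/\alpha}-1)$) indeed vanishes as $K \to \infty$, which follows from the corresponding analysis already carried out after Theorem~\ref{thm:repeated-greedy}.
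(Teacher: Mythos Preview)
Your proposal is correct and follows essentially the same approach as the paper, which in fact omits a detailed proof and simply notes that the analysis of Theorem~\ref{thm:repeated-greedy} carries over verbatim with $k$ replaced by $k+2m$; your write-up makes this explicit, with the minor cosmetic difference that your simplification $(\gamma+\sqrt{\gamma})/(1-1/\sqrt{\gamma}) = (1+\sqrt{\gamma})^2 + 1 + 2/(\sqrt{\gamma}-1)$ is a more compact but equivalent form of the paper's expression.
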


As discussed in Section~\ref{sec:knapsack-alg}, the $(1 - 2 \eps)^{-1}$ multiplicative factor can always be made into a $(1 + \eps')$ approximation factor by setting $\eps' = c \cdot \eps$ for some constant $c < 1$.
We omit the proof of Theorem~\ref{thm:density-search-rg-guarantees}, as the analysis is essentially the same as in the proof of Theorem~\ref{thm:repeated-greedy}, except that $k$ is replaced now by $k+2m$.
Indeed, one of the interpretations of Theorem~\ref{thm:density-search-rg-guarantees} is that the \modrepeatedgreedy algorithm achieves approximation guarantees similar to \repeatedgreedy, except that the independence parameter $k$ is replaced with $k+2m$ to account for the additional knapsack constraints, while running in nearly linear time.

% Discussion of theorem
% 1. Interpret the theorem.
% 2. Discuss that (1 - eps)^p can always be translated to (1 + eps) by reducing epsilon by a constant. This only incurs a constant factor in the run time.

\section{Hardness Results} \label{sec:hardness}

In this section, we present hardness results which complement our algorithmic contributions. 
In particular, we study the hardness of maximizing linear functions and monotone submodular functions over $k$-extendible systems.
These hardness results demonstrate that the approximation guarantees of \mainalg for submodular maximization over $k$-extendible systems are nearly optimal (up to low order terms) amongst all polynomial time algorithms.
We emphasize here that the following hardness results are information theoretic, and thus, independent of computational complexity hypotheses such as $P \neq NP$.
The first hardness result regards the approximability of maximizing a linear function over a $k$-extendible system.

\begin{theorem} \label{thm:linear_hardness}
	There is no polynomial time algorithm for maximizing a linear function over a $k$-extendible system that achieves an approximation ratio of $k - \eps$ for any constant $\eps > 0$.
\end{theorem}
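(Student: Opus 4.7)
The plan is to establish the hardness via an information-theoretic oracle lower bound. I will construct a distribution over $k$-extendible systems (together with a common linear objective $w$) such that the optimum under a random instance from the distribution is roughly $k$ times larger than the optimum of a ``null'' $k$-extendible system, yet any algorithm making polynomially many queries to the value and independence oracles cannot, with non-negligible probability, distinguish a random planted instance from the null instance. By Yao's principle, this implies the desired $(k-\eps)$-hardness for arbitrary (possibly randomized) polynomial algorithms.

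For the construction, I would take the null system $\cI_0$ to be a rank-$r$ uniform matroid on $\cN$ (a trivial $k$-extendible system) whose optimum under unit weights is $r$. For the planted system $\cI_H$, let $H \subseteq \cN$ be a uniformly random subset of size $k r$ equipped with a uniformly random partition into $r$ ``chunks'' of size $k$ each; declare $\cI_H$ to contain, in addition to all sets of size at most $r$, every set that is a union of whole chunks of $H$ (and the down-closure thereof). The planted optimum is then $|H| = k r$. Verifying that $\cI_H$ is $k$-extendible reduces to checking the extendibility axiom across the interface between the two kinds of independent sets; the key observation is that inserting an external element $u \notin H$ into a union of $m$ chunks requires removing one entire chunk (exactly $k$ elements) to land back in the base uniform matroid, which fits inside the $k$-extendibility budget.

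For the oracle lower bound, fix any deterministic algorithm making $T = n^{O(1)}$ independence queries. For any fixed query $S$ whose size is polynomially bounded, a Chernoff-type concentration argument over the random $H$ shows that, with probability $1 - n^{-\omega(1)}$, the intersection $S \cap H$ is not large enough for $S$ to be a chunk-union, so the answer matches the null instance $\cI_0$. An adaptive union bound over the $T$ queries then implies that the algorithm's entire transcript (and hence its output) coincides with the null-instance execution with probability $1 - o(1)$. Since any independent set in $\cI_0$ has value at most $r$, the ratio $|H|/r = k$ establishes hardness arbitrarily close to $k$ after tuning $r$ and the support size of the plant to drive the failure probability to zero.

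The main obstacle is the delicate verification of $k$-extendibility of the planted system. A seemingly natural planting scheme (for example, directly adding a single large set $H$ of size $kr$) violates the axiom because a single external element may need to displace up to $kr-r+1 \gg k$ elements to restore independence. The chunk-based planting is engineered precisely to avoid this issue, but the boundary cases---e.g., when an independent set already mixes elements from the base uniform matroid with partial chunks---require a careful case analysis to confirm the axiom. A secondary technical point is ensuring that the random partition of $H$ into chunks does not leak information via aggregate oracle responses; I expect this to follow from standard symmetrization arguments, as the marginal distribution of a single query answer depends only on a small number of coordinates of the random plant.
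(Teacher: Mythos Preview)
Your overall strategy---a null uniform matroid and a planted system with a $k$-times-larger optimum, shown indistinguishable under polynomially many oracle queries---is exactly the paper's approach. The gap is that your planted system $\cI_H$ is not $k$-extendible, and the chunk partition does not rescue it. Because an independence system must be down-closed and $H$ itself is the union of all chunks, the down-closure of ``unions of whole chunks'' is simply $2^H$; the chunks disappear from the definition entirely, and $\cI_H = \{S : |S|\le r\}\cup 2^H$. Now take $B=H$ (independent), any $A\subseteq H$ with $|A|=r-1$, and any $u\notin H$ (so $|A+u|=r$ is independent). Since $(B\setminus Y)+u\not\subseteq H$, independence forces $|(B\setminus Y)+u|\le r$, i.e.\ $|Y|\ge kr-r+1$, which exceeds $k$ once $r\ge 2$. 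Your own justification---``removing one entire chunk \ldots\ to land back in the base uniform matroid''---fails for the same reason: after removing one chunk from a union of $m$ chunks you still have $(m-1)k$ elements, and adding $u$ gives a set of size $(m-1)k+1>r$ unless $m$ is tiny. This is precisely the obstruction you flagged for the single-set plant, and the chunking does not remove it.

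The paper's fix is to replace the discrete chunk idea by a \emph{continuous} resource function: a set $S$ is independent in the planted system iff $g(|S\cap H_1|)+|S\setminus H_1|\le m$, where $g$ is piecewise linear with slope $1$ up to a small threshold and slope $1/k$ thereafter. The crucial property is that adding any element raises the left-hand side by at most $1$ while removing any element lowers it by at least $1/k$; hence removing $k$ \emph{arbitrary} elements always compensates for one insertion, which is exactly the $k$-extendibility axiom. Randomization is then a random permutation of the ground set (so the special block $H_1$ is random), and indistinguishability from the uniform matroid follows from hypergeometric concentration on $|S\cap H_1|$, not on the much more fragile event $S\subseteq H$.
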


The second hardness result regards the approximability of maximizing a monotone submodular function over a $k$-extendible system.

\begin{theorem} \label{thm:submodular_hardness}
	There is no polynomial time algorithm for maximizing a non-negative monotone submodular function over a $k$-extendible system that achieves an approximation ratio of $(1 - e^{-1/k})^{-1} - \eps$ for any constant $\eps > 0$.
\end{theorem}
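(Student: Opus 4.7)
I would apply the symmetric gap technique of \cite{V13}, the same framework cited earlier in the paper for Theorem~\ref{thm:linear_hardness}. The plan consists of three steps: (i) construct a small base instance $(f_0, \cI_0)$ with $f_0$ a non-negative monotone submodular function, $\cI_0$ a $k$-extendible system, and a permutation group $G$ acting on the ground set that preserves both $f_0$ and $\cI_0$; (ii) compute the symmetry gap $\gamma$ of this instance and show $\gamma = 1 - e^{-1/k}$; and (iii) invoke Vondrak's blow-up theorem which converts $\gamma$ into an oracle-query hardness of $1/\gamma - \eps = (1 - e^{-1/k})^{-1} - \eps$ over exponentially larger instances whose objective remains monotone submodular and whose constraint remains $k$-extendible.

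For step (i), I would choose the base instance to mimic the classical Nemhauser--Wolsey-style construction that gives the $(1-1/e)$-hardness for the matroid (cardinality) case, but generalized so the constraint explicitly exploits the $k$-extendibility rather than matroidness. A natural candidate features a ``shortcut'' element whose inclusion forces the removal of up to $k$ other elements (which directly witnesses $k$-extendibility), together with $k$ symmetric ``slot'' elements and a coverage-type submodular function whose multilinear extension on symmetric fractional points takes the product form $(1 - y/k)^k$. The permutation group $G$ would permute the slots, so that the integer optimum lies outside the symmetric subspace and no symmetric fractional point can exactly emulate it. The key identity is that as $k$ grows $(1 - 1/k)^k \to e^{-1}$, so the ratio of the symmetric-fractional optimum to the integer optimum takes the form $1 - e^{-1/k}$ after appropriate normalization of the weights on the shortcut and the coverage term.

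For step (iii), the theorem of \cite{V13} is a black-box reduction: given a base instance with symmetry gap $\gamma$ whose constraint belongs to a family preserved under the associated blow-up construction, it yields for every $\eps > 0$ a distribution over exponentially larger indistinguishable instances on which no algorithm making polynomially many value and independence queries achieves a better than $\gamma + \eps$ approximation; one must verify that the $k$-extendibility parameter is preserved by the blow-up, which typically holds for constructions that replace each element by multiple identical copies. The main obstacle would be step (ii): carefully calibrating the weights in the base instance so that the optimization of the multilinear extension over symmetric fractional feasible points yields exactly $1 - e^{-1/k}$ and not a weaker factor. This requires aligning the concavity of the coverage function with the product form $(1 - y/k)^k$ so that the critical-point equation in the symmetric-fractional maximization produces the precise $1 - e^{-1/k}$ bound rather than a constant-factor-off approximation.
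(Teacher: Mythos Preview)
Your proposal invokes the right toolkit but differs materially from the paper's route, and it leaves the two genuinely hard steps undone.

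The paper does \emph{not} apply Vondr\'ak's blow-up theorem as a black box to a small base instance. Instead it reuses the explicit pair of $k$-extendible systems $\cM$ and $\cM'$ already constructed for Theorem~\ref{thm:linear_hardness} (the cardinality-$m$ system versus the system in which elements of one hidden block $H_1$ are ``cheap''), and then layers an objective on top. The initial function is the simple coverage-type $q(X)=\min\{|X|,1\}$ on $[h]$; it takes the multilinear extension $Q$, applies Vondr\'ak's perturbation Lemma~3.2 to obtain $F$ and its symmetrization $G$, and then pulls these back to set functions $f(S)=F(x(S))$ and $g(S)=G(x(S))$ via the fractional-membership map $x(S)_i=|S\cap H_i|/|H_i|$. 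Indistinguishability of $(\cM,f)$ from $(\cM',g)$ is then proved directly by concentration of hypergeometric variables under a random relabeling of the ground set---no blow-up construction is needed. The ratio $(1-e^{-1/k})^{-1}$ falls out because the symmetric value of a size-$m$ set is $Q((kh)^{-1}\characteristic_h)=1-(1-1/(kh))^h\to 1-e^{-1/k}$ while the asymmetric optimum (packing $H_1$) achieves $\approx 1$.

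Your plan has two concrete gaps. First, you never specify the base instance; you describe a ``shortcut'' element and $k$ ``slot'' elements and say calibrating the weights to hit $1-e^{-1/k}$ is the main obstacle, but you do not overcome it. The paper sidesteps this entirely: there is no small base instance to calibrate, because the constraint itself (the system $\cM$) is what carries the parameter $k$, and the objective is the parameter-free coverage function. Second, you assume Vondr\'ak's refinement preserves $k$-extendibility (``typically holds for constructions that replace each element by multiple identical copies''), but this is not automatic---in general intersecting with the partition matroid that enforces ``at most one copy per element'' can raise the extendibility parameter. The paper avoids this issue altogether by constructing the large hard instance explicitly and verifying its $k$-extendibility directly. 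If you want to salvage your black-box route, you would need both a concrete base instance with the exact gap and a proof that refinement stays inside the $k$-extendible class; the paper's explicit construction is cleaner precisely because it bypasses both requirements.
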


Recall that \mainalg achieves an approximation ratio of $\nicefrac{(k+1)^2}{k} = k + 2 + \nicefrac{1}{k}$ for maximizing a submodular function over a $k$-extendible system and that this approximation ratio improves to $k + 1$ when the objective is monotone.
The hardness result of Theorem~\ref{thm:submodular_hardness} shows that achieving an approximation ratio better than $(1 - e^{-1/k})^{-1} - \eps \geq k + \nicefrac{1}{2} - \eps$ for monotone objectives requires exponentially many queries to the value and independence oracles.
Hence, the gap between the achieves approximation ratio and the hardness result is a small constant. 
In this sense, the approximation achieved by \mainalg and its variants for maximizing over a $k$-extendible system is near-optimal amongst all algorithms which query the oracles polynomially many times.

\mainalg has an approximation guarantee of $k + \bigO{\sqrt{k}}$ for the more general class of $k$-systems, and so it is natural to wonder whether this approximation is also near-optimal amongst polynomial time algorithms.
Since $k$-extendible systems are a subclass of $k$-systems, the hardness results presented here also apply to $k$-systems; however, the gap between the  $k + \bigO{\sqrt{k}}$ approximation and the $k + \nicefrac{1}{2} - \eps$ hardness is larger in this case.
Indeed, it is an open question whether the additive $\bigO{\sqrt{k}}$ term is necessary for any polynomial time algorithm which maximizes a non-monotone submodular objective over a $k$-system or whether this factor may be improved.

The proof of Theorems~\ref{thm:linear_hardness} and~\ref{thm:submodular_hardness} consists of two steps which are organized into two respective sections. 
In Section~\ref{sec:linear-hardness}, we define two $k$-extendible systems which are indistinguishable in polynomial time. 
The inapproximability result for linear objectives follows from the indistinguishability of these systems and the fact that the sizes of their maximal sets are very different. 
In Section~\ref{sec:submodular-hardness}, we define monotone submodular objective functions for the two $k$-extendible systems. 
Using the symmetry gap technique of~\cite{V13}, we will show that these objective functions are also indistinguishable, despite being different. 
Then, we will use the differences between the objective functions to prove the slightly stronger inapproximability result for monotone submodular objectives.

\subsection{Hardness for linear functions over \texorpdfstring{$k$}{k}-extendible systems} \label{sec:linear-hardness}
In this section, we construct two $k$-extendible systems which, after a random permutation is applied to the ground set, are indistinguishable using polynomially many queries with high probability.
Moreover, the size of the largest base is significantly different between these two systems.

First, we construct a $k$-extendible system $\cM(k, h, m) = (\cN_{k, h, m}, \cI_{k, h, m})$, which is parameterized by three positive integers $k, h$ and $m$ such that $h$ is an integer multiple of $2k$.
The ground set of the system consists of $h$ groups of elements, each of size $km$.
More formally, the ground set is $\cN_{k, h, m} = \cup_{i = 1}^h H_i(k,m)$, where $H_i(k, m) = \{u_{i, j} ~|~ 1 \leq j \leq km\}$. 
A set $S \subseteq \cN_{k, h, m}$ is independent if and only if it obeys the following inequality:
\[
g(|S \cap H_1(k, m)|) + |S \setminus H_1(k, m)|
\leq
m
\enspace,
\]
where the function $g$ is a piece-wise linear function defined by
\[
g(x) 
= \min\left\{x, \frac{2km}{h}\right\} + \max\left\{\frac{x - 2km/h}{k}, 0\right\}
= \left\{
\begin{array}{lr}
x &\text{ if } x \leq \nicefrac{2km}{h}\\
\frac{x}{k} + (1 - \frac{1}{k}) \frac{2km}{h} &\text{ if } x \geq \nicefrac{2km}{h}
\end{array}
\right.
\enspace.
\]
Intuitively, a set is independent if its elements do not take too many ``resources'', where most elements requires a unit of resources, but elements of $H_1(k, m)$ take only $1/k$ unit of resources each once there are enough of them. 
Consequently, the only way to get a large independent set is to pack many $H_1(k, m)$ elements.

For notational clarity, we drop the reference to the underlying parameters $k$, $h$, and $m$ in the definition of the set systems throughout the rest of the section.
That is, we write $\cM$ and $H_i$ instead of the more burdensome $\cM(k,h,m)$ and $H_i(k,m)$. 
%Of course, implicitly we are constructing a parameterized family of set systems, but for the purpose of clarity it is simpler to drop this notation.

\begin{lemma}
	For every choice of $h$ and $m$, $\cM$ is a $k$-extendible system.
\end{lemma}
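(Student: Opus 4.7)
The plan is to verify the standard axioms (non-emptiness and downward-closure) and then the $k$-extendibility property. The easy parts come first: the empty set is independent since $g(0) = 0 \leq m$, and if $A \subseteq B \in \cI$, the monotonicity of $g$ together with $|A \setminus H_1| \leq |B \setminus H_1|$ gives $g(|A \cap H_1|) + |A \setminus H_1| \leq g(|B \cap H_1|) + |B \setminus H_1| \leq m$, so $A \in \cI$.

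For $k$-extendibility I would first establish two numerical facts about $g$: (i) $g(x+1) - g(x) \leq 1$ for every non-negative integer $x$, and (ii) $g(x) - g(x-k) \geq 1$ for every integer $x \geq k$. Fact (i) is immediate from the piecewise-linear formula for $g$, whose two slopes are $1$ and $1/k$. For fact (ii) I would split into three sub-cases: when $x \leq 2km/h$, the drop is exactly $k \geq 1$; when $x - k \geq 2km/h$, the drop equals $k \cdot (1/k) = 1$; and in the transition case $x - k < 2km/h \leq x$, a short calculation using the explicit formula shows that the drop still exceeds $1$.

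Given independent sets $A \subseteq B$ and an element $u \notin B$ with $A + u \in \cI$, I would split into two cases depending on whether $(B \setminus A) \setminus H_1$ is empty. In the first case there is some $v \in (B \setminus A) \setminus H_1$; taking $Y = \{v\}$ suffices, because removing $v$ keeps $|S \cap H_1|$ unchanged and decreases $|S \setminus H_1|$ by one, which is enough (using fact (i) when $u \in H_1$) to absorb the at-most-one increase caused by inserting $u$. In the second case $B \setminus A \subseteq H_1$; if $|B \cap H_1 \setminus A| \leq k$ I would take $Y = B \cap H_1 \setminus A$, which makes $(B \setminus Y) + u$ equal to $A + u$ (using $A \setminus H_1 = B \setminus H_1$) and hence independent by hypothesis, while if $|B \cap H_1 \setminus A| > k$ I would take $Y$ to be an arbitrary $k$-element subset of $B \cap H_1 \setminus A$ and use fact (ii) (when $u \notin H_1$) or plain monotonicity of $g$ (when $u \in H_1$, since $|B \cap H_1| + 1 - k \leq |B \cap H_1|$) to verify the defining inequality for $(B \setminus Y) + u$.

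The main obstacle will be verifying fact (ii) in the regime-transition sub-case, where the slope of $g$ jumps from $1$ to $1/k$; this is the one place where the careful calibration of the breakpoint $2km/h$ in the definition of $g$ is essential, and keeping track of non-integer boundary values requires some care.
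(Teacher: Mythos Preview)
Your argument is correct, but it is considerably more laborious than the paper's. The paper observes the two-sided per-element bound
\[
k^{-1} \leq g(x+1) - g(x) \leq 1
\]
for every $x \geq 0$, and then works with the aggregate functional $\Sigma(S) = g(|S \cap H_1|) + |S \setminus H_1|$. Adding \emph{any} element (in $H_1$ or not) increases $\Sigma$ by at most $1$, and removing \emph{any} element decreases $\Sigma$ by at least $1/k$. Thus, whenever $|B \setminus A| > k$, an arbitrary $k$-element subset $Y \subseteq B \setminus A$ gives $\Sigma((B \setminus Y) + u) \leq \Sigma(B) - k \cdot (1/k) + 1 = \Sigma(B) \leq m$, with no case split on where $u$ or the elements of $Y$ lie. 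Your fact (ii) is just the lower per-element bound summed $k$ times, and your case analysis on $(B \setminus A) \setminus H_1$ is entirely avoided by treating $\Sigma$ uniformly. What your route buys is perhaps a more ``bare hands'' feel, since you never appeal to the lower slope bound directly; what the paper's route buys is brevity and a proof that makes the role of the slope parameter $1/k$ transparent.
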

\begin{proof}
	First, observe that $g(x)$ is a monotone function, and therefore, a subset of an independent set of $\cM$ is also independent. 
	Also, $g(0) = 0$, and therefore, $\varnothing \in \cI$. 
	This proves that $\cM$ is an independence system. 
	In the rest of the proof we show that it is also $k$-extendible.
	
	Consider an arbitrary independent set $C \in \cI$, an independent extension $D$ of $C$ and an element $u \not \in D$ for which $C + u \in \cI$. 
	We need to find a subset $Y \subseteq D \setminus C$ of size at most $k$ such that $D \setminus Y + u \in \cI$.
	If $|D \setminus C| \leq k$, then we can simply pick $Y = D \setminus C$. 
	Thus, we can assume from now on that $|D \setminus C| > k$. Let $$\Sigma(S) = g(|S \cap H_1) + |S \setminus H_1|.$$
	By definition, $\Sigma(D) \leq m$ because $D \in \cI$. 
	Observe that $g(x)$ has the property that for every $x \geq 0$, $$k^{-1} \leq g(x + 1) - g(x) \leq 1.$$ Thus, $\Sigma(S)$ increases by at most $1$ every time that we add an element to $S$, but decreases by at least $1/k$ every time that we remove an element from $S$. 
	Hence, if we let $Y$ be an arbitrary subset of $D \setminus C$ of size $k$, then
	\[
	\Sigma(D \setminus Y + u)
	\leq
	\Sigma(D) - \frac{|Y|}{k} + 1
	=
	\Sigma(D)
	\leq
	m
	\enspace,
	\]
	which implies that $D \setminus Y + u \in \cI$.
\end{proof}\vspace{0pt}

%For proving hardness for maximizing a linear function over a $k$-Extendible System, we need a series of $k$-Extendible Systems. Fix two constants $k$ and $h \geq 2k$. Our series of $k$-Extensible Systems is constructed by choosing different values for $m$. Notice that the size of $\cN_{k,h,m}$ is $khm$, which is $O(m)$ since $k$ and $h$ are constant. We consider the problem of maximizing the cardinality function subject to the above series of $k$-Extendible Systems.

Let us now show that $\cM$ contains a large independent set.

\begin{observation} \label{ob:large_independent_set}
	$\cM$ contains an independent set whose size is $k(m - 2km/h) + 2km/h \geq mk(1-2k/h)$. Moreover, there is such set in which all elements belong to $H_1$.
\end{observation}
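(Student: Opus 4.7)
The plan is to exhibit an explicit independent set of the claimed size contained entirely within $H_1$, and then verify the independence condition by direct computation using the piecewise definition of $g$.

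Concretely, I would let $s = k(m - 2km/h) + 2km/h = km - 2km(k-1)/h$ and take $S$ to be any subset of $H_1$ of size $s$. Such a subset exists because $|H_1| = km$ and one easily checks $s \leq km$ (since $k \geq 1$ implies $2km(k-1)/h \geq 0$). Also, $s \geq 2km/h$ since $k \geq 1$, so we are in the second branch of the piecewise definition of $g$.

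Next I would compute $\Sigma(S) = g(|S \cap H_1|) + |S \setminus H_1| = g(s) + 0$. Using the second branch of $g$,
\[
g(s) = \frac{s}{k} + \left(1 - \frac{1}{k}\right)\frac{2km}{h} = \frac{k(m - 2km/h) + 2km/h}{k} + \left(1 - \frac{1}{k}\right)\frac{2km}{h}.
\]
Simplifying, the first term equals $m - 2km/h + 2m/h$, and the second equals $2km/h - 2m/h$, so the sum is exactly $m$. Hence $S \in \cI$, confirming independence.

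Finally, the inequality $s \geq mk(1 - 2k/h)$ follows from $s = mk - 2k^2m/h + 2km/h \geq mk - 2k^2m/h = mk(1 - 2k/h)$, since $2km/h \geq 0$. This gives both the exact size $k(m - 2km/h) + 2km/h$ and the stated lower bound, with all elements drawn from $H_1$, as required. There is no substantial obstacle here; the entire argument is a verification that the piecewise-linear constraint is tight at the proposed value of $|S|$.
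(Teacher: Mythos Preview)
Your proposal is correct and follows essentially the same approach as the paper: pick a subset $S \subseteq H_1$ of size $s = k(m - 2km/h) + 2km/h$ and verify $g(s) \leq m$ by direct computation. The only minor point is that your justification ``$s \geq 2km/h$ since $k \geq 1$'' implicitly uses $m - 2km/h \geq 0$, which holds by the standing assumption that $h$ is a multiple of $2k$ (hence $h \geq 2k$); the paper sidesteps this by working with the $\min/\max$ form of $g$ and bounding $g(s) \leq m$ rather than computing equality.
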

\begin{proof}
	Let $s = k(m - 2km/h) + 2km/h$, and consider the set $S = \{u_{1, j} \mid 1 \leq j \leq s\}$. 
	This is a subset of $H_1 \subseteq \cN$ since $s \leq km$. Also,
	\begin{align*}
	g(|S|)
	={} &
	g(s) \\
	={}&
	\min\left\{s, \frac{2km}{h}\right\} + \max\left\{\frac{s - 2km/h}{k}, 0\right\}\\
	\leq{} &
	\frac{2km}{h} + \max\left\{\frac{[k(m - 2km/h) + 2km/h] - 2km/h}{k}, 0\right\}\\
	={} &
	\frac{2km}{h} + \max\left\{m - \frac{2km}{h}, 0\right\}
	=
	m
	\enspace.
	\end{align*}
	Since $S$ contains only elements of $H_1$, its independence follows from the above inequality.
\end{proof}\vspace{0pt}

Let us now define our second $k$-extendible system $\cM' = (\cN, \cI')$. 
The ground set of this system is the same as the ground set of $\cM$, but a set $S \subseteq \cN$ is considered independent in this independence system if and only if its size is at most $m$. 
Clearly, this is a $k$-extendible system (in fact, it is a uniform matroid). 
Moreover, note that the ratio between the sizes of the maximal sets in $\cM$ and $\cM'$ is at least
\[
\frac{mk(1 - 2k/h)}{m}
=
k(1 - 2k/h)
\enspace.
\]
Our plan is to show that it takes exponential time to distinguish between the systems $\cM$ and $\cM'$, and thus, no polynomial time algorithm can provide an approximation ratio better than this ratio for the problem of maximizing the cardinality function (i.e., the function $f(S) = |S|$) subject to a $k$-extendible system constraint.

Consider a polynomial time deterministic algorithm that gets either $\cM$ or $\cM'$ after a random permutation was applied to the ground set. 
We prove below that with high probability the algorithm fails to distinguish between the two possible inputs. 
Notice that by Yao's lemma, this implies that for every random algorithm there exists a permutation for which the algorithms fails with high probability to distinguish between the inputs.

Assuming our deterministic algorithm gets $\cM'$, it checks the independence of a polynomial collection of sets. 
Observe that the sets in this collection do not depend on the permutation because the independence of a set in $\cM'$ depends only on its size, and thus, the algorithm will take the same execution path given every permutation. 
If the same algorithm now gets $\cM$ instead, it will start checking the independence of the same sets until it will either get a different answer for one of the checks (different than what is expected for $\cM'$) or it will finish all the checks. 
Note that in the later case the algorithm must return the same answer that it would have returned had it been given $\cM'$. Thus, it is enough to upper bound the probability that any given check made by the algorithm will result in a different answer given the inputs $\cM$ and $\cM'$.

\begin{lemma} \label{le:independence_different_unlikely}
	Following the application of the random ground set permutation, the probability that a set $S$ is independent in $\cM$ but not in $\cM'$, or vice versa, is at most $e^{-\frac{2km}{h^2}}$.
\end{lemma}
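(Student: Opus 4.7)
The plan is to first observe that one side of the symmetric difference is empty, and then bound the remaining one-sided event using a Chernoff-type inequality for hypergeometric variables.

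First, I would show $\cI' \subseteq \cI$. Since $g$ has slope at most $1$ everywhere and satisfies $g(0)=0$, we have $g(x) \leq x$ for every $x \geq 0$, and therefore $g(|S \cap H_1|) + |S \setminus H_1| \leq |S \cap H_1| + |S \setminus H_1| = |S|$. Consequently, any set of size at most $m$ (which is exactly what being independent in $\cM'$ means) is automatically independent in $\cM$ as well, so the only event left to bound is ``$S \in \cI$ and $|S| > m$''.

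Next I would reduce this event to a one-sided tail event for a hypergeometric variable. Fix $S$ and let $s = |S|$ and $Y = |S \cap H_1|$. After the random permutation, $H_1$ is a uniformly chosen $km$-subset of the $khm$-element ground set $\cN$, so $Y$ is hypergeometric with $\bE[Y] = s/h$. Using the piecewise definition of $g$, the condition $g(Y) + s - Y \leq m$ is equivalent to $Y - g(Y) \geq s - m$; since $s > m$ and $Y - g(Y) = 0$ whenever $Y \leq 2km/h$, this collapses to the clean tail event $Y \geq T$ where $T := k(s-m)/(k-1) + 2km/h$.

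The final step is Hoeffding's inequality for sampling without replacement, $\Pr[Y \geq \bE[Y] + t] \leq \exp(-2t^2/s)$, applied with $t = T - s/h$. If $T > s$ then $\Pr[Y \geq T] = 0$ and the bound is trivial; otherwise straightforward algebra using $h \geq 2k$ forces $s \leq km[1 - 2(k-1)/h] \leq km$. Dropping the nonnegative term $k(s-m)/(k-1)$ from $T - s/h$ yields $T - s/h \geq (2km - s)/h$. The main obstacle is the algebraic verification of $(2km - s)^2 \geq skm$ on the relevant range $s \leq km$, which I would handle via the factorization $(2km - s)^2 - skm = (s - km)(s - 4km) \geq 0$. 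Combining these gives $(T - \bE[Y])^2/s \geq km/h^2$, and therefore $\Pr[Y \geq T] \leq \exp(-2km/h^2)$, as required.
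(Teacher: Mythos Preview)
Your proof is correct and takes essentially the same approach as the paper: reduce to a one-sided hypergeometric tail for $Y=|S\cap H_1|$ and apply Hoeffding's inequality for sampling without replacement. The paper is a touch more direct in that it never computes your exact threshold $T$ but simply observes that $Y<2km/h$ forces $g(Y)=Y$ (so the two systems agree), bounds $\Pr[Y\ge 2km/h]$ with deviation $km/h$, and uses $|S|\le km$; since you ultimately drop the $k(s-m)/(k-1)$ term and land on the same deviation, your factorization $(s-km)(s-4km)\ge 0$ is extra work that the paper sidesteps via the simpler estimate $2km-s\ge km$.
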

\begin{proof}
	Observe that as long as we consider a single set, applying the permutation to the ground set is equivalent to replacing $S$ with a random set of the same size. 
	So, we are interested in the independence in $\cM$ and $\cM'$ of a random set of size $|S|$. 
	If $|S| > km$, then the set is never independent in either $\cM$ or $\cM'$, and if $|S| \leq m$, then the set is always independent in both $\cM$ and $\cM'$. 
	Thus, the interesting case is when $m < |S| \leq km$.
	
	Let $X = |S \cap H_1|$. 
	Notice that $X$ has a hypergeometric distribution, and $\mathbb{E}[X] = |S|/h$. 
	Thus, using bounds given in \citep{Skala13} (these bounds are based on results of \citep{Chvatal79,Hoeffding1963}), we get
	\[
	\Pr\left[X \geq \frac{2km}{h}\right]
	\leq
	\Pr\left[X \geq \bE[|X|] + \frac{km}{h}\right]
	\leq
	e^{-2\left(\frac{km/h}{|S|}\right)^2 \cdot |S|}
	=
	e^{-\frac{2k^2m^2}{h^2 \cdot |S|}}
	\leq
	e^{-\frac{2km}{h^2}}
	\enspace.
	\]
	The lemma now follows by observing that $X \leq 2km/h$ implies that $S$ is a dependent set under both $\cM$ and $\cM'$.
\end{proof}\vspace{0pt}

We now think of $m$ as going to infinity and of $h$ and $k$ as constants. 
Notice that given this point of view the size of the ground set $\cN$ is $nkh = O(m)$. 
Thus, the last lemma implies, via the union bound, that with high probability an algorithm making a polynomial number (in the size of the ground set) of independence checks will not be able to distinguishes between the cases in which it gets as input $\cM$ or $\cM'$.

Using the above results, we are now ready to prove Theorem~\ref{thm:linear_hardness}.

\begin{proof}[Proof of Theorem~\ref{thm:linear_hardness}]
	Consider an algorithm that needs to maximize the cardinality function over the $k$-extendible system $\cM$ after the random permutation was applied, and let $T$ be its output set. 
	Notice that $T$ must be independent in $\cM$, and thus, its size is always upper bounded by $mk$. 
	Moreover, since the algorithm fails, with high probability, to distinguish between $\cM$ and $\cM'$, $T$ is with high probability also independent in $\cM'$, and thus, has a size of at most $m$. Therefore, the expected size of $T$ cannot be larger than $m + o(1)$ (formally, this $o(1)$ terms represents an expression that goes to $0$ as $m$ increases for any given choice of $k$ and $h$).
	
	On the other hand, Lemma~\ref{ob:large_independent_set} shows that $\cM$ contains an independent set of size at least $mk(1 - 2k/h)$. 
	Thus, the approximation ratio of the algorithm is no better than
	\[
	\frac{mk(1 - 2k/h)}{m + o(1)}
	\geq
	\frac{mk(1 - 2k/h)}{m} - \frac{k}{m} o(1)
	=
	k - 2k^2/h - o(1)
	\enspace.
	\]
	Choosing a large enough $h$ (compared to $k$), we can make this approximation ratio larger than $k - \varepsilon$ for any constant $\varepsilon > 0$.
\end{proof} \vspace{0in}

\subsection{Hardness for submodular functions over \texorpdfstring{$k$}{k}-extendible systems} \label{sec:submodular-hardness}

In this section, we prove Theorem~\ref{thm:submodular_hardness}, which is a stronger inapproximability result for maximizing monotone submodular functions over $k$-extendible systems.
As in the previous section, we construct two problem instances which have different optimal values but{\textemdash}after a permutation of the ground set{\textemdash}are indistinguishable using only polynomially many oracle queries.
We use again the two $k$-extendible systems $\cM$ and $\cM'$, parametrized by $h$, $m$ and $k$, defined in the last section.
The additional technical construction of this section is two submodular functions $f$ and $g$ which take very different optimal values over the two extendible systems.
To construct these two submodular functions, we use the symmetry gap technique developed by \cite{V13}.

The symmetry gap technique is a general method for constructing hard instances for submodular optimization; however, we present a relatively self-contained version of this technique, appealing only to the main technical construction from \cite{V13}.
The rest of this paragraph is a high level roadmap for the construction of the submodular functions $f$ and $g$ based on the symmetry gap technique.
Recall that the common ground set $\cN$ of our independence systems is the union of $h$ disjoint sets of elements, i.e., $\cN = \cup_{i = 1}^h H_i$.
We begin the construction of $f$ and $g$ by defining an initial function $q\colon 2^{[h]} \rightarrow \nnR$, which assigns a non-negative value to each subset $X$ of $[h]$. 
A key aspect of this initial function is that it has a desired symmetry property. 
In our context, that means that the value of $q$ depends only on the cardinality of its input, i.e., $|X|$.
Next, we consider the multilinear extension of the initial function, which is denoted by $Q\colon [0,1]^h \rightarrow \nnR$ and is defined as
\[
Q(x) = \sum_{X \subseteq [h]} q(X) \prod_{i \in X} x_i \prod_{i \notin X} (1 - x_i) \enspace.
\]
Note that $Q$ is a function on vectors of the hypercube $[0, 1]^h$.
The next step is to apply a well-chosen perturbation to this multilinear extension $Q$ to obtain a function  $F\colon [0,1]^h \rightarrow \nnR$ and then symmetrize $F$ to obtain a second function $G\colon [0,1]^h \rightarrow \nnR$.
At this point, we construct the desired set functions $f$ and $g$ on the original ground set $\cN$ by mapping sets $S \in \cN$ to vectors $x \in [0,1]^h$.
This mapping depends on the number of elements of $S$ that are in each of the partitions $H_1 \dots H_h$ of the ground set.
Specifically, we define the mapping as
\[
x(S) = \left( \frac{| S \cap H_1 |}{|H_1|}, \frac{|S \cap H_2|}{|H_2|} , \dots \frac{|S \cap H_h|}{|H_h|}  \right) \enspace,
\]
and the set functions $f$ and $g$ are defined as $f(S) = F(x(S))$ and $g(S) = G(x(S))$, respectively.
A diagram that summarizes this construction appears in Figure~\ref{fig:hardness-construction}.
The key technical lemma of \cite{V13} shows that, by picking an appropriate perturbation and symmetrization method, we can guarantee that the set functions $f$ and $g$ are both monotone submodular, $g$ depends only on the cardinality of its input, and yet the two functions have similar values on most inputs.

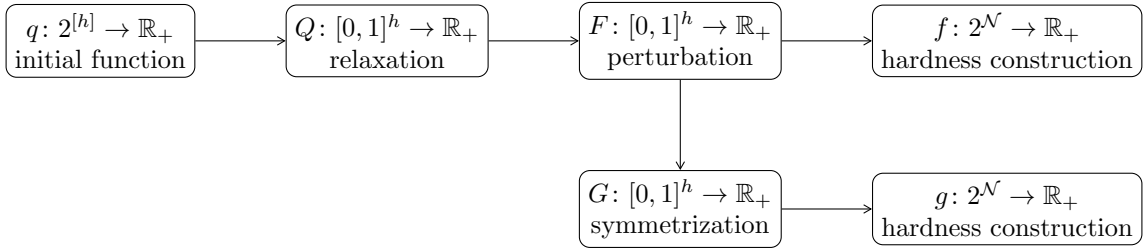
\begin{figure} 
	\centering
	\begin{tikzpicture}[
	node distance = 12mm and 12mm, 
	box/.style = {draw, rounded corners, 
		minimum width=22mm, minimum height=5mm, align=center},
	> = {Straight Barb[angle=60:2pt 3]},
	bend angle = 15,
	auto = right,
	]
	\node (n1)  [box] {$q\colon 2^{[h]} \rightarrow \nnR$\\ initial function};
	\node (n2)  [box, right=of n1]    {$Q\colon [0,1]^h \rightarrow \nnR$ \\ relaxation };
	\node (n3) [box, right=of n2] {$F\colon [0,1]^h \rightarrow \nnR$ \\ perturbation };
	\node (n4) [box, below=of n3] {$G\colon [0,1]^h \rightarrow \nnR$ \\ symmetrization};
	\node (n5) [box, right=of n3] {$f\colon 2^\gnd \rightarrow \nnR$ \\ hardness construction};
	\node (n6) [box, right=of n4] {$g\colon 2^\gnd \rightarrow \nnR$\\ hardness construction};
	\draw[->] (n1) to  (n2);
	\draw[->] (n2) to (n3);
	\draw[->] (n3) to (n4);
	\draw[->] (n3) to (n5);
	\draw[->] (n4) to (n6);
	\end{tikzpicture}
	\caption{Construction of the objective functions $f$ and $g$. \label{fig:hardness-construction}}
\end{figure}

Now with the roadmap complete, we begin our construction of the functions $f$ and $g$.
We define the initial function $q\colon 2^{[h]} \rightarrow \nnR$ as
\[
q(X)
=
\min\{|X|, 1\}
\enspace.
\]

Let $Q: [0, 1]^{h} \to \nnR$ be the mutlilinear extension of $q$. One can observe that $Q(x) = 1 - \prod_{i \in [h]} (1 - x_i)$ for every vector $x \in [0,1]^h$. Furthermore, for every such vector $x$, we define its \emph{symmetrization} as $\bar{x} = (\| x \|_1 / h) \cdot \characteristic_{[h]}$ ($\characteristic_{[h]}$ represents here the all ones vector in $[0, 1]^h$). 
The next lemma is a direct application of Lemma~3.2 in \citep{V13}, but simplified for our setting.
It follows from the symmetry in the initial function $q$, namely that it is invariant under any permutation of the elements of $[h]$.

\begin{lemma}\label{lemma:perturbation-symmetrization}
For every $\eps' > 0$ there exists $\delta_h > 0$ and two functions $F, G : [0,1]^{h} \rightarrow \nnR$ with the following properties.
\begin{itemize}
	\item For all $x \in [0, 1]^{h}$, $|F(x) - Q(x)| \leq \eps'$.
	\item For all $x \in [0, 1]^{h}$: $G(x) = F(\bar{x})$.
	\item Whenever $\norm{x - \bar{x}}^2_2 \leq \delta_h$, $F(x) = G(x)$.
	\item The first partial derivatives of $F$ and $G$ are absolutely continuous,  $\frac{\partial F}{\partial x_u}, \frac{\partial G}{\partial x_u} \geq 0$ everywhere for every $u \in [h]$, and $\frac{\partial^2 F}{\partial x_u \partial x_v}, \frac{\partial^2 G}{\partial x_u \partial x_v} \leq 0$ 
	almost everywhere for every pair $u,v \in [h]$.
\end{itemize}
\end{lemma}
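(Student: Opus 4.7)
The statement is essentially a specialization of the symmetry gap construction of Vondrák (Lemma~3.2 of \citep{V13}) to the particular symmetric function $q(X) = \min\{|X|, 1\}$ and the group action of $S_h$ on coordinates of $[0,1]^h$. My plan is therefore to (i) verify that $q$ satisfies the hypotheses of that lemma, (ii) invoke the lemma to obtain the required functions $F, G$, and (iii) check that the output of the lemma translates into the four bullet points stated here.

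\textbf{Verifying the hypotheses.} First, $q$ is non-negative, monotone, and submodular: it is the rank function of the $1$-uniform matroid on $[h]$, so submodularity reduces to the concavity of $t \mapsto \min\{t, 1\}$ along chains. Its multilinear extension is $Q(x) = 1 - \prod_{i \in [h]} (1 - x_i)$, as stated. Second, $q$ is invariant under the action of the full symmetric group $\mathcal{G} = S_h$ on the ground set $[h]$, since $q$ depends only on $|X|$. The set of vectors fixed by every $\sigma \in S_h$ is exactly the diagonal $\{t \cdot \characteristic_{[h]} : t \in [0,1]\}$, and the orthogonal projection of $x$ onto this diagonal is $\bar{x} = (\|x\|_1/h) \cdot \characteristic_{[h]}$. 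The group-averaged symmetrization $\bar{F}(x) = \mathbb{E}_{\sigma \in S_h}[F(\sigma x)]$ therefore only depends on the multiset of coordinates of $x$; in particular, for any symmetric $F$ satisfying $F(\bar{x}) = \bar{F}(\bar{x})$, we have $\bar{F}(x) = F(\bar{x})$, recovering the stated identity $G(x) = F(\bar{x})$.

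\textbf{Applying the construction and checking output.} With $\mathcal{G} = S_h$ and target accuracy $\eps'$, Vondrák's construction produces a smoothed function $F$ by interpolating between $Q$ and the symmetrized $\bar{Q}$ via a smooth bump $\phi$ supported near the fixed-point set: roughly, $F = Q$ outside a small neighborhood of the diagonal and $F = \bar{Q}$ on an even smaller neighborhood, with a careful interpolation in between. One then defines $G = \bar{F}$. From this construction one directly reads off that $|F - Q| \leq \eps'$ (by making the neighborhood small enough), that $G(x) = F(\bar{x})$ (since $F$ is already symmetric on the support of $\phi$ near the diagonal, and symmetrization of the rest averages out), and that there exists $\delta_h > 0$ such that $F = G$ on $\{x : \|x - \bar{x}\|_2^2 \leq \delta_h\}$. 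The nontrivial content of Vondrák's lemma, and indeed the main obstacle one would face in doing this from scratch, is maintaining the monotonicity ($\partial_u F \geq 0$) and the negative-mixed-partials property ($\partial_{uv} F \leq 0$ a.e.)\ after the perturbation; Vondrák handles this by choosing the interpolating bump $\phi$ from a carefully designed family ensuring that the convex combination of a monotone submodular multilinear function with its symmetric counterpart (which is itself monotone and coordinate-wise concave along the diagonal direction) remains monotone and continuously submodular. Since this is exactly the conclusion of Vondrák's Lemma~3.2, we obtain all four bullet points, finishing the proof.
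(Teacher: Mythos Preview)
Your approach is essentially identical to the paper's: the paper does not prove this lemma at all but simply states that it ``is a direct application of Lemma~3.2 in \citep{V13}, but simplified for our setting,'' noting only that the required symmetry hypothesis holds because $q$ depends only on $|X|$. Your plan of verifying the hypotheses and invoking Vondr\'ak's lemma is exactly this.

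One small correction to your exposition: you write that the group-averaged $\bar{F}(x) = \mathbb{E}_{\sigma}[F(\sigma x)]$ equals $F(\bar{x})$, but these are different objects in general (e.g.\ $F(x_1,x_2)=x_1x_2$ gives $\bar{F}(x)=x_1x_2$ while $F(\bar{x})=((x_1+x_2)/2)^2$). The paper simply \emph{defines} $G(x)=F(\bar{x})$, which directly yields the second bullet and inherits the derivative bounds via the chain rule; no identification with the group average is needed. This does not affect your overall argument, since you are ultimately deferring to Vondr\'ak's construction anyway.
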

The first property formally states the sense in which $F$ is a perturbation of $Q$; namely, that their values differ only by an $\eps'$ amount for all vectors in the unit cube.
The second property formally states the sense in which the function $G$ is a symmetrization of $F$; namely, that evaluation of $G$ at $x$ is obtained by evaluating $F$ at the symmetrization $\bar{x}$.
The third property states that the two functions are equal on input vectors which are nearly symmetrized.
The final property is used below to show the monotonicity and submodularity of the set functions $f$ and $g$ (which are more formally constructed below).

Recall that the mapping from sets $S \subseteq \gnd$ to vectors $x \in [0,1]^h$ is defined using the partition of $\gnd$ into the sets $H_1, H_2, \dots H_h$ as
\[
x(S) 
= \left( \frac{| S \cap H_1 |}{|H_1|}, \frac{|S \cap H_2|}{|H_2|} , \dots \frac{|S \cap H_h|}{|H_h|}  \right)
\enspace.
\]
The set functions are defined as $f(S) = F(x(S))$ and $g(S) = G(x(S))$. The next lemma uses the final property of Lemma~\ref{lemma:perturbation-symmetrization} to argue that $f$ and $g$ are both monotone and submodular. Its proof is essentially identical to the proof of Lemma~3.1 of \citep{V13}. Nevertheless, we include it here for completeness.
\begin{lemma}
The set functions $f$ and $g$ defined as above are monotone and submodular.
\end{lemma}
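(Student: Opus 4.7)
The plan is to show that both properties of $f$ reduce, through the mapping $S \mapsto x(S)$, to analogous properties of the continuous function $F$, and that these in turn follow from the sign conditions on its partial derivatives given by Lemma~\ref{lemma:perturbation-symmetrization}. The same argument applied to $G$ will handle $g$. The key observation enabling the reduction is that if $A \subseteq B \subseteq \cN$ and $u \notin B$ with $u \in H_i$, then $x(A) \leq x(B)$ coordinate-wise, and $x(A + u) = x(A) + |H_i|^{-1} e_i$ (and similarly for $B$). So differences in $f$-values correspond to changes in $F$ along axis-parallel segments whose basepoints are comparable in the coordinate-wise order.

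For monotonicity, given $A \subseteq B$, I would write $x(B) - x(A)$ as a sum of coordinate-wise non-negative contributions and use the fact that $F$ has absolutely continuous first partial derivatives with $\partial F / \partial x_u \geq 0$ everywhere to conclude $F(x(A)) \leq F(x(B))$ via the fundamental theorem of calculus along a coordinate-wise path from $x(A)$ to $x(B)$.

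For submodularity, fix $A \subseteq B$ and $u \notin B$, and let $i$ be the index with $u \in H_i$. Then
\[
f(u \mid A) = \int_0^{1/|H_i|} \frac{\partial F}{\partial x_i}\bigl(x(A) + t e_i\bigr)\, dt, \qquad f(u \mid B) = \int_0^{1/|H_i|} \frac{\partial F}{\partial x_i}\bigl(x(B) + t e_i\bigr)\, dt,
\]
which is legitimate because $\partial F / \partial x_i$ is absolutely continuous. It therefore suffices to establish that $\partial F / \partial x_i(y) \geq \partial F / \partial x_i(z)$ whenever $y \leq z$ coordinate-wise. To do this, I would connect $y$ and $z$ by a coordinate-by-coordinate path $\gamma$ and write
\[
\frac{\partial F}{\partial x_i}(z) - \frac{\partial F}{\partial x_i}(y) = \sum_{j=1}^{h} \int \frac{\partial^2 F}{\partial x_j \partial x_i}\, ds,
\]
where each inner integral is over the segment on which only coordinate $j$ varies (from $y_j$ up to $z_j$). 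Since $\partial^2 F / \partial x_j \partial x_i \leq 0$ almost everywhere, every inner integral is non-positive, giving the required inequality.

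The only real subtlety is the ``almost everywhere'' qualifier on the mixed partials: a pointwise-everywhere sign statement would be cleanest, but the absolute continuity of the first partials supplied by Lemma~\ref{lemma:perturbation-symmetrization} is exactly what lets us still apply the fundamental theorem of calculus along the axis-parallel path, since any exceptional set of measure zero does not affect the integral. An identical argument with $G$ in place of $F$ then yields that $g$ is monotone and submodular, completing the lemma.
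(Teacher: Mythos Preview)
Your proposal is correct and follows essentially the same approach as the paper: both reduce monotonicity and submodularity of $f$ (and $g$) to the sign conditions on the first and second partial derivatives of $F$ (and $G$) supplied by Lemma~\ref{lemma:perturbation-symmetrization}, and both use the fundamental theorem of calculus to pass from those sign conditions to the required inequalities on $F$-values. The only cosmetic difference is that you work along axis-parallel (coordinate-by-coordinate) paths, whereas the paper parametrizes straight-line segments between the relevant points and integrates along those; either choice works, and your axis-parallel version arguably meshes slightly more cleanly with the ``almost everywhere'' qualifier on the mixed partials, since absolute continuity of $\partial F/\partial x_i$ in each coordinate direction is exactly what is needed to apply the fundamental theorem of calculus along each leg of your path.
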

\begin{proof}
We only show that $f$ is monotone submodular, as the proof that $g$ is monotone submodular is identical.
We begin by showing the monotonicity of $f$.
To this end, the main technical condition on $F$ that we need is:
\begin{equation}\label{eq:monotone_F}
F(w) \leq F(y) \quad \text{for all } w, y \in [0,1]^h \text{ satisfying } w \preceq y \enspace,
\end{equation}
where $\preceq$ denotes the component-wise partial ordering. To see that this condition holds, 
consider the line segment between $w$ and $y$, given by $v(t) = (1 - t) \cdot w + t \cdot y$, where $t \in [0,1]$ is a parametrization of the line segment.
Since $w \preceq y$, we have that the coordinates of $y - w$ are all non-negative.
Additionally, by Lemma~\ref{lemma:perturbation-symmetrization}, we have that $F$ is differentiable everywhere and $\frac{\partial F}{\partial x_u} \geq 0$ everywhere for every $u \in [h]$.
This means that each coordinate of the gradient $\nabla F(v(t))$ is non-negative for each $t \in [0,1]$.
Thus, the following inner product is non-negative:
\[
\langle \nabla F(v(t)) , y - w \rangle \geq 0 \quad \text{for all } t \in [0,1] \enspace.
\]
Using this and the fundamental theorem of calculus, we obtain
\[
F(y) - F(w) = \int_{t=0}^{1} \langle \nabla F(v(t)) , y - w \rangle dt \geq 0 \enspace.
\]
Monotonicity of $f$ now follows by observing that for sets $A \subseteq B$, the corresponding vectors satisfy $x(A) \preceq x(B)$, and hence, using \eqref{eq:monotone_F}, we have
\[
f(A) = F(x(A)) \leq F(x(B)) = f(B) \enspace.
\]

Next, we show that $f$ is submodular. 
To this end, the main technical condition that we need on $F$ is that for all $w, y \in [0,1]^h$ and $z \succeq 0$ satisfying $w \preceq y$ and $w + z, y + z \in [0,1]^h$, 
\begin{equation}\label{eq:F_submodular}
F(w + z) - F(w) \geq F(y + z) - F(y) \enspace.
\end{equation}
To see that this technical condition holds, consider the line segment between $w$ and $w + z$, which is given by $w(t) = w  + t \cdot z$, where $t \in [0,1]$ is a parameterization of this line segment.
Similarly, $y(t) = y + t \cdot z$ is the line segment between $y$ and $y + z$.
Since $w + z, y + z \in [0,1]^h$, all points on these two line segments are also in the unit cube and so the function $F$, its gradient $\nabla F$ and its Hessian $\nabla^2 F$ are all well-defined on these points.
Additionally, note that for each $t$, $w(t) \preceq y(t)$ so that the vector $y(t) - w(t)$ has non-negative coordinates.
Additionally, Lemma~\ref{lemma:perturbation-symmetrization} states that $\frac{\partial^2 F}{\partial x_u \partial x_v} \leq 0$ almost everywhere for every pair $u,v \in [h]$.
This means that the Hessian matrix $\nabla^2 F$ has non-positive entries, and thus, $$(y(t) - w(t))^T \left[ \nabla^2 F(v) \right] (y(t) - w(t)) \leq 0$$ for almost all $t \in [0,1]$.
Define now $$v_t(s) = (1-s) \cdot w(t) + s \cdot y(t)$$ to be the line segment between $w(t)$ and $y(t)$, and for notation convenience, define the function $H(v) = \langle \nabla F(v), z \rangle$. 
Using the fundamental theorem of calculus along with the chain rule and the properties of the Hessian matrix $\nabla^2 F$ above, the above observations yield
\begin{align*}
\langle \nabla F(y(t)), z \rangle  - \langle \nabla F(w(t)), z \rangle 
&= H(y(t)) - H(w(t)) \\
&= \int_{s=0}^{1} \langle \nabla H( v_t(s) ), y(t) - w(t) \rangle ds 
	&\text{(f.t. of calculus)}\\
&= \int_{s=0}^{1} (y(t)-w(t))^T \left[ \nabla^2 F ( v_t(s) ) \right] (y(t) - w(t))  ds
	&\text{(chain rule)} \\
&\leq 0
\enspace,
\end{align*}
so that $\langle \nabla F(w(t)) , z \rangle \geq \langle \nabla F(y(t)), z \rangle $ for all $t \in [0,1]$.
To prove \eqref{eq:F_submodular}, it only remains to combine the last result with the fundamental theorem of calculus and obtain
\begin{align*}
F(w + z) - F(w) 
= \int_{t=0}^{1} \langle \nabla F(w(t)), t \cdot z \rangle dt  
\geq
 \int_{t=0}^{1} \langle \nabla F(y(t)), t \cdot z \rangle dt 
= F(y + z) - F(y)
\enspace.
\end{align*}
Now we can use Inequality~\eqref{eq:F_submodular} to prove submodularity of $f$.
First, observe that for sets $A \subseteq B$, the corresponding vectors $x(A)$ and $x(B)$ satisfy $x(A) \preceq x(B)$.
Moreover, for any element $e \notin B$, $x(A + e) = x(A) + x(e)$ and $x(B + e) = x(B) = x(e)$.
Thus, setting $w = x(A)$, $y= x(B)$ and $z = x(e)$, we have that
\begin{align*}
f(A + e) - f(A) 
&= F(x(A) + x(e)) - F(x(A)) \\
&\geq F(x(B) + x(e)) - F(x(B)) \\
&= f(B + e) - f(B) \enspace,
\end{align*}
which establishes the submodularity of $f$.
\end{proof}

To define our problem instances, we associate the monotone submodular objective $f$ with the $k$-extendible system $\cM$ and the monotone submodular objective $g$ with the $k$-extendible systems $\cM'$.
Let us now bound the maximum values of the resulting submodular optimization problems.

\begin{lemma} \label{le:submodular_good_opt}
	The maximum value of an independent set in $\cM$ with respect to the objective $f$ is at least $1 - 2k/h - \varepsilon'$, and no more than $1 + \varepsilon'$.
\end{lemma}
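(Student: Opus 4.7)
The plan is to derive both bounds directly from the relationship between $f$ and the multilinear extension $Q$, exploiting the particularly simple closed form $Q(x) = 1 - \prod_{i \in [h]}(1 - x_i)$ together with Lemma~\ref{lemma:perturbation-symmetrization}'s guarantee that $|F(x) - Q(x)| \leq \varepsilon'$ on the entire cube.

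For the \emph{upper} bound, I would simply observe that since $q$ takes values in $[0,1]$, its multilinear extension satisfies $Q(x) \in [0,1]$ for every $x \in [0,1]^h$; equivalently, this is immediate from the product formula since each factor $1 - x_i$ lies in $[0,1]$. Consequently, for any independent set $S \in \cI$,
\[
f(S) = F(x(S)) \leq Q(x(S)) + \varepsilon' \leq 1 + \varepsilon' \enspace.
\]

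For the \emph{lower} bound, I would invoke Observation~\ref{ob:large_independent_set}, which supplies an independent set $S \subseteq H_1$ of size $s \geq mk(1 - 2k/h)$. Because $S$ is contained in $H_1$, the mapped vector $x(S)$ is supported only on its first coordinate, namely $x(S) = (s/(km),\, 0, \dotsc, 0)$. Plugging this into the product formula gives
\[
Q(x(S)) = 1 - \Big(1 - \tfrac{s}{km}\Big)\prod_{i = 2}^h (1 - 0) = \frac{s}{km} \geq 1 - \frac{2k}{h} \enspace,
\]
and therefore $f(S) = F(x(S)) \geq Q(x(S)) - \varepsilon' \geq 1 - 2k/h - \varepsilon'$, as desired.

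There is no real obstacle here: both bounds reduce to trivial evaluations of $Q$ once one notices (a) that $Q$ takes values in $[0,1]$ and (b) that the large independent set furnished by Observation~\ref{ob:large_independent_set} lives inside a single partition $H_i$, which makes the product in $Q$ collapse. The only mild subtlety is being careful that the approximation guarantee $|F - Q| \leq \varepsilon'$ from Lemma~\ref{lemma:perturbation-symmetrization} is applied in the correct direction on each side of the inequality.
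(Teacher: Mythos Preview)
Your proposal is correct and follows essentially the same approach as the paper: both use Observation~\ref{ob:large_independent_set} to obtain a large independent set inside $H_1$, evaluate $Q$ at the resulting one-hot vector to get $s/(km)$, and apply the $|F - Q| \leq \varepsilon'$ guarantee from Lemma~\ref{lemma:perturbation-symmetrization} in each direction. Your exposition is slightly more explicit about why the product in $Q$ collapses, but the argument is the same.
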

\begin{proof}
	Observation~\ref{ob:large_independent_set} guarantees the existence of an independent set $S \subseteq H_1$ in $\cM$ of size $s \geq k(m - 2km/h)$. 
	Using the first property of Lemma~\ref{lemma:perturbation-symmetrization} and evaluating $Q$ at the corresponding vector $x(S)$, we have that the objective value associated with this set is
	\[
	f(S) = 
	F(x(S))
	\geq
	Q(x(S)) - \varepsilon'
	=
	\frac{s}{km} - \varepsilon'
	\geq
	\frac{k(m - 2km/h)}{km} - \varepsilon'\\
	=
	1 - 2k/h - \varepsilon'
	\enspace.
	\]
	This completes the proof of the first part of the lemma. To see that the second part also holds, we observe that $q$ (and therefore, also $Q$) never takes values larger than $1$; and thus, by the first property of Lemma~\ref{lemma:perturbation-symmetrization}, for every set $S' \subseteq \cN$,
	\[
		f(S')
		=
		F(x(S'))
		\leq
		Q(x(S')) + \varepsilon'
		\leq
		1 + \varepsilon'
		\enspace.
		\qedhere
	\]
\end{proof}

\begin{lemma} \label{le:submodular_bad_opt}
	The maximum value of a set in $\cM'$ with respect to the objective $g$ is at most $1 - e^{-1/k} + h^{-1} + \varepsilon'$.
\end{lemma}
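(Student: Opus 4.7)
The plan is to reduce the problem to estimating the scalar expression $1 - (1 - 1/(hk))^h$, exploiting the fact that $G$ is a symmetrization of $F$ and that every feasible set in $\cM'$ has cardinality at most $m$.

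First, I would fix an arbitrary set $S \subseteq \cN$ with $|S| \leq m$ (the only relevant case, since every independent set in $\cM'$ satisfies this). Since each block $H_i$ has size $km$, the associated vector $x(S) \in [0,1]^h$ satisfies $\|x(S)\|_1 = |S|/(km) \leq 1/k$, so the symmetrization is $\bar{x(S)} = t \cdot \characteristic_{[h]}$ for some scalar $t = |S|/(hkm) \in [0, 1/(hk)]$. Using the second property of Lemma~\ref{lemma:perturbation-symmetrization} and then the first property, I get
\[
g(S) = G(x(S)) = F(\bar{x(S)}) \leq Q(\bar{x(S)}) + \varepsilon' = 1 - (1-t)^h + \varepsilon'.
\]
Since $(1-t)^h$ is decreasing in $t$ and $t \leq 1/(hk)$, this is at most $1 - (1 - 1/(hk))^h + \varepsilon'$.

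The remaining (and main) step is the scalar estimate $(1 - 1/(hk))^h \geq e^{-1/k} - 1/h$. My approach is to take logarithms and use the series expansion $\ln(1 - y) = -\sum_{j \geq 1} y^j/j$, which yields the standard bound $\ln(1 - y) \geq -y - y^2/(2(1-y))$ for $y \in [0, 1)$. Substituting $y = 1/(hk)$ and multiplying by $h$ gives (assuming $hk \geq 2$, which we may freely assume as $h$ is taken large)
\[
h \ln\!\left(1 - \tfrac{1}{hk}\right) \geq -\tfrac{1}{k} - \tfrac{1}{hk^2},
\]
so $(1 - 1/(hk))^h \geq e^{-1/k} \cdot e^{-1/(hk^2)}$. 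Applying $e^{-x} \geq 1 - x$ and using $e^{-1/k} \leq 1$, this is at least $e^{-1/k} - e^{-1/k}/(hk^2) \geq e^{-1/k} - 1/h$.

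Combining the two steps, for every feasible $S \in \cI'$,
\[
g(S) \leq 1 - \bigl(e^{-1/k} - 1/h\bigr) + \varepsilon' = 1 - e^{-1/k} + h^{-1} + \varepsilon',
\]
as required. The main (minor) obstacle is the scalar comparison in the second step; everything else is a direct unwinding of the definitions together with the properties of $F$ and $G$ from Lemma~\ref{lemma:perturbation-symmetrization}.
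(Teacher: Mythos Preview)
Your proof is correct and follows essentially the same route as the paper: reduce to evaluating $Q$ at the symmetrized point $(kh)^{-1}\cdot\characteristic_{[h]}$ via the first two properties of Lemma~\ref{lemma:perturbation-symmetrization}, then use the scalar bound $(1-1/(hk))^h \geq e^{-1/k}(1-1/(hk^2)) \geq e^{-1/k}-1/h$. The only cosmetic differences are that the paper invokes monotonicity of $g$ to restrict attention to $|S|=m$ (whereas you treat all $|S|\leq m$ directly via $t\leq 1/(hk)$), and that you spell out the logarithmic-series justification for the scalar inequality while the paper asserts it in one line.
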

\begin{proof}
	The objective $g$ is monotone, and thus, the maximum value set in $\cM'$ must be of size $m$. Using the second and first properties of Lemma~\ref{lemma:perturbation-symmetrization} and evaluating $Q$, we have that for every set $S$ of size $m$, we get that
	\begin{align*}
	g(S)
	={}&
	G(x(S))\\
	={} &
	F(\overline{x(S)})\\
	={}&
	F((kh)^{-1} \cdot \characteristic_{h})
	\leq
	Q((kh)^{-1} \cdot \characteristic_{h}) + \varepsilon'\\
	={} &
	1 - \left(1 - \frac{1}{kh}\right)^h + \varepsilon'\\
	\leq{}&
	1 - e^{-1/k}\left(1 - \frac{1}{k^2h}\right) + \varepsilon'\\
	\leq{}&
	1 - e^{-1/k} + h^{-1} + \varepsilon'
	\enspace.
	\qedhere
	\end{align*}
\end{proof}

As before, our plan is to show that after a random permutation is applied to the ground set it is difficult to distinguish between the problem instances $f$ with $\cM$ and $g$ with $\cM'$. 
This will give us an inapproximability result which is roughly equal to the ratio between the bounds given by the last two lemmata.

Observe that Lemma~\ref{le:independence_different_unlikely} holds regardless of the objective function. Thus, $\cM$ and $\cM'$ are still polynomially indistinguishable. 
Additionally, the next lemma shows that their associated objective functions are also polynomially indistinguishable.

\begin{lemma} \label{le:value_different_unlikely}
	Following the application of the random ground set permutation, the probability that any given set $S$ gets two different values under the two possible objective functions is at most $2h \cdot e^{-2mk\delta_h /h^2}$.
\end{lemma}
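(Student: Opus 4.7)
The plan is to reduce the question of when $f(S)$ and $g(S)$ can differ to a concentration inequality for the coordinates of the vector $x(S) \in [0,1]^h$ obtained from the random permutation. By Lemma~\ref{lemma:perturbation-symmetrization}, $F(x) = G(x)$ whenever $\|x - \bar{x}\|_2^2 \leq \delta_h$; so in order to have $f(S) = F(x(S)) \neq G(x(S)) = g(S)$, one must have $\|x(S) - \overline{x(S)}\|_2^2 > \delta_h$. Thus it suffices to bound the probability of this ``non-symmetric'' event.

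First I would compute the relevant quantities. Since every part $H_i$ has size $km$ and $|\gnd| = kmh$, the $i$-th coordinate of $x(S)$ is $|S \cap H_i|/(km)$, and every coordinate of $\overline{x(S)}$ equals $\|x(S)\|_1/h = |S|/(kmh)$. Therefore
\[
\|x(S) - \overline{x(S)}\|_2^2
= \sum_{i = 1}^h \left( \frac{|S \cap H_i|}{km} - \frac{|S|}{kmh} \right)^2
= \frac{1}{k^2m^2}\sum_{i = 1}^h \left( |S \cap H_i| - \frac{|S|}{h} \right)^2.
\]
If $\|x(S) - \overline{x(S)}\|_2^2 > \delta_h$, then some index $i$ must contribute more than $\delta_h/h$ to the sum on the left-hand side, which is to say $\big||S \cap H_i| - |S|/h\big| > km\sqrt{\delta_h/h}$.

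Next I would apply a hypergeometric tail bound coordinate-by-coordinate. After the random permutation, $|S \cap H_i|$ is hypergeometric with mean $|S|/h$ (same setup as in Lemma~\ref{le:independence_different_unlikely}), so the Chv\'{a}tal--Hoeffding bound from~\cite{Skala13} gives, for each $i$,
\[
\Pr\!\left[\big||S \cap H_i| - |S|/h\big| > km\sqrt{\delta_h/h}\right]
\leq 2 \exp\!\left(-\frac{2(km\sqrt{\delta_h/h})^2}{|S|}\right)
= 2\exp\!\left(-\frac{2k^2m^2 \delta_h}{h|S|}\right).
\]
Using $|S| \leq |\gnd| = kmh$ weakens this upper bound to $2\exp(-2km\delta_h/h^2)$. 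A union bound over the $h$ coordinates then yields
\[
\Pr[\,\|x(S) - \overline{x(S)}\|_2^2 > \delta_h\,]
\leq 2h \cdot e^{-2km\delta_h/h^2},
\]
which by the opening reduction also upper-bounds the probability that $f(S) \neq g(S)$, as claimed.

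There is no real obstacle: the symmetry-gap construction of Lemma~\ref{lemma:perturbation-symmetrization} is tailor-made so that the only thing left to argue is concentration of $x(S)$ around $\overline{x(S)}$, and the same hypergeometric tail bound already used in Lemma~\ref{le:independence_different_unlikely} does the job. The only bookkeeping points to watch are the correct scaling between $\|x(S) - \overline{x(S)}\|_2^2$ and the per-coordinate deviations of $|S \cap H_i|$, and the use of $|S| \leq kmh$ to obtain the clean exponent $2km\delta_h/h^2$ stated in the lemma.
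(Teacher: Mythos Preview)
Your proposal is correct and follows essentially the same approach as the paper's own proof: reduce $f(S)\neq g(S)$ to $\|x(S)-\overline{x(S)}\|_2^2>\delta_h$ via Lemma~\ref{lemma:perturbation-symmetrization}, translate this into a per-coordinate deviation $\big||S\cap H_i|-|S|/h\big|>km\sqrt{\delta_h/h}$, apply the hypergeometric tail bound of~\cite{Skala13} with $|S|\leq kmh$, and union bound over the $h$ coordinates. The scaling computation and the resulting exponent $2km\delta_h/h^2$ match the paper exactly.
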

\begin{proof}
	We begin by showing that $f(S) = g(S)$ for all sets $S$ which are made up of roughly the same number of elements from each of the partitions $H_1, H_2, \dots H_h$.
	More precisely, define $X_i = |S \cap H_i(k, m)|$. 
	We claim that if a set $S$ satisfies $\left|X_i - \frac{|S|}{h}\right| < mk \cdot \sqrt{\frac{\delta_h}{h}}$ for every $1 \leq i \leq h$, then $f(S) = g(S)$.
	Note that under this condition, the norm of the difference between $x(S)$ and its symmetrization $\overline{x(S)}$ is at most
	\[
	\norm{y(S) - \overline{y(S)}}_2^2
	=
	\sum_{i = 1}^h (y_i(S) - \overline{y_i(S)})^2
	<
	\sum_{i = 1}^h \left(\sqrt{\frac{\delta_h}{h}}\right)^2
	=
	\sum_{i = 1}^h \frac{\delta_h}{h}
	=
	\delta_h
	\enspace,
	\]
	and thus by the third property of Lemma~\ref{lemma:perturbation-symmetrization}, we have that $F(x(S)) = G(x(S))$, which implies that $f(S) = g(S)$ by construction of these set functions.
	
	We now show that following the application of a random permutation to the ground set, any given set $S$ satisfies $\left|X_i - \frac{|S|}{h}\right| < mk \cdot \sqrt{\frac{\delta_h}{h}}$ for every $1 \leq i \leq h$ with high probability, and thus, $f(S) = g(S)$ with high probability.
	Recall that, as long as we consider a single set $S$, applying the permutation to the ground set is equivalent to replacing $S$ with a random set of the same size. 
	Hence, we are interested in the value under the two objective functions of a random set of size $|S|$. 
	Since $X_i$ has the a hypergeometric distribution, the bound of \citep{Skala13} gives us
	\begin{align*}
	\Pr\left[X_i \geq \frac{|S|}{h} + mk \cdot \sqrt{\frac{\delta_h}{h}} \right]
	={} &
	\Pr\left[X_i \geq \mathbb{E}[X_i] + mk \cdot \sqrt{\frac{\delta_h}{h}} \right]\\
	\leq{} &
	e^{-2 \cdot \left(\frac{mk \cdot \sqrt{\delta_h/h}}{|S|}\right)^2 \cdot |S|}\\
	={}&
	e^{-\frac{2\delta_h}{h} \cdot \frac{m^2k^2}{|S|}}\\
	\leq{}&
	e^{-2mk\delta_h/h^2}
	\enspace.
	\end{align*}
	Similarly, we also get
	\[
	\Pr\left[X_i \leq \frac{|S|}{h} - mk \cdot \sqrt{\frac{\delta_h}{h}} \right]
	\leq
	e^{-2mk\delta_h /h^2}
	\enspace.
	\]
	Combining both inequalities using the union bound now yields
	\[
	\Pr\left[\left|X_i - \frac{|S|}{h}\right| \geq mk \cdot \sqrt{\frac{\delta_h}{h}} \right]
	\leq
	2 e^{-2mk\delta_h /h^2}
	\enspace.
	\]
	Using the union bound again, the probability that $\left|X_i - \frac{|S|}{h}\right| \geq mk \cdot \sqrt{\frac{\delta_h}{h}}$ for any $1 \leq i \leq h$ is at most $2h \cdot e^{-2mk\delta_h /h^2}$. 
	It follows now from the first part of the proof that $f(S) \neq g(S)$ with probability at most $2h \cdot e^{-2mk\delta_h /h^2}$.	
%	Thus, to prove the lemma it only remains to show that the value of the two objective functions for $S$ are equal when $\left|X_i - \frac{|S|}{h}\right| < mk \cdot \sqrt{\frac{\delta_h}{h}}$ for every $1 \leq i \leq h$.
%	
%	Notice that $\overline{y(S)}$ is a vector in which all the coordinates are equal to $|S|/ (mkh)$. Thus, the inequality $\left|X_i - \frac{|S|}{h}\right| < mk \cdot \sqrt{\frac{\delta_h}{h}}$ is equivalent to $[y_i(S) - \overline{y_i(S)}] < \sqrt{\frac{\delta_h}{h}}$. Hence,
%	\[
%	\norm{y(S) - \overline{y(S)}}_2^2
%	=
%	\sum_{i = 1}^h (y_i(S) - \overline{y_i(S)})^2
%	<
%	\sum_{i = 1}^h \left(\sqrt{\frac{\delta_h}{h}}\right)^2
%	=
%	\sum_{i = 1}^h \frac{\delta_h}{h}
%	=
%	\delta_h
%	\enspace,
%	\]
%	which implies the lemma by the properties of $\hat{F}_h$ and $\hat{G}_h$.
\end{proof}\vspace{0pt}

Consider a polynomial time deterministic algorithm that gets either $\cM$ with its corresponding objective $f$ or $\cM'$ with its corresponding objective $g$ after a random permutation was applied to the ground set. 
Consider first the case that the algorithm gets $\cM'$ and its corresponding objective $g$. 
In this case, the algorithm checks the independence and value of a polynomial collection of sets.
We may assume, without loss of generality, that the algorithm checks both the value and independence oracles for every set that it checks. 
As before, one can observe that the sets which are queried do not depend on the permutation because the independence of a set in $\cM'$ and its value with respect to $g$ depend only on the set's size, which guarantees that the algorithm takes the same execution path given every permutation. 
If the same algorithm now gets $\cM$ instead, it will start checking the independence and values of the same sets until it will either get a different answer for one of the oracle queries (different than what is expected for $\cM'$) or it will finish all the queries. 
Note that in the later case the algorithm must return the same answer that it would have returned had it been given $\cM'$.

By the union bound, Lemmata~\ref{le:independence_different_unlikely} and~\ref{le:value_different_unlikely} imply that the probability that any of the sets whose value or independence is checked by the algorithm will result in a different answer for the two inputs decreases exponentially in $m$, and thus, with high probability the algorithm fails to distinguish between the inputs, and returns the same output for both. Moreover, note that by Yao's principal this observation extends also to polynomial time randomized algorithms. 

Using these ideas, we are now ready to prove Theorem~\ref{thm:submodular_hardness}.

\begin{proof}[Proof of Theorem~\ref{thm:submodular_hardness}]
	Consider an algorithm that seeks to maximize $f(S)$ over the $k$-extendible system $\cM$ after the random permutation was applied, and let $T$ be its output set. 
%	Notice that $f(T) = F(x(T)) \leq F(\characteristic_{h}) \leq Q(\characteristic_{h}) + \varepsilon' = 1 + \varepsilon'$. 
	Moreover, the algorithm fails, with high probability, to distinguish between $\cM$ and $\cM'$. 
	Thus, with high probability $T$ is independent in $\cM'$ and has the same value under both objective functions $f$ and $g$ which implies by Lemma~\ref{le:submodular_bad_opt} that $$f(S) = g(S) \leq 1 - e^{-1/k} + h^{-1} + \varepsilon'.$$ Since Lemma~\ref{le:submodular_good_opt} shows that even in the rare case in which the algorithm does mamange to distinguish between the functions, still $f(S) \leq 1 + \eps'$, this implies
	\[
	\bE[f(T)]
	\leq
	1 - e^{-1/k} + h^{-1} + \varepsilon' + o(1)
	\enspace,
	\]
	where the $o(1)$ term represents a value that goes to zero when $m$ goes to infinity assuming $k$ and $h$ are kept constant.
	
	On the other hand, Lemma~\ref{le:submodular_good_opt} shows that $\cM$ contains an independent set $S$ whose objective value $f(S)$ is at least $1 - 2k/h - \varepsilon'$. 
	Thus, the approximation ratio of the algorithm is no better than
	\begin{align*}
	&\frac{1 - 2k/h - \varepsilon'}{1 - e^{-1/k} + h^{-1} + \varepsilon' + o(1)}\\
	\geq{} &
	(1 - e^{-1/k} + h^{-1} + \varepsilon' + o(1))^{-1} - (1 - e^{-1/k})^{-1}(2k/h + \varepsilon')\\
	\geq{} &
	(1 - e^{-1/k})^{-1} - (1 - e^{-1/k})^{-2}(h^{-1} + \varepsilon' + o(1)) - (1 - e^{-1/k})^{-1}(2k/h + \varepsilon')\\
	\geq{} &
	(1 - e^{-1/k})^{-1} - (k + 1)^2(h^{-1} + \varepsilon' + o(1)) - (k + 1)(2k/h + \varepsilon')
	\enspace,
	\end{align*}
	where the last inequality holds since $1 - e^{-1/k} \geq (k + 1)^{-1}$.
	Choosing a large enough $h$ (compared to $k$) and a small enough $\varepsilon'$ (again, compared to $k$), we can make this approximation ratio larger than $(1 - e^{-1/k})^{-1} - \varepsilon$ for any constant $\varepsilon > 0$.
\end{proof}

\section{Practical Considerations and the \package package} \label{sec:practical}

In this section, we discuss practical considerations for practitioners interested in using \mainalg, \repeatedgreedy, and their variants.
We also present \package, an open source Julia package which implements these simultaneous and repeated greedy techniques, their variants, and practical heuristics.

\subsection{Practical Considerations: Simultaneous or Repeated?} \label{sec:considerations}
In this paper, we have proposed simultaneous and repeated greedy techniques.
The most natural question is: which algorithmic technique is better in practice?
Given enough computational resources, executing both algorithms for a variety of parameter settings is most likely to yield the best solution.
Still, it is of practical interest to know which of the two algorithms will generally perform better.
One may be tempted to judge the effectiveness of these algorithms by comparing the approximation ratios which we derived in the preceding sections.
This line of thinking would suggest that \mainalg is clearly the better choice in practice; however, we caution practitioners against making these judgments based solely on the approximation ratios as they are based on worst-case analysis and may not reflect typical problem instances.

We argue that \repeatedgreedy may be more reasonable to use in practice because it requires less parameter tuning and is guaranteed to return a solution which is at least as good as the greedy solution.
Both \mainalg and \repeatedgreedy require setting the main parameter $\ell$, which is roughly the number of candidate solutions produced in both algorithms.
Although our worst-case analysis yields natural choices of $\ell$ based on properties of the independence system, it is ultimately a parameter which is to be set by the user.
The set of solutions produced by \mainalg will generally be quite different for varying $\ell$ and so the resulting approximation performance of the algorithm really does depend on the choice of $\ell$.
In contrast, \repeatedgreedy produces the same sequence of solutions as $\ell$ increases, and so the approximation can only improve as $\ell$ increases.
In this sense, $\ell$ is simpler to tune when using \repeatedgreedy.
In Section~\ref{sec:experiments}, we observe that the quality of the solution returned by \mainalg varies non-monotonically with the number of solutions $\ell$. 
One way to address this is to run \mainalg for a sequence of $\ell = 1, 2, \dotsc, \ell_{\text{max}}$, and return the best solution.

Another reason to use \repeatedgreedy in practice is that it is at least as effective as the greedy algorithm; on the other hand, \mainalg may perform worse than the greedy algorithm if $\ell$ is set too large.
Finally, we conjecture that the approximation ratio of \repeatedgreedy adapts to the curvature of the submodular objective function for all values of $\ell$, while the approximation ratio of \mainalg adapts to curvature only for a restricted set of $\ell$ values; however, this is beyond the scope of the current paper.

The so-called ``lazy greedy'' search \citep{Minoux1978} is a well-known heuristic for running iterative greedy searches which dramatically speeds up any greedy-based algorithm for submodular optimization, including the simultaneous and repeated greedy algorithms presented here.
Rather than computing the marginal gain of each element in the ground set, the lazy greedy approach for greedy search exploits submodularity by maintaining a priority queue of each element along with its previously queried marginal gain.
Because the marginal gain of an element is non-increasing as the solution set grows, previously queried marginal gains are an upper bound for the current marginal gain.
In this way, the lazy greedy approach (typically) results in only a few oracle queries until the element with the top marginal gain is found.
Although the lazy greedy approach does not improve worst-case runtime, it greatly improves the runtime for many practical instances.
Moreover, the lazy greedy approach may be used together with the marginal gain thresholding technique for improved performance gains.
When using a simultaneous greedy algorithm, the lazy greedy priority queue should be modified to include element-solution pairs.

\subsection{The \package package}\label{sec:package}

Our final main contribution in this paper is \package, an open source Julia package which implements the simultaneous and repeated greedy algorithms described here, along with their nearly linear time and knapsack variants.
We have written the package so that it is easy to use ``out-of-the-box'', requiring little to no knowledge of the algorithmic variants such as marginal gain thresholding and density ratio thresholding.
The package is available at \href{https://github.com/crharshaw/SubmodularGreedy.jl}{this URL}\footnote{https://github.com/crharshaw/SubmodularGreedy.jl} and the installation requires only one line of code using the Julia package manager.
Below, we highlight a few of the design decisions in the package:

\begin{itemize}
	\item \textbf{Supported Algorithms}: 
	The \package package supports all algorithms presented in this paper, including \mainalg, \repeatedgreedy, and their nearly linear-time and knapsack variants. 
	Additionally, the \samplegreedy algorithm of \cite{FHK17} is also included. 
	
	\item \textbf{Oracle Models}: 
	Our implementations run in the \emph{oracle model}. 
	That is, the user provides a value oracle which returns $f(S)$ given $S$ and a independence oracle which, given $S$, determines whether or not $S \in \cI$. 
	In this way, faster implementations of these oracles will result in faster run time of our algorithms.
	
	\item \textbf{Default Parameter Settings}: 
	The various parameters of the algorithms are set by default to values suggested by our worst case analysis. 
	The user may specify whether the independence system is $k$-extendible or a $k$-system and whether the objective is monotone, and the number of candidate solutions $\ell$ is set automatically. 
	Additionally, the $\beta$ scaling terms used in the binary density search are also set based on the  analysis in this paper. 
	However, the user may override any of these default parameter settings in favor of their own.
	
	\item \textbf{Lazy Greedy Implementations}: 
	All of our implementations feature a lazy greedy approach (discussed above in Section~\ref{sec:considerations}) for improved practical performance.
	Moreover, the simultaneous greedy algorithm features a lazy greedy priority queue whose keys are element-solution pairs, which is more appropriate for this setting.
\end{itemize}

The following functions are available in the \package package.
A more comprehensive description of these functions is contained in the documentation of the package.
Additionally, we have included a tutorial of the package as a Jupyter notebook.

\begin{itemize}
	\item \texttt{simultaneous\_greedys}: A fast implementation of \mainalg using approximate greedy search and lazy evaluations. If knapsack constraints are given, the density threshold technique is used with default parameter settings.
	\item \texttt{repeated\_greedy}:  A fast implementation of \repeatedgreedy using approximate greedy search and lazy evaluations. If knapsack constraints are given, the density threshold technique is used with default parameter settings.
	\item \texttt{sample\_greedy}: An implementation of the \samplegreedy algorithm of \citet{FHK17} using lazy evaluations.
	\item \texttt{greedy}: A fast implementation of the \greedy algorithm using approximate greedy search and lazy evaluations. If knapsack constraints are given, the density threshold technique is used with default parameter settings.
	\item \texttt{deterministic\_usm}: An implementation of the deterministic linear-time \usm algorithm of \citet{BFNS14}.
\end{itemize}

\section{Experiments} \label{sec:experiments}

In this section, we demonstrate the efficacy of our proposed algorithms on two movie recommendation settings using a real dataset.
The \package package contains all implementations of algorithms used in this experimental section.
% In addition, \ccomment{this URL} links to a repository containing the data used in these experiments together with Julia code for re-running the experiments and creating the figures.

\subsection{MovieLens 20M Dataset}\label{sec:dataset}

In our experiments, we use data from the MovieLens 20M Dataset, which features 20 million ratings of 27,000 movies by 138,000 users.
For each movie, we construct a corresponding feature vector $v_i$ by using a low-rank matrix completion technique on the user reviews, as proposed by \cite{Lindgren2015}.
The feature vectors are a low dimensional representation of the movies, based on the available user reviews. 
For a pair of movies $i$ and $j$, we use the feature vectors to construct a similarity score
\[
s_{i,j} = \exp \left( - \sigma^2 (1 - \cos(v_i, v_j)) \right) \enspace,
\]
where $\cos(v_i,v_j) = \langle v_i, v_j \rangle / (\| v_i \| \|v_j \|)$ is the cosine similarity and $\sigma > 0$ is a user-defined bandwidth parameter which controls the decay of this similarity.
In this way, the similarity scores are based on the rating behavior of the users in the MovieLens 20M dataset.
We remark that the similarity scores are in the range $[0,1]$, where $s_{ij} = 1$ only if $v_i$ is a scaled multiple of $v_j$. 

The MovieLens dataset also contains, for each movie, a list of genres that the movie belongs to.
There are 17 total genres, including Action, Drama, Comedy, Thriller, Musical, and Western, to name a few.
We emphasize that each movie belongs to at least one genre, but typically several genres.
By scraping the Internet Movie Database (IMDb), metadata on the movies is collected, including the release year and the average IMDb  user rating.
Metadata is collected for $n=10,473$ movies and so this is the size of the ground set.

In both experiments, we use the following non-monotone submodular objective function.
\[
f(S) = \frac{1}{n} \left[ \sum_{i \in \gnd} \sum_{j \in S} s_{i,j} - \lambda \cdot \sum_{i \in S} \sum_{j \in S} s_{i,j} \right] \enspace,
\]
where $\lambda \in [0,1]$ is a user-defined penalty term.
The first term captures the extent to which the set $S$ summarizes the entirety of movies in the ground set, while the second term penalizes sets $S$ which have a lot of self-similarity.
When $\lambda = 1$, the objective function recovers the graph-cut function on the graph in which edges weights are the normalized similarities, i.e., $s_{i,j} / n$.

\subsection{Experiment 1: Movie Recommendation with Genre Limitations}
In the first experiment, we aim to provide a user with a movie summarization set in which no genre appears too frequently.
This modeling formulation is most suitable for a user who wants a diverse selection of movies from the dataset, in terms of both the MovieLens user ratings and the genres.

Let $\mathcal{G}$ denote the set of movie genres.
For each movie $e \in \gnd$, let $G_e \subseteq \mathcal{G}$ denote the genres that movie $e$ belongs to.
For each genre $g \in \mathcal{G}$, let $d_g$ be a non-negative integer.
We define the \emph{genre-limiting} constraint set $(\cI, \gnd)$, where $S \in \cI$ if 
\[ 
| \{ e \in S : g \in G_e \} | \leq d_g \text{ for each genre } g \in \mathcal{G}.
\]
In other words, the solution set $S$ contains at most $d_g$ movies belonging to genre $g \in \mathcal{G}$.
One can verify that this constraint set is a $k$-extendible system, where $k = | \mathcal{G} |$.
In particular, the genre limiting constraint is the intersection of $| \mathcal{G} |$ partition matroids.

In this experiment, we consider a sequence of problem instances, defined by a sequence of constraint sets.
For each genre $g \in \mathcal{G}$, we define a \emph{genre fraction limit} $q_g \in [0,1]$.
For an integer $t$, we define genre limits according to the genre fraction limits by $d_g = \textrm{Round}(t \cdot q_g)$.
In this way, we define a sequence of growing constraint sets which are indexed by integers $t \in \mathbb{N}$.
The choice of genre fraction limits encode a user's desired fraction of genres in the summary, while the integer $t$ roughly determines the size of the summary.
In our experiment, we choose the genre fraction limits of most genres to reflect the total fraction of movies belonging to the genre, i.e., $q_g = | \{ e \in \gnd : g \in G_e \} | / n$.
The exceptions are Crime, Drama and Thriller which have slightly higher genre fraction limits and Animation, Children, Romance, and Horror which have slightly lower genre fraction limits.
These modified genre fraction limits may be understood to represent a particular user's personal interest.

We compare \mainalg, \repeatedgreedy, \greedy, and \samplegreedy for the sequence of problem instances in this experiment.
We run these algorithms for problem instances with indices $t = 2, 3, \dotsc, 30$.
For each algorithm, we record the objective value of the returned solution and the required number of oracle calls for each of the problem instances.
As recommended in Section~\ref{sec:considerations}, we take the maximum of \mainalg over setting $\ell = 1, 2, \dotsc, 10$.
For comparison, we set the number of solutions to $\ell = 10$ when running \repeatedgreedy.
We ran the linear time implementations with $\eps \in \{ 0.01, 0.1\}$, but the execution paths of the algorithms remained unchanged (compared to the non-linear time implementations); this is likely a result of the lazy greedy implementation.
For this reason, the linear time implementations are not included in these results.
We ran \samplegreedy for $20$ iterations.
Figure~\ref{fig:exp1} contains the results of this experiment.

\begin{figure}
	\centering
	\begin{subfigure}[b]{0.45\textwidth}
		\centering
		\includegraphics[width=\textwidth]{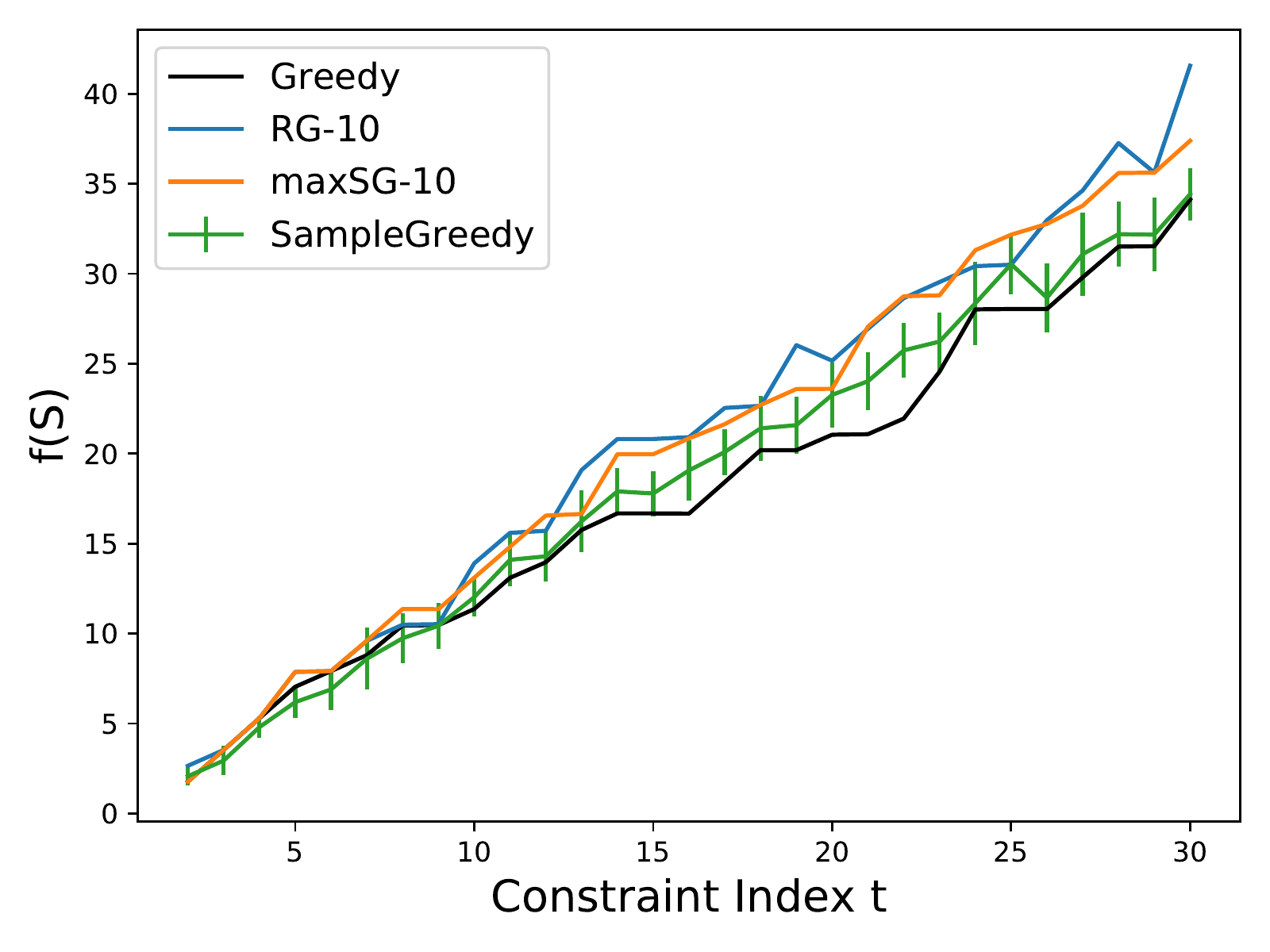}
		\caption{}
		\label{fig:exp1-obj}
	\end{subfigure}
	\hfill
	\begin{subfigure}[b]{0.45\textwidth}
		\centering
		\includegraphics[width=\textwidth]{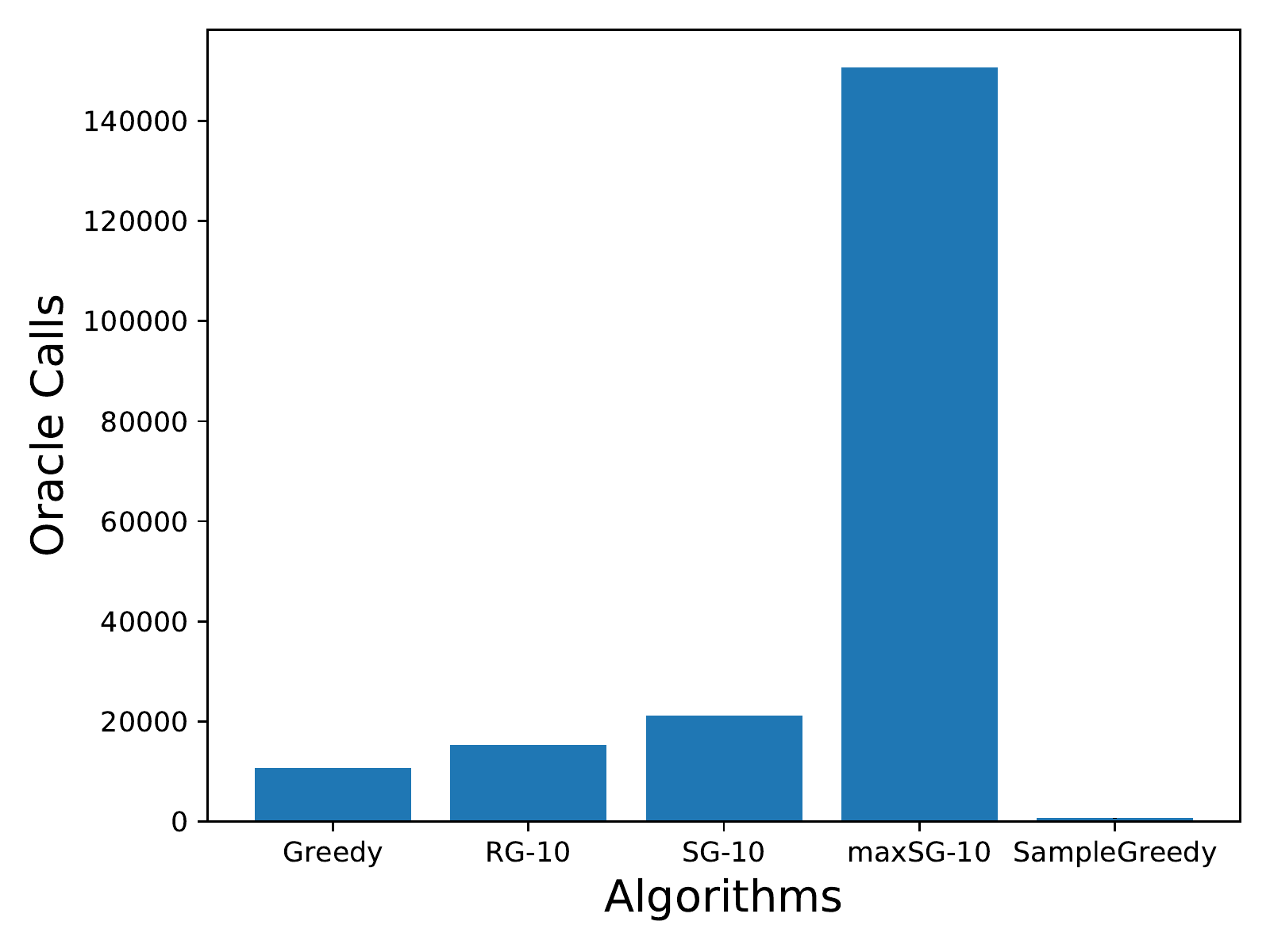}
		\caption{}
		\label{fig:exp1-oracle-calls}
	\end{subfigure}
	\caption{
		A comparison of objective value and runtime of Greedy, Sample Greedy, Repeated Greedy, and Simultaneous Greedys for problem instances in Experiment 1.
		Linear time implementations with $\eps \in \{0.1, 0.01\}$ were run, but not reported here because the execution path did not change.
		Fig~\ref{fig:exp1-obj} plots the objective value attained by the algorithms against the constraint index.
		Fig~\ref{fig:exp1-oracle-calls} plots the number of oracle calls for index $t=30$.
	}
	\label{fig:exp1}
\end{figure}

\begin{figure}
	\centering
	\begin{subfigure}[b]{0.45\textwidth}
		\centering
		\includegraphics[width=\textwidth]{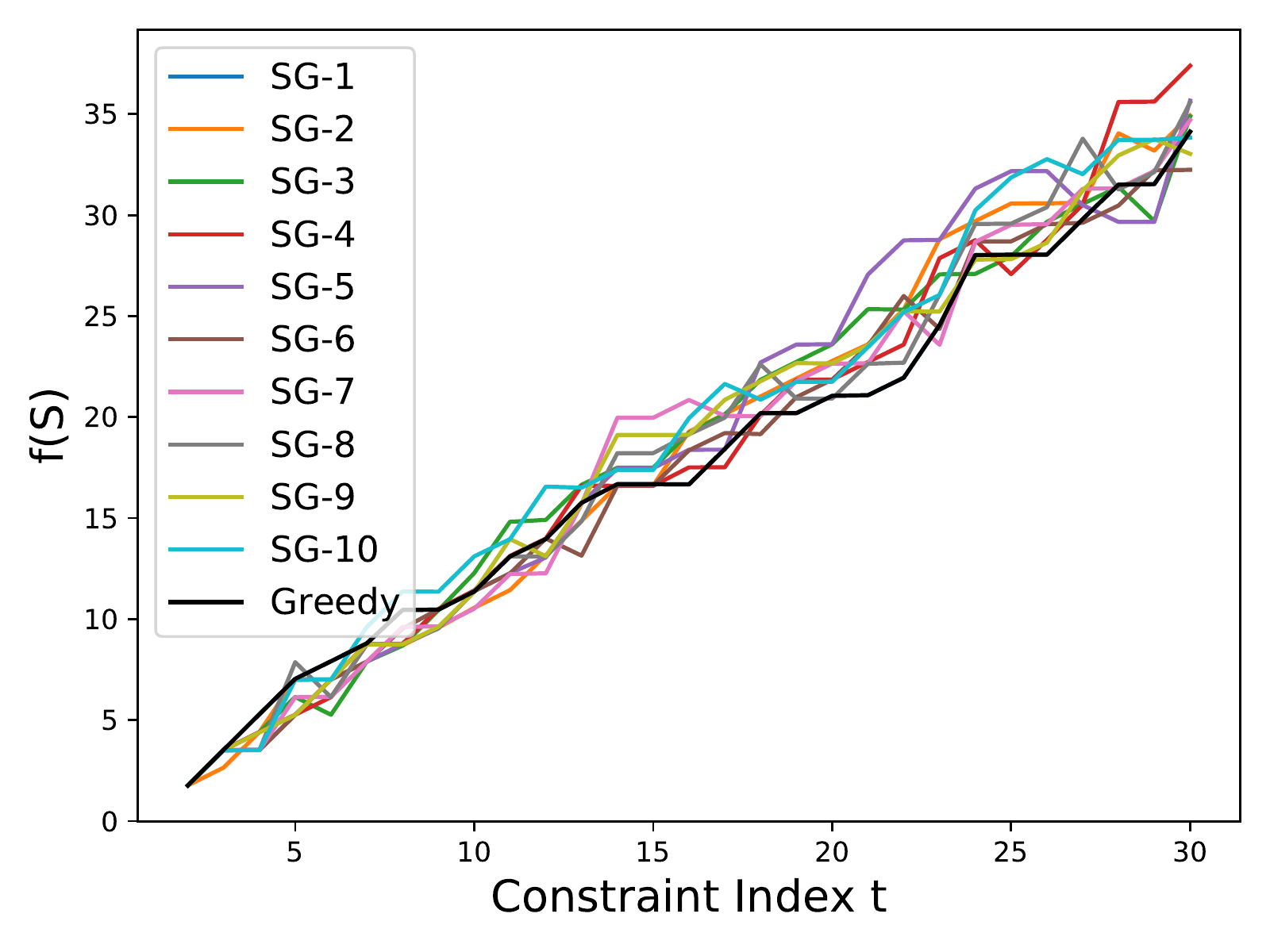}
		\caption{}
		\label{fig:exp1-sg-obj}
	\end{subfigure}
	\hfill
	\begin{subfigure}[b]{0.45\textwidth}
		\centering
		\includegraphics[width=\textwidth]{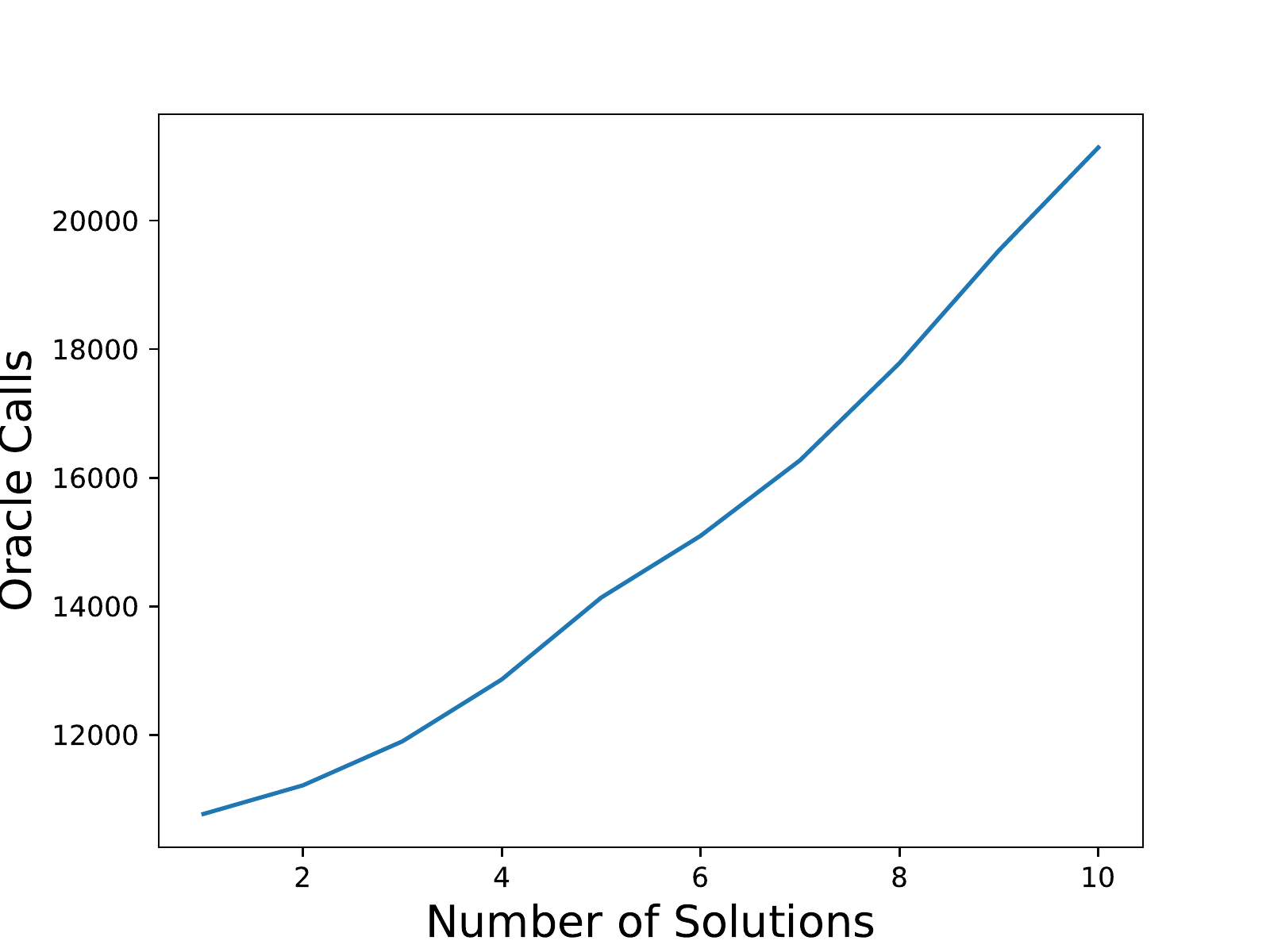}
		\caption{}
		\label{fig:exp1-sg-oracle-calls}
	\end{subfigure}
	\caption{
		A comparison of objective value and runtime of Simultaneous Greedys when the number of solutions $\ell$ is varied, for problem instances in Experiment 1.
		Fig~\ref{fig:exp1-sg-obj} plots the objective values attained by the various Simultaneous Greedys executions against the constraint index.
		Fig~\ref{fig:exp1-sg-oracle-calls} plots the number of oracle calls for index $t=30$ against the number of solutions $\ell$.
	}
	\label{fig:exp1-sg}
\end{figure}

As we see in Figure~\ref{fig:exp1-obj}, \repeatedgreedy and taking the maximum of $\ell = 1, 2, \dots 10$ of \mainalg return higher quality solutions than the greedy algorithm.
In fact, \repeatedgreedy and \mainalg return solutions with larger value than the expected value of the solution returned by \samplegreedy.
Figure~\ref{fig:exp1-oracle-calls} shows the number of oracle calls made by the various algorithms.
\mainalg and \repeatedgreedy require more oracle calls that \greedy and taking the maximum over $\ell = 1,2, \dotsc, 10$ increases this cost.
On the other hand, the expected cost of \samplegreedy is considerably lower than the other algorithms.

Figure~\ref{fig:exp1-sg} demonstrates the behavior of \mainalg as the number of solutions $\ell$ is varied.
Figure~\ref{fig:exp1-sg-obj} shows the objective value attained by \mainalg while varying the number of solutions $\ell = 1, 2, \dotsc, 10$.
For most values of $\ell$, the attained objective value is larger than that of the greedy algorithm; however, we see that there is no value of $\ell$ which consistently returns the highest value of the objective function.
As we see in Figure~\ref{fig:exp1-sg-oracle-calls}, the number of oracle calls increases with the number of solutions, which is to be expected.

\subsection{Experiment 2: Movie Recommendations with Release Dates and Rating Budget}
In the second experiment, we aim to provide a user with a movie summarization set in which movies are far apart in release date and not too many highly rated movies appear.
Our modeling formulation is most suitable for a film watching party based on poorly rated films or ``cult classics' throughout the years'.

In this experiment, we use release date constraint and a rating budget, defined as follows.
For each movie $e \in \gnd$, let $y_e$ denote the release year of the movie.
We define the \emph{release date constraint} $(\cI, \gnd)$, where $S \in \cI$ if
\[
|y_e - y_u| \geq 1 \text{ for all pairs } e,u \in S \enspace. 
\]
In other words, no two movies in a feasible solution set may be released in the same year.
This independence set is $2$-extendible, as adding a movie $e$ to the current solution requires the removal of up to $2$ other movies  that already belong to the set: one that appears up to a year after $y_e$ and one that appears up to a year before $y_e$.

For each movie $e \in \gnd$, we let $r_e$ denote the rating of the movie, according to the IMDb.
The ratings take real values between $1$ and $10$.
The \emph{rating budget constraint} is that
\[
\sum_{e \in S} \max (r_e - 5.0, 0) \leq \beta \enspace,
\]
where $\beta$ is a user-defined \emph{rating budget}.
A set $S$ satisfies the rating budget constraint so long as it does not contain too many highly rated movies; indeed, this constraint does not penalize movies which have a rating less than $5.0$.
Observe that the rating budget is a knapsack constraint with coefficients $c_e =  \max (r_e - 5.0, 0)$.

We compare \densitysearchalg, \densitysearchRG, and \greedy for maximizing the diverse summarization objective over the release date and rating budget constraints.
Recall that \densitysearchalg and \densitysearchRG incorporate the density threshold and density search techniques for handling the knapsack objective, while \greedy incorporates the knapsack constraint into the independence constraint.
For both \densitysearchalg and \densitysearchRG, we set the number of solutions to $\ell = 2$ and we use values $\delta = \eps \in \{ 0.1, 0.01 \}$.

The results of the second experiment are summarized in Figure~\ref{fig:exp2}.
In Figure~\ref{fig:exp2-obj}, we see that \densitysearchalg and \densitysearchRG typically yield solutions with larger objective value than \greedy.
This improvement may be attributed to the density thresholding technique, where an element with high marginal gain may not be chosen if its knapsack cost is relatively larger.
Interestingly, larger values of the error term $\eps$ yields solutions with larger objective values.
This is likely due to increased variability in the execution path of the algorithm, leading to more diverse solutions being constructed. 
In Figure~\ref{fig:exp2-oracle-calls}, we see that the density search techniques are more expensive than the greedy algorithm, especially after the lazy greedy implementation.
However, the cost of the density search techniques decreases as the error term $\eps$ increases, due (in part) to fewer calls to the fixed-density subroutine.

\begin{figure}
	\centering
	\begin{subfigure}[b]{0.45\textwidth}
		\centering
		\includegraphics[width=\textwidth]{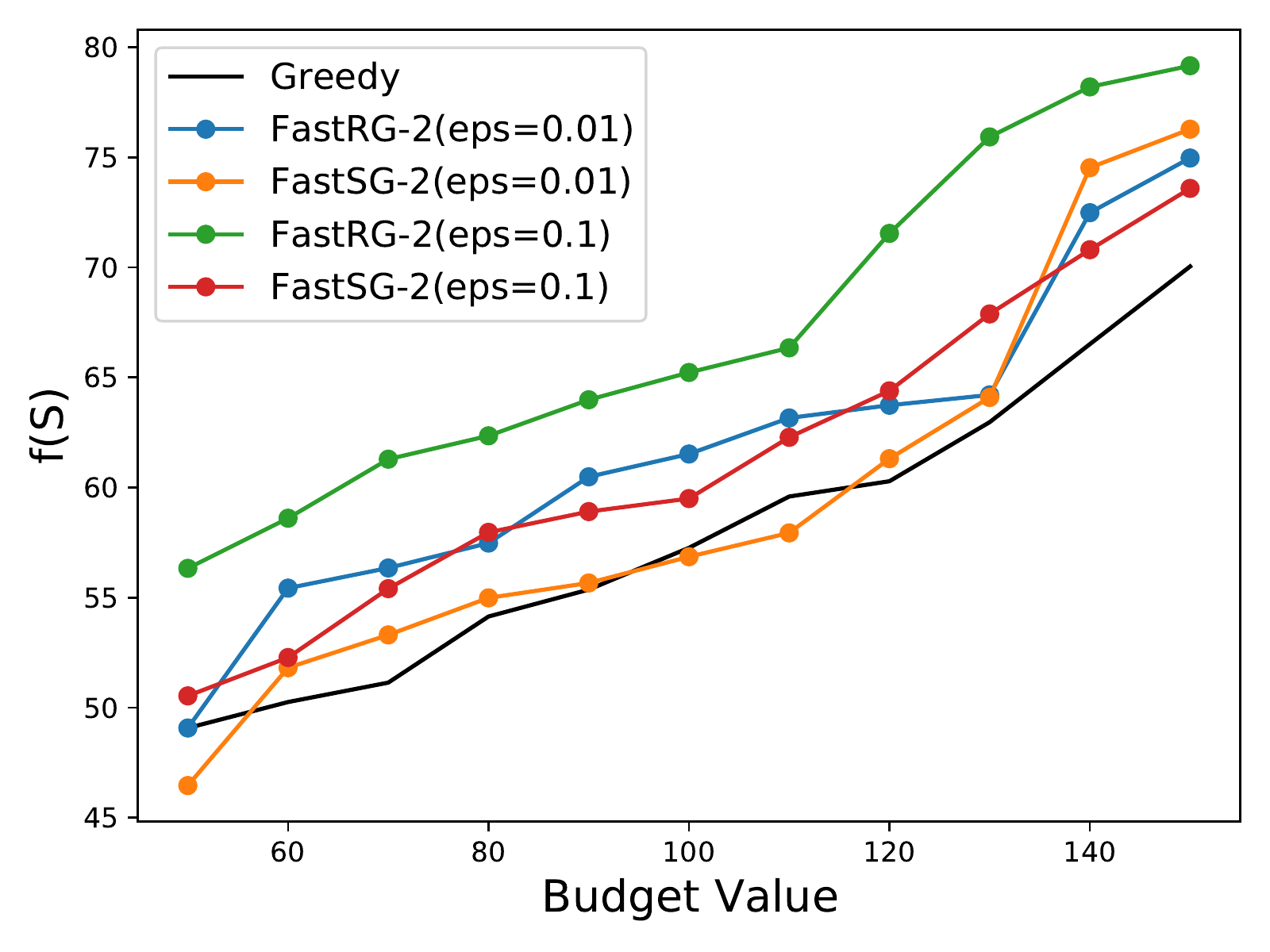}
		\caption{}
		\label{fig:exp2-obj}
	\end{subfigure}
	\hfill
	\begin{subfigure}[b]{0.45\textwidth}
		\centering
		\includegraphics[width=\textwidth]{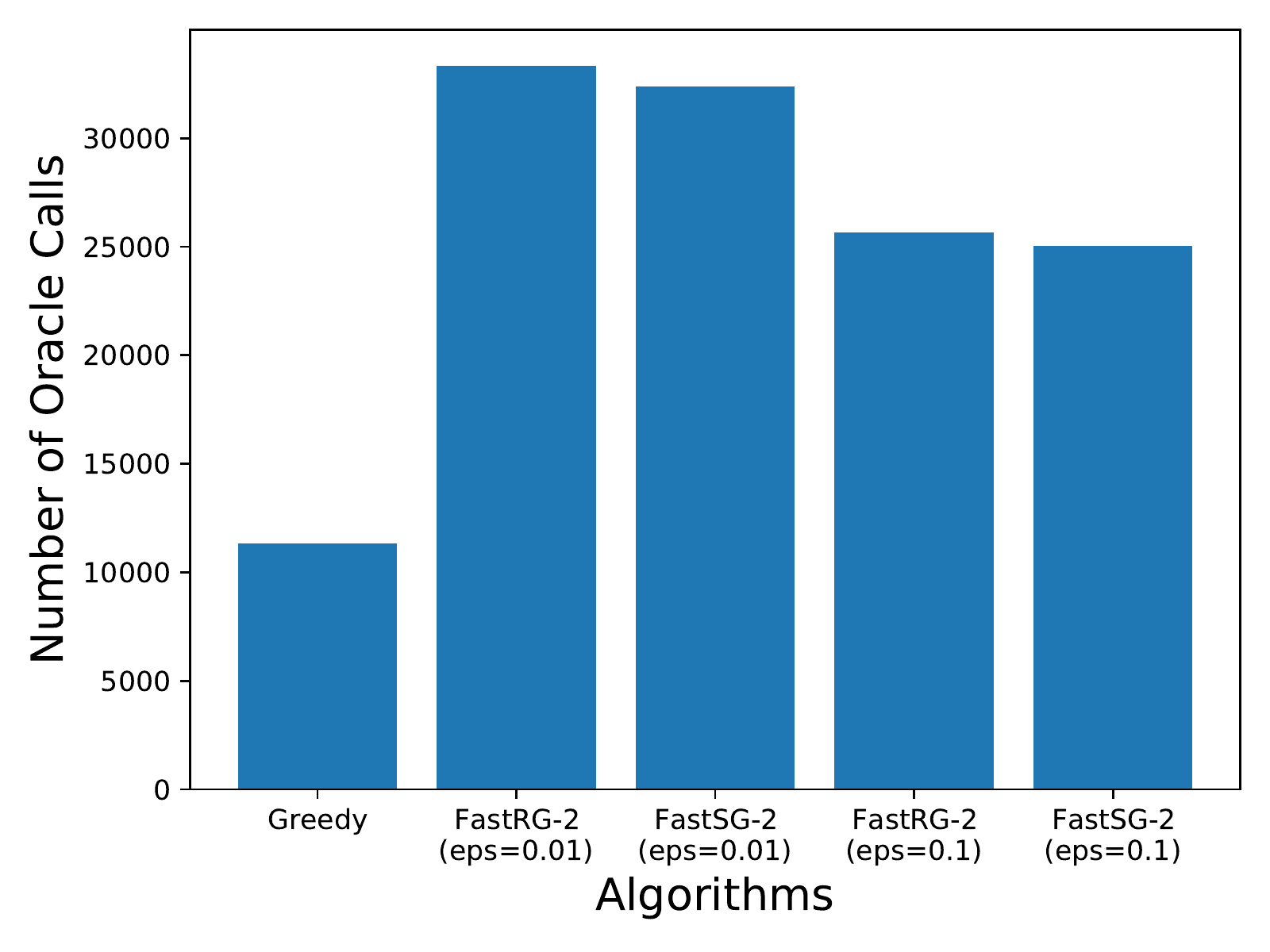}
		\caption{}
		\label{fig:exp2-oracle-calls}
	\end{subfigure}
	\caption{
		A comparison of objective value and runtime of Greedy, Repeated Greedy, and Simultaneous Greedys for problem instances in Experiment 2.
		The linear time and knapsack variants of Repeated Greedy and Simultaneous Greedys are displayed here.
		Fig~\ref{fig:exp2-obj} plots the objective values attained by the algorithms against the budget value.
		Fig~\ref{fig:exp2-oracle-calls}  plots the number of oracle calls for budget value $\beta = 150$
	}
	\label{fig:exp2}
\end{figure}

\section{Conclusion} \label{sec:conclusion}
In this paper, we have presented \mainalg, a new algorithmic technique for constrained submodular maximization.
In addition, we have improved the analysis of \repeatedgreedy, showing that fewer repeated iterations yield a better approximation than what was previously known to be possible.
We have shown that both greedy-based techniques can accommodate several variants, including a nearly-linear implementation and the handling of additional knapsack constraints.
Perhaps most surprisingly, the simple \mainalg algorithmic technique provides the tightest known approximation guarantees across a mix and match of many settings: $k$-system constraints, $k$-extendible constraints, $m$ additional knapsack constraints, non-monotone objectives, and monotone objectives.
We have provided two kinds of negative results: the first is hardness results demonstrating that, for several of these settings, no efficient algorithm can achieve a significantly better approximation ratio.
The second is a result which shows that our analysis of \repeatedgreedy is tight in the sense that it cannot be improved for the subclasses considered here.

We also provided practical insights, arguing that although \mainalg has better worst-case approximation guarantees, \repeatedgreedy is often better suited for practical applications.
Implementations of all the algorithms considered in this paper appear in \package, an open source Julia package which is available for download at \href{https://github.com/crharshaw/SubmodularGreedy.jl}{this URL}\footnote{https://github.com/crharshaw/SubmodularGreedy.jl}.
We hope that these simple, yet theoretically sound, techniques becomes a standard in the toolbox of practitioners across a variety of disciplines.
In a larger sense, we hope that this technique may aid the flexibility of the submodular optimization framework as more exciting applications continue to emerge.

\paragraph{Acknowledgments}
We graciously thank Erik Lindgren and Ehsan Kazemi for sharing the feature vectors and the scraped IMDb meta-data used in our experiments. 
The work of Moran Feldman was supported in part by ISF grants no. 1357/16 and 459/20.
This work was supported in part by an NSF Graduate Research Fellowship (DGE1122492) awarded to Christopher Harshaw. Amin Karbasi is partially supported by NSF (IIS- 1845032), ONR (N00014-19-1-2406), and TATA  Sons Private Limited. 

% bibliography 
\newpage
\bibliography{tex/derandomized}

\begin{thebibliography}{42}
\providecommand{\natexlab}[1]{#1}
\providecommand{\url}[1]{\texttt{#1}}
\expandafter\ifx\csname urlstyle\endcsname\relax
  \providecommand{\doi}[1]{doi: #1}\else
  \providecommand{\doi}{doi: \begingroup \urlstyle{rm}\Url}\fi

\bibitem[Alieva et~al.(2020)Alieva, Aceves, Song, Mayo, Yue, and
  Chen]{alieva2020learning}
Ayya Alieva, Aiden Aceves, Jialin Song, Stephen Mayo, Yisong Yue, and Yuxin
  Chen.
\newblock Learning to make decisions via submodular regularization.
\newblock In \emph{International Conference on Learning Representations}, 2020.

\bibitem[Badanidiyuru and Vondr{\'{a}}k(2014)]{Badanidiyuru14}
Ashwinkumar Badanidiyuru and Jan Vondr{\'{a}}k.
\newblock Fast algorithms for maximizing submodular functions.
\newblock In \emph{SODA}, pages 1497--1514, 2014.

\bibitem[Bian et~al.(2017)Bian, Buhmann, Krause, and
  Tschiatschek]{bian2017guarantees}
Andrew~An Bian, Joachim~M Buhmann, Andreas Krause, and Sebastian Tschiatschek.
\newblock Guarantees for greedy maximization of non-submodular functions with
  applications.
\newblock In \emph{International conference on machine learning}. PMLR, 2017.

\bibitem[Buchbinder et~al.(2015)Buchbinder, Feldman, Joseph, and
  Schwartz]{Buchbinder2015}
N.~Buchbinder, M.~Feldman, N.S. Joseph, and R.~Schwartz.
\newblock A tight linear time (1/2)-approximatoin for unconstrained submodular
  maximization.
\newblock \emph{SIAM Journal on Computing}, 44:\penalty0 1384--1402, 2015.

\bibitem[Buchbinder and Feldman(2018{\natexlab{a}})]{BF18}
Niv Buchbinder and Moran Feldman.
\newblock Submodular functions maximization problems.
\newblock In Teofilo~F. Gonzalez, editor, \emph{Handbook of Approximation
  Algorithms and Metaheuristics, Second Edition, Volume 1: Methologies and
  Traditional Applications}, pages 753--788. Chapman and Hall/CRC,
  2018{\natexlab{a}}.

\bibitem[Buchbinder and Feldman(2018{\natexlab{b}})]{Buchbinder2016b}
Niv Buchbinder and Moran Feldman.
\newblock {Deterministic Algorithms for Submodular Maximization Problems}.
\newblock \emph{{ACM} Trans. Algorithms}, 14\penalty0 (3):\penalty0
  32:1--32:20, 2018{\natexlab{b}}.

\bibitem[Buchbinder et~al.(2014)Buchbinder, Feldman, Naor, and
  Schwartz]{BFNS14}
Niv Buchbinder, Moran Feldman, Joseph Naor, and Roy Schwartz.
\newblock {Submodular Maximization with Cardinality Constraints}.
\newblock In \emph{SODA}, pages 1433--1452, 2014.

\bibitem[Buchbinder et~al.(2017)Buchbinder, Feldman, and
  Schwartz]{Buchbinder2017}
Niv Buchbinder, Moran Feldman, and Roy Schwartz.
\newblock Comparing apples and oranges: Query trade-off in submodular
  maximization.
\newblock \emph{Math. Oper. Res.}, 42\penalty0 (2):\penalty0 308--329, 2017.

\bibitem[C{\u{a}}linescu et~al.(2011)C{\u{a}}linescu, Chekuri, P{\'{a}}l, and
  Vondr{\'{a}}k]{CCPV11}
Gruia C{\u{a}}linescu, Chandra Chekuri, Martin P{\'{a}}l, and Jan
  Vondr{\'{a}}k.
\newblock Maximizing a monotone submodular function subject to a matroid
  constraint.
\newblock \emph{{SIAM} J. Comput.}, 40\penalty0 (6):\penalty0 1740--1766, 2011.

\bibitem[Chen et~al.(2015)Chen, Javdani, Karbasi, Bagnell, Srinivasa, and
  Krause]{chen2015submodular}
Yuxin Chen, Shervin Javdani, Amin Karbasi, J~Bagnell, Siddhartha Srinivasa, and
  Andreas Krause.
\newblock Submodular surrogates for value of information.
\newblock In \emph{Proceedings of the AAAI Conference on Artificial
  Intelligence}, 2015.

\bibitem[Chv\'{a}tal(1979)]{Chvatal79}
Va\v{s}ek Chv\'{a}tal.
\newblock The tail of the hypergeometric distribution.
\newblock \emph{Discrete Mathematics}, 25\penalty0 (3):\penalty0 285--287,
  1979.

\bibitem[Elenberg et~al.(2017)Elenberg, Dimakis, Feldman, and
  Karbasi]{Elenberg17}
Ethan~R. Elenberg, Alexandros~G. Dimakis, Moran Feldman, and Amin Karbasi.
\newblock Streaming weak submodularity: Interpreting neural networks on the
  fly.
\newblock In \emph{Proceedings of the 31st International Conference on Neural
  Information Processing Systems}, 2017.

\bibitem[Ene and Nguyen(2019{\natexlab{a}})]{Ene2019}
Alina Ene and Huy~L. Nguyen.
\newblock A nearly-linear time algorithm for submodular maximization with a
  knapsack constraint.
\newblock In \emph{ICALP}, pages 53:1--53:12, 2019{\natexlab{a}}.

\bibitem[Ene and Nguyen(2019{\natexlab{b}})]{Ene2019b}
Alina Ene and Huy~L. Nguyen.
\newblock Towards nearly-linear time algorithms for submodular maximization
  with a matroid constraint.
\newblock In \emph{ICALP}, pages 54:1--54:14, 2019{\natexlab{b}}.
\newblock \doi{10.4230/LIPIcs.ICALP.2019.54}.
\newblock URL \url{https://doi.org/10.4230/LIPIcs.ICALP.2019.54}.

\bibitem[Feige et~al.(2007)Feige, Vahab S.~Mirrokni, and Vondrak]{FMV07}
Uriel Feige, Vahab~S. Vahab S.~Mirrokni, and Jan Vondrak.
\newblock {Maximizing non-monotone submodular functions}.
\newblock In \emph{FOCS}, 2007.

\bibitem[Feldman et~al.(2011)Feldman, Naor, Schwartz, and Ward]{FNSW11}
Moran Feldman, Joseph Naor, Roy Schwartz, and Justin Ward.
\newblock Improved approximations for k-exchange systems - (extended abstract).
\newblock In \emph{ESA}, pages 784--798, 2011.

\bibitem[Feldman et~al.(2017)Feldman, Harshaw, and Karbasi]{FHK17}
Moran Feldman, Christopher Harshaw, and Amin Karbasi.
\newblock Greed is good: Near-optimal submodular maximization via greedy
  optimization.
\newblock In \emph{COLT}, pages 758--784, 2017.

\bibitem[Fisher et~al.(1978)Fisher, Nemhauser, and Wolsey]{FNW78}
M.~L. Fisher, G.~L. Nemhauser, and L.~A. Wolsey.
\newblock An analysis of approximations for maximizing submodular set functions
  -- {II}.
\newblock \emph{Mathematical Programming Study}, 8:\penalty0 73--87, 1978.

\bibitem[Gharan and Vondrak(2011)]{GV11}
Shayan~Oveis Gharan and Jan Vondrak.
\newblock {Submodular maxzimiation by simulated annealing}.
\newblock In \emph{SODA}, 2011.

\bibitem[Gillenwater et~al.(2012)Gillenwater, Kulesza, and Taskar]{GKT12}
Jennifer Gillenwater, Alex Kulesza, and Ben Taskar.
\newblock Near-optimal map inference for determinantal point processes.
\newblock In \emph{Advances in Neural Information Processing Systems 25}, pages
  2735--2743, 2012.

\bibitem[Gupta et~al.(2010)Gupta, Roth, Schoenebeck, and Talwar]{GRST10}
Anupam Gupta, Aaron Roth, Grant Schoenebeck, and Kunal Talwar.
\newblock {Constrained Non-monotone Submodular Maximization: Offline and
  Secretary Algorithms}.
\newblock In \emph{WINE}, pages 246--257, 2010.

\bibitem[Haba et~al.(2020)Haba, Kazemi, Feldman, and Karbasi]{HKFK20}
Ran Haba, Ehsan Kazemi, Moran Feldman, and Amin Karbasi.
\newblock {Streaming Submodular Maximization under a $k$-Set System
  Constraint}.
\newblock In \emph{ICML}, 2020.

\bibitem[Han et~al.(2021)Han, Cui, Zhu, Tang, Wu, and Huang]{han2021power}
Kai Han, Shuang Cui, Tianshuai Zhu, Jing Tang, Benwei Wu, and He~Huang.
\newblock The power of randomization: Efficient and effective algorithms for
  constrained submodular maximization, 2021.

\bibitem[Harshaw et~al.(2019)Harshaw, Feldman, Ward, and Karbasi]{harshaw19a}
Chris Harshaw, Moran Feldman, Justin Ward, and Amin Karbasi.
\newblock Submodular maximization beyond non-negativity: Guarantees, fast
  algorithms, and applications.
\newblock In \emph{Proceedings of the 36th International Conference on Machine
  Learning}, Proceedings of Machine Learning Research, 2019.

\bibitem[Hoeffding(1963)]{Hoeffding1963}
Wassily Hoeffding.
\newblock Probability inequalities for sums of bounded random variables.
\newblock \emph{Journal of the American Statistical Association}, 1963.

\bibitem[Kempe et~al.(2003)Kempe, Kleinberg, and Tardos]{kempe03}
David Kempe, Jon Kleinberg, and \'{E}va Tardos.
\newblock Maximizing the spread of influence through a social network.
\newblock In \emph{international conference on Knowledge discovery and data
  mining (KDD)}, pages 137--146, 2003.

\bibitem[Krause and Guestrin(2005)]{krause05near}
A.~Krause and C.~Guestrin.
\newblock {Near-optimal Nonmyopic Value of Information in Graphical Models}.
\newblock In \emph{Uncertainty in Artificial Intelligence (UAI)}, pages
  324--331, 2005.

\bibitem[Kuhnle(2019)]{Kuhnle2019}
Alan Kuhnle.
\newblock {Interlaced Greedy Algorithm for Maximization of Submodular Functions
  in Nearly Linear Time}.
\newblock In \emph{NeurIPS}, 2019.

\bibitem[Lee et~al.(2010{\natexlab{a}})Lee, Mirrokni, Nagarajan, and
  Sviridenko]{LMNS10}
Jon Lee, Vahab~S. Mirrokni, Viswanath Nagarajan, and Maxim Sviridenko.
\newblock Maximizing nonmonotone submodular functions under matroid or knapsack
  constraints.
\newblock \emph{{SIAM} J. Discrete Math.}, 23\penalty0 (4):\penalty0
  2053--2078, 2010{\natexlab{a}}.

\bibitem[Lee et~al.(2010{\natexlab{b}})Lee, Sviridenko, and
  Vondr{\'{a}}k]{LSV10}
Jon Lee, Maxim Sviridenko, and Jan Vondr{\'{a}}k.
\newblock {Submodular Maximization over Multiple Matroids via Generalized
  Exchange Properties}.
\newblock \emph{Math. Oper. Res.}, 35\penalty0 (4):\penalty0 795--806,
  2010{\natexlab{b}}.

\bibitem[Lin and Bilmes(2011)]{lin2011class}
Hui Lin and Jeff Bilmes.
\newblock A class of submodular functions for document summarization.
\newblock In \emph{Proceedings of the 49th Annual Meeting of the Association
  for Computational Linguistics: Human Language Technologies-Volume 1}, pages
  510--520. Association for Computational Linguistics, 2011.

\bibitem[Lindgren et~al.(2015)Lindgren, Wu, and Dimakis]{Lindgren2015}
Erik~M Lindgren, Shanshan Wu, and Alexandros~G Dimakis.
\newblock {Sparse and greedy: Sparsifying submodular facility location
  problems}.
\newblock In \emph{NIPS Workshop on Optimization for Machine Learning}, 2015.

\bibitem[Mestre(2006)]{Mestre2006}
Juli{\'a}n Mestre.
\newblock {Greedy in Approximation Algorithms}.
\newblock In \emph{European Symposium on Algorithms (ESA)}, pages 528--539,
  2006.

\bibitem[Minoux(1978)]{Minoux1978}
Michel Minoux.
\newblock Accelerated greedy algorithms for maximizing submodular set
  functions.
\newblock In \emph{Optimization Techniques}, pages 234--243, 1978.

\bibitem[Mirzasoleiman et~al.(2013)Mirzasoleiman, Karbasi, Sarkar, and
  Krause]{Mirzasoleiman13}
Baharan Mirzasoleiman, Amin Karbasi, Rik Sarkar, and Andreas Krause.
\newblock Distributed submodular maximization: Identifying representative
  elements in massive data.
\newblock In \emph{Advances in Neural Information Processing Systems}, 2013.

\bibitem[Mirzasoleiman et~al.(2015)Mirzasoleiman, Badanidiyuru, Karbasi,
  Vondr{\'{a}}k, and Krause]{Mirzasoleiman15}
Baharan Mirzasoleiman, Ashwinkumar Badanidiyuru, Amin Karbasi, Jan
  Vondr{\'{a}}k, and Andreas Krause.
\newblock {Lazier Than Lazy Greedy}.
\newblock In \emph{AAAI Conference on Artificial Intelligence}, pages
  1812--1818, 2015.

\bibitem[Mirzasoleiman et~al.(2016)Mirzasoleiman, Badanidiyuru, and
  Karbasi]{MBK16}
Baharan Mirzasoleiman, Ashwinkumar Badanidiyuru, and Amin Karbasi.
\newblock {Fast Constrained Submodular Maximization: Personalized Data
  Summarization}.
\newblock In \emph{ICML}, pages 1358--1367, 2016.

\bibitem[Salehi et~al.(2017)Salehi, Karbasi, Scheinost, and
  Constable]{SKSC2017}
Mehraveh Salehi, Amin Karbasi, Dustin Scheinost, and R.~Todd Constable.
\newblock A submodular approach to create individualized parcellations of the
  human brain.
\newblock In \emph{Medical Image Computing and Computer Assisted Intervention,
  MICCAI 2017}, pages 478--485. Springer International Publishing, 2017.

\bibitem[Singla et~al.(2014)Singla, Bogunovic, Bartok, Karbasi, and
  Krause]{singla14}
Adish Singla, Ilija Bogunovic, Gabor Bartok, Amin Karbasi, and Andreas Krause.
\newblock Near-optimally teaching the crowd to classify.
\newblock In \emph{Proceedings of the 31st International Conference on Machine
  Learning}, Proceedings of Machine Learning Research, 2014.

\bibitem[Skala(2013)]{Skala13}
Matthew Skala.
\newblock Hypergeometric tail inequalities: ending the insanity.
\newblock \emph{CoRR}, abs/1311.5939, 2013.

\bibitem[Tohidi et~al.(2020)Tohidi, Amiri, Coutino, Gesbert, Leus, and
  Karbasi]{tohidi2020submodularity}
Ehsan Tohidi, Rouhollah Amiri, Mario Coutino, David Gesbert, Geert Leus, and
  Amin Karbasi.
\newblock Submodularity in action: From machine learning to signal processing
  applications, 2020.

\bibitem[Vondr{\'{a}}k(2013)]{V13}
Jan Vondr{\'{a}}k.
\newblock Symmetry and approximability of submodular maximization problems.
\newblock \emph{{SIAM} J. Comput.}, 42\penalty0 (1):\penalty0 265--304, 2013.
\newblock \doi{10.1137/110832318}.
\newblock URL \url{https://doi.org/10.1137/110832318}.

\end{thebibliography}

% appendices
\appendix
\renewcommand\theHsection{appendix.\arabic{section}}
\section{Proof of Proposition~\ref{prop:knapsack_general_result} (\knapsackalg) } \label{sec:knapsack-proof}
In this section, we prove Proposition~\ref{prop:knapsack_general_result}, which is the main technical lemma behind the \mainalg variants \fastalg and \knapsackalg.
The proposition is a meta-analysis that reduces the conditions of approximation to simple combinatorial statements relating the constructed solutions to $\OPT$.
Furthermore, the proposition and its proof mirror Proposition~\ref{prop:general_result}, which provided a similar meta-analysis for approximation guarantees of \mainalg.
We remind the reader that the constructions in Section~\ref{sec:analysis-extendible} and Section~\ref{sec:analysis-k-system} demonstrate how these conditions are satisfied for $k$-extendible systems and $k$-systems, respectively.

We begin by restating the proposition.
\begin{manualtheorem}{\ref{prop:knapsack_general_result}}
	\knapsackgeneralresult[*]
\end{manualtheorem}

As before, at each iteration $1 \leq t \leq \numiter$, the set $\SetF{O}{i}{j}$ contains the elements of $\OPT$ which maintain feasbility in the independence system when added to solution $\SetF{S}{i}{j}$.
It may be the case, however, that some of these elements of $\SetF{O}{i}{j}$ are infeasible to  add to the corresponding solution with respect to the knapsack constraints.
We note also that there are several differences between the proof of Proposition~\ref{prop:knapsack_general_result} and the proof of the earlier Proposition~\ref{prop:general_result}.
The most significant difference is that there is now a case analysis depending on whether or not Line~\ref{line:knapsack-check} of \knapsackalg ever evaluates to false, which is denoted by the indicator variable $\knapevent$.
If $\knapevent = 1$, then a simple argument lower bounds the quality of the returned solution; and if $\knapevent = 0$, then we obtain an approximation guarantee using similar techniques to those used in the proof of Proposition~\ref{prop:general_result}.

In the case of $\knapevent = 0$, the proof techniques differ in a few ways:
first, the elements of $\SetF{O}{i}{j}$ are typically broken up into two groups: those with high density with respect to the current solution and those with low density.
In the analysis, the two groups of elements are considered separately.
Second, the greedy search is now approximate (up to a factor $(1 - \eps)$), and so this factor carries through the analysis.
Finally, the remaining elements of $\SetF{O}{T}{j}$ after termination may have positive marginal gain when added to the constructed solution, but the gain is sufficiently small so that it does not greatly decrease the quality of the constructed solution.

Before continuing, let us set up some notation to split the elements of $\OPT$ that we throw away into high and low density.
Recall that at each iteration $i$, $\SetF{O}{i-1}{j} \setminus (\SetF{O}{i}{j} \cup \SetF{U}{i}{j})$ are the elements of $\OPT$ which must be removed so that element $u_i$ may be added to solution $\SetF{S}{i}{j_i}$.
Of these elements we must throw away, we will distinguish between those with high density and those with low density.
In particular, we will define $\SetF{\highdensity}{i}{j}$ to be those elements of high density with respect to solution $\SetF{S}{i}{j}$ and $\SetF{\lowdensity}{i}{j}$ to be those elements of low density with respect to solution $\SetF{S}{i}{j}$.
More formally, for any solution $1 \leq j \leq \ell$ and iteration $1 \leq i \leq \numiter$, we define the sets
\begin{align*}
\SetF{\highdensity}{i}{j} &= 
\left\{ u \in  \SetF{O}{i-1}{j} \setminus (\SetF{O}{i}{j} \cup \SetF{U}{i}{j}) : 
f( u \mid \SetF{S}{i}{j} ) \geq \density \cdot \sum_{r=1}^m c_r(u) \right\} \\
\SetF{\lowdensity}{i}{j} &= \left[ \SetF{O}{i-1}{j} \setminus (\SetF{O}{i}{j} \cup \SetF{U}{i}{j}) \right] \setminus \SetF{\highdensity}{i}{j}
\enspace.
\end{align*}

The following lemma is the first step towards proving Proposition~\ref{prop:knapsack_general_result}. Intuitively, this lemma shows that as the iteration $i$ increases, the decrease in the value of $f(\SetF{O}{i}{j} \mid \SetF{S}{i}{j})$ is transferred, at least to some extent, to $\SetF{S}{i}{j}$.

\begin{lemma} \label{lem:ks_greedy_invariant}
	Given the conditions of Proposition~\ref{prop:knapsack_general_result}, if $\knapevent = 0$, then for every iteration $0 \leq i \leq \numiter$,
	\[
	\frac{(p + 1)}{(1 - \eps)} \cdot \sum_{j = 1}^\ell f(\SetF{S}{i}{j}) + \sum_{j = 1}^\ell f(\SetF{O}{i}{j} \mid \SetF{S}{i}{j})
	\geq
	\sum_{j = 1}^\ell f(\OPT \cup \SetF{S}{i}{j})
	- \density \sum_{t=1}^i \sum_{j=1}^\ell \sum_{r=1}^m c_r( \SetF{\lowdensity}{i}{j})
	\enspace.
	\]
\end{lemma}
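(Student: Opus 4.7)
The plan is to mimic the inductive argument used for Lemma~\ref{lem:greedy_invariant}, but with two adjustments that reflect the two algorithmic changes in \knapsackalg relative to \mainalg: (i) the greedy search is approximate up to a factor $(1-\eps)$ due to the marginal-gain thresholding, and (ii) an element of $\SetF{O}{i-1}{j}$ may be dropped from $\SetF{O}{i}{j}$ not only because its independence status changes but also because its density has fallen below $\density$. The second source of droppings is exactly what the sets $\SetF{\lowdensity}{i}{j}$ account for, and it is what produces the new additive term on the right-hand side.

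I would perform induction on $i$. The base case $i=0$ is handled verbatim as in Lemma~\ref{lem:greedy_invariant}: all solutions are empty, the summation on the right is $\sum_j f(\OPT \mid \varnothing) + \sum_j f(\varnothing)$, and non-negativity together with the inflated coefficient $(p+1)/(1-\eps) \geq p+1$ gives the inequality, while the low-density term is an empty sum and contributes $0$. For the inductive step, I would isolate the change between iterations $i-1$ and $i$, noting that only the solution $\SetF{S}{i}{j_i}$ is modified. The change in $\sum_j f(\SetF{S}{i}{j})$ equals $f(u_i \mid \SetF{S}{i-1}{j_i})$, and the analogue of Inequality~\eqref{eq:S_difference_condition_on} still holds (the same case analysis based on whether $u_i \in \SetF{O}{i-1}{j_i} \cup \SetF{S}{i-1}{j_i}$ goes through essentially unchanged, using the third bullet of the proposition).

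The heart of the proof is the knapsack analogue of Inequality~\eqref{eq:O_difference}. I would split the set $\SetF{O}{i-1}{j} \setminus (\SetF{O}{i}{j} \cup \SetF{U}{i}{j})$ into its high-density part $\SetF{\highdensity}{i}{j}$ and its low-density part $\SetF{\lowdensity}{i}{j}$ and bound the corresponding contributions separately:
\begin{itemize}
    \item For $u \in \SetF{\highdensity}{i}{j}$, the element passed the density test in Line~\ref{line:feasibility-check} at the end of iteration $i$ (so it was either rejected only due to the independence test or the knapsack test in Line~\ref{line:knapsack-check}, but in the case $\knapevent = 0$ only the independence rejection is possible). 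Hence the approximate greedy criterion applies, giving $f(u \mid \SetF{S}{i}{j}) \leq f(u_i \mid \SetF{S}{i-1}{j_i})/(1-\eps)$ via the thresholding argument (the threshold drops by at most a factor $1-\eps$ between the round in which $u$ was passed over and the round in which $u_i$ was accepted).
    \item For $u \in \SetF{\lowdensity}{i}{j}$, the defining inequality gives $f(u \mid \SetF{S}{i}{j}) < \density \cdot \sum_{r=1}^m c_r(u)$, so this group contributes at most $\density \sum_{r=1}^m c_r(\SetF{\lowdensity}{i}{j})$ to the sum.
\end{itemize}
Combining the two bounds via submodularity (as in the passage from the first to the second line of Inequality~\eqref{eq:O_difference}) and using the cardinality hypothesis $\sum_j |\SetF{O}{i-1}{j} \setminus (\SetF{O}{i}{j} \cup \SetF{U}{i}{j})| \leq p$, I obtain
\[
\sum_{j = 1}^\ell f(\SetF{O}{i - 1}{j} \mid \SetF{S}{i}{j}) \leq \sum_{j = 1}^\ell f(\SetF{O}{i}{j} \mid \SetF{S}{i}{j}) + \tfrac{p}{1-\eps}\cdot f(u_i \mid \SetF{S}{i-1}{j_i}) + \density \sum_{j=1}^\ell \sum_{r=1}^m c_r(\SetF{\lowdensity}{i}{j}).
\]
Plugging this into the same algebraic manipulation used at the end of the proof of Lemma~\ref{lem:greedy_invariant}, invoking the induction hypothesis, and absorbing the new $\density$-term into the telescoping sum $-\density \sum_{t=1}^i \sum_j \sum_r c_r(\SetF{\lowdensity}{t}{j})$ yields the claim.

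The main obstacle I anticipate is making the approximate greedy bound for high-density elements fully rigorous: one must argue that whenever a high-density feasible pair $(u,j)$ is not selected by iteration $i$, it is because the current threshold has already passed it and no more than one $(1-\eps)$-decrement has occurred between then and the acceptance of $u_i$. This is a standard but slightly delicate bookkeeping argument about the thresholding loop; once it is set up, the rest of the induction proceeds identically to the non-knapsack case, with the low-density contribution being the only genuinely new term.
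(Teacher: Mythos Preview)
Your proposal is correct and follows essentially the same inductive argument as the paper's proof, including the high/low-density split of $\SetF{O}{i-1}{j}\setminus(\SetF{O}{i}{j}\cup\SetF{U}{i}{j})$ and the resulting inequality analogous to~\eqref{eq:O_difference}. One minor muddle worth cleaning up: your parenthetical about high-density elements being ``rejected only due to the independence test'' is backwards---since $u\in\SetF{O}{i-1}{j}$ already gives $\SetF{S}{i-1}{j}+u\in\cI$, and high density with respect to $\SetF{S}{i}{j}$ (hence also $\SetF{S}{i-1}{j}$) plus $\knapevent=0$ mean the pair $(u,j)$ passes every test in Lines~\ref{line:feasibility-check}--\ref{line:knapsack-check} except possibly the threshold $\tau$; so the sole reason it was not accepted before $u_i$ is that its marginal gain was below $\tau$ at the previous threshold level, which is exactly what delivers the $(1-\eps)^{-1}$ factor you state.
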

\begin{proof}
	We prove the lemma by induction on the iterations $i=0, 1, \dotsc, \numiter$. 
	The base case is the case of $i = 0$, corresponding to the initialization of the algorithm.
	Recall that  the solutions are initialized to be empty, i.e.,  $\SetF{S}{0}{j} = \varnothing$ for every $j \in [\ell]$.
	This, together with non-negativity of $f$, implies that
	{\allowdisplaybreaks
		\begin{align*}
		\sum_{j = 1}^\ell f(\OPT \cup \SetF{S}{0}{j})
		&= \sum_{j = 1}^\ell f(\OPT \cup \varnothing) 
		&\text{(by the initialization $\SetF{S}{0}{j} = \varnothing$)} \\
		&= \sum_{j = 1}^\ell f(\varnothing) + \sum_{j = 1}^\ell f(\OPT \mid \varnothing) 
		&\text{(rearranging terms)}\\
		&\leq \frac{(p + 1)}{(1 - \eps)} \cdot \sum_{j = 1}^\ell f(\varnothing) + \sum_{j = 1}^\ell f(\OPT \mid \varnothing)
		&\mspace{-9mu}\text{($f(\varnothing) \geq 0$ by the non-negativity)} \\
		&\leq  \frac{(p + 1)}{(1 - \eps)} \cdot \sum_{j = 1}^\ell f(\SetF{S}{0}{j}) + \sum_{j = 1}^\ell f(\SetF{O}{0}{j} \mid \SetF{S}{0}{j}) \enspace.
		&\text{(by the initialization $\SetF{S}{0}{j} = \varnothing$)}
		\end{align*}
	}%
	This establishes the base case as the right term appearing on the right hand side of the lemma's inequality is zero when $i = 0$.
	
	Assume now that the lemma holds for all iterations $i - 1 \geq 0$, and let us prove it for iteration $i$.
	Recall that only the solution $\SetF{S}{i}{j_i}$ is modified during iteration $i$.
	Thus, we have that the change in iteration $i$ in the first sum in the guarantee of the lemma is
	\begin{equation} \label{eq:ks_S_difference}
	\frac{(p + 1)}{(1 - \eps)} \cdot \sum_{j = 1}^\ell f(\SetF{S}{i}{j})
	-
	\frac{(p + 1)}{(1 - \eps)} \cdot \sum_{j = 1}^\ell f(\SetF{S}{i - 1}{j})
	=
	\frac{(p + 1)}{(1 - \eps)} \cdot f(u_i \mid \SetF{S}{i - 1}{j_i})
	\enspace.
	\end{equation}
	
	Bounding the change in the second sum in the guarantee is more involved, and is done in three steps. 
	The first step is the following inequality.
	\begin{align} \label{eq:ks_S_difference_condition_on}
	\sum_{j = 1}^\ell f(\SetF{O}{i-1}{j} \mid \SetF{S}{i - 1}{j}) 
	&- \sum_{j = 1}^\ell f(\SetF{O}{i-1}{j} \mid \SetF{S}{i}{j}) \\ \nonumber
	&= f(\SetF{O}{i-1}{j_i} \mid \SetF{S}{i - 1}{j_i}) - f(\SetF{O}{i-1}{j_i} \mid \SetF{S}{i}{j_i})
	&\text{(only $\SetF{S}{i}{j_i}$ is modified)}\\ \nonumber
	&= f(u_i \mid \SetF{S}{i-1}{j_i}) - f(u_i \mid \SetF{O}{i-1}{j_i} \cup \SetF{S}{i - 1}{j_i})
	&\text{(rearranging terms)} \\ \nonumber 
	&\leq f(u_i \mid \SetF{S}{i-1}{j_i}) - f(u_i \mid\OPT \cup \SetF{S}{i - 1}{j_i})
	\enspace,
	\end{align}
	where the inequality may be proved by considering two cases.
	First, suppose that $u_i \in \SetF{O}{i-1}{j_i} \cup \SetF{S}{i-1}{j_i}$.
	In this case, the inequality holds with equality, because $\SetF{O}{i-1}{j_i} \subseteq \OPT$ by assumption.
	Consider now the case in which $u_i \not \in \SetF{O}{i-1}{j_i} \cup \SetF{S}{i-1}{j_i}$. 
	In this case, our assumption that $(\SetF{S}{\numiter}{j_i} \setminus \SetF{S}{i-1}{j_i}) \cap \OPT \subseteq \SetF{O}{i-1}{j_i}$ implies $u_i \not \in (\SetF{S}{\numiter}{j_i} \setminus \SetF{S}{i-1}{j_i}) \cap \OPT$, which implies in its turn $u_i \not \in \OPT$ since $u_i \in \SetF{S}{i}{j_i} \subseteq \SetF{S}{\numiter}{j_i}$ and $u_i \in \gnd_{i - 1} \subseteq \gnd \setminus \SetF{S}{i-1}{j_i}$. 
	Therefore, we get that in this case that Inequality~\eqref{eq:ks_S_difference_condition_on} holds due to the submodularity of $f$ (recall that $\SetF{O}{i-1}{j_i} \subseteq \OPT$ by our assumption).
	
	For the second step in the proof of the above mentioned bound, we use submodularity to bound the marginal gain $f(\SetF{O}{i - 1}{j} \mid \SetF{S}{i}{j})$ using sums of marginal gains of single elements.
	Observe that
	\begin{align*}
	\sum_{j = 1}^\ell f(\SetF{O}{i - 1}{j} \mid \SetF{S}{i}{j})
	&\leq \sum_{j = 1}^\ell f(\SetF{O}{i}{j} \mid \SetF{S}{i}{j}) 
	+ \sum_{j = 1}^\ell \sum_{u \in \SetF{O}{i-1}{j} \setminus \SetF{O}{i}{j}} \mspace{-18mu} f(u \mid \SetF{S}{i}{j})
	&\text{(submodularity, $\SetF{O}{i}{j} \subseteq \SetF{O}{i - 1}{j}$)} \\
	&= \sum_{j = 1}^\ell f(\SetF{O}{i}{j} \mid \SetF{S}{i}{j}) + \sum_{j = 1}^\ell \sum_{u \in \SetF{O}{i-1}{j} \setminus (\SetF{O}{i}{j} \cup \SetF{U}{i}{j})} \mspace{-36mu} f(u \mid \SetF{S}{i}{j}) \enspace. \mspace{-18mu}
	&\text{($\SetF{U}{i}{j} \subseteq \SetF{S}{i}{j}$)}
	\end{align*}
	
	The third step is to analyze the inner sum above by partitioning the elements $u \in \SetF{O}{i-1}{j} \setminus (\SetF{O}{i}{j} \cup \SetF{U}{i}{j})$ based on their density, i.e. into the two sets $\SetF{\highdensity}{i}{j}$ and $\SetF{\lowdensity}{i}{j}$ (recall that these two sets are indeed a partition of $\SetF{O}{i-1}{j} \setminus (\SetF{O}{i}{j} \cup \SetF{U}{i}{j})$).
	%Recall that $\SetF{\highdensity}{i}{j}$ and  $\SetF{\lowdensity}{i}{j}$ form a partition of $\SetF{O}{i-1}{j} \setminus (\SetF{O}{i}{j} \cup \SetF{U}{i}{j})$ so that we break up the sum according to this partition:
	\[
	\sum_{u \in \SetF{O}{i-1}{j} \setminus (\SetF{O}{i}{j} \cup \SetF{U}{i}{j})} \mspace{-36mu} f(u \mid \SetF{S}{i}{j})
	= \sum_{u \in \SetF{\highdensity}{i}{j}} f(u \mid \SetF{S}{i}{j}) + \sum_{u \in \SetF{\lowdensity}{i}{j}} f(u \mid \SetF{S}{i}{j})
	\]
	The second sum may be bounded by virtue of the low density of its elements, as
	\[
	\sum_{u \in \SetF{\lowdensity}{i}{j}} f(u \mid \SetF{S}{i}{j}) 
	\leq \sum_{u \in \SetF{\lowdensity}{i}{j}}  \density \cdot \sum_{r=1}^m c_r(u)
	= \density \sum_{r=1}^m c_r(\SetF{\lowdensity}{i}{j} )
	\enspace.
	\]
	Recall now that by the approximate greedy search, the element-solution pair $(u_i, \SetF{S}{i}{j_i})$ has the property that $ f( u_i \mid \SetF{S}{i}{j_i}) \geq (1 - \eps) f(u \mid \SetF{S}{i}{j})$ for all element-solution pairs $(u, \SetF{S}{i}{j})$ where $\SetF{S}{i}{j} + u$ is feasible with respect to independence system, and $u$ has high density with respect to $\SetF{S}{i}{j}$.
	In particular, we have that $ f( u_i \mid \SetF{S}{i}{j_i}) \geq (1 - \eps) f(u \mid \SetF{S}{i}{j})$ for all $u \in \SetF{\highdensity}{i}{j}$.
	This yields an upper bound on the first sum,
	\[
	\sum_{u \in \SetF{\highdensity}{i}{j}} f(u \mid \SetF{S}{i}{j}) 
	\leq \sum_{u \in \SetF{\highdensity}{i}{j}} ( 1 - \eps)^{-1} f( u_i \mid \SetF{S}{i}{j_i})
	= ( 1 - \eps)^{-1} f( u_i \mid \SetF{S}{i}{j_i}) \cdot | \SetF{\highdensity}{i}{j} | 
	\enspace.
	\]
	Combining the upper bounds we have obtained on the sums corresponding to $\SetF{\lowdensity}{i}{j}$ and $\SetF{\highdensity}{i}{j}$ yields
	\begin{align*}
	\sum_{j = 1}^\ell \sum_{u \in \SetF{O}{i-1}{j} \setminus (\SetF{O}{i}{j} \cup \SetF{U}{i}{j})} \mspace{-36mu} f(u \mid \SetF{S}{i}{j})
	& \leq{}\density \sum_{j=1}^\ell \sum_{r=1}^m c_r(\SetF{\lowdensity}{i}{j} ) 
	+ \sum_{j=1}^\ell ( 1 - \eps)^{-1} f( u_i \mid \SetF{S}{i}{j_i}) \cdot | \SetF{\highdensity}{i}{j} |  
		%&\text{(bounds above)}
		\mspace{-36mu}\\
& 	={}\density \sum_{j=1}^\ell \sum_{r=1}^m c_r(\SetF{\lowdensity}{i}{j} ) 
	+ ( 1 - \eps)^{-1} f( u_i \mid \SetF{S}{i}{j_i}) \cdot \sum_{j=1}^\ell | \SetF{\highdensity}{i}{j} |  
		\text{~~~~~(rearranging)}\\
&	\leq{} \density \sum_{j=1}^\ell \sum_{r=1}^m c_r(\SetF{\lowdensity}{i}{j} ) 
	+ ( 1 - \eps)^{-1} f( u_i \mid \SetF{S}{i}{j_i}) \cdot \sum_{j=1}^\ell | \SetF{O}{i-1}{j} \setminus (\SetF{O}{i}{j} \cup \SetF{U}{i}{j}) | \\
& 	\leq{}\density \sum_{j=1}^\ell \sum_{r=1}^m c_r(\SetF{\lowdensity}{i}{j} ) 
	+ \frac{p}{ 1 - \eps}  \cdot f( u_i \mid \SetF{S}{i}{j_i}) 
	\enspace,
	\end{align*}
	where the cardinality bound $| \SetF{\highdensity}{i}{j} |  \leq | \SetF{O}{i-1}{j} \setminus (\SetF{O}{i}{j} \cup \SetF{U}{i}{j}) | $ in second inequality follows from the containment $\SetF{\highdensity}{i}{j} \subseteq \SetF{O}{i-1}{j} \setminus (\SetF{O}{i}{j} \cup \SetF{U}{i}{j})$
	and the last inequality follows from the final condition of the proposition which states that $\sum_{j=1}^\ell | \SetF{O}{i-1}{j} \setminus (\SetF{O}{i}{j} \cup \SetF{U}{i}{j}) |  \leq p$ .
	Together with the inequality from this second step, this yields
	\begin{equation} \label{eq:ks_O_difference}
	\sum_{j = 1}^\ell f(\SetF{O}{i - 1}{j} \mid \SetF{S}{i}{j})
	\leq \sum_{j = 1}^\ell f(\SetF{O}{i}{j} \mid \SetF{S}{i}{j}) 
	+ \density \sum_{j=1}^\ell \sum_{r=1}^m c_r(\SetF{\lowdensity}{i}{j} ) 
	+ \frac{p}{ 1 - \eps}  \cdot f( u_i \mid \SetF{S}{i}{j_i})
	\enspace.
	\end{equation}
	The remainder of the proof consists of combining the three inequalities \eqref{eq:ks_S_difference}, \eqref{eq:ks_S_difference_condition_on} and \eqref{eq:ks_O_difference} with the induction hypothesis, as follows.
	\begin{align*}
	% left hand side
		\frac{(p+1)}{(1-\eps)}& \cdot \sum_{j = 1}^\ell f(\SetF{S}{i}{j}) 
		+ \sum_{j = 1}^\ell f(\SetF{O}{i}{j} \mid \SetF{S}{i}{j})\\
	\geq{} &
		\left[
			\frac{(p + 1)}{(1-\eps)} \cdot \sum_{j = 1}^\ell f(\SetF{S}{i - 1}{j}) 
			+ \frac{(p + 1)}{(1-\eps)} \cdot f(u_i \mid \SetF{S}{i - 1}{j_i})
		\right] 
	\\ & 
	+ \left[
		\sum_{j = 1}^\ell f(\SetF{O}{i-1}{j} \mid \SetF{S}{i}{j}) 
		- \frac{p}{(1-\eps)} \cdot f(u_i \mid \SetF{S}{i-1}{j_i}) 
		- \density \sum_{j=1}^\ell \sum_{r=1}^m c_r(\SetF{\lowdensity}{i}{j} ) 
		\right] \\
	\geq{} &
		\frac{(p + 1)}{(1-\eps)} \cdot \sum_{j = 1}^\ell f(\SetF{S}{i - 1}{j})
		+ \sum_{j = 1}^\ell f(\SetF{O}{i - 1}{j} \mid \SetF{S}{i}{j}) 
		+  f(u_i \mid \SetF{S}{i-1}{j_i}) 
		- \density \sum_{j=1}^\ell \sum_{r=1}^m c_r(\SetF{\lowdensity}{i}{j} ) \\
	\geq{} &
		\frac{(p + 1)}{(1-\eps)} \cdot \sum_{j = 1}^\ell f(\SetF{S}{i - 1}{j})
		+ \sum_{j = 1}^\ell f(\SetF{O}{i - 1}{j} \mid \SetF{S}{i - 1}{j})
		+  f(u_i \mid\OPT \cup \SetF{S}{i - 1}{j_i})
		- \density \sum_{j=1}^\ell \sum_{r=1}^m c_r(\SetF{\lowdensity}{i}{j} ) \\
	\geq{} &
		\sum_{j = 1}^\ell f(\OPT \cup \SetF{S}{i - 1}{j}) 
		- \density \sum_{t=1}^{i-1}\sum_{j=1}^\ell \sum_{r=1}^m c_r( \SetF{\lowdensity}{i}{j})
		+  f(u_i \mid\OPT \cup \SetF{S}{i - 1}{j_i})
		- \density \sum_{j=1}^\ell \sum_{r=1}^m c_r(\SetF{\lowdensity}{i}{j} ) \\
	={} &
		\sum_{j = 1}^\ell f(\OPT \cup \SetF{S}{i - 1}{j}) 
		+  f(u_i \mid\OPT \cup \SetF{S}{i - 1}{j_i})
		- \density \sum_{t=1}^{i}\sum_{j=1}^\ell \sum_{r=1}^m c_r( \SetF{\lowdensity}{i}{j}) \\
	={}&
		\sum_{j = 1}^\ell f(\OPT \cup \SetF{S}{i}{j}) 
		- \density \sum_{t=1}^{i}\sum_{j=1}^\ell \sum_{r=1}^m c_r( \SetF{\lowdensity}{i}{j}) 
	\end{align*}
	where the first inequality follows from \eqref{eq:ks_S_difference} and \eqref{eq:ks_O_difference}, the second inequality holds since $f(u_i \mid \SetF{S}{i - 1}{j_i})$ is guaranteed to be non-negative, the third inequality follows from \eqref{eq:ks_S_difference_condition_on}, and the fourth inequality follows by induction.
\end{proof}

\begin{corollary} \label{cor:ks_approximation_to_union}
	Given the conditions of Proposition~\ref{prop:knapsack_general_result}, if $E = 0$, then the solutions constructed by \knapsackalg satisfy the lower bound
	\[
	\frac{(p+1)}{(1 - \eps)} \sum_{j=1}^\ell f( \SetF{S}{\numiter}{j}) 
	\geq \sum_{j=1}^\ell f( \OPT \cup \SetF{S}{\numiter}{j}) - \eps \ell \maxgain - \density \ell m
	\enspace.
	\]
\end{corollary}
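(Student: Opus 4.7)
The plan is to start from Lemma~\ref{lem:ks_greedy_invariant} evaluated at $i = \numiter$. Rearranging the resulting inequality, one already obtains a bound of the form claimed by the corollary up to two ``slack'' quantities: the residual marginal sum $\sum_j f(\SetF{O}{\numiter}{j} \mid \SetF{S}{\numiter}{j})$ on the LHS and the accumulated density-cost term $\density \sum_{t,j,r} c_r(\SetF{\lowdensity}{t}{j})$ on the RHS. The task then reduces to showing that the total of these two slack quantities is at most $\eps \ell \maxgain + \density \ell m$.

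For the marginal slack I would use submodularity to write $f(\SetF{O}{\numiter}{j} \mid \SetF{S}{\numiter}{j}) \leq \sum_{u \in \SetF{O}{\numiter}{j}} f(u \mid \SetF{S}{\numiter}{j})$ and bound each single-element marginal separately. Fix $u \in \SetF{O}{\numiter}{j}$. The hypotheses of Proposition~\ref{prop:knapsack_general_result} imply $u \in \gnd_\numiter$ and $\SetF{S}{\numiter}{j} + u \in \cI$, while the assumption $E = 0$ rules out any failure of the knapsack check at Line~\ref{line:knapsack-check}. Consequently, $u$ must have been examined in the final pass of the while loop but rejected; letting $S'$ denote the state of solution $j$ at the moment of examination, this rejection means $f(u \mid S') < \max(\threshold^*, \density \sum_r c_r(u))$, where the terminal threshold $\threshold^*$ is bounded by roughly $(\eps/n) \maxgain$ (up to a harmless $(1-\eps)^{-1}$ factor) by the termination rule of the while loop. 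Submodularity, together with $S' \subseteq \SetF{S}{\numiter}{j}$, lifts this to $f(u \mid \SetF{S}{\numiter}{j}) \leq \threshold^* + \density \sum_r c_r(u)$. Summing over $u \in \SetF{O}{\numiter}{j}$ (using $|\SetF{O}{\numiter}{j}| \leq n$) and then over $j$, the marginal slack is at most $\eps \ell \maxgain + \density \sum_{j,r} c_r(\SetF{O}{\numiter}{j})$.

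The key observation for combining the two density contributions cleanly is that, for each fixed $j$, the set $\SetF{O}{\numiter}{j}$ and the sets $\{\SetF{\lowdensity}{t}{j}\}_{t=1}^\numiter$ are pairwise disjoint subsets of $\OPT$: indeed $\SetF{\lowdensity}{t}{j} \subseteq \SetF{O}{t-1}{j} \setminus \SetF{O}{t}{j}$ by construction, and the chain $\SetF{O}{\cdot}{j}$ is monotonically decreasing by the second condition of the proposition. Combined with the knapsack feasibility of $\OPT$, this gives $c_r(\SetF{O}{\numiter}{j}) + \sum_t c_r(\SetF{\lowdensity}{t}{j}) \leq c_r(\OPT) \leq 1$ for every $j$ and $r$; summing over $r$ and $j$ bounds the combined density slack by $\density \ell m$, which together with $\eps \ell \maxgain$ is exactly the slack the corollary permits. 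The main subtlety I expect is the careful argument ruling out every reason, other than the threshold/density test, for $u$ to have been skipped in the final pass, which requires simultaneously invoking all three of the proposition's conditions on $\SetF{O}{\numiter}{j}$ together with the $E = 0$ assumption.
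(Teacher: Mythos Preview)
Your proposal is correct and follows essentially the same route as the paper: plug $i=\numiter$ into Lemma~\ref{lem:ks_greedy_invariant}, bound the residual marginals $f(\SetF{O}{\numiter}{j}\mid\SetF{S}{\numiter}{j})$ using the termination condition together with $E=0$, and control all density contributions via the feasibility of $\OPT$ and disjointness of the relevant subsets.

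The only organizational difference is that the paper partitions $\SetF{O}{\numiter}{j}$ into high-- and low--density pieces $\SetF{\highdensity}{\numiter+1}{j}$, $\SetF{\lowdensity}{\numiter+1}{j}$ and folds the low--density piece in with the earlier $\SetF{\lowdensity}{t}{j}$'s, whereas you bound $\max(\cdot,\cdot)$ by the sum and then use that the \emph{entire} set $\SetF{O}{\numiter}{j}$ is disjoint from $\bigcup_t \SetF{\lowdensity}{t}{j}$ inside $\OPT$. Your variant is mildly simpler (no new auxiliary sets) at the cost of a slightly looser intermediate inequality; both collapse to the same $\eps\ell\maxgain + \density\ell m$ slack. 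Your cautionary remark about the $(1-\eps)^{-1}$ factor on the terminal threshold is well taken---the paper's own statement of the termination bound glosses over exactly this point.
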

\begin{proof}
	Our first step is to show that $f(\SetF{O}{\numiter}{j} \mid \SetF{S}{\numiter}{j})$ is negligable for every solution $1 \leq j \leq \ell$.
	To this end, consider any fixed solution $\SetF{S}{\numiter}{j}$ for $1 \leq j \leq \ell$.
	By the termination conditions of \knapsackalg, each element $u \in \gnd_{\numiter}$ satisfies
	\begin{equation} \label{eq:mg_termination}
	f(u \mid \SetF{S}{\numiter}{j}) < \max \left((\eps / n) \cdot \maxgain, \density \cdot \sum_{r=1}^m c_r(u) \right) 
	\enspace.
	\end{equation}
	In particular, this holds for each $u \in \SetF{O}{\numiter}{j}$, as the set $\SetF{O}{\numiter}{j}$ is contained in $\gnd_{\numiter}$.
	We now partition the set $\SetF{O}{\numiter}{j}$ into two groups: the elements with high density and the elements of low density.
	More formally, let $\SetF{\lowdensity}{\numiter+1}{j}$ be the elements in $\SetF{O}{\numiter}{j}$ with low density, 
	\[
	\SetF{\lowdensity}{\numiter+1}{j} 
	= \left\{ u \in  \SetF{O}{\numiter}{j} : f(u \mid \SetF{S}{\numiter}{j}) < \density \cdot \sum_{r=1}^m c_r(u)\right\},
	\] 
	and define $\SetF{\highdensity}{\numiter+1}{j} = \SetF{O}{\numiter}{j} \setminus \SetF{\lowdensity}{\numiter+1}{j}$ to be the high density elements.
	We claim that adding any high density element in  $\SetF{\highdensity}{\numiter+1}{j}$ to the solution $\SetF{S}{\numiter}{j}$ has a marginal gain of at most $(\eps / n) \cdot \maxgain$.
	To see this, observe that because the element $u$ has high density, \eqref{eq:mg_termination} implies that $f(u \mid \SetF{S}{\numiter}{j}) < (\eps / n) \cdot \maxgain$.
	%Additionally, when the algorithm terminates, we have that $\threshold < (\eps / n) \cdot \maxgain$.
	%Together, these inequalities imply that $ f(u \mid \SetF{S}{\numiter}{j}) < (\eps / n) \cdot \maxgain$ for each $u \in \SetF{\highdensity}{\numiter+1}{j}$.
	
	Using the above observations, we can now bound the marginal gain of adding $\SetF{O}{\numiter}{j}$ to $\SetF{S}{\numiter}{j}$ as follows.
	\begin{align*}
	f( \SetF{O}{\numiter}{j} \mid \SetF{S}{\numiter}{j})
	&\leq \sum_{u \in \SetF{O}{\numiter}{j}} f( u \mid \SetF{S}{\numiter}{j})
		&\text{(submodularity)} \\
	&= 
		\sum_{u \in \SetF{\highdensity}{\numiter+1}{j}} f( u \mid \SetF{S}{\numiter}{j}) 
		+ \sum_{u \in \SetF{\lowdensity}{\numiter+1}{j}} f( u \mid \SetF{S}{\numiter}{j})
		&\text{(partitioning the sum)} \\
	&\leq 
		\sum_{u \in \SetF{\highdensity}{\numiter+1}{j}} \frac{\eps}{n} \cdot \maxgain 
		+ \sum_{u \in \SetF{\lowdensity}{\numiter+1}{j}} \density \cdot \sum_{r=1}^m c_r(u)
		&\text{(above bound)} \\
	&= 
		\frac{|\SetF{\highdensity}{\numiter+1}{j}|}{n} \cdot \eps \maxgain 
		+  \density \sum_{r=1}^m c_r(\SetF{\lowdensity}{\numiter+1}{j}) \\
	&\leq 
%		\frac{|\OPT|}{n} \cdot \eps \maxgain 
%		+  \density \sum_{r=1}^m c_r(\SetF{\lowdensity}{\numiter+1}{j}) 
%		&\text{($\SetF{\highdensity}{\numiter+1}{j} \subseteq \SetF{O}{\numiter+1}{j} \subseteq \OPT$)}\\
%	&\leq 
		\eps \maxgain 
		+  \density \sum_{r=1}^m c_r(\SetF{\lowdensity}{\numiter+1}{j}).
	\end{align*}
	
	Substituting the above bound into the guarantee of Lemma~\ref{lem:ks_greedy_invariant} for the final iteration $i=\numiter$ implies
	\begin{align*}
	\frac{(p + 1)}{(1 - \eps)} \cdot \sum_{j = 1}^\ell f(\SetF{S}{\numiter}{j}) 
	&\geq 
		\sum_{j = 1}^\ell f(\OPT \cup \SetF{S}{i}{j})
	- \sum_{j = 1}^\ell f(\SetF{O}{\numiter}{j} \mid \SetF{S}{\numiter}{j})
	- \density \sum_{t=1}^\numiter \sum_{j=1}^\ell \sum_{r=1}^m c_r( \SetF{\lowdensity}{t}{j}) \\
	&\geq 
		\sum_{j = 1}^\ell f(\OPT \cup \SetF{S}{i}{j})
		- \sum_{j = 1}^\ell \left[ \eps \maxgain 
		+  \density \sum_{r=1}^m c_r(\SetF{\lowdensity}{\numiter+1}{j}) \right]  
		- \density \sum_{t=1}^\numiter \sum_{j=1}^\ell \sum_{r=1}^m c_r( \SetF{\lowdensity}{t}{j}) \\
	&=
		\sum_{j = 1}^\ell f(\OPT \cup \SetF{S}{i}{j})
		- \eps \ell \maxgain 
		- \density \sum_{t=1}^{\numiter+1} \sum_{j=1}^\ell \sum_{r=1}^m c_r( \SetF{\lowdensity}{t}{j})
	\end{align*}
	To complete the proof of the corollary, we need to show that $\sum_{t=1}^{\numiter+1} \sum_{j=1}^\ell \sum_{r=1}^m c_r( \SetF{\lowdensity}{t}{j}) \leq \ell m$.
	To this end, observe that for each solution $1 \leq j \leq \ell$, the sets $\SetF{\lowdensity}{1}{j}, \dots \SetF{\lowdensity}{\numiter+1}{j}$ are disjoint subsets of $\OPT$.
	Also observe that $\OPT$ is a feasible solution so that it satisfies all knapsack constraints, $c_r(\OPT) \leq 1$ for all $1 \leq r \leq m$.
	Using these facts and the modularity of the knapsack functions, we have that
	\begin{align*}
	\sum_{t=1}^{\numiter+1} \sum_{j=1}^\ell \sum_{r=1}^m c_r( \SetF{\lowdensity}{t}{j}) 
	&= \sum_{j=1}^\ell \sum_{r=1}^m \sum_{t=1}^{\numiter+1} c_r( \SetF{\lowdensity}{t}{j}) 
		&\text{(rearranging terms)} \\
	&= \sum_{j=1}^\ell \sum_{r=1}^m c_r\left( \cup_{t=1}^{\numiter+1} \SetF{\lowdensity}{t}{j} \right) 
		&\text{(disjointedness, modularity)} \\
	&\leq \sum_{j=1}^\ell \sum_{r=1}^m c_r \left( \OPT \right)
		&\text{($\cup_{t=1}^{\numiter+1} \SetF{\lowdensity}{t}{j} \subseteq \OPT$)} \\
	&\leq \sum_{j=1}^\ell \sum_{r=1}^m 1 
		&\text{(feasibility of $\OPT$)}\\
	&= \ell m 
	\enspace.
	\qedhere
	\end{align*}
\end{proof}

\begin{proof}[Proof of Proposition~\ref{prop:knapsack_general_result}]
	The analysis proceeds with two cases, depending on whether $\knapevent = 1$ or $\knapevent = 0$.
	
	First, suppose that $\knapevent = 1$, which is to say that Line~\ref{line:knapsack-check} evaluates to \texttt{false} at some point during the execution of the algorithm.
	This happens when, at some iteration $i$ there exists a solution $\SetF{S}{i}{j}$ and a high density element $u$ such that adding the element to this set is feasible in the independence system, but the knapsack constraint is violated. 
	More precisely, the set $A \triangleq \SetF{S}{i}{j} + u$ is independent (i.e., $A \in \cI$) but  $c_r(\SetF{S}{i}{j} + u) > 1$ for some knapsack function $1 \leq r \leq m$.
	Although $A$ itself is not feasible, we claim that $f(A) > \density$.
	To this end, let us order the elements of $A$ according to the order in which they were added to $\SetF{S}{i}{j}$, with $u$ appearing last, i.e., $A = \{ u_1, u_2, \dots u_k \}$ with $u_k = u$.
	For $1 \leq i \leq k$, define the sets $A_i = \{ u_1, u_2, \dots u_i \}$ and $A_0 = \varnothing$.
	Then, we obtain the lower bound
	\[
	f(A) 
	= \sum_{i=1}^k f( u_i \mid A_{i-1})
	\geq \sum_{i=1}^k \density \cdot \sum_{r=1}^m c_r(u_i) 
	= \density \sum_{r=1}^m c_r(A)
	> \density \cdot 1
	= \density
	\enspace,
	\]
	where the first inequality follows from the fact that each of the elements has high density when it is added to the solution and the second inequality follows from the fact that $A$ violates at least one of the knapsack constraints.
	
	The next step is to show that between $\SetF{S}{i}{j}$ and $\{ u \}$, at least one of these has value larger than $\density / 2$.
	In particular, observe that
	\[
	\max\left\{ f(\SetF{S}{i}{j}), f(\{u\}) \right\}
	\geq \frac{1}{2} \left( f(\SetF{S}{i}{j}) + f(\{u\}) \right) 
	\geq \frac{1}{2} \left( f(\SetF{S}{i}{j} + u ) + f(\varnothing) \right) 
	\geq \frac{1}{2} f(A) 
	> \frac{\density}{2} 
	\enspace,
	\]
	where the first inequality bounds the maximum by the average, the second inequalities follows by submodularity, the third inequality follows by non-negativity, and the final inequality follows from the bound above.
	
	Recall now that the algorithm returns the set $S$ among the sets $\SetF{S}{\numiter}{1}, \dots \SetF{S}{\numiter}{\ell}$, and $\{ e \}  = \argmax_{u \in \gnd} f(u)$ which maximizes the objective value.
	One can note that the final solutions have larger objective values than the solutions at iteration $i$ (i.e., $f(\SetF{S}{\numiter}{j}) \geq f(\SetF{S}{i}{j})$) because only elements with positive marginal gains are added to the solutions by the algorithm.
	We also note that by construction of $e$, we have that $f(\{e\}) \geq f(\{u\})$ because $u$ is a feasible element.
	Together, these facts imply that
	\[
	f(S) 
	\geq \max \left\{ \SetF{S}{\numiter}{1}, \dots \SetF{S}{\numiter}{\ell}, \{e\} \right\}
	\geq \max \left\{ f(\SetF{S}{\numiter}{j}), f(e) \right\}
	\geq \max \left\{  f(\SetF{S}{i}{j}), f(u) \right\}
	> \frac{\density}{2}
	\enspace,
	\]
	which completes our proof for the case of $\knapevent = 1$.
	
	Next, we turn our attention to the case of $\knapevent = 0$.
	Recall that the algorithm returns the set $S$ among the sets $\SetF{S}{\numiter}{1}, \dots \SetF{S}{\numiter}{\ell}$, and $\{ e \}  = \argmax_{u \in \gnd} f(u)$ which maximizes the objective value.
	Therefore, to lower bound $f(S)$, it suffices to only consider the maximum over the sets $\SetF{S}{\numiter}{1}, \dots \SetF{S}{\numiter}{\ell}$.
	Applying an averaging argument to the guarantee of Corollary~\ref{cor:ks_approximation_to_union} yields
	\begin{equation} \label{eq:ks_final_lower_bound}
	f(S) 
	\geq \max \left\{ \SetF{S}{\numiter}{1}, \dots \SetF{S}{\numiter}{\ell}\right\}
	\geq \frac{1}{\ell} \sum_{j=1}^\ell f(\SetF{S}{\numiter}{j})
	\geq \frac{(1-\eps)}{(p+1)} \left( \frac{1}{\ell}\sum_{j=1}^\ell f( \OPT \cup \SetF{S}{\numiter}{j}) - \eps \maxgain - \density m \right)
	\enspace.
	\end{equation}
	Consider now a random set $\bar{S}$ chosen uniformly at random from the $\ell$ constructed solutions $\SetF{S}{\numiter}{1}, \SetF{S}{\numiter}{2}, \dotsc, \SetF{S}{\numiter}{\ell}$. 
	Since these solutions are disjoint by construction, an element can belong to $\bar{S}$ with probability at most $\ell^{-1}$. 
	Hence, by applying Lemma~\ref{lem:distribution} to the submodular function $g(S) = f(\OPT \cup S)$, we get
	\[
	\frac{1}{\ell} \cdot \sum_{j=1}^\ell f( \OPT \cup \SetF{S}{\numiter}{\ell})
	= \bE[ f(OPT \cup \bar{S}) ]
	= \bE[ g(\bar{S})]
	\geq (1 - \ell^{-1}) \cdot g( \varnothing)
	= (1 - \ell^{-1}) \cdot f(\OPT) \enspace.
	\]
	Plugging this inequality into \eqref{eq:ks_final_lower_bound}, and using the fact that $\maxgain \leq \OPT$, we obtain the lower bound
	\begin{align*}
	f(S) 
	&\geq 
		\frac{(1-\eps)}{(p+1)} \Big( 
			(1 - \ell^{-1}) \cdot f( \OPT ) - \eps \maxgain - \density m 
		\Big) \\
	&\geq 
		\frac{(1-\eps)}{(p+1)} \Big( 
		(1 - \ell^{-1}) \cdot f( \OPT ) - \eps f(\OPT) - \density m 
		\Big) \\
	&=
		\frac{(1-\eps)}{(p+1)} \Big( 
		(1 - \ell^{-1} - \eps) \cdot f( \OPT ) - \density m 
		\Big) \enspace.
	\end{align*}
	
	Suppose further that $f$ is monotone. 
	In this case, relating $f(\OPT \cup S)$ to $f(\OPT)$ is more straightforward and does not require a loss of approximation. 
	In particular, applying monotonicity directly to \eqref{eq:ks_final_lower_bound}, we get
	{\allowdisplaybreaks
	\begin{align*}
	f(S) 
	&\geq \frac{(1-\eps)}{(p+1)} \left( \frac{1}{\ell}\sum_{j=1}^\ell f( \OPT \cup \SetF{S}{\numiter}{j}) - \eps \maxgain - \density m \right) 
		&\text{(Inequality~\eqref{eq:ks_final_lower_bound})}\\
	&\geq \frac{(1-\eps)}{(p+1)} \left( \frac{1}{\ell}\sum_{j=1}^\ell f( \OPT) - \eps \maxgain - \density m \right) 
		&\text{(monotonicity)}\\
	&= \frac{(1-\eps)}{(p+1)} \Big( f( \OPT) - \eps \maxgain - \density m \Big) \\
	&\geq \frac{(1-\eps)}{(p+1)} \Big( f( \OPT) - \eps f(\OPT) - \density m \Big) 
		&\text{($\maxgain \leq f(\OPT)$)}\\
	&= \frac{(1-\eps)}{(p+1)} \Big( (1 - \eps) f( \OPT) - \density m \Big) \enspace.
	\qedhere
	\end{align*}
	}
\end{proof}
\section{Proof of Proposition~\ref{prop:mod-repeated-greedy-guarantees} (\modrepeatedgreedy)} \label{sec:mod-rg-proof}

In this section, we present a proof of Proposition~\ref{prop:mod-repeated-greedy-guarantees} which provides approximation guarantees for \modrepeatedgreedy when the density parameter $\density$ is fixed.
We begin by restating the proposition.

\begin{manualtheorem}{\ref{prop:mod-repeated-greedy-guarantees}}
	\modRGguarantees[*]
\end{manualtheorem}

The main technical aspect is to prove an approximation guarantee for \modgreedy when $(\gnd, \cI)$ is a $k$-system.
Roughly speaking, this will be similar to the analysis of the vanilla greedy algorithm for $k$-systems (Lemma~3.2 of \cite{GRST10}), but we will need to account for the marginal gain thresholding and the knapsack density technique.

In order to analyze \modgreedy, we now introduce the following lemma, which is a structural result about $k$-systems.
This lemma is implicit in the proof of Lemma~3.2 of \cite{GRST10}, but we choose to state it separately since our use of it is slightly more involved.
We remark that a nearly identical construction appears in Section~\ref{sec:analysis-k-system}.

\begin{lemma}\label{lemma:greedy_partition_opt}
	Consider in an arbitrary $X \in \cI$ and let $T$ be the number of iterations of \modgreedy.
	There exists sets $C_1, C_2,\dots C_{T+1}$ with the following properties:
	\begin{itemize}
		\item The sets $C_1, C_2,\dots C_{T+1}$ form a disjoint partition of $X$.
		\item For every integer $1 \leq t \leq T$, $|C_t| \leq k$.
		\item For every integer $1 \leq t \leq T+1$, $C_t \subseteq \{ u \mid S_{t-1} + u \in \cI  \}$.
	\end{itemize}
\end{lemma}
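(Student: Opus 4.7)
The plan is to construct the partition in two stages: first isolate the elements of $X$ that remain feasible at the end of \modgreedy, and then use a Hall-type scheduling argument to distribute the rest across the $T$ iterations. I would begin by defining $C_{T+1} = \{u \in X : S_T + u \in \cI\}$, which immediately satisfies Property 3 for $t = T+1$. For each remaining $u \in X \setminus C_{T+1}$, let $\tau(u) = \min\{t \geq 1 : S_t + u \notin \cI\}$; this is well defined and at most $T$. By minimality $S_{\tau(u)-1} + u \in \cI$, and by downward closure $S_{t-1} + u \in \cI$ for every $t \leq \tau(u)$. So Property 3 will hold automatically for any assignment in which each $u$ is placed in some $C_t$ with $t \leq \tau(u)$.

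The core of the proof is the following capacity estimate. For each $i$, let $U_i = \{u \in X \setminus C_{T+1} : \tau(u) \leq i\}$; I claim $|U_i| \leq k \cdot i$. Indeed, every $u \in U_i$ satisfies $u \notin S_i$ (elements added by \modgreedy stay feasible for all future prefixes, so they lie in $C_{T+1}$) and $S_i + u \notin \cI$ (since $S_{\tau(u)} + u \notin \cI$ and $S_{\tau(u)} \subseteq S_i$). Hence $S_i$ is a base of $B_i := S_i \cup U_i$. On the other hand, $U_i \subseteq X \in \cI$ is independent and can be extended to another base $A_i$ of $B_i$ with $|A_i| \geq |U_i|$. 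The $k$-system property, applied to the two bases $S_i$ and $A_i$ of $B_i$, yields $|U_i| \leq |A_i| \leq k |S_i| = k i$, where $|S_i| = i$ holds because the iteration counter of \modgreedy advances by exactly one each time an element is added.

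It remains to assign every $u \in X \setminus C_{T+1}$ to a slot $C_{t(u)}$ with $t(u) \leq \tau(u)$ and at most $k$ elements per slot. This is a bipartite many-to-one matching with capacity $k$ at each slot $t \in \{1, \dots, T\}$, and its existence follows from the capacitated version of Hall's theorem: for any subset $V$ of elements with $\tau^* := \max_{u \in V} \tau(u)$, the admissible slots are $\{1, \dots, \tau^*\}$ with total capacity $k \tau^*$, while $V \subseteq U_{\tau^*}$ so that $|V| \leq k \tau^*$ by the previous paragraph. (Equivalently, one can process the elements in order of increasing $\tau(u)$ and greedily place each in the lowest admissible unsaturated slot; the bound $|U_i| \leq k i$ guarantees that no slot overflows.) The resulting $C_1, \dots, C_{T+1}$ is a partition of $X$ satisfying all three properties.

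The main obstacle is the cumulative bound $|U_i| \leq k i$; once it is established the scheduling argument is routine. Two small technical points to be careful about are that $|S_i| = i$ (which is where the counter convention of \modgreedy is used) and that elements added to $S$ never become infeasible and hence automatically land in $C_{T+1}$, so $U_i$ is genuinely disjoint from $S_i$ and the $k$-system inequality can be applied cleanly.
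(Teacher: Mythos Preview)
Your proof is correct and takes a genuinely different route from the paper's. The paper builds the partition by a direct backward recursion: it sets $C_{T+1}$ as you do, and then for $t = T, T-1, \dotsc, 1$ defines $C_t$ to be (any) $\min(|B_t|, k)$ elements of
\[
B_t \;=\; \bigl\{\, u \in (X \setminus S_{t-1}) \setminus \textstyle\bigcup_{s>t} C_s \;:\; S_{t-1}+u \in \cI \,\bigr\},
\]
and proves by reverse induction that $|X \setminus \bigcup_{s \ge t} C_s| \le k|S_{t-1}|$, using the $k$-system property whenever $|B_t| < k$ (in which case $S_{t-1}$ is a base of the residual set). So the paper never needs a matching argument: it just grabs up to $k$ elements per slot and shows nothing is left over.

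Your approach instead isolates the single inequality $|U_i| \le k i$ (proved with exactly the same ``$S_i$ is a base, $U_i$ is independent'' observation the paper uses) and then delegates the actual assignment to Hall's theorem / EDF scheduling. This makes the combinatorial content crisper and reusable, at the cost of invoking an external tool the paper avoids. One minor cosmetic difference: your $C_{T+1}$ includes $X \cap S_T$ while the paper's excludes it, but this is immaterial to the lemma statement and in fact simplifies your disjointness check for $U_i$ and $S_i$. Both arguments are clean; the paper's is more self-contained, yours is more modular.
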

\begin{proof}
	We construct the sets $C_1, C_2, \dots C_{T+1}$ recursively, with knowledge of the algorithm's execution path.
	We begin by defining the last set, 
	\[
	C_{T+1} = \{ u \in X \setminus S_T \mid S_T + u \in \cI \} \enspace.
	\]
	We construct the remaining sets recursively.
	For an integer $1 \leq t \leq T$, define the set $B_t$ to be the elements in $X$ not contained in $C_{t+1} \cup \dots \cup C_{T+1}$ which are feasible to add to solution $S_{t-1}$, i.e.,
	\[
	B_t = \{ u \in (X \setminus S_{t - 1}) \setminus ( \cup_{s=t+1}^{T+1} C_s ) \mid S_{t-1} + u \in \cI \} \enspace.
	\]
	We define $C_t$ to be an arbitrary subset of $B_t$ of size $\max ( |B_t|, k)$.
	At this point, the second and third properties in the lemma follow by construction of the sets $C_1, C_2, \dots C_{T+1}$.
	In the remainder of the proof, we show that the sets $C_1, C_2, \dotsc, C_{T+1}$ satisfy the first property; that is, they form a disjoint partition of $X$. 
	
	By construction, it is clear that the sets $C_1, C_2, \dotsc, C_{T+1}$ are disjoint and that $\cup_{t=1}^{T+1} C_t \subseteq X$.
	Thus, we seek to show that $X \subseteq \cup_{t=1}^{T+1} C_t$.
	To do this, we prove the stronger guarantee that for each integer $1 \leq t \leq T+1$,
	\[
	| X \setminus (\cup_{s=t}^{T+1} C_s ) | \leq k \cdot | S_{t-1} | \enspace.
	\]
	Note that $X \subseteq \cup_{t=1}^{T+1} C_t$ follows as $S_0 = \varnothing$.
	We prove this inequality by induction, starting at $t=T+1$ as the base case and working backwards.
	By definition of $C_{T+1}$, no element of $X \setminus (C_{T+1} \cup S_T)$ can be added to $S_T$ without violating independence, and thus, $S_T$ is a base of $(X \setminus C_{T+1}) \cup S_T$.
	In contrast, $X \setminus C_{T+1}$ is an independent subset of $(X \setminus C_{T+1}) \cup S_T$ because it is a subset of the independent set $X$.
	Thus, since $(\gnd, \cI)$ is a $k$-system, 
	\[
	| X \setminus C_{T+1} | \leq k \cdot |S_T| \enspace,
	\]
	which establishes the claim for $t = T+1$.
	Assume that the claim holds for all integers $t+1, t+2, \dotsc, T+1$, and let us prove it for $t$.
	There are two cases to consider.
	First, suppose that $| C_{t} | = k$.
	In this case,
	\begin{align*}
		| X \setminus \cup_{s=t}^{T+1} C_s |
		&= | X \setminus \cup_{s=t+1}^{T+1} C_s | - |C_t| \\
		&= | X \setminus \cup_{s=t+1}^{T+1} C_s| - k \\
		&\leq k \cdot | S_t | - k \\
		&= k \cdot |S_{t-1}| \enspace,
	\end{align*}
	where the inequality follows by induction hypothesis and the first equality holds because $C_t$ is disjoint from all $C_{t+1}, \dotsc, C_{T+1}$ and $C_t \subseteq X$.
	The second case is that $|C_t| < k$.
	In this case, $C_t = B_t$ and so no element of $X \setminus ( \cup_{s=t}^{T+1}  C_s \cup S_{t-1})$ can be added to $S_{t-1}$ without violating independence, and thus $S_{t-1}$ is a base of $(X \setminus  \cup_{s=t}^{T+1}  C_s) \cup S_{t-1}$. 
	This allows us to prove the claim in the same way as we did for the base case.
	In particular, observe that $X \setminus \cup_{s=t}^{T+1}  C_s$ is an independent subset of $(X \setminus \cup_{s=t}^{T+1} C_s) \cup S_{t-1}$ because it is also a subset of the independent set $X$.
	Thus, because $(\gnd, \cI)$ is a $k$-system,
	\[
	| X \setminus \cup_{s=t}^{T+1} C_s | \leq k \cdot |S_{t-1}| \enspace,
	\]
	which completes the proof by induction.
\end{proof}

Now we are ready to prove the approximation guarantee of \modgreedy.

\begin{lemma} \label{lemma:modified-greedy-analysis}
	Suppose that $\cI$ is a $k$-system and that $S$ is the set returned by \modgreedy.
	Then, 
	\[
	f(S) \geq 
	\left\{
	\begin{array}{lr}
		\left( \frac{1-\eps}{k+1} \right) \cdot \left[ f( \OPT \cup S) - \eps \cdot \maxgain - \rho m \right] & \text{if } \knapevent = 0 \\
		\rho / 2 & \text{if } \knapevent = 1 \\
	\end{array}
	\right.
	\enspace.
	\]	
\end{lemma}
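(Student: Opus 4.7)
The plan is to handle the cases $\knapevent = 1$ and $\knapevent = 0$ separately, mirroring the case analysis used for \knapsackalg in Proposition~\ref{prop:knapsack_general_result}. For the $\knapevent = 1$ case, I would consider the first moment at which Line~\ref{line:mg-knapsack-check} evaluates to false. At that moment there is an independent set $A = S_{i-1} + u$ (which passed Line~\ref{line:mg-feasibility-check}) that violates some knapsack constraint. Every element of $A$ was added with density at least $\rho$ at the time it was considered, so summing marginals yields $f(A) \geq \rho \sum_{r=1}^m c_r(A) > \rho$. Non-negativity and submodularity then give $\max\{f(S_{i-1}), f(\{u\})\} \geq f(A)/2 > \rho/2$, and since \modgreedy returns the best of the sets $S_i$ and the singleton $\{u^*\}$ with $f(u^*) \geq f(u)$, we conclude $f(S) > \rho/2$.

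For the $\knapevent = 0$ case, I would apply Lemma~\ref{lemma:greedy_partition_opt} with $X = \OPT$ to obtain a partition $C_1, \ldots, C_{T+1}$ of $\OPT$ with $|C_t| \leq k$ for $t \leq T$ and $C_t \subseteq \{u : S_{t-1} + u \in \cI\}$ for all $t$. Then by submodularity,
\begin{equation*}
f(\OPT \cup S_T) - f(S_T) \leq \sum_{t=1}^{T} \sum_{u \in C_t \setminus S_T} f(u \mid S_{t-1}) + \sum_{u \in C_{T+1}} f(u \mid S_T).
\end{equation*}
The key step is to bound the marginal gains on the right using the thresholding analysis. For $u \in C_t \setminus S_T$ with $t \leq T$, the element was considered by \modgreedy at least at the while iteration immediately preceding the addition of $u_t$, where the threshold was $\tau_t/(1-\eps)$. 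Because $\knapevent = 0$, the only possible reasons $u$ was rejected are that either the marginal-gain threshold or the density condition failed. Combined with $\tau_t \leq f(u_t \mid S_{t-1})$ and submodularity, this yields
\begin{equation*}
f(u \mid S_{t-1}) \leq \max\!\bigl(f(u_t \mid S_{t-1})/(1-\eps),\; \rho \textstyle\sum_{r=1}^m c_r(u)\bigr) \leq f(u_t \mid S_{t-1})/(1-\eps) + \rho \textstyle\sum_{r=1}^m c_r(u).
\end{equation*}
For $u \in C_{T+1}$, the termination condition analogously gives $f(u \mid S_T) \leq (\eps/n)\maxgain + \rho \sum_{r=1}^m c_r(u)$.

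Summing these estimates, using $|C_t| \leq k$ for $t \leq T$, $|C_{T+1}| \leq n$, the telescoping identity $\sum_{t=1}^T f(u_t \mid S_{t-1}) = f(S_T)$, and $\sum_{r=1}^m c_r(\OPT) \leq m$ (which holds by knapsack feasibility of $\OPT$), one obtains $f(\OPT \cup S_T) - f(S_T) \leq \tfrac{k}{1-\eps} f(S_T) + \eps \maxgain + \rho m$. Since $1 + k/(1-\eps) = (k+1-\eps)/(1-\eps) \leq (k+1)/(1-\eps)$, rearranging yields $f(S_T) \geq \tfrac{1-\eps}{k+1}\bigl[f(\OPT \cup S_T) - \eps \maxgain - \rho m\bigr]$, and the lemma follows because $S_T$ is among the candidates considered for $S$.

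The main technical obstacle is carefully accounting for the two distinct reasons an $\OPT$-element can be rejected by \modgreedy (marginal gain below the threshold versus density below $\rho$), and combining these two failure modes with the $k$-system partition to yield the claimed $(1-\eps)/(k+1)$ approximation factor. The marginal-gain rejections are amortized against the greedy choices $u_t$ via the $|C_t| \leq k$ bound (and the terminal iteration contributes the $\eps \maxgain$ error term), while the density rejections are collectively absorbed into the $\rho m$ error term using the feasibility of $\OPT$ for every knapsack.
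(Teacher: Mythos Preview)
Your proposal is correct and follows essentially the same two-case structure as the paper. The $E=1$ argument is identical to the paper's.

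For the $E=0$ case there is a genuine (though minor) organizational difference worth noting. The paper first splits $\OPT$ by density with respect to the \emph{final} solution $S_T$ into $\mathcal H$ and $\mathcal L$, bounds $f(\mathcal L\mid S_T)\le\rho m$ directly, and then applies Lemma~\ref{lemma:greedy_partition_opt} only to $X=\mathcal H$; because every $u\in C_{T+1}\subseteq\mathcal H$ is high-density, the terminal block contributes only the $\eps\,\maxgain$ term. You instead apply Lemma~\ref{lemma:greedy_partition_opt} to $X=\OPT$ and handle both rejection reasons simultaneously via $\max(a,b)\le a+b$, letting the $\rho\sum_r c_r(u)$ contributions aggregate across \emph{all} blocks $C_1,\dots,C_{T+1}$ to at most $\rho\sum_r c_r(\OPT)\le \rho m$. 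Both routes yield the same final inequality; yours is slightly more streamlined, while the paper's makes the two failure modes more visibly separate.

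Two small points to tighten in a full write-up: (i) when $u_t$ is accepted in the very first while iteration there is no ``preceding'' threshold $\tau_t/(1-\eps)$, but the bound still holds because $f(u\mid S_{t-1})\le f(u\mid\varnothing)\le\maxgain=\tau_t\le f(u_t\mid S_{t-1})$; (ii) the lemma's conclusion is stated for the returned set $S$ (which could be the singleton $\{u^*\}$), whereas the derivation gives the bound for $S_T$ with $f(\OPT\cup S_T)$ on the right; the paper's proof has the same looseness, and in the downstream use (Proposition~\ref{prop:mod-repeated-greedy-guarantees}) it is really the $S_T$ version that is needed.
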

\begin{proof}
	Let $T$ denote the number of iterations in \modgreedy so that the sequence of solutions it produces is $S_0, S_1, \dotsc, S_T$, where $S_T = S$ is the solution that is returned.
	
	In the first case, suppose that $E = 1$, which is to say that Line~\ref{line:mg-knapsack-check} evaluates to \texttt{false} at some point during the execution of the algorithm.
	This happens when, at some iteration $t$ there exists a solution $S_t$ and a high density element $u$ such that adding the element to this set is feasible in the independence system, but the knapsack constraint is violated. 
	More precisely, the set $A \triangleq S_t+ u$ is independent (i.e., $A \in \cI$) but  $c_r(A) > 1$ for some knapsack function $1 \leq r \leq m$.
	Although $A$ itself is not feasible, we claim that $f(A) > \rho$.
	To this end, let us order the elements of $A$ according to the order in which they were added to $S_t$, with $u$ appearing last, i.e., $A = \{ u_1, u_2, \dotsc, u_k \}$ with $u_k = u$.
	For $1 \leq t \leq k$, define the sets $A_t = \{ u_1, u_2, \dotsc, u_t \}$ and $A_0 = \varnothing$.
	Then, we obtain the lower bound
	\[
	f(A) 
	= \sum_{t=1}^k f( u_t \mid A_{t-1})
	\geq \sum_{t=1}^k \rho \cdot \sum_{r=1}^m c_r(u_t) 
	= \rho \sum_{r=1}^m c_r(A)
	> \rho \cdot 1
	= \rho
	\enspace,
	\]
	where the first inequality follows from the fact that each of the elements has high density when it is added to the solution and the second inequality follows from the fact that $A$ violates at least one of the knapsack constraints.
	
	The next step is to show that between $S_t$ and $\{ u \}$, at least one of these has value larger than $\rho / 2$.
	In particular, observe that
	\[
	\max\left\{ f(S_t), f(\{u\}) \right\}
	\geq \frac{1}{2} \left( f(S_t)) + f(\{u\}) \right) 
	\geq \frac{1}{2} \left( f(S_t + u ) + f(\varnothing) \right) 
	\geq \frac{1}{2} f(A) 
	> \frac{\rho}{2} 
	\enspace,
	\]
	where the first inequality bounds the maximum by the average, the second inequalities follows by submodularity, the third inequality follows by non-negativity, and the final inequality follows from the bound above.
	
	Recall now that the algorithm returns the set $S$ which has the larger objective value among $S_T$ and $\{ u^*\}$. 
	One can note that the final solution has larger objective value than the solution at iteration $t$ (i.e., $f(S_T) \geq f(S_t)$) because only elements with positive marginal gains are added to the solutions by the algorithm.
	We also note that by construction of $u^*$, we have that $f(\{u^*\}) \geq f(\{u\})$ because $u$ is a feasible element.
	Together, these facts imply that
	\[
	f(S) 
	\geq \max \left\{ f(S_T), f(u^*) \right\}
	\geq \max \left\{ f(S_t), f(u) \right\}
	> \frac{\rho}{2}
	\enspace,
	\]
	which completes our proof for the case of $E = 1$.
	
	In the second case, suppose that $E = 0$ so that the algorithm never considers an element which might violate the knapsack constraints.
	We seek to upper bound the marginal gain of adding  $\OPT$ to the returned solution $S_T$.
	To this end, we begin by splitting the elements of  $\OPT$ into two sets: those elements with high density with respect to $S_T$ and those with low density.
	More precisely, define the set of low density elements to be
	\[
	\mathcal{L} = \left\{ u \in \OPT \mid f(u \mid S_T) < \rho \cdot \sum_{r=1}^m c_r(u) \right\} \enspace,
	\]
	and define the set of high density elements to be the remaining elements of \OPT,
	\[
	\mathcal{H} = \OPT \setminus \mathcal{L} 
	= \left\{ u \in \OPT \mid f(u \mid S_T) \geq \rho \cdot \sum_{r=1}^m c_r(u)  \right\}
	\enspace.
	\]
	By submodularity of $f$, we may now bound the marginal gain of adding  $\OPT$ to $S_T$ in terms of adding the high and low density elements separately as
	\begin{equation}\label{eq:greedy-split-gain-of-opt}
		f (\OPT \cup S_T) - f(S_T) \leq f(\mathcal{H} \mid S_T) + f(\mathcal{L} \mid S_T)
		\enspace.
	\end{equation}
	
	We now upper bound the marginal gain of adding the low density elements $\mathcal{L}$ to the solution $S_T$.
	Observe that
	\begin{align}
		f(\mathcal{L} \mid S_T)
		&\leq \sum_{u \in \mathcal{L}} f(u \mid S_T)
		&\text{(by submodularity of $f$)} \nonumber \\
		&\leq \sum_{u \in \mathcal{L}} \rho \cdot \sum_{r=1}^m c_r(u) 
		&\text{(definition of $\mathcal{L}$)} \nonumber \\
		&= \rho \cdot \sum_{r=1}^m  c_r(\mathcal{L})
		&\text{(by modularity)} \nonumber\\
		&\leq \rho \cdot \sum_{r=1}^m  c_r (\OPT)
		&\text{($\mathcal{L} \subset \OPT$)} \nonumber \\
		&\leq \rho m \label{eq:greedy-low-density-bound}
		\enspace,
	\end{align}
	where the last line follows because  $\OPT$ is feasible and so it satisfies the cardinality constraints $c_r (\OPT) \leq 1$ for all $1 \leq r \leq m$.
	
	We now seek to upper bound the marginal gain of adding the high density elements $\mathcal{H}$ to the solution $S_T$.
	However, this direction is more involved and it is simpler to work backwards by lower bounding the objective value of the returned solution in terms of the high density elements of  $\OPT$.
	Taking $X = \mathcal{H}$, define a partition of its elements into sets $C_1, \dotsc, C_T, C_{T+1}$ as in the statement of Lemma~\ref{lemma:greedy_partition_opt}.
	By non-negativity of $f$ and a telescoping sum, we have
	\begin{align*}
		k \cdot  f(S_T) 
		&\geq k \cdot \left( f(S_T) - f(S_0) \right) 
		&\text{(non-negativity of $f$)} \\
		&= k \sum_{t=1}^T \left[  f(S_t) - f(S_{t-1})  \right]
		&\text{(telescoping sum)} \\
		&= \sum_{t=1}^T  k \cdot  f(u_t \mid S_{t-1})
		&\text{(distributing)} \\
		&\geq \sum_{t=1}^T  | C_t| \cdot  f(u_t \mid S_{t-1})
		&\text{(by Lemma~\ref{lemma:greedy_partition_opt}, $|C_t| \leq k$)} 
	\end{align*}
	
	Note that at each iteration, the chosen element $u_t$ is a feasible high density element which has a marginal gain within a $(1 - \eps)$ multiplicative factor of the largest marginal gain among all such elements. 
	We may now use the greedy selection of the element $u_t$ and submodularity of $f$ to establish the following lower bound:
	
	\begin{align*}
		\sum_{t=1}^T |C_t| \cdot  f(u_t \mid S_{t-1}) 
		&\geq \sum_{t=1}^T |C_t| \cdot (1 - \eps) \max_{u \in C_t}  f(u \mid S_{t-1}) 
		&\text{(approx. greedy selection)} \\
		&\geq (1 - \eps) \sum_{t=1}^T |C_t| \cdot \frac{1}{|C_t|} \sum_{u \in C_t}  f(u \mid S_{t-1}) 
		&\text{(max $\geq$ average)} \\
		&\geq (1 - \eps) \sum_{t=1}^T f(C_t \mid S_{t-1})
		&\text{(submodularity of $f$)} \\
		&\geq (1 - \eps) \sum_{t=1}^T f(C_t \mid S_{T})
		&\text{(submodularity of $f$)} \\
		&= (1 - \eps)  \left[ \sum_{t=1}^{T+1} f(C_t \mid S_{T})
		- f(C_{T+1} \mid S_{T}) \right]
		&\text{(adding and subtracting term)} \\
		&\geq  (1 - \eps)  \left[ f( \cup_{t=1}^{T+1} C_t \mid S_T) - f(C_{T+1} \mid S_{T}) \right]
		&\text{(subadditivity of $f$)}  \\
		&= (1 - \eps)  \left[ f( \mathcal{H} \mid S_T ) - f(C_{T+1} \mid S_{T}) \right]
		&\text{(Lemma~\ref{lemma:greedy_partition_opt})}
		\enspace,
	\end{align*}
	where subadditivity of $f$ follows from submodularity and non-negativity.
	
	Our final goal now is to bound the value $f(C_{T+1} \mid S_{T})$, which is the marginal gain of all the elements of $\mathcal{H}$ that were not added to the final solution $S_T$, but could maintain feasibility in $\cI$ if added.
	Consider an element $e \in C_{T+1}$.
	Because $u \notin S_T$ and $E = 0$, it must be the case that the marginal gain of this element to the final solution is bounded by $f(u \mid S_{T}) < \max \left( \tau, \rho \cdot \sum_{r=1}^m c_r(e) \right)$.
	However, this element $e$ is in $\mathcal{H}$ so it has high density with respect to the solution $S_T$.
	Thus, it must be the case that $f(u \mid S_{T}) < \tau$.
	By the termination condition, we have that $\tau < (\eps / n) \cdot \maxgain$, which implies a bound on the marginal gain $f(u \mid S_{T}) < (\eps / n) \cdot \maxgain$.
	This upper bound on the marginal gain, together with submodularity of $f$ and the (trivial) cardinality bound  $|C_{T+1}| \leq n$, yields
	\[
	f(C_{T+1} \mid S_{T})
	\leq \sum_{u \in C_{T+1}} f(u \mid S_{T})
	\leq \sum_{u \in C_{T+1}} ( \eps / n ) \cdot \maxgain
	\leq \eps \cdot \left( \frac{|C_{T+1}|}{n} \right)  \maxgain
	\leq \eps \cdot \maxgain
	\enspace.
	\]
	Using these inequalities together yields an upper bound on the marginal gain of adding the high density elements to the returned solution,
	\begin{equation} \label{eq:greedy-high-density-bound}
		f( \mathcal{H} \mid S_T) 
		\leq 
		\frac{k}{1 - \eps} \cdot f(S_T)  + \eps \cdot \maxgain
		\enspace.
	\end{equation}
	Thus, we may now bound the marginal gain of adding  $\OPT$ to the final solution $S_T$ by combining the above upper bounds on adding the high and low density elements.
	More precisely, substituting inequalities~\eqref{eq:greedy-low-density-bound} and \eqref{eq:greedy-high-density-bound} into inequality~\eqref{eq:greedy-split-gain-of-opt} yields
	\[
	f (\OPT \cup S_T) - f(S_T)
	\leq f( \mathcal{H} \mid S_T) + f(\mathcal{L} \mid S_T)
	\leq \frac{k}{1-\epsilon} \cdot f(S_T) + \eps \maxgain + \rho m
	\]
	Rearranging this inequality and using the inequality $1 \leq (1-\eps)^{-1}$, we obtain
	\[
	f(S_T) \geq \frac{1-\eps}{k+1} \Big( f (\OPT \cup S_T) - \eps \maxgain - \rho m \Big) \enspace.
	\qedhere
	\] 
\end{proof}

We are now ready to prove the approximation guarantees of \modrepeatedgreedy as stated in Proposition~\ref{prop:mod-repeated-greedy-guarantees}.
The approximation analysis of \modrepeatedgreedy is similar to the approximation analysis of \repeatedgreedy in the main paper.
The main difference is that we apply Lemma~\ref{lemma:modified-greedy-analysis} when considering the \modgreedy subroutine rather than applying Lemma~3.2 of \cite{GRST10}, which holds only for the vanilla greedy algorithm (which is slower than \modgreedy and does not handle knapsack constraints). 

\begin{proof}[Proof of Proposition~\ref{prop:mod-repeated-greedy-guarantees}]
	Observe that, for every $1 \leq i \leq \ell$, we have
	\begin{equation} \label{eq:mrg-set_equalities}
		\OPT \setminus \gnd_i = \OPT \cap ( \gnd \setminus \gnd_i) = \OPT \cap \left( \cup_{j=1}^{i-1} S_i \right) = \cup_{j=1}^{i-1} \left( \OPT \cap S_j \right)
	\end{equation}
	where the first equality holds because $\OPT \subseteq \gnd$, and the second equality follows from the removal of $S_i$ from the ground set in each iteration of \modrepeatedgreedy.
	Using the previous lemmata and this observation, we can obtain a lower bound on the objective value of the returned solution $S$ in terms of the  average value of $f(S_i \cup \OPT)$ as
	{\allowdisplaybreaks
		\begin{align*}
			\frac{1}{\ell} \sum \limits_{i=1}^\ell f(S_i \cup \OPT) 
			&\leq \frac{1}{\ell} \sum \limits_{i=1}^\ell f(S_i \cup (\OPT \cap \gnd_i)) + \frac{1}{\ell} \sum \limits_{i=1}^\ell f(\OPT \setminus \gnd_i) &\text{(Lemma~\ref{lem:submodular_fact1})} \\
			%	&= \sum \limits_{i=1}^r f(S_i \cup (\OPT \cap \gnd_i)) + \sum \limits_{i=1}^r f(\OPT \cap (\gnd \setminus \gnd_i)) &\text{because $\gnd_i, \OPT \subset \gnd$}\\
			%	&= \sum \limits_{i=1}^r f(S_i \cup (\OPT \cap \gnd_i)) + \sum \limits_{i=1}^r f \left( \OPT \cap (\cup_{j=1}^{i-1} S_j) \right) &\text{by \ref{eq:leftover_elements}} \\
			&= \frac{1}{\ell} \sum \limits_{i=1}^\ell f(S_i \cup (\OPT \cap \gnd_i)) + \frac{1}{\ell} \sum \limits_{i=1}^\ell f \left(\cup_{j=1}^{i-1}(\OPT \cap S_j) \right) &\text{(Equality~\eqref{eq:mrg-set_equalities})} \\
			&\leq \frac{1}{\ell} \sum \limits_{i=1}^\ell f(S_i \cup (\OPT \cap \gnd_i)) + \frac{1}{\ell} \sum \limits_{i=1}^\ell \sum \limits_{j=1}^{i-1} f(\OPT \cap S_j) &\text{(submodularity)} \\
			&\leq \frac{1}{\ell} \sum \limits_{i=1}^\ell \left[ \frac{k+1}{1-\eps} f(S_i) + \eps \maxgain + \density m \right]  + \frac{\alpha}{\ell} \sum \limits_{i=1}^\ell \sum \limits_{j=1}^{i-1} f(S'_j) &\text{(Lemmas~\ref{lemma:modified-greedy-analysis} and \ref{lem:known_approx_results})} \\
			&\leq \frac{1}{\ell} \sum \limits_{i=1}^\ell \left[ \frac{k+1}{1-\eps} f(S) + \eps \maxgain + \density m \right]  + \frac{\alpha}{\ell} \sum \limits_{i=1}^\ell \sum \limits_{j=1}^{i-1} f(S)&\text{(definition of $S$)} \\
			&= \frac{k+1}{1-\eps} f(S) + \eps \maxgain + \density m + \frac{\alpha (\ell-1)}{2} f(S) \\
			&\leq (1-\eps)^{-1} \left( k + 1 + \alpha(\ell - 1)/2 \right) f(S) + \eps f(\OPT) + \density m
			\enspace.
		\end{align*}
	}
	Rearranging this inequality yields the following lower bound on the value of the returned solution:
	\begin{equation}\label{eq:mod-rg-lower-bound-average}
		f(S) \geq
		\left( \frac{1-\eps}{ k + 1 + \alpha (\ell - 1)/2 } \right)
		\left[ \frac{1}{\ell} \sum_{i=1}^\ell f(S_i \cup \OPT) - \eps f(\OPT) - \density m \right]
		 \enspace.
	\end{equation}
	In order to remove the dependence of the right hand side on the solutions $S_i$, we again use Lemma~\ref{lem:distribution} [Lemma~2.2 of \cite{BFNS14}].
	In particular, consider a set $\bar{S}$ chosen uniformly at random from the $\ell$ constructed solutions $S_1, S_2, \dots S_\ell$. 
	Because the solutions are disjoint by construction, an element can belong to $\bar{S}$ with probability at most $\ell^{-1}$. 
	Hence, applying Lemma~\ref{lem:distribution} to the submodular function $g(S) = f(\OPT \cup S)$, we get
	\begin{equation}\label{eq:modrg-buchbinder-lb-on-average}
		\frac{1}{\ell} \sum_{i=1}^\ell f(S_i \cup \OPT)
		= \Exp{f(\OPT \cup \bar{S})}
		= \Exp{g(\bar{S})} 
		\geq (1 - \ell^{-1}) \cdot g(\varnothing)
		= (1 - \ell^{-1}) \cdot f(\OPT) \enspace.
	\end{equation}
	Substituting \eqref{eq:modrg-buchbinder-lb-on-average} into the lower bound of \eqref{eq:mod-rg-lower-bound-average} yields the desired result.
	
	When $f$ is monotone submodular, we may obtain an improved approximation ratio by applying monotonicity directly to the lower bound \eqref{eq:mod-rg-lower-bound-average}.
	In particular, applying monotonicity yields
	\[
	\frac{1}{\ell} \sum_{i=1}^\ell f(S_i \cup \OPT)
	\geq \frac{1}{\ell} \sum_{i=1}^\ell f( \OPT)
	= f(\OPT) \enspace,
	\]
	which yields the desired approximation in the monotone setting.
\end{proof}

\end{document}